\numberwithin{equation}{section}  
\theoremstyle{plain}
\newtheorem{thm}{Theorem}[section]
\newtheorem{cor}[thm]{Corollary}
\newtheorem{lemma}[thm]{Lemma}
\newtheorem{prop}[thm]{Proposition}
\theoremstyle{definition}
\newtheorem{defn}[thm]{Definition}
\newtheorem{assumption}[thm]{Assumption}
\theoremstyle{remark}
\newtheorem{rem}[thm]{Remark}
\newcommand{\ra}{\rightarrow}
\newcommand{\abs}[1]{\left\lvert #1 \right\rvert}
\newcommand{\ket}[1]{\lvert #1 \rangle}
\newcommand{\slim}{{\rm s}\mbox{-}\lim}
\newcommand{\RR}{\mathbb R}
\newcommand{\CC}{\mathbb C}
\newcommand{\ZZ}{\mathbb Z}
\newcommand{\NN}{\mathbb N}
\newcommand{\calA}{\mathcal A}
\newcommand{\calB}{\mathcal B}
\newcommand{\calH}{\mathcal H}
\newcommand{\Aut}{\operatorname{Aut}}
\begin{document}

\title{On the stability of charges in infinite quantum spin systems}
\renewcommand\Affilfont{\itshape\small}
  \author[1]{Matthew Cha}
  \author[2,3]{Pieter Naaijkens\footnote{Present address: Departamento de An{\'a}lisis Matem{\'a}tico y Matem{\'a}tica Applicada, Universidad Complutense de Madrid, Spain}}
\author[2]{Bruno Nachtergaele}
\affil[1]{Department of Mathematics, Michigan State University, East Lansing, USA}
\affil[2]{Department of Mathematics, University of California, Davis, USA}
\affil[3]{JARA Institute for Quantum Computing, RWTH Aachen University, Germany}
  \date{\today}

\maketitle

\begin{abstract} 
	We consider a theory of superselection sectors for infinite quantum spin
	systems, describing charges that can be approximately localized in
	cone-like regions. The primary examples we have in mind are the anyons (or
	charges) in topologically ordered models such as Kitaev's quantum double models 
	and perturbations of such models. In order to cover the case of perturbed quantum
	double models, the Doplicher-Haag-Roberts approach, in which strict localization is
	assumed, has to be amended. To this end we consider endomorphisms of the 
	observable algebra that are almost localized in cones. Under natural conditions on
	the reference ground state (which plays a role analogous to the vacuum state in relativistic theories), 
	we obtain a braided tensor $C^*$-category describing the sectors.  We also introduce 
	a superselection criterion selecting excitations with energy below a
	threshold. When the threshold energy falls in a gap of the spectrum of the
	ground state, we prove stability of the entire superselection structure
	under perturbations that do not close the gap. We apply our results to
	prove that  all essential properties of the anyons in Kitaev's abelian quantum double
	models are stable against perturbations.
\end{abstract}

\section{Introduction}
For a quantum many-body system described by a Hamiltonian depending on one or more 
parameters, quantum phase transitions mark the boundaries between regions in parameter 
space where the ground states are characterized by qualitatively distinct physical properties. 
There is much interest currently in the classification of gapped ground state phases, for example by 
constructing a complete set of invariants of the phase. We consider this problem for a particular
class of models that belong to the class of \emph{topological} phases, accompanied by low-lying excitations 
described by anyons. Such phases are not associated with spontaneous breaking of a local symmetry
and a local order parameter, as is the case in the Landau's theory of phase transitions. 

We are particularly interested in topologically ordered quantum spin systems.
In a finite volume with boundaries or with non-trivial topology, the topological nature of the phase is reflected in a non-trivial ground state 
degeneracy~\cite{Wen} and a spectral gap above the ground state uniform in the system size. 
Such states are also characterized by long range entanglement~\cite{ChenGW}, which 
manifests itself as a correction term to the area law in the thermodynamic limit~\cite{KitaevP,LevinW2},
and, in two spatial dimensions, by the emergence of elementary excitations with braid statistics called anyons.
Perhaps the best known physical manifestation of anyons are the quasi-particle excitations in the fractional quantum 
Hall effect~\cite{ArovasSW,Halperin}. The prototypical models of the type we study in this paper are Kitaev's
quantum double models~\cite{KitaevQD} and the string-net models of Levin and Wen~\cite{LevinW}.

In infinite volume, a structure of superselection sectors appears, which are labeled by (topological) 
charges, or anyons.
Hence a fundamental question in the classification of such topological phases of matter is how to extract the physical properties of the anyons from first principles.
And moreover, based on the topological nature of the anyons, one expects that this structure remains unchanged if the underlying system is perturbed a little.
Or in other words, it is expected that the superselection sectors form an invariant of the topological phase.

In this paper we develop the mathematical tools to extract this ``anyonic charge content'' from first principles, and show that this is indeed an invariant of topological phases.
To make the latter statement more precise, recall that
a standard notion of equivalence of gapped ground states of quantum spin systems considers two states to 
be in the same phase if they are ground states of local Hamiltonians that are gapped and that can be continuously 
deformed into each other without closing the gap, possibly requiring in addition that certain symmetries are 
preserved~\cite{ChenGW,ChenGW2}. Instead of just continuity, one can demand the stronger condition that the 
path of Hamiltonians is piecewise $C^1$~\cite{BachmannO,Ogata1,Ogata2,Ogata3}, presumably without loss
of generality. Since we wish to use results from~\cite{BachmannMNS}, in this work we will only consider such 
piecewise differentiable paths.
A typical way of obtaining such a path of gapped ground states is by gently perturbing the dynamics of the original system.
In this paper we show that the structure of the anyons is stable under such perturbations of the dynamics.

Since we are interested in gapped quantum phases, we need to know that the perturbation does
not close the gap. Stability properties of the spectral gap for topologically ordered states have been well 
studied. Bravyi, Hastings and Michalakis~\cite{BravyiHM} first showed that for
topologically order systems with frustration-free Hamiltonians and commuting interaction 
terms the spectral gap is stable under local perturbations, uniform in the system size.  
A topologically ordered state therefore belongs to a gapped ground state phase~\cite{BachmannMNS},
and further, this phase is a topological phase. This result was generalized to Hamiltonians with non-commuting
terms by Michalakis and Zwolak~\cite{MichalakisZ}. Existing results regarding the stability properties of the 
anyon structure are however rather limited, but see~\cite{BravyiHM,Haah,KatoN,KitaevHC}.
Our methods are applicable to a wider class of models and retain more of the structure of the anyons.

We will now outline the main ideas behind our approach and the technical challenges that we have to address.
Mathematically the properties of the anyons are described by a \emph{modular tensor category} (see for example~\cite{Wang}).
A modular tensor category is in particular a fusion category, and these have a property known as \emph{Ocneanu rigidity}.
This essentially says that such categories do not have non-trivial deformations. That is, one cannot continuously deform them to obtain 
a new, inequivalent fusion category. Following ideas of Ocneanu, Blanchard and Wassermann, this has been shown for the 
more general ``multi-fusion'' case in~\cite{ENO}. While this suggests that the anyonic structure indeed should be stable, it 
does not provide a satisfactory answer to our question. In particular, what is missing is that it is not clear how to obtain a 
modular tensor category in the first place after perturbing the dynamics. Hence, the question whether the coefficients that can 
be used to define the modular tensor category depend continuously on, for example, the strength of the perturbation is not 
a priori well defined.

To address this issue we present a method that allows us to define a braided tensor category describing anyonic charges 
that is robust with respect to general, uniformly small perturbations of the model Hamiltonian. Our approach to describing 
the superselection sectors is inspired by work in local quantum physics, starting with Doplicher, Haag and Roberts (DHR) 
in the 1970's~\cite{DHR1,DHR2}, who considered bosonic and fermionic sectors. Later, in the late 80's, this was extended 
by Fredenhagen, Rehren, and Schroer, and later also by Fr\"ohlich and Gabbiani to describe sectors with braided 
statistics~\cite{FRSI,FRSII,FroehlichGabbiani}. Although these authors were interested in relativistic (and conformal) field 
theories in low dimensions, many of the ideas developed there can be adapted to quantum spin systems after taking into 
account some fundamental technical differences~\cite{Naaijkens11}.

The key realization in this approach is that charged states can be obtained by composing localized (that is, acting non-trivially 
only on a part of the system) endomorphisms of the observable algebra with the ground state (or vacuum) of the quantum system.
They are also transportable: for a different localization region one can find a unitary equivalent (to the original) endomorphism 
that is localized in that region. Physically these two conditions mean that charges can be localized in suitable regions, and can 
be moved around in the system. This leads to a semigroup of such endomorphisms, the semigroup operation being composition 
of morphisms (or \emph{fusion}). Studying this semigroup along the lines of the DHR program, one can for example recover the 
full modular tensor category in the abelian quantum double models~\cite{FiedlerN,Naaijkens11}.

For perturbations of the quantum double models, however, the analysis of previous approaches breaks down.
The reason is that generally it is not possible to localize charges \emph{exactly} in a region,
which is a prerequisite for the DHR approach to work.
Consider a path of gapped ground states of local Hamiltonians, informally denoted by $H(s) = H(0) + \Phi_s$, where $s\in \mathbb{R}$ is a parameter,
for example related to the strength of the perturbation. Then, generically $\omega_0$, a ground state of $H(0)$, is not a ground state of the perturbed dynamics, 
and the question is what the superselection sectors are with respect to the new ground (or reference) state. 
If these perturbations are sufficiently regular and $s$ small enough, the results mentioned above imply that the gap stays open~\cite{BravyiHastingS,BravyiHM,MichalakisZ}. This allows us to apply the spectral flow automorphism of~\cite{BachmannMNS}, 
which yields a cocycle $s \mapsto \alpha_s \in \Aut(\calA)$, such that $\omega_s \equiv \omega_0 \circ \alpha_s$ is a ground state of $H(s)$.

Naively, since $\alpha_s$ is an automorphism, it seems reasonable to expect that one could just compose the semigroup of automorphisms from the DHR analysis with this $\alpha_s$ to get the sectors of the perturbed theory.
This is however much more subtle: $\alpha_s$ in general does not send local observables to strictly local observables.
Hence the sectors obtained in this way lose their strict localization property, and the DHR analysis breaks down.
A related reason is that in the definition of the tensor product (which describes fusion and braiding of the anyons), it is necessary to extend the endomorphisms to slightly larger \emph{von Neumann} algebras.
The automorphisms $\alpha_s$ however act on the level of the $C^*$-algebra of observables, and it is far from clear if they can be extended to the von Neumann algebras.

Fortunately, however, $\alpha_s$ is quasi-local, in the sense that it satisfies a Lieb-Robinson type of bound.
See~\cite{NachtergaeleSY} for a detailed discussion of quasi-local maps.
In particular, while $\alpha_s(A)$ may not be strictly local for local $A$, it can be approximated by local operators where the error term decreases faster than 
any power law in the size of the support for a suitably chosen sequence of approximating operators. Motivated by 
these observations, we 
introduce the notion of an \emph{almost localized endomorphism}, generalizing the notion of a localized endomorphism.
Again, equivalence classes of such endomorphisms will correspond to the different types of anyons.
To get the full structure of the anyons, we define a braiding of such endomorphisms. 

Our study of the almost localized endomorphisms is motivated by the notion of bi-asymptopia introduced by Buchholz et al.~\cite{BuchholzAA}.
Qualitatively they consider a similar situation, where one has to deal with a lack of strict localization.
We show that endomorphisms satisfying a certain superselection criterion lead to bi-asymptopia in a natural way.
This criterion encodes the desired physical properties: almost localization and the ability to move anyons around.
The bi-asymptopia can then in turn be used to define the full braided tensor structure on the sectors by using the results of~\cite{BuchholzAA}.
And moreover, the result agrees with the DHR analysis in the case of strict localization.

The construction of the bi-asymptopia hinges on the almost locality of the endomorphisms.
Hence if applying some automorphism preserves this almost locality, this leads to a new set of bi-asymptopia.
This is in particular true for the automorphisms $\alpha_s$ coming from the spectral flow.
They satisfy a Lieb-Robinson type of bound.
Although these are usually stated for observables of finite support, they can be extended to quasi-local observables localized in cone-like regions.
Such regions are precisely the localization regions of interest for 2D topologically ordered models with anyons.
These bounds can then be used to show that one can ``perturb'' the original almost localized endomorphisms, and obtain a new superselection theory that is equivalent to the original one.


The result of this analysis is then applied to obtain the second major result of this paper: the stability of abelian quantum double models.
We show that under an additional (but natural) assumption on the energy of certain excitations, using the automorphic equivalence result of~\cite{BachmannMNS}, that the superselection structure is invariant under a broad range of perturbations of the model.
That is, the perturbed model and the unperturbed model give rise to the same braided tensor $C^*$-category describing the anyonic excitations.

The paper is organized as follows. In the next section, we first recall the basic setting, fix our notations and assumptions, 
and give an overview of our main results. Section~\ref{sec:superselect} discusses the superselection criterion and 
contains the proof of the main result. In Section~\ref{sec:stability} we consider suitable perturbations of the dynamics 
and their effect on the superselection structure. We conclude by applying the general theory to the class of abelian 
quantum double model in Sec.~\ref{sec:applications}. The proof of Lieb-Robinson bounds for cone localized observables 
is deferred to the Appendix, as are a brief overview on the spectral flow automorphisms, and on braided tensor categories.
Some results in this work were first reported as part of the Ph.D. dissertation of MC~\cite{Cha}.

\emph{Acknowledgements:} MC would like to thank the University of Tokyo, for support and hospitality during the summer 
of 2015 where part of the work on this paper was done.  MC was supported in part by the National Science Foundation 
under Grant OISE-1515557 and the Japan Society for the Promotion of Science Summer Program 2015. 
PN thanks Courtney Brell for helpful conversations and suggestions, and Luca Giorgetti and Michael M{\"u}ger for 
discussions on Appendix~\ref{app:braided}. PN has received funding from the European Union's Horizon 2020 research 
and innovation program under the Marie Sklodowska-Curie grant agreement No 657004 and the European Research Council (ERC) Consolidator Grant GAPS (No. 648913).
BN and MC were supported in part by the National Science Foundation under Grant DMS-1515850.

\section{Assumptions and main results}\label{sec:assumptions}
In this section we first state our main assumptions and give an overview of the most important results.
For the convenience of the reader, we recall the basic definitions and introduce our notation. For further background
on the $C^*$-algebraic approach to quantum spin systems see for example~\cite{BratteliR1,BratteliR2}.

Let  $\Gamma = \ZZ^\nu$ be the cubic lattice in $\RR^\nu$ with metric $d(x,y) = \abs{x-y}$.
We choose a cubic lattice for convenience. Most results can be easily generalised to other underlying sets, as
long as $\Gamma$ is regular enough (e.g., the number of points in a ball of size $r$ does not grow faster than
a power of $r$).
We will mostly work in the case $\Gamma = \ZZ^2$.
To each $x\in \Gamma$ assign a finite dimensional Hilbert space $ \calH_x$.
Let $\mathcal{P}_0(\Gamma)$ denote the set of finite subsets of $\Gamma$.
For $ \Lambda\in \mathcal{P}_0(\Gamma)$  we
define the Hilbert space of states for the composite system as the tensor product space $ \calH_\Lambda \equiv \bigotimes_{x\in \Lambda} \calH_x$
with the corresponding algebra of observables $\calA_\Lambda = \calB(\calH_\Lambda)$.
If $ \Lambda_1 \subset \Lambda_2$ there is a natural inclusion of 
$ \calA_{\Lambda_1} \hookrightarrow \calA_{\Lambda_2}$ via the map $ A \mapsto A\otimes I_{\Lambda_2\backslash \Lambda_1}$.
This gives a local net of $C^*$-algebras and allows us to define the algebra of 
\emph{local observables} as 
$\calA_{loc} = \bigcup_{\Lambda \in \mathcal{P}_0(\Gamma)} \calA_\Lambda$
and the $C^*$-algebra of {quasi-local observables} as the norm completion
$\calA = \overline{\calA_{loc}}^{\| \cdot \|}.$

Let $X \subset \Gamma$ be a potentially infinite set.
We define the quasi-local algebra of observables in $X$ as the $C^*$-subalgebra 
$\calA_X \equiv \overline{ \bigcup_{\Lambda\in \mathcal{P}_0(X)} \calA_{ \Lambda}}^{\| \cdot \|} \subset \calA.$
An observable $A$ is said to be \emph{localized} in a set $X$ if $ A \in \calA_X$.
The \emph{support} of $A$ is defined as the smallest set $X$ such that $A \in \calA_X$.
Notice that when $X \in \mathcal{P}_0(\Gamma)$ then we have $\calA_X \cong \calB(\calH_X)$, and that $\calA_\Gamma = \calA$.

A \emph{state} on $\calA_\Gamma$ is a linear functional $\omega: \calA_\Gamma \ra \CC$ such that 
$\omega(A) \geq 0$ if $A\geq 0$  and $ \omega(I) = 1$.  
The set of all states is denoted $ \calA_{+,1}^*$
and is a weak$^*$-compact convex set, its extremal points are called \emph{pure states}.
For each state $\omega$, the GNS construction gives a triple $( \pi_{\omega}, \Omega_\omega, \calH_{\omega}) $ where $ \pi_{\omega}: \calA_\Gamma\ra \calB(\calH_\omega)$
is a cyclic $*$-representation, $\Omega_\omega$ is a cyclic vector, and 
$\omega(A) = \langle \Omega_\omega, \pi_{\omega}(A) \Omega_\omega\rangle$.
It can be shown that $ \omega$ is a pure state if and only if its GNS representation is irreducible.
Since the quasi-local algebra is simple, and thus each $*$-representation is faithful, we will often abuse notation when the context is clear and identify  $ \pi_\omega(A) = A$.

\subsection{Reference state and main assumptions}
The discussion above applies to general quantum spin systems.
For our analysis we need a reference state $\omega_0$, roughly playing the role of the vacuum in relativistic quantum field theories.
We will first describe some of the assumptions on $\omega_0$ that we require.
In particular, let $\omega_0$ be a \emph{pure state}, and write $(\pi_0, \Omega_0, \calH_0)$ for the corresponding GNS representation.
As mentioned earlier, in concrete models there typically is a canonical choice for the reference state.
The example of abelian quantum double models will be discussed below.
In addition to purity, we will require some additional properties.
Again, these can be shown to hold in abelian quantum double models.
To introduce these assumptions it is necessary to first introduce the notion of a \emph{cone}.
They should be thought of as the localization regions of the excitations of the models of interest.
Cones are chosen here partly for convenience, the precise shape is not that important:
what is important is that they extend to infinity and are without holes in the interior.
For technical reasons we also want them to be such that any finite region can be translated
to the interior of the region. 

For concreteness, we will use round cones, with axis $a\in\RR^\nu$, $\Vert a\Vert =1$, and
opening angle $2\alpha$. We first define cones with apex at the origin and then obtain 
the ones with arbitrary apex by translation. Let $\bar{\Lambda}_\alpha(a)$ be the infinite open 
cone in $\RR^\nu$ with apex at the origin, parametrized by  a unit vector $a\in\RR^\nu$ and 
$\alpha\in (0,\pi)$:
$$
\bar{\Lambda}_\alpha(a) =  \left\{ x\in\RR^\nu \mid x\cdot a >\Vert x\Vert  \cos \alpha  \right\}.
$$
For $x \in \RR^\nu$, we will write $\bar{\Lambda}_\alpha+x$ for the translated cone. We will write
$\mathcal{C}$ for the set of all such cones and $\mathcal{C}_\alpha$ for the set with opening angle
less than $2\alpha$. 
To simplify notation we will typically suppress the axis $a$ in the notation.
Note that $\RR^\nu$ itself is not a cone.

If the spin variables of the model are associated with the vertices, as above, we will consider 
cones $\Lambda_\alpha$ in the lattice defined by $\Lambda_\alpha \equiv \bar{\Lambda}_\alpha \cap \ZZ^\nu$.
In models such as Kitaev's quantum double model the spins are associated with edges instead of vertices.
In that case, we identify a cone with all \emph{edges} such that both endpoints lie within the cone $\bar{\Lambda}_\alpha \subset \RR^\nu$.
As a special case of translations, we also define $\Lambda_\alpha + n$ for integers $n \in \mathbb{Z}$ as the translation over $n$ times the unit vector pointing along the axis of the cone.
By convention, we choose this vector such that $\Lambda_\alpha + n \subset \Lambda_\alpha$ for positive $n$.
Note that translations behavior naturally with respect to the operation of set complement, $( \Lambda_\alpha - n)^c = \Lambda_{ \alpha  }^c + n$.

For some of our constructions it is necessary that two cones, one contained in the other, have boundaries that are sufficiently far apart.
Or equivalently, we want the complement of one cone region to be sufficiently far separated from some other cone.
Here ``sufficient'' depends on the model of interest.
More precisely, we will require a technical property on the reference representation, called the \emph{approximate split property} (or rather, the \emph{strong approximate split property}).

To define this later on, we need to consider a relation $\ll$ on $\mathcal{C}$ as follows.
Fix an integer $n_0 \geq 1$. The choice of $n_0$ is model-dependent.
The relation is $\ll$ is then defined by saying that $\Lambda \ll \widehat{\Lambda}$ if $ \Lambda \subset \widehat{\Lambda}+n_0$.
Note that $\Lambda \ll \widehat{\Lambda}$ implies $ d( \Lambda, \widehat{\Lambda}^c)  > 0$.
That is, the cones $\Lambda$ and $\widehat{\Lambda}^c$ are spatially separated.


We now return to formulating our assumptions on the reference state $\omega_0$.
Typically, $\omega_0$ would be a pure and translation-invariant ground state
for a dynamical system $(\calA, \tau_t = e^{it \delta}),$
that is,  $\omega_0 \circ T_x = \omega_0$ for all $x\in \ZZ^\nu$ where $T_x$ is the natural action of $\ZZ^\nu$,
and $ \omega_0 (A^* \delta(A)) \geq 0$ for $A \in D(\delta)$.
In the standard DHR analysis of sectors,  the role of the reference state $\omega_0$ is played by the vacuum state.
For any subset $\Lambda \subset \Gamma$, denote the von Neumann closure in  $\calB(\calH_0)$ by
\begin{equation}
\mathcal{R}(\Lambda) \equiv \pi_0(\calA_\Lambda)'',
\end{equation} 
where $\pi_0$ is the GNS representation of $\omega_0$ as before.
If $\Lambda$ is an infinite set the weak-operator closure $\mathcal{R}(\Lambda)$ depends on $\pi_0$.

Clearly, if $\Lambda \ll \widehat{\Lambda}$ then the inclusion of quasi-local algebras $ \calA_\Lambda \subset \calA_{ \widehat{\Lambda}}$ implies the 
inclusion of cone algebras $\mathcal{R}(\Lambda) \subset \mathcal{R}(\widehat{\Lambda})$.
Usually these cone algebras are factors, but not of Type I. 
For instance, in the frustration-free ground state of the Kitaev quantum double models the cone algebras are non-Type I infinite factors~\cite{Naaijkens11}. 
We say that the inclusion is \emph{split}~\cite{DoplicherL} if there is a type I factor $\mathcal{N}$ such that $\mathcal{R}(\Lambda) \subset \mathcal{N} \subset \mathcal{R}(\widehat{\Lambda})$.
Our main technical assumption on $\omega_0$ is that it satisfies a variant of the split property for the set of infinite cones:

\begin{defn}\label{def:strongsplit}
	We say that $\omega_0$ satisfies a \emph{strong approximate split property} for $\mathcal{C}$ if for each pair $ \Lambda, 
	\widehat{\Lambda} \in \mathcal{C}$  with $\Lambda \ll \widehat{\Lambda}$ there exists a $N>0$ and a type I factor $\mathcal{N}$ such that
	\begin{equation}\label{eqn:approxsplit}
	\mathcal{R}(\Lambda+N) \subset \mathcal{N} \subset \mathcal{R}(\widehat{\Lambda})
	\end{equation}
	and 
	\begin{equation}\label{eqn:strongsplit}
	\mathcal{R}(\widehat{\Lambda}^c) \subset \mathcal{N}' \subset \mathcal{R}((\Lambda+N)^c).
	\end{equation}
\end{defn}

Recall that $\ll$ was defined in terms of some integer $n_0$.
In applications we choose this integer so that the strong approximate split property as defined here holds.
Notice that if $n>N$ then since $\mathcal{R}(\Lambda+n) \subset \mathcal{R}(\Lambda+N)$ and $ \mathcal{R}((\Lambda+N)^c) \subset \mathcal{R}((\Lambda+n)^c)$
we can replace $n$ for $N$ in Eqs.~\eqref{eqn:approxsplit} and~\eqref{eqn:strongsplit}.

The strong approximate split property can be interpreted as a statistical independence of the regions
$\Lambda$ and $\widehat{\Lambda}^c$ \cite{DoplicherL,Werner}.
This will be used in Section~\ref{sec:intertwiners} to establish locality properties of intertwiners.
The strong approximate split property implies that if $\Lambda \ll \widehat{\Lambda}$ then the von Neumann algebra generated by the regions $\Lambda$ and $\widehat{\Lambda}^c$ satisfy  $ \mathcal{R}(\Lambda) \vee  \mathcal{R}(\widehat{\Lambda}^c) \simeq \mathcal{R}(\Lambda) \overline{\otimes} \mathcal{R}(\widehat{\Lambda}^c)$~\cite{DoplicherL}.
Here $\mathcal{R}(\Lambda) \vee \mathcal{R}(\widehat{\Lambda}^c)$ is the smallest von Neumann algebra generated by the two algebras, while $\mathcal{R}(\Lambda) \overline{\otimes} \mathcal{R}(\widehat{\Lambda}^c)$ is the von Neumann algebraic tensor product.
The symbol $\simeq$ signifies that the two von Neumann algebras are \emph{naturally} isomorphic, in the sense that the map $AB \mapsto A \otimes B$ for $A \in \mathcal{R}(\Lambda)$ and $B \in \mathcal{R}(\widehat{\Lambda}^c)$ extends to a normal isomorphism of $\mathcal{R}(\Lambda) \vee \mathcal{R}(\widehat{\Lambda}^c)$ and $\mathcal{R}(\Lambda) \overline{\otimes} \mathcal{R}(\widehat{\Lambda}^c)$.
In the case that $\omega_0$ is a ground state, we expect the strong approximate split property to be closely related to the  existence of a spectral gap.
For quantum spin chains, i.e., one-dimensional systems, Matsui showed the existence of a spectral gap implies a (non-approximate) split property in the ground state~\cite{Matsui13}. In dimensions $\geq 2$, however, we do not have a general 
theory that demonstrates the split property but we do have interesting examples, such as the toric code model, for which it holds.

The condition on split inclusions we consider is approximate in the sense that if $\Lambda \ll \widehat{\Lambda}$ then
the regions $\Lambda$ and $ \widehat{\Lambda}^c$ are spatially separated, whereas the split property means 
that $\mathcal{R}(\Lambda) \subset \mathcal{N} \subset \mathcal{R}(\Lambda^c)'$, see for example~\cite{Matsui01}.
That is, there is only a minimal separation assumption on the volumes $\Lambda$ and $\Lambda^c$.
In algebraic quantum field theory a similar split condition is considered.
The condition we assume is \emph{strong} in that sense that each split inclusion \eqref{eqn:approxsplit} is paired with a dual 
split inclusion \eqref{eqn:strongsplit}.

It is useful to relate Eq.~\eqref{eqn:strongsplit} to Haag duality for cones, that is,  
$\mathcal{R}(\Lambda)' = \mathcal{R}(\Lambda^c)$.
In particular, Eq. \eqref{eqn:approxsplit} and Haag duality imply \eqref{eqn:strongsplit}. Let  $\Lambda \ll \widehat{\Lambda}$ 
and suppose there exists a type I factor $\mathcal{N}$ such that Eq.~\eqref{eqn:approxsplit} holds.
Then, taking commutants, we have $\mathcal{R}(\Lambda) \subset \mathcal{N} \subset \mathcal{R}(\widehat{\Lambda})
\iff
\mathcal{R}(\widehat{\Lambda})'  \subset \mathcal{N}' \subset \mathcal{R}(\Lambda)'$.
With Haag duality the right hand side is equivalent to 
$\mathcal{R}(\widehat{\Lambda}^c)  \subset \mathcal{N}' \subset \mathcal{R}(\Lambda^c).$
Thus, the type I factor $\mathcal{N}'$ gives a split inclusion for the cone algebras $ \mathcal{R}(\widehat{\Lambda}^c) 
\subset \mathcal{R}(\Lambda^c)$.
From this it follows that the strong approximate split property holds for the frustration-free ground state of Kitaev's abelian 
quantum double models~\cite{FiedlerN}. It should also be noted that the split property itself does not hold for cones, 
since $\mathcal{R}(\Lambda)$ is not a Type I factor~\cite{Naaijkens11}.

In our results we do not need to assume Haag duality, and therefore only assume that the strong approximate split property holds.
However, we do not know of any models for which the strong approximate property holds but Haag duality fails.

\subsection{Superselection structure}
Superselection sectors can be identified with equivalence classes of (irreducible) representations of the observable 
algebra $\calA_\Gamma$.  Physically, this means that it is not possible to make coherent superpositions of vector 
states in different representations,  or equivalently, it is not possible to obtain a vector state in one representation
by acting with observables on a vector state in an inequivalent representation.
This can be interpreted as a form of charge conservation: it is not possible to go from a charged state to a state with a different (total) charge.
Hence it stands to reason that one can learn something about these charges (or, in our case, anyons) by studying sets of representations of $\calA_\Gamma$ and their intertwiners.

The first issue that one encounters is that a general $C^*$-algebra has very many inequivalent irreducible representations, 
most of which are not physically relevant (for example because they describe states with infinite energy).
A \emph{superselection criterion} is a criterion to select the relevant representations.
For example, for models such as the toric code, the following criterion is natural~\cite{Naaijkens11}.
Consider representations $\pi$ such that for \emph{any} cone $\Lambda$, there is a unitary $U$ (that may depend on $\Lambda$) such that
\begin{equation}
	\label{eq:superselect}
	U \pi(A) U^* = \pi_0(A) \quad \text{for all} \quad A \in \calA_{\Lambda^c}.
\end{equation}
This criterion means that we select those representations $\pi$ of $\calA_\Gamma$ that are unitarily equivalent to the reference representation $\pi_0$ \emph{when restricted to observables outside an arbitrary cone $\Lambda$}.
The interpretation there is that charges can be localized in a cone, and that moreover they are transportable, in the sense 
that we can move the localization region around with unitary operators. In conformal or relativistic theories there are analogous criteria, depending on the type of charges one wants to describe~\cite{BuchholzF,DHR1,GabbianiF}.

The second issue is that the set of representations has relatively little structure, compared for example to endomorphisms 
of $\calA_\Gamma$, which can be composed.
It is therefore more convenient to work with endomorphisms.
For the quasi-local algebra $\calA_{ \Gamma}$, it is known that any cyclic representation 
is equivalent to one described by an asymptotically inner endomorphism \cite{Kishimoto}.
That is, let $\pi$ be a cyclic representation and $\pi_0$ the irreducible reference representation.
Then there is a continuous path $\{U_t\}_{t \in [0,\infty)}$ of unitaries in $\calA_\Gamma$ such that $\rho(A) = \lim_{t \to \infty} U_t A U_t^*$ for all $A \in \calA_\Gamma$ defines an endomorphism with $\pi$ unitarily equivalent to $\pi_0 \circ \rho$.
Thus, considering only endomorphisms may not be so restrictive.
The downside is however that one loses control of locality in this approach.
Alternatively, one can use Haag duality to pass from representations to endomorphisms which \emph{do} enjoy good locality properties~\cite{BuchholzF,DHR1,FiedlerN}.
Here we do not assume Haag duality, but bypass this step by first restricting to representations that are obtained by 
composing $\pi_0$ by an endomorphism that has suitable localization properties. 
It is possible that in some situations it may be of interest to compare the analysis 
of sectors obtained by endomorphisms to the potentially larger family obtained by representations,
but for our purposes all representations of interest can be obtained by suitable endomorphisms.

We now introduce the primary objects of interest.
Let $\Delta$ be a semi-group (with respect to composition) of $*$-endomorphisms on $\calA_\Gamma$.
Let $\mathcal{T}_\Delta$ denote the set of all intertwiners $T$ of pairs of endomorphisms in the semigroup,
i.e., $(\rho,\sigma)\in \Delta\times\Delta$:
$$
T\rho(A) = \sigma(A) T, \quad A\in \calA_\Gamma.
$$
Then $(\Delta,\mathcal{T}_\Delta)$ is a $C^*$-category whose objects are $\rho \in \Delta$ and arrows are 
intertwiners $\mathcal{T}_\Delta$.
In the following, we introduce a \emph{superselection criterion} on the endomorphisms of $\calA_\Gamma$,
to obtain the semi-group $\Delta$ which has objects that are \emph{almost localized} and \emph{transportable}
with respect to the reference state  $\omega_0$. 
In that case it is not sufficient to restrict to intertwiners in $\calA_\Gamma$ and it is necessary to redefine $\mathcal{T}_\Delta$ in terms of the reference representation $\pi_0$.

For our purposes it is necessary to have good control on the localization of the endomorphisms. To describe this,
let $ \mathcal{F}_\infty$ denote the family of functions $f: \RR^+\times \RR^+ \ra \RR^+$ such that 
$f_\epsilon(n) \equiv f(\epsilon,n)$ is non-increasing in both variables and 
$\lim_{n \ra \infty } n^{k} f_\epsilon(n)  = 0 $ for all $k \in \NN.$
Generally, we say that $ f_\epsilon \in O(n^{-\infty})$, since it goes to zero faster than any power of $n$.
Notice that $\mathcal{F}_\infty$ is closed under addition.
The family $\mathcal{F}_\infty$ will be used as a measure of locality for both operators and endomorphisms of $\calA_\Gamma$.

\begin{defn}\label{defn:alend}
	A $*$-endomorphism $\rho$ is said to be \emph{almost localized} in a cone $\Lambda_\alpha \in \mathcal{C}$ if 
	there exists a function  $f \in \mathcal{F}_\infty$  such that 
	\begin{equation}
	\sup_{A \in \calA_{\Lambda^c_{\alpha+\epsilon}+n}} \frac{\| \rho(A) - A\|}{\|A\|} \leq f_\epsilon(n)  
	\quad \mbox{for all  \quad } 0< \epsilon< \pi - \alpha, \quad n \in \NN.
\end{equation}
The function $f$ is called the \emph{decay function} for $\rho$ in $\Lambda_\alpha$.
\end{defn}
Note that the definition says that $\rho$ is close to the identity outside a shifted cone with a slightly wider opening angle
than $\alpha$, and the approximation improves the further we shift away the cone.
In applications, the localization cone $\Lambda_{ \alpha  }$ will be \emph{convex}, that is, $ 0<\alpha< \pi/2$.
By convention we do not consider the open half-space ($\alpha =\pi/2$) as a convex cone.

For a $*$-endomorphism $ \rho$, we consider the cyclic representation of the form $(\pi_0\circ \rho, \Omega_0, \calH_0)$.
Let $ \rho \cong \rho'$ denote unitary equivalence of the corresponding representations, $ \pi_0\circ \rho \cong \pi_0 \circ \rho'$.
Since $\pi_0$ is faithful we will often abuse notation to write $ \pi_0 \circ \rho$ as simply $ \rho$.

\begin{defn}\label{defn:transportable}
Let $\rho$ be almost localized in $\Lambda_\alpha$ with a decay function $f$.
	We say that $\rho$ is \emph{transportable} with respect to $\omega_0$ if for each cone 
	$\Lambda'_\beta \in \mathcal{C}$ there exists a $*$-endomorphism $\rho'_\beta$ almost localized in $\Lambda'_\beta$ 
	such that $\rho\cong \rho'_\beta$.
	Further, if $\beta \geq \alpha$ then $\rho'_\beta$ can be chosen to have decay function $f$ in $\Lambda'_\beta$.
	Finally, we assume that if $\beta < \alpha$, there is some decay function $g$ that works for all cones $\Lambda'_\beta \in \mathcal{C}_\beta$.
\end{defn}
The last condition guarantees that if we transport to a smaller cone, the resulting almost localized endomorphism is still transportable according to our definition.
That is, we can consider smaller cones, at the expense of having to choose a possibly worse decay function.
For our purposes it would be enough to require this only for angles above some small enough minimum angle $\beta_0 > 0$, but relaxing this condition complicates the proofs.

When the context is clear, we simply say that $ \rho$ is transportable.
In translation invariant models, one might expect the sectors to be translation covariant, that is $ T_{-x} \circ \rho\circ T_x \cong \rho$ for all $x \in \ZZ^\nu$. 
This would give transportability with respect to translations of cones.
In addition to this, however, we need to be able to rotate cones and decrease their opening angles as well.
These symmetries do not have a natural action on the lattice.
In any case, for our results we do not need covariance, and hence we do not make this assumption.

\begin{defn}\label{def:superselection}
	A semi-group $\Delta$ of endomorphisms of $\calA_\Gamma$ is said to satisfy the almost localized 
	and transportable \emph{superselection criterion} for $\omega_0$ (or for its GNS representation $\pi_0$) if
 for all $\rho \in \Delta$ there exists a cone $\Lambda_{ \alpha  }^\rho \in \mathcal{C}$ such that $\rho$ is almost localized in $\Lambda_{ \alpha  }^\rho$, and
 $\rho$ is transportable with respect to the state $\omega_0$.
\end{defn}

Note that transportability is defined with respect to the representation $\pi_0$.
Consequently, it is natural to also consider intertwiners with respect to this representation, and redefine $\mathcal{T}_\Delta$ to be the set of operators $T \in \mathcal{B}(\mathcal{H}_0)$ such that $T$ intertwines $\pi_0 \circ \rho$ and $\pi_0 \circ \sigma$ for some $\rho,\sigma \in \Delta$.
To emphasize the dependence on $\pi_0$ we will use the notation $T \in (\rho,\sigma)_{\pi_0}$, although if the reference representation is clear from the context we sometimes drop the subscript.
With this notation, $\rho \cong \sigma$ precisely if there is some unitary $T \in (\rho,\sigma)_{\pi_0}$.
An equivalence class of such endomorphisms $\rho$ such that $\pi_0 \circ \rho$ is irreducible is called a \emph{superselection sector} or simply a \emph{sector}.

Now let $\mathcal{A}_\Delta$ be the $C^*$-subalgebra of $\calB(\calH_0)$ generated by $\calA_\Gamma$ and $\mathcal{T}_\Delta$.
A technical difficulty in extending $\Delta$ to a tensor $C^*$-category, and thus obtaining fusion rules for $\Delta$, is that the 
endomorphisms $\rho \in \Delta$ are defined on $\calA_\Gamma$ but do not necessarily have a unique extension to $\calA_{ \Delta}$.
A key result in this paper is that this difficulty can be overcome. In particular, it shows that if $\Delta$ satisfies the almost localized and transportable superselection criterion 
then $\Delta$ is a tensor $C^*$-category.

\begin{thm}\label{thm:asymptopia}
    Suppose $\omega_0$ satisfies the strong approximate split property for cones.
    Let $\Delta$ be a semi-group of endomorphisms satisfying the almost localized and transportable superselection criterion~\ref{def:superselection}
    and suppose that each $\rho \in \Delta$ is almost localized in a convex cone $\Lambda_{ \alpha}^\rho$.
    Suppose	there exists a convex cone $K \in \mathcal{C}$ such that 
    $\Lambda_{ \alpha }^\rho \ll (K+n^\rho)^c$ for all $\rho$ and some $n^\rho \in\NN$.
	Then, each $\rho \in \Delta$ has a unique extension to a $*$-endomorphism $\widehat{\rho}$ on $\calA_{ \Delta}$
	and $\Delta$ is a tensor $C^*$-category. Furthermore, if $\nu \geq 2$ then $\Delta$ is a braided tensor $C^*$-category.
\end{thm}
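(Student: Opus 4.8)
The plan is to follow the DHR-style program, adapted to the almost-localized setting, in three stages: (1) extend each $\rho \in \Delta$ uniquely to $\widehat\rho$ on $\calA_\Delta$; (2) build from $\Delta$ a bi-asymptopia in the sense of Buchholz et al.~\cite{BuchholzAA} and invoke their machinery to obtain the tensor $C^*$-category structure; (3) when $\nu \ge 2$, use the extra room in cone geometry to define a braiding and verify the hexagon identities. Throughout, the common ``cone of escape'' $K$ plays the role of a region to which all charges can be transported, away from each other.

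First I would address the extension problem. Given $\rho \in \Delta$ almost localized in the convex cone $\Lambda_\alpha^\rho$, I want to extend it to $\mathcal{R}(\Lambda)$ for suitable cones $\Lambda$, and then to $\calA_\Delta = C^*(\calA_\Gamma \cup \mathcal{T}_\Delta)$. The key is that for $A \in \calA_{\Lambda^c}$ with $\Lambda$ slightly larger than $\Lambda_\alpha^\rho$ and pushed out, $\rho(A) - A$ is small in norm with the decay controlled by $f \in \mathcal F_\infty$; combined with the strong approximate split property (Definition~\ref{def:strongsplit}), which sandwiches a type I factor $\mathcal N$ between $\mathcal{R}(\Lambda_\alpha^\rho+N)$ and $\mathcal{R}(\widehat\Lambda)$ and dually between $\mathcal{R}(\widehat\Lambda^c)$ and $\mathcal{R}((\Lambda_\alpha^\rho+N)^c)$, one shows that $\pi_0\circ\rho$ restricted to $\calA_{\Lambda^c}$ is \emph{quasi-equivalent} to $\pi_0$ there, hence extends normally to $\mathcal{R}(\Lambda^c)$; transportability then lets us also handle the ``inside'' part. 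The uniqueness of the extension follows because $\calA_\Gamma$ is weakly dense and $\widehat\rho$ must be normal on the relevant von Neumann algebras. To extend to intertwiners $T \in \mathcal{T}_\Delta$: each such $T$ intertwines $\pi_0\circ\rho$ and $\pi_0\circ\sigma$, and using almost localization of $\rho,\sigma$ together with the split property one proves $T$ is almost localized in a cone (this is the ``locality of intertwiners'' promised in Section~\ref{sec:intertwiners}), so $\widehat\rho$ acts on it sensibly and the extension is multiplicative and $*$-preserving. The tensor product is then $\rho \otimes \sigma \equiv \rho\,\widehat\sigma$ (i.e. $\widehat\rho \circ \sigma$, suitably interpreted on $\calA_\Gamma$), with $\widehat\rho$ applied to the intertwiners of $\sigma$ to get the tensor product of arrows; associativity and the interchange law are checked on $\calA_\Gamma$ and pass to $\calA_\Delta$ by density plus normality.

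For the tensor $C^*$-category structure itself, rather than re-deriving everything by hand I would exhibit the data as a \emph{bi-asymptopia}: for $\rho \in \Delta$ and the escape cone $K$, transportability provides a family of endomorphisms $\rho_n$ almost localized in translates $K + n$ (with a uniform decay function $g$, by the last clause of Definition~\ref{defn:transportable}), together with unitary intertwiners $V_n \in (\rho,\rho_n)_{\pi_0}$; the pair of sequences $(\rho_n, V_n)$, as $n\to\infty$, escapes to spacelike infinity along $K$ in a controlled way, which is precisely the asymptotic-abelianness-type input that~\cite{BuchholzAA} requires. One then checks the axioms of a bi-asymptopia (compatibility of the two directions of escape, the relevant norm estimates coming from $\mathcal F_\infty$), and the output of that theory is exactly a tensor $C^*$-structure on $\Delta$, which moreover agrees with the naive DHR definition when strict localization holds. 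The condition $\Lambda_\alpha^\rho \ll (K+n^\rho)^c$ is what guarantees each charge can be cleanly separated from the escape region, so the split property can be applied at each stage.

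For the braiding when $\nu \ge 2$: given $\rho$ localized in $\Lambda_1$ and $\sigma$ localized in $\Lambda_2$, transport $\sigma$ to $\sigma'$ localized in a cone $\Lambda_2'$ that is ``to the left'' of $\Lambda_1$ — possible in dimension $\ge 2$ because one can fit two disjoint cones pointing in opposite-ish directions, and here one additionally uses the freedom to shrink opening angles (Definition~\ref{defn:transportable}, $\beta < \alpha$ clause) and push cones out so that $\Lambda_1 \ll (\Lambda_2')^c$. Let $U \in (\sigma,\sigma')_{\pi_0}$ be a unitary intertwiner; then $\varepsilon(\rho,\sigma) \equiv \widehat\rho(U)^*\,U$ (with $\widehat{\phantom{x}}$ the extension from Stage 1) is a candidate braiding operator. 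The main work is: (a) show $\varepsilon(\rho,\sigma)$ is independent of the choices of $\Lambda_2'$ and $U$ in its relevant equivalence class — this needs almost-locality of intertwiners plus the split property, replacing the usual Haag-duality argument; (b) verify naturality in both arguments and the two hexagon (braid) equations, which are algebraic identities in $\mathcal{T}_\Delta$ once (a) is in place; and (c) confirm that in $\nu = 1$ the construction degenerates (no two disjoint separated cones of the right kind), consistent with only claiming braiding for $\nu\ge 2$. I expect \textbf{Stage (a)/(b)-without-Haag-duality} to be the principal obstacle: in the classical DHR/FRS treatment, Haag duality is what makes the braiding well-defined and the hexagons hold, so the crux here is to show that the strong approximate split property, together with the $O(n^{-\infty})$ decay of the almost-localized endomorphisms and of their intertwiners, is a strong enough substitute — concretely, that errors from ``almost'' rather than ``strict'' localization are pushed to zero in the limit defining the relevant operators and never obstruct the needed commutation relations. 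The deferred Lieb–Robinson bounds for cone-localized observables (Appendix) and the intertwiner-locality results of Section~\ref{sec:intertwiners} are the technical engines I would lean on for this.
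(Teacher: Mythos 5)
Your overall architecture --- intertwiner locality from the strong approximate split property, then the asymptopia/bi-asymptopia machinery of Buchholz et al.\ with charges transported along the auxiliary cone $K$, then a braiding from left/right transport when $\nu\geq 2$ --- is the same as the paper's (Theorems~\ref{thm:Asymptopia} and~\ref{thm:biasymptopia}). But two steps, as you state them, would not go through. First, your Stage 1 extension via quasi-equivalence of $\pi_0\circ\rho$ and $\pi_0$ on $\calA_{\Lambda^c}$ and a normal extension to $\mathcal{R}(\Lambda^c)$ is exactly the von Neumann-algebraic route the paper deliberately avoids: with only almost localization, $\pi_0\circ\rho$ and $\pi_0$ merely approximately agree outside the cone, and you offer no argument for the claimed quasi-equivalence. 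The paper instead obtains both the extension to $\calA_\Delta$ and its uniqueness directly from Theorem 4 of~\cite{BuchholzAA} once an asymptopia is exhibited; the real work is the asymptotic commutation estimate $\lim_{m,m'}\|[V_{m'}RU_m^*,S]\|=0$, proved by showing the transported intertwiner $V_{m'}RU_m^*$ is almost localized deep inside $K$ while $S$ lies in $\calB_K$ and can be norm-approximated in $\mathcal{R}((K+n)^c)$ --- this is where Lemma~\ref{lem:alocauxalg} (hence the split property) enters. You point at the right ingredients but do not carry out this estimate.

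Second, and more seriously, your braiding $\varepsilon(\rho,\sigma)\equiv\widehat\rho(U)^*U$ with a single fixed transporter $U\in(\sigma,\sigma')$ is not well-defined in this setting: two admissible choices of $U$ (or of the displaced cone) differ by an intertwiner $W$ almost localized far from $\Lambda_\alpha^\rho$, on which $\widehat\rho$ acts only \emph{approximately} trivially, so your claim (a) of exact independence of the choices fails; the dependence vanishes only in the limit of receding cones. This is precisely why the paper defines $\epsilon_{\rho,\sigma}$ as the limit $\lim_{m,n}(V_n^*\otimes U_m^*)(U_m\otimes V_n)$ over sequences constituting a bi-asymptopia, proves the commutation estimates making that limit exist and be choice-independent, and then obtains the intertwining property, naturality and the braid equations from Theorem 8 of~\cite{BuchholzAA} (with Lemma~\ref{lem:indK} handling independence of the auxiliary cone) rather than as ``algebraic identities once (a) is in place.'' You correctly identify the well-definedness-without-Haag-duality issue as the crux, but the proposal leaves it unresolved, and the specific single-transporter formula you propose would have to be replaced by the limiting construction for the argument to close.
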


This formulation of the theorem follows directly from Theorems~\ref{thm:Asymptopia} and~\ref{thm:biasymptopia} below.
The proof, which is inspired by~\cite{BuchholzAA}, can be divided into two main steps.
First, we prove norm locality estimates on the intertwiner maps $\mathcal{T}_\Delta$, which do not necessarily belong to the quasi-local algebra. 
Indeed, we show that if $T \in (\rho, \sigma)$ and $\rho$ and $\sigma$ are almost localized in a cone $\Lambda_{ \alpha  }$,
then $T$ is `almost localized' in the same cone.
Second, we construct an \emph{asymptopia}, as introduced in \cite{BuchholzAA}, for $\Delta$.
More specifically, for each $\rho, \rho', \sigma, \sigma' \in \Delta$ we construct sequences of unitaries such that
$\rho(A) = \lim_{n\ra\infty} U_n^* A U_n$ and $\sigma(A) = \lim_{n \ra \infty } V_n^* A V_n$,
and given $R \in (\rho, \rho')$ and $R' \in (\sigma,\sigma')$,
\begin{equation}\label{eqn:asympabelian}
\lim_{m,n\ra \infty} \| [ V_n R U_m^*, R'] \| = 0.
\end{equation}
Heuristically, the limit above describes a procedure in which the intertwiner $R$ is moved to  infinity in a direction disjoint from the support of $R'$,
and thus the interwiners are said to be \emph{asymptotically abelian}.
Eq. \eqref{eqn:asympabelian} is motivated  by the following formal calculation,
$$
	\| R \rho(R')  -  \sigma(R') R  \|  = \lim_{m,n \ra \infty} \| R U_m^* R' U_m - V_n^* R' V_n R \| = \lim_{m,n \ra \infty}  \|V_n R U_m^* R' -  R' V_n RU_m^* \|,
$$
where this is formal in the sense that $\rho(R')$ and $\sigma(R')$ may not be well defined, since the intertwiners need not be in $\calA_\Gamma$.
When $\nu \geq 2$ a similar construction leads to a \emph{bi-asypmtopia} for $\Delta$,
and further, the results of \cite{BuchholzAA} give a braided tensor $C^*$-structure for $\Delta$.

The cone $K$ can be interpreted of as a forbidden direction
and is used in the construction of the auxiliary $C^*$-algebra \cite{BuchholzF, Naaijkens11},
$\calB_K\equiv  \overline{ \bigcup_{x\in \ZZ^\nu} \mathcal{R}((K+x)^c)}^{\| \cdot\|} =  \overline{ \bigcup_{n \in \NN } \mathcal{R}((K+n)^c)}^{\| \cdot\|}$.
Indeed we will show that the intertwiners belong to the auxiliary algebra.
In principle, the tensor structure on $\Delta$ may depend on the choice of $K$.
However, we show in Lemma~\ref{lem:indK} that is not the case.

The use of the cone $K$ is necessary mainly for technical reasons.
In particular, we need to extend our endomorphisms to endomorphisms of a larger algebra which contains the intertwiners, in order to define the monoidal product.
The cone $K$ provides us with a canonical way to define this algebra (see equation~\ref{eqn:auxalg} below).
In the case that we consider, the choice also provides a canonical choice of direction in which we can move charges to infinity, in such a way that we can guarantee that no other charges are in the way.
This will be useful to prove various commutativity bounds, which are crucial in extending the approximately localized endomorphisms to act on the intertwiners.
At the same time, using the same auxiliary algebra as in the unperturbed case~\cite{Naaijkens11} makes it easier to relate these results to our present analysis.
And finally, in the 2D case that we are interested in, it can be used to define unambiguously when a cone is to the ``left'' of another one, which is necessary in order to define a braiding.
We will come back to this point later.
As a final remark we mention that it is possible to avoid the use of a forbidden direction by considering different ``coordinate patches'' (see for example~\cite{FroehlichGabbiani}).
This would however complicate our analysis for an end result which is essentially the same.

\subsection{Dynamics of quantum spin systems and Lieb-Robinson bounds}\label{sec:dynamics}
To discuss specific models, we have to define the dynamics of the model.
This allows us to talk about ground states, and more generally, about quantum phases.
For our problem ground states play a fundamental role: typically in the models of interest to us, there is a ``preferred'' 
ground state, for example because it is the only ground state that is invariant with respect to translations. This then gives a 
natural reference state $\omega_0$, and via the GNS construction also a reference representation, which is necessary for the discussion 
of superselection sectors. This is analogous to the role the vacuum plays in relativistic theories.
We will now state our assumptions on the dynamics.

To define dynamics for a quantum spin system we first recall the notion of interactions.
 An \emph{interaction} is a map $\Phi: \mathcal{P}_0(\Gamma) \ra \calA_{loc}$ such that
$\Phi(X) \in \calA_X $ and $ \Phi(X)^* = \Phi(X)$.
For a finite subset $\Lambda \subset \Gamma$, the local Hamiltonians and the Heisenberg dynamics corresponding to $\Phi$ are given, respectively, by
\[ H_\Lambda \equiv \sum_{X\subset \Lambda} \Phi(X) \quad \text{ and } \quad \tau^\Lambda_t (A) \equiv e^{i t H_\Lambda} A e^{-i t H_\Lambda}.\]
They describe the time evolution due to all interactions within the region $\Lambda$.

If the interaction is sufficiently local, that is, 
the norm $\| \Phi(X) \|$ decays sufficiently fast in $\operatorname{diam}(X)$,
it will be possible to define a dynamics on the infinite system.
The full details for the range of interactions we allow will be discussed in Sec.~\ref{sec:stability}.
For the purposes of the present discussion, one can simply assume $\Phi$ to be a finite range interaction. 
That is, suppose there is a $R>0$ such that $\Phi(X) = 0$ if $ \operatorname{diam}(X) >R$ and $\| \Phi(X) \|$ is uniformly bounded.
Let  $\Lambda_n \in \mathcal{P}_0(\Gamma)$, $n\geq 1$, be an increasing and exhausting sequence
in $\Gamma$, that is, if $\Lambda_n \subset \Lambda_{n+1}$ and $\Gamma = \bigcup_{n} \Lambda_n$.
Then, the norm limit $\tau_t(A) \equiv \lim_{n \ra \infty} \tau_t^{\Lambda_n}(A) $
exists for all $t\in \RR$ and $A \in \calA_\Gamma$ \cite{BratteliR2}.
The infinite volume dynamics $\tau_t$ defines a strongly continuous, one-parameter group of automorphisms on $\calA_\Gamma$.
Locality estimates for the infinite volume dynamics are described by Lieb-Robinson bounds \cite{BratteliR2,NachOS}.
In their simplest form, it says that for local observables $A \in \calA_X, B \in \calA_Y$, we have
\[
	\| [ \tau_t(A), B ] \| \leq 2 C_{A,B} \| A \| \|B\| e^{v|t| - d(X,Y)},
\]
where $C_{A,B}$ may depend on the interaction and the size of the supports and their distance, and $v \geq 0$ is called the Lieb-Robinson velocity.
It can be seen as an analogy to the speed of light in relativistic theories.
Under quite general assumptions, which we will discuss in detail in Section~\ref{sec:LRcones}, good bounds can be obtained for $C_{A,B}$.
In particular we will need a Lieb-Robinson type of bound for observables that are localized in \emph{infinite} regions, such as the cones above.

A consequence is that the infinite volume dynamics is \emph{quasi-local} in the sense that for $A \in \calA_X$ and $\abs{X}<\infty$,
the time-evolved observable $\tau_t(A)$ can be approximated exponentially well in the set $ X + v \abs{t} + l$ for $l>0$ and some $v>0$ \cite{NachOS}. 
This again has an analog for cones.
Such types of quasi-local maps appear in various contexts, and do not always come from some dynamics.
See~\cite{NachtergaeleSY} for an overview.
In light of this, we will sometimes call a family $s \mapsto \alpha_s$ of automorphism a \emph{quasi-local dynamics} if it satisfies a Lieb-Robinson bound as above.
In fact, it is often enough to restrict to some interval, say $s \in [0,1]$.
In any case, the key point to keep in mind is that if $A$ is strictly local, $\alpha_s(A)$ can be well approximated by strictly local observables in such a way that the support of the strictly local approximations does not grow too quickly.


\subsection{Stability under deformation by a quasi-local dynamics}
The category $\Delta$ that was introduced in the previous section depends on the choice of reference representation $\pi_0$.
If the superselection structure of $\Delta$ corresponds to a system of quasi-particle excitations then it is expected 
that this structure is an invariant of a gapped ground state phase~\cite{BravyiHM,Haah,KitaevQD}.
In particular, when the quasi-particle excitations are anyons, this stability is expected to play a crucial role 
in the classification of two dimensional topologically ordered phases~\cite{KitaevHC}.
Addressing these questions is the main motivation behind the present work.

Our stability results for the superselection structure of $\Delta$ are twofold.
First, we show that if $\Delta$ satisfies the almost localization and transportability criterion 
then the superselection structure is stable under any deformation of $\Delta$ described by a quasi-local dynamics, in a sense that we will explain below.
Second, we apply this to show that the superselection sectors of anyons in the Kitaev abelian quantum double models satisfy the almost localization and transportability criterion,
and prove the superselection structure of anyons is stable under any uniform and local perturbation that does not close the spectral gap in the ground state.
This provides a general framework to study the anyon structure of a topologically ordered system and prove its stability.
 
The main result can be paraphrased as follows: given a (family of) automorphisms $\tau_t$ that act sufficiently local, the sector structure $\Delta$ is invariant under conjugation with $\tau_t$.
The precise conditions on $\tau_t$ are introduced in Section~\ref{sec:stability}, but one could think for example of dynamics satisfying a suitable Lieb-Robinson type of bound.
We note that $\tau_t$ need not be directly related to the dynamics of the underlying system.
Rather, it would typically be obtained as a ``spectral flow''~\cite{BachmannMNS}.
The important things is that it should be sufficiently local, in the sense that for local observables $A$ and $B$, the commutator $\| [\tau_t(A), B ]\|$ should decay super-polynomially in the distance between the supports of $A$ and $B$.
Alternatively, this means that $\tau_t(A)$ can be well approximated by a strictly local observable $A'$, with error decaying super-polynomially in the size of the support of $A'$.
If this is the case, and $\Delta$ is a semi-group of endomorphisms satisfying the almost localized and transportable superselection criterion for $\omega_0$,
it follows that $\tau_t^{-1} \circ \Delta \circ \tau_t$ satisfies the same criteria for the state $\omega_0 \circ \tau_t$.

\begin{thm}\label{thm:stabheuristic}
	Let $\nu \geq 2$ and assume that we have a semi-group $\Delta$ and reference state $\omega_0$ satisfying all the assumptions of Theorem~\ref{thm:asymptopia}.
	Let $t \mapsto \tau_t$ for $t \in [0,1]$ be a quasi-local dynamics in the sense of Section~\ref{sec:dynamics}.
Then endomorphisms in $\tau_t^{-1} \circ \Delta \circ \tau_t$ are localized and transportable with respect to $\omega_0 \circ \tau_t$, and the semigroup forms a tensor category that is braided tensor equivalent to $\Delta$.
\end{thm}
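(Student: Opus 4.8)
The plan is to transport the entire structure of Theorem~\ref{thm:asymptopia} through conjugation by $\tau_t$. First I would set $\widetilde\omega_0 \equiv \omega_0 \circ \tau_t$ and $\widetilde\rho \equiv \tau_t^{-1} \circ \rho \circ \tau_t$ for $\rho \in \Delta$, and check that $\widetilde\Delta \equiv \tau_t^{-1} \circ \Delta \circ \tau_t$ is again a semigroup of $*$-endomorphisms of $\calA_\Gamma$ (immediate, since $\tau_t$ is an automorphism). The core estimate is that $\widetilde\rho$ is almost localized (in a slightly fattened cone) with respect to $\widetilde\omega_0$ with a decay function still in $\mathcal{F}_\infty$: given $A \in \calA_{\Lambda^c_{\alpha+\epsilon}+n}$, approximate $\tau_t(A)$ by a strictly local observable $A'$ supported in a region that has grown by only a bounded amount, with error $g_\epsilon(n) \in O(n^{-\infty})$ by the quasi-locality hypothesis on $\tau_t$; then $\|\widetilde\rho(A) - A\| = \|\tau_t^{-1}(\rho(\tau_t(A)) - \tau_t(A))\| = \|\rho(\tau_t(A)) - \tau_t(A)\|$, and replacing $\tau_t(A)$ by $A'$ inside the $\rho$-difference costs $2g_\epsilon(n)\|A\|$ while $\|\rho(A') - A'\| \le f_\epsilon(n - c)\|A'\|$ once $A'$ still sits outside a (slightly smaller) shifted cone. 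Absorbing the constant shift $c$ and using that $\mathcal{F}_\infty$ is closed under addition and under bounded shifts of the argument, this yields a new decay function $\widetilde f \in \mathcal{F}_\infty$. So $\widetilde\rho$ is almost localized in, say, $\Lambda_{\alpha'}$ for some slightly larger convex opening angle $\alpha' < \pi/2$ (shrinking the original cones a bit beforehand so that the fattened ones are still convex and still satisfy $\Lambda_{\alpha'}^{\widetilde\rho} \ll (K+m^{\widetilde\rho})^c$).

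Next I would handle transportability. If $\rho \cong \rho'_\beta$ via a unitary $T \in (\rho, \rho'_\beta)_{\pi_0}$, I need to produce a unitary implementing $\pi_0 \circ \tau_t \circ \widetilde\rho \cong \pi_0 \circ \tau_t \circ \widetilde{\rho'_\beta}$ — but $\pi_0 \circ \tau_t \circ \widetilde\rho = \pi_0 \circ \rho \circ \tau_t$ and similarly for $\rho'_\beta$, so the GNS representation of $\widetilde\rho$ with respect to $\widetilde\omega_0$ is $(\pi_0 \circ \tau_t) \circ \widetilde\rho$, which is unitarily equivalent (via the canonical GNS unitary identifying $\calH_{\widetilde\omega_0}$ with $\calH_0$) to $\pi_0 \circ \rho \circ \tau_t$. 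Conjugating the original intertwiner relation $T\,\pi_0(\rho(B)) = \pi_0(\rho'_\beta(B))\,T$ by setting $B = \tau_t(A)$ shows the \emph{same} operator $T$ intertwines $\pi_0 \circ \rho \circ \tau_t$ and $\pi_0 \circ \rho'_\beta \circ \tau_t$. Hence, under the GNS identification $\widetilde\omega_0 \leftrightarrow \omega_0$, $T$ also intertwines the reference representation of $\widetilde\rho$ and $\widetilde{\rho'_\beta}$; and the decay-function bookkeeping from the previous paragraph applies uniformly to the transported family, giving the required uniform decay function $\widetilde g$ for angles $\beta < \alpha'$. This establishes that $\widetilde\Delta$ satisfies the almost localized and transportable superselection criterion for $\widetilde\omega_0$, and (after the initial cone-shrinking) all the geometric hypotheses of Theorem~\ref{thm:asymptopia}, so $\widetilde\Delta$ is itself a braided tensor $C^*$-category.

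Finally, for the braided tensor \emph{equivalence} $\widetilde\Delta \simeq \Delta$: I would write down the obvious functor $F$ sending $\rho \mapsto \tau_t^{-1} \circ \rho \circ \tau_t$ on objects and sending an intertwiner $S \in (\rho,\sigma)_{\pi_0}$ (an operator on $\calH_0$) to the \emph{same} operator $S$, now viewed in $(\widetilde\rho, \widetilde\sigma)_{\widetilde\pi_0}$ via the GNS identification above. This is manifestly a bijective-on-objects, fully faithful $*$-functor. The work is to check it respects the tensor product and the braiding. Here I would need that the unique extensions $\widehat{\widetilde\rho}$ to the auxiliary algebra $\calB_K$ (now built from $\widetilde\omega_0$) correspond under $F$ to $\widehat\rho$, which follows from uniqueness of the extension together with the fact that $F$ intertwines the two auxiliary-algebra inclusions $\calA_\Gamma \hookrightarrow \calB_K$; and that the asymptopia / bi-asymptopia sequences of unitaries used in Theorem~\ref{thm:asymptopia} transform correctly. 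I expect \textbf{this last point to be the main obstacle}: one must verify that a choice of approximating sequences $U_n, V_n$ for $\rho, \sigma$ can be used (after quasi-local approximation by $\tau_t$) to build valid asymptopia sequences for $\widetilde\rho, \widetilde\sigma$, and that the resulting braiding $\varepsilon_{\widetilde\rho,\widetilde\sigma}$ is literally the operator $\varepsilon_{\rho,\sigma}$ — i.e. that the ``left/right'' ordering of cones determined by the forbidden direction $K$ is preserved, since both theories use the same $K$. Given the almost-localization estimates already in hand and the fact that the intertwiners live in $\calB_K$ in both pictures, this should reduce to a limit-exchange argument of the type carried out in the proof of Theorem~\ref{thm:asymptopia}, with the super-polynomial errors from $\tau_t$ harmlessly added to the existing ones; but making the equality of braidings (as opposed to mere equivalence) precise is where the care is required.
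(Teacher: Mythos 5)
Your first two steps track the paper's proof closely: the almost-localization estimate is essentially Lemma~\ref{lem:stabaloc} (the paper keeps the \emph{same} cone $\Lambda_\alpha$ and absorbs the spreading into $\epsilon/2$ and $n/2$ rather than fattening the opening angle, but that is cosmetic), and your observation that the identical operator $T$ intertwines the conjugated endomorphisms with respect to $\pi_0\circ\tau_t$ is exactly the identity $(\rho,\sigma)_{\pi_0}=(\tau_t^{-1}\circ\rho\circ\tau_t,\tau_t^{-1}\circ\sigma\circ\tau_t)_{\pi_0\circ\tau_t}$ used in Theorem~\ref{thm:stabsectorstructure}. One caveat on the localization step: since the relevant observables live in complements of cones, which are infinite regions, the quasi-locality hypothesis in the form of a commutator bound does not immediately give a super-polynomially good local approximant; this is precisely why the paper proves the cone Lieb-Robinson bound (Theorem~\ref{thm:LRcone}) and Corollary~\ref{cor:quasiloc}, which you invoke only implicitly.

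The genuine gap is in your last step. You propose to verify that $\widetilde\Delta=\tau_t^{-1}\circ\Delta\circ\tau_t$ satisfies \emph{all} hypotheses of Theorem~\ref{thm:asymptopia} for the new reference state $\widetilde\omega_0=\omega_0\circ\tau_t$, run the asymptopia/bi-asymptopia construction intrinsically in the deformed theory, and then show the resulting braiding literally equals $\epsilon_{\rho,\sigma}$. But the hypotheses of Theorem~\ref{thm:asymptopia} include the strong approximate split property (Definition~\ref{def:strongsplit}) for the reference state, and it is not known that this property is inherited by $\omega_0\circ\tau_t$; the paper explicitly lists the stability of the strong approximate split property under quasi-local deformations as an open problem in the concluding remarks. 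Without it you cannot apply Theorems~\ref{thm:Asymptopia} and~\ref{thm:biasymptopia} to $(\widetilde\Delta,\widetilde\omega_0)$, so the ``main obstacle'' you flag (matching the two intrinsically defined braidings) never even gets off the ground. The paper avoids this entirely by \emph{transporting} the structure rather than reconstructing it: since the intertwiner spaces coincide, one defines the tensor structure and the braiding on $\widetilde\Delta$ by $\epsilon_{\tau_t^{-1}\circ\rho\circ\tau_t,\,\tau_t^{-1}\circ\sigma\circ\tau_t}\equiv\epsilon_{\rho,\sigma}$ and checks that the functor $F(\rho)=\tau_t^{-1}\circ\rho\circ\tau_t$, $F(T)=T$ is monoidal ($F(\rho\otimes\sigma)=F(\rho)\otimes F(\sigma)$ because the tensor product of objects is composition), full, faithful and surjective on objects, hence a braided tensor equivalence. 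If you want the conclusion as stated, you should either follow this push-forward route or add the (currently unproven) split property for $\omega_0\circ\tau_t$ as an extra assumption.
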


The proof relies on estimates for the Lieb-Robinson bounds applied to two infinite disjoint cones, in particular on how they depend on their separation.
Estimates of this type were first proved by Schmitz in~\cite{Schmitz}.
For completeness, we include a proof and discussion of the Lieb-Robinson bounds for cones in Appendix~\ref{sec:LRcones}. 

These results can be applied to prove stability of abelian quantum double models.
The details can be found in Section~\ref{sec:applications}, but the result is essentially as follows.
Consider a path $H_\Lambda(s)$ of local Hamiltonians, with $H_\Lambda(0)$ the Hamiltonian of the abelian quantum double model for some finite abelian group $G$.
If the ``perturbation'' along the path is sufficiently small, it can be proven that the Hamiltonians stay gapped~\cite{BravyiHM,MichalakisZ}.
If in addition the Hamiltonians are sufficiently local, and the path is piecewise $C^1$, it is possible to relate the ground state spaces along the path, at least those that are weak-$*$ limits of finite volume ground states.
We will denote these sets by $\mathcal{S}(s)$.
This can be done by adapting Hasting's quasi-adiabatic continuation technique (or spectral flow) to the infinite setting~\cite{BachmannMNS}.
The result is that there are automorphisms $\tau_s$ (with $s \in [0,1]$) such that for the weak-$*$ limits $\mathcal{S}(s) = \mathcal{S}(0) \circ \tau_s$.
Moreover, it can be shown that $\tau_s$ can seen as resulting from some time-dependent dynamics, and consequently the $\tau_s$ satisfy a corresponding Lieb-Robinson bound.
This puts us in a position to apply Theorem~\ref{thm:stabheuristic} to yield the following stability result:

\begin{thm}
	Let $H_\Lambda(s)$ be as described above, and write $\Delta(s)$ for the category of almost localized and transportable endomorphisms with respect to $\omega_s$, the infinite volume ground state of the local dynamics $\Lambda \mapsto H_\Lambda(s)$.
	Assume moreover that the object is $\Delta(0)$ satisfy an additional ``finite energy'' criterion.
	Then $\Delta(s)$ is braided tensor equivalent to $\operatorname{Rep}(\mathcal{D}(G))$, the representation category of the quantum double $\mathcal{D}(G)$, for all $s \in [0,1]$.
\end{thm}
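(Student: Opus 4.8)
The plan is to establish the theorem by a two-stage reduction: first identify $\Delta(0)$ with $\operatorname{Rep}(\mathcal{D}(G))$ using known results for the unperturbed quantum double model, then propagate this identification along the path using the stability result, Theorem~\ref{thm:stabheuristic}. For the first stage, I would verify that the reference state $\omega_0 = \omega_{s=0}$, the frustration-free ground state of the abelian quantum double model for $G$, satisfies all hypotheses of Theorem~\ref{thm:asymptopia}: purity and translation invariance are classical, and the strong approximate split property for cones was already noted in the excerpt to follow from~\cite{FiedlerN} together with Haag duality. One then has to check that the anyonic endomorphisms constructed in~\cite{FiedlerN,Naaijkens11} — which are strictly localized in cones — are in particular \emph{almost} localized (trivially, with decay function identically zero) and transportable in the sense of Definition~\ref{defn:transportable}, and that the semigroup $\Delta(0)$ they generate satisfies the geometric ``forbidden cone $K$'' condition of Theorem~\ref{thm:asymptopia}. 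The ``finite energy'' criterion is imposed by hypothesis precisely so that $\Delta(0)$ captures exactly the physical sectors; combining this with the known analysis one gets a braided tensor $C^*$-equivalence $\Delta(0) \simeq \operatorname{Rep}(\mathcal{D}(G))$.

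For the second stage, I would invoke the spectral-flow construction: since $H_\Lambda(0)$ is gapped and frustration-free with commuting terms, the results of~\cite{BravyiHM,MichalakisZ} guarantee the gap stays open along the (piecewise $C^1$, uniformly small, sufficiently local) path $H_\Lambda(s)$, and the quasi-adiabatic continuation of~\cite{BachmannMNS} produces automorphisms $\tau_s \in \Aut(\calA)$ with $\mathcal{S}(s) = \mathcal{S}(0) \circ \tau_s$; in particular $\omega_s = \omega_0 \circ \tau_s$ is the infinite-volume ground state at parameter $s$. The key input is that each $\tau_s$ satisfies a Lieb-Robinson bound, hence is a quasi-local dynamics in the sense of Section~\ref{sec:dynamics} — this is where the cone Lieb-Robinson estimates of Appendix~\ref{sec:LRcones} enter. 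Theorem~\ref{thm:stabheuristic} then applies directly: $\tau_s^{-1} \circ \Delta(0) \circ \tau_s$ is a semigroup of endomorphisms almost localized and transportable with respect to $\omega_0 \circ \tau_s = \omega_s$, forming a braided tensor category braided tensor equivalent to $\Delta(0)$. It remains to argue that this conjugated category \emph{is} $\Delta(s)$, i.e. that it exhausts all almost-localized transportable sectors for $\omega_s$ and not merely a subcategory; this should follow because the finite-energy criterion is preserved under $\tau_s$ (the spectral flow intertwines the dynamics appropriately, so excitation energies are controlled), and conversely any sector for $\omega_s$ pulls back under $\tau_s$ to a sector for $\omega_0$ satisfying the same criterion. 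Chaining the two equivalences gives $\Delta(s) \simeq \Delta(0) \simeq \operatorname{Rep}(\mathcal{D}(G))$ for all $s \in [0,1]$.

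The main obstacle I anticipate is the matching step just described: showing that $\tau_s^{-1}\circ\Delta(0)\circ\tau_s$ and $\Delta(s)$ coincide as categories, rather than one merely sitting inside the other. This requires that the ``finite energy'' superselection criterion behaves well under the spectral flow — concretely, that an endomorphism $\rho$ has excitation energy below the threshold with respect to $\omega_0$ if and only if $\tau_s^{-1}\circ\rho\circ\tau_s$ does with respect to $\omega_s$, which hinges on how the spectral flow relates the GNS Hamiltonians of $\omega_0$ and $\omega_s$ and on the threshold lying in the common spectral gap. A secondary technical point is confirming that the strictly localized quantum-double endomorphisms meet the precise transportability requirements of Definition~\ref{defn:transportable} uniformly, including the behavior under shrinking opening angles, and that the auxiliary-algebra/forbidden-cone bookkeeping is consistent between~\cite{Naaijkens11} and the present framework; but these are expected to be routine once the correspondence of criteria is in place.
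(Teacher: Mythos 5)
Your proposal is correct and follows essentially the same two-stage route as the paper: Theorem~\ref{thm:stabqd} identifies the unperturbed almost-localized category (with the finite-energy criterion, via the classification of low-energy pure states in~\cite{ChaNN} and Corollary~\ref{cor:monoidalinclusion}) with $\operatorname{Rep}(\mathcal{D}(G))$, and Theorem~\ref{thm:stabqds} transports this along the path using the spectral flow and Theorem~\ref{thm:stabsectorstructure}. The matching step you flag as the main obstacle is resolved in the paper exactly by the ingredient you already cite, namely $\mathcal{S}(s)=\mathcal{S}(0)\circ\alpha_s$ (Theorem~\ref{thm:autoeq}), combined with purity of the relevant states and the classification $\omega\cong\omega_x^{\chi,c}$, which yields $\Delta^{qd}(s)=\alpha_s^{-1}\circ\Delta^{qd}\circ\alpha_s$ rather than a mere inclusion.
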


The finite energy criterion says that for every irreducible endomorphism $\rho$, $\omega_0 \circ \rho$ is equivalent to one of the known states from~\cite{ChaNN}.
This criterion ensures that (for the unperturbed model) relaxing strict localization to approximate localization does not introduce new sectors.
We conjecture that this criterion is redundant (at least for the abelian quantum double model), but are not aware of a proof.

\section{Almost localized and transportable superselection sectors}\label{sec:superselect}

We discuss some properties of the almost localized endomorphisms of Definition~\ref{defn:alend}.
Notice that if $\rho$ is almost localized in $\Lambda_\alpha$ 
then by increasing the opening angle of the cone we get that
$ \rho$  is almost localized in $ \Lambda_{ \alpha + \delta}$ for all $0\leq \delta < \pi - \alpha $ with the same decay function.
On the other hand, if we decrease the opening angle then $\rho$ is no longer guaranteed to be almost localized in a cone with smaller opening angle.
In particular,  as $\epsilon \ra 0$ the function $f_\epsilon$ could diverge.

Perhaps the simplest example of an almost localized endomorphism is an exactly localized endomorphism, for which by definition $\rho(A) = A$ for $A \in \calA_{\Lambda^c}$ for some cone $\Lambda$.
Such an endomorphism could be obtained as follows.
Let  $U_n \in \calA_{ \Lambda_\alpha}$ be a sequence of unitary operators each supported on the cone $\Lambda_{ \alpha }$
and suppose the limit $\rho(A) \equiv \lim_{n\ra \infty} U_n^* A U_n$ exists for all $A \in \calA_{ \Gamma}$.
Then, $\rho$ is a $*$-endomorphism with $\rho(B) - B = 0$ for all $ B \in \calA_{ \Lambda_{\alpha}^c}$, by locality,
and it follows that $\rho$ is exactly localized in $\Lambda_\alpha$, and hence also almost localized in $\Lambda_{ \alpha }$.
It is well known that $\rho$ is an inner automorphism if and only if the sequence $U_n$ converges in $\calA_\Gamma$, but we are mainly interested in examples where this is \emph{not} the case, since by definition such automorphisms belong to the trivial (reference) sector.

In Definition~\ref{defn:alend} the class of decay functions $\mathcal{F}_\infty$ could in principle be weakened, 
however it is crucial that composition of endomorphisms preserve the almost localized property, as in the following lemma.
\begin{prop}\label{prop:aloccomp}
	Let $\rho$ and $\sigma$ be almost localized in $\Lambda_\alpha$ with decay functions $f$ and $ g$, respectively. 
	Then, $\rho \circ \sigma$ is almost localized in $\Lambda_\alpha$ with decay function $ f+ g$.
\end{prop}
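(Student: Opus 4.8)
The plan is to estimate $\|(\rho\circ\sigma)(A) - A\|$ for $A \in \calA_{\Lambda^c_{\alpha+\epsilon}+n}$ by inserting the intermediate term $\sigma(A)$ and using the triangle inequality:
\[
\|\rho(\sigma(A)) - A\| \leq \|\rho(\sigma(A)) - \sigma(A)\| + \|\sigma(A) - A\|.
\]
The second term is immediately bounded by $g_\epsilon(n)\|A\|$ by the almost localization of $\sigma$ in $\Lambda_\alpha$. The first term is the one requiring care: we would like to say it is bounded by $f_\epsilon(n)\|\sigma(A)\|$, but this requires knowing that $\sigma(A)$ lies in (or is well-approximated by something in) $\calA_{\Lambda^c_{\alpha+\epsilon}+n}$, which is not literally true since $\sigma$ need not preserve strict localization.

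\textbf{The main obstacle}, therefore, is controlling $\|\rho(\sigma(A)) - \sigma(A)\|$ given only that $\sigma(A)$ is \emph{approximately} localized in $\Lambda^c_{\alpha+\epsilon}+n$. The way I would handle this is to observe that $\sigma(A) = A + (\sigma(A) - A)$, apply $\rho$ to get $\rho(\sigma(A)) - \sigma(A) = (\rho(A) - A) + (\rho(\sigma(A)-A) - (\sigma(A)-A))$, and note that since $\rho$ is a $*$-endomorphism it is norm-contracting, so $\|\rho(\sigma(A)-A) - (\sigma(A)-A)\| \leq 2\|\sigma(A)-A\| \leq 2g_\epsilon(n)\|A\|$. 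Combined with $\|\rho(A)-A\| \leq f_\epsilon(n)\|A\|$, this already gives a bound of the form $(f_\epsilon(n) + 3g_\epsilon(n))\|A\|$, which suffices to prove almost localization with \emph{some} decay function in $\mathcal{F}_\infty$ (recall $\mathcal{F}_\infty$ is closed under addition), though not with the sharp constant $f+g$ claimed in the statement.

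To obtain the precise decay function $f+g$ as stated, a slightly more careful argument is needed: rather than bounding $\|\rho(\sigma(A)) - \sigma(A)\|$ crudely, one should use that the almost localization estimate in Definition~\ref{defn:alend} is phrased as a supremum over the \emph{algebra} $\calA_{\Lambda^c_{\alpha+\epsilon}+n}$, and exploit that $\rho$ restricted to that algebra is within $f_\epsilon(n)$ of the identity in the appropriate operator-norm sense. Concretely, I would show that the map $A \mapsto \rho(A) - A$ has norm at most $f_\epsilon(n)$ as an operator on $\calA_{\Lambda^c_{\alpha+\epsilon}+n}$, then apply this bounded linear map (extended suitably) to the element $\sigma(A)$, whose ``distance'' to $\calA_{\Lambda^c_{\alpha+\epsilon}+n}$ is governed by $g$. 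Tracking these two contributions and using $\|\sigma(A)\| \le \|A\|$ (as $\sigma$ is a $*$-endomorphism) yields the combined bound $\|\rho(\sigma(A)) - A\| \le (f_\epsilon(n) + g_\epsilon(n))\|A\|$. Finally, since $f, g \in \mathcal{F}_\infty$ and $\mathcal{F}_\infty$ is closed under addition with $f+g$ again non-increasing in both variables and in $O(n^{-\infty})$, we conclude $f+g$ is a valid decay function, and $\rho\circ\sigma$ is almost localized in $\Lambda_\alpha$.
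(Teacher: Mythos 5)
Your first decomposition (inserting $\sigma(A)$) does prove that $\rho\circ\sigma$ is almost localized with \emph{some} decay function in $\mathcal{F}_\infty$, namely $f+3g$, and your diagnosis of why this route is awkward is accurate: the $f_\epsilon(n)$ bound for $\rho$ is only available on the subalgebra $\calA_{\Lambda^c_{\alpha+\epsilon}+n}$, and $\sigma(A)$ does not lie in it. The genuine gap is in the final paragraph, where you claim to recover the stated decay function $f+g$ by "extending" the bound on the map $A\mapsto\rho(A)-A$ to the element $\sigma(A)$ because its distance to the subalgebra is governed by $g$. There is no such extension for free: the map $\rho-\mathrm{id}$ has norm $2$ off the subalgebra, so approximating $\sigma(A)$ by $A$ and estimating honestly gives
\begin{equation*}
\|\rho(\sigma(A))-\sigma(A)\| \le \|(\rho-\mathrm{id})(\sigma(A)-A)\| + \|(\rho-\mathrm{id})(A)\| \le 2g_\epsilon(n)\|A\| + f_\epsilon(n)\|A\|,
\end{equation*}
which just reproduces your $f+3g$ bound. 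As sketched, the refinement does not yield $f+g$, so the proposition as stated is not proved.

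The fix is a different (and simpler) choice of intermediate term, which is exactly the paper's one-line proof: insert $\rho(A)$ instead of $\sigma(A)$. Writing $\rho(\sigma(A))-A = \rho(\sigma(A)-A) + (\rho(A)-A)$ and using that a $*$-endomorphism of a $C^*$-algebra is contractive ($\|\rho\|=1$), one gets
\begin{equation*}
\|\rho(\sigma(A))-A\| \le \|\sigma(A)-A\| + \|\rho(A)-A\| \le \bigl(g_\epsilon(n)+f_\epsilon(n)\bigr)\|A\|
\end{equation*}
for all $A\in\calA_{\Lambda^c_{\alpha+\epsilon}+n}$, which is precisely the claimed decay function $f+g$. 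Note that you in fact wrote the relevant identity in your middle step but subtracted $\sigma(A)$ rather than $A$; subtracting $A$ removes the localization issue entirely, since the only place $f$ is used is on the strictly localized element $A$, and $g$ enters only through $\sigma(A)-A$, to which $\rho$ is applied without loss of norm.
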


\begin{proof}
	This follows from the triangle inequality, since
	\begin{align*}
	\sup_{A \in \calA_{\Lambda^c_{\alpha+\epsilon}+n}}\frac{\| \rho\circ \sigma (A) - A \| }{\|A\|} 
	& \leq \sup_{A \in \calA_{\Lambda^c_{\alpha+\epsilon}+n}} \frac{\|\rho\| \| ( \sigma(A) - A) \|}{\|A\|} + \frac{\| \rho(A) - A \|}{\|A\|}  \\
	&\leq f_\epsilon(n) + g_\epsilon(n),
	\end{align*} 
	where we use that $\|\rho\| = 1$.
\end{proof}

There are some simple but useful properties that follow readily from the definition of almost localized endomorphisms.
Let $\rho$ be almost localized in an infinite cone $\Lambda_\alpha$ with decay function $f$.
Then, for any fixed $n \in \NN$ 
\[  \sup_{A \in \calA_{\Lambda_{\alpha+\epsilon}^c+(n+n') } }\frac{\| \rho(A) - A\| }{\| A\| } \leq f_\epsilon(n+n') \qquad \text{ for all } \quad n' \in \ZZ_{\geq 0}. \]
Thus, $\rho$ is almost localized in $\Lambda_\alpha - n$ with the decay function $f_\epsilon(n+n')$.
If $f$ is submultiplicative in the second variable then $f_\epsilon(n + n') \leq f_\epsilon(n) f_\epsilon(n')$,
 thus recovering the same decay function up to a factor.

Consider translations of the cone $\Lambda_\alpha+x$
and the corresponding translated endomorphism $\rho_x \equiv T_{x} \circ \rho \circ T_{-x}$, with $x \in \ZZ^\nu$.
Then,
\begin{align*}
\sup_{A \in \calA_{\Lambda_{\alpha+\epsilon}^c -(x-n)}}\frac{\| \rho_x(A) - A\| }{\| A\| } 
& =  \sup_{A \in \calA_{\Lambda_{\alpha+\epsilon}^c+n } }\frac{\|\rho_x(T_{x}(A)) - T_{x}(A)\| }{\| T_{x}(A)\| } \\
& = \sup_{A \in \calA_{\Lambda_{\alpha+\epsilon}^c+n } }\frac{\|T_{x}( \rho(A) - A)\| }{\| T_{x}(A)\| } \\
& \leq f_\epsilon(n)
\end{align*}
Thus, the shifted endomorphism $\rho_x$ is almost localized in $\Lambda_\alpha+{x}$ with the same decay function.
In particular, for any sequence $x_n$ such that $d( \Lambda_{ \alpha  } + x_n, 0) \ra \infty$ as $n\ra \infty$ then
\begin{equation}\label{eqn:shifttoid}
\lim_{\substack{ n \ra \infty \\ d(\Lambda_{ \alpha  }+x_n,0)\ra \infty}} \| \rho_{x_n}(A) - A \| = 0  \quad \text{ for all } \quad A \in \calA_\Gamma.
\end{equation}

Notice that almost localized property of $\rho$ is defined on the level of the algebra of observables
and does not depend on the reference state $\omega_0$.
In contrast, the transportability property (see Definition~\ref{defn:transportable}) has a clear dependence on the reference state.
In quantum spin models, like the toric code model, transportability can typically be proven from the properties of the ground state such as
path independence of string-like operators and translation invariance.

Generally, any cyclic representation of the quasi-local algebra $\calA_\Gamma$ can be obtained by 
composing $\pi_0$ with an asymptotically inner endomorphism~\cite{Kishimoto}.
For localized and transportable endomorphisms, the transportability property leads to natural choices for the sequence of unitaries.
Here we give an explicit construction for 
cyclic representations of the form $\pi_0 \circ \rho$ where $\rho$  is almost localized and transportable.
Similar constructions have been considered in models for the electromagnetic charge \cite{BuchholzEM}.

\begin{lemma}\label{lem:asympinner}
	Let $\rho$ be an almost localized and transportable $*$-endomorphism on $\calA_\Gamma$.
	Then $\rho$ is asymptotically inner in $\calB(\calH_0)$. That is,
	there exists a sequence of unitary intertwiners $U_n \in \calB(\calH_0)$ such that 
	\begin{equation}
	\rho(A) = \lim_{n \ra \infty } U_n^* A U_n \quad \text{ for all } \quad A \in \calA_\Gamma.
	\end{equation}
	Note that we identify $\calA_\Gamma$ with $\pi_0(\calA_\Gamma)$ again.
\end{lemma}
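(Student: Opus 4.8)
The plan is to build the sequence $U_n$ by transporting $\rho$ ``out to infinity'' along a fixed direction and comparing the transported copies with $\rho$ via unitary intertwiners. First I would fix a cone $\Lambda_\alpha$ in which $\rho$ is almost localized with decay function $f$, and choose a fixed unit vector $a$ (say pointing along the axis of $\Lambda_\alpha$ in the $-$ direction, so that translating by $-n$ moves the cone away from the origin). For each $n$, transportability (Definition~\ref{defn:transportable}) applied to the translated cone $\Lambda_\alpha + x_n$ with a sequence $x_n$ such that $d(\Lambda_\alpha + x_n, 0) \to \infty$ produces a $*$-endomorphism $\rho_n$ almost localized in $\Lambda_\alpha + x_n$, with decay function $f$ (since we only enlarge or translate, never shrink, the cone), together with a unitary $W_n \in (\rho, \rho_n)_{\pi_0} \subset \calB(\calH_0)$, i.e. $W_n \pi_0(\rho(A)) = \pi_0(\rho_n(A)) W_n$. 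Using the identification $\pi_0(A) = A$, this reads $W_n \rho(A) W_n^* = \rho_n(A)$. I would then set $U_n \equiv W_n$ (possibly after composing with a fixed unitary to account for $\rho_n$ versus $T_{x_n}\circ\rho\circ T_{-x_n}$, which by the computation preceding the lemma is also almost localized in $\Lambda_\alpha + x_n$ with decay function $f$).

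\textbf{Main step.} The core of the argument is the limit $\lim_{n\to\infty} U_n^* A U_n = \rho(A)$ for all $A \in \calA_\Gamma$. By density and uniform boundedness ($\|U_n\| = 1$) it suffices to prove this for $A \in \calA_{loc}$, say $A \in \calA_\Lambda$ for a finite region $\Lambda$. Write
\[
U_n^* A U_n - \rho(A) = W_n^*\big(A - \rho_n(A)\big)W_n + W_n^*\rho_n(A)W_n - \rho(A).
\]
For the second group, $W_n^* \rho_n(A) W_n = W_n^* W_n \rho(A) W_n^* W_n = \rho(A)$ exactly, by the intertwining relation $W_n \rho(A) W_n^* = \rho_n(A)$. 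Wait — more carefully: $W_n^*\rho_n(A)W_n = W_n^* (W_n \rho(A) W_n^*) W_n = \rho(A)$, so that term vanishes identically. Hence
\[
\|U_n^* A U_n - \rho(A)\| = \|A - \rho_n(A)\|.
\]
Now the point is that $A$ is \emph{strictly local}, hence for $n$ large enough $A \in \calA_{(\Lambda_\alpha + x_n)_{\alpha+\epsilon}^c + m}$ for $m = m(n) \to \infty$: since $x_n$ pushes the cone away from $0$, eventually the finite set $\Lambda$ lies in the complement of the shifted, slightly widened cone, and the gap grows without bound. Applying the almost-localization estimate for $\rho_n$ gives $\|A - \rho_n(A)\| \le \|A\|\, f_\epsilon(m(n)) \to 0$ because $f_\epsilon \in O(n^{-\infty})$. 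This is exactly the mechanism already recorded in Equation~\eqref{eqn:shifttoid} of the excerpt for translated endomorphisms, now used for the transported family.

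\textbf{Where the difficulty lies.} The genuinely delicate point is bookkeeping the relation between the transported endomorphism $\rho_n$ that Definition~\ref{defn:transportable} hands us and the ``geometric'' translate $T_{x_n}\circ\rho\circ T_{-x_n}$: transportability only guarantees a unitarily equivalent almost-localized endomorphism in the target cone, not literally the translate, so one must be careful that the decay function is uniform in $n$ (it is: for $\beta = \alpha$ the definition gives the \emph{same} $f$ for every target cone, which is precisely why $\mathcal{F}_\infty$ and the uniformity clause in Definition~\ref{defn:transportable} were built in). A second, more routine issue is the passage from $\calA_{loc}$ to all of $\calA_\Gamma$: one uses a standard $3\epsilon$ argument with $\|U_n^* A U_n - U_n^* A' U_n\| \le \|A - A'\|$ and $\|\rho(A) - \rho(A')\| \le \|A - A'\|$ for $A'$ a local approximant of $A$. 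I would also remark that the $U_n$ are automatically unitary intertwiners in the sense of $\mathcal{T}_\Delta$: each $U_n$ intertwines $\pi_0 \circ \rho$ and $\pi_0 \circ \rho_n$, so $U_n \in (\rho, \rho_n)_{\pi_0}$, which is the content of the phrase ``unitary intertwiners $U_n \in \calB(\calH_0)$'' in the statement. No appeal to the split property or to Haag duality is needed here — this lemma is purely about locality and transportability.
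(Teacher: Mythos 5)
Your proposal is correct and follows essentially the same route as the paper's proof: use transportability to produce endomorphisms $\rho_n$ almost localized in cones receding to infinity with a uniform decay function $f$, take the unitaries $U_n \in (\rho,\rho_n)_{\pi_0}$, reduce $\|U_n^* A U_n - \rho(A)\|$ to $\|\rho_n(A)-A\|$ via the intertwining relation, bound this by $f_\epsilon$ for strictly local $A$, and finish by density. The only cosmetic differences are your detour through the geometric translate $T_{x_n}\circ\rho\circ T_{-x_n}$ (not needed, since transportability already hands you $\rho_n$ in the target cone with the same decay function) and the sign of the translation (the paper simply uses $\Lambda_\alpha+n$, which by its convention recedes from the origin).
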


\begin{proof}
	Let $\Lambda_{ \alpha  }$ be a cone such that $\rho$ is almost localized in $\Lambda_{ \alpha  }$ with decay function $f$.
	By transportability, there exists a sequence of $*$-endomorphisms $ \rho_n$ almost localized in $\Lambda_{ \alpha  } + n$ with decay function $f$ and 	unitary operators $U_n \in (\rho, \rho_n)$ such that 
	\begin{equation}
	\rho_n(A) =  U_n \rho(A) U_n^* \quad \text{ for all } \quad A \in \calA_\Gamma.
	\end{equation}
	Let $\epsilon >0$ be given and $A \in \calA_{loc}$.
	Then, there exists $N>0$ such that $ A \in \calA_{\Lambda_{\alpha +\epsilon}^c -N}$. 
	It follows that 
	\begin{align*}
	\| \rho(A) - U_n^* A U_n\| &=  \| U_n^* ( \rho_n(A) - A ) U_n \|  = \| \rho_n(A) - A \| \\
	& \leq f_\epsilon(n - N) \| A\| \quad \ra 0 \quad \mbox{ as } \quad n \ra \infty.
	\end{align*}
	In the last inequality, transportability implies that each $ \rho_n$ can be assigned the same decay function $f$.
	Since $\calA_{loc}$ is dense in $\calA_\Gamma$, we have $\rho(A) = \lim_{n \ra \infty } U_n^* A U_n$ for all $A \in \calA_\Gamma$.
\end{proof}

\subsection{Locality structure for intertwiners}
\label{sec:intertwiners}
Let $\rho$ and $ \sigma$ be $*$-endomorphisms.
Recall that the space of interwiners is given by 
\begin{equation*}
(\rho, \sigma)_{\pi_0}\equiv \{ T \in \calB(\calH_0) : T \pi_0( \rho (A) )= \pi_0(\sigma(A)) T, A \in \calA_\Gamma\},
\end{equation*}
where we use the subscript $\pi_0$ if we want to emphasize the dependence on the reference representation.
When the context is clear, we drop the subscript $\pi_0$ and write $(\rho, \sigma) = (\rho,\sigma)_{\pi_0}$, as we have done before.
Notice that if $R \in (\rho,\sigma)$ and $S \in (\sigma, \tau)$ then
\begin{equation*}
SR \rho(A) = S \sigma(A) R = \tau (A) SR,
\end{equation*} 
so that $ SR \in (\rho, \tau)$ intertwines $\rho$ and $\tau$.
Hence the set of $*$-endomorphisms of $\calA_\Gamma$ with intertwiners has the structure of a category.
We analyze the locality structure for intertwiners between almost localized endomorphisms.

Since the intertwiners generally do not reside in the quasi-local algebra, 
the strong approximate split property for $\omega_0$ will be crucial to establish a suitable $C^*$-algebra for the set of intertwiners.
More precisely, it will allow us to decompose the Hilbert space of the reference representation in a natural way into tensor products.
We can then approximate the intertwiner by an observable that acts on only one of the tensor factors~\cite{NachtergaeleSW}.
Using the strong approximate split property, this observable can be seen to be localized in one of the cone algebras $\mathcal{R}(\Lambda)$.
It should be noted that in general writing $\calA_\Gamma \cong \calA_\Lambda \otimes \calA_\Lambda^c$ does \emph{not} lead to a tensor product decomposition of the corresponding von Neumann algebras (cf.~\cite[Sect. 5]{Naaijkens11}).

Unlike in the situation where we have strictly localized endomorphisms and Haag duality, intertwiners can no longer be strictly localized either, 
and it is necessary to look at \emph{almost localized} operators.
In the context of local quantum physics, almost local observables were first introduced in \cite{ArakiH} and further studied 
in the context of quantum spin systems in \cite{BachmannDN,Schmitz}.
Here, we study almost localized observables of $ \calB(\calH_0)$, 
which by irreducibility coincides with the weak operator closure $\overline{\pi_0(\calA_\Gamma)}^w$,
with respect to a cone region.
The underlying idea is that if $R$ is strictly localized in a cone $\Lambda$, it commutes with all observables localized outside the cone.
The definition here is an approximate version of that.

\begin{defn}\label{def:alocoperator}
	An operator $A \in \calB(\calH_0)$ is said to be \emph{almost localized} in a cone $\Lambda_{\alpha} \in \mathcal{C}$ if 
	there exists a function $f\in \mathcal{F}_\infty$ such that 
	\begin{equation}\label{eqn:alocobs}
	\sup_{B\in \mathcal{R}(\Lambda^c_{\alpha+\epsilon}+n)} \frac{ \| AB-BA \| }{\| B \| } \leq f_\epsilon(n) = O(n^{-\infty})
	\quad \mbox{ for all } \quad 0 < \epsilon < \pi - \alpha. 
	\end{equation}
	The function $f$ is called the \emph{decay function} for $A$ in $\Lambda_{\alpha}$.
\end{defn}
Hence, just as in applications of Lieb-Robinson bounds, locality is expressed by the property that commutators with operators outside 
of the localization region are small.
Note that the supremum is over a von Neumann algebra.
The following lemma shows that instead it can be taken over the $C^*$-subalgebra of observables $\calA_{\Lambda^c_{\alpha+\epsilon}+n} \subset  \mathcal{R}(\Lambda^c_{\alpha+\epsilon}+n)$.
\begin{lemma}\label{lem:alocoperator}
	An operator $A \in \calB(\calH_0)$ is almost localized in a cone $\Lambda_{\alpha}$ if and only if
	there exists a  function $f\in \mathcal{F}_\infty$ such that 
	\begin{equation*}
	\sup_{B\in \calA_{\Lambda^c_{\alpha+\epsilon}+n}} \frac{ \| AB-BA \| }{\| B \| } \leq f_\epsilon(n) = O(n^{-\infty})
	\quad \mbox{ for all } \quad 0 < \epsilon < \pi - \alpha.
	\end{equation*}
\end{lemma}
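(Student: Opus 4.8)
The plan is to treat the two implications separately: the direction ``almost localized (over $\mathcal{R}$) $\Rightarrow$ estimate over $\calA$'' is immediate, and the converse is a routine application of the Kaplansky density theorem together with strong lower semicontinuity of the operator norm.

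For the easy direction, suppose $A$ is almost localized in $\Lambda_\alpha$ with decay function $f\in\mathcal{F}_\infty$ in the sense of Definition~\ref{def:alocoperator}. Since, with the standing identification $\calA_X\equiv\pi_0(\calA_X)$, we have $\calA_{\Lambda^c_{\alpha+\epsilon}+n}\subset\mathcal{R}(\Lambda^c_{\alpha+\epsilon}+n)$, the supremum over the $C^*$-subalgebra is bounded by the supremum over the von Neumann algebra, so the same estimate holds with the same $f$. This gives one direction with no loss.

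For the converse, assume the $C^*$-algebra estimate holds with decay function $f\in\mathcal{F}_\infty$. Fix $0<\epsilon<\pi-\alpha$ and $n\in\NN$, and abbreviate $X=\Lambda^c_{\alpha+\epsilon}+n$. By homogeneity of the quantity $\|AB-BA\|/\|B\|$ it suffices to bound $\|[A,B]\|$ for $B\in\mathcal{R}(X)$ with $\|B\|\le 1$. By definition $\mathcal{R}(X)=\pi_0(\calA_X)''$, and $\pi_0$ is unital so $\pi_0(\calA_X)$ is a non-degenerate $*$-subalgebra of $\calB(\calH_0)$; hence the Kaplansky density theorem provides a net $(B_\lambda)$ in the closed unit ball of $\calA_X$ with $B_\lambda\to B$ in the strong operator topology. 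Left multiplication and right multiplication by the fixed bounded operator $A$ are both strongly continuous, so $[A,B_\lambda]\to[A,B]$ strongly, and by hypothesis $\|[A,B_\lambda]\|\le f_\epsilon(n)$ for every $\lambda$ since $\|B_\lambda\|\le 1$ and $B_\lambda\in\calA_X$.

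The one point needing care is that the operator norm is not strongly continuous, only strongly lower semicontinuous, so I would finish vectorwise: for any unit vector $\xi\in\calH_0$, $\|[A,B]\xi\|=\lim_\lambda\|[A,B_\lambda]\xi\|\le\limsup_\lambda\|[A,B_\lambda]\|\le f_\epsilon(n)$, and taking the supremum over such $\xi$ yields $\|[A,B]\|\le f_\epsilon(n)$. As $\epsilon$ and $n$ were arbitrary, $A$ is almost localized in $\Lambda_\alpha$ in the sense of Definition~\ref{def:alocoperator} with the \emph{same} decay function $f$. There is no real obstacle here; the only things to keep straight are the identification $\calA_X\equiv\pi_0(\calA_X)$ inside $\calB(\calH_0)$ and the passage from the strong topology back to the norm bound via lower semicontinuity.
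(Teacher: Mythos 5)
Your proof is correct and follows essentially the same route as the paper: the easy direction via the inclusion $\calA_{\Lambda^c_{\alpha+\epsilon}+n}\subset\mathcal{R}(\Lambda^c_{\alpha+\epsilon}+n)$, and the converse via Kaplansky density, strong continuity of multiplication by the fixed bounded operator $A$, and a vectorwise estimate to recover the norm bound. No gaps.
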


\begin{proof}
	The forward direction of the lemma  follows directly from the fact $\calA_{\Lambda^c_{\alpha+\epsilon}+n} \subset \mathcal{R}(\Lambda^c_{\alpha+\epsilon}+n)$.
	
	Suppose there exists a function $f\in \mathcal{F}_\infty$ such that 
	\begin{equation*}
	\sup_{B\in \calA_{\Lambda^c_{\alpha+\epsilon}+n}} \frac{ \| AB-BA \| }{\| B \| } \leq f_\epsilon(n) = O(n^{-\infty})
	\quad \mbox{ for all } \quad 0 < \epsilon < \pi - \alpha.
	\end{equation*}
	Let $ B \in \mathcal{R}(\Lambda^c_{\alpha+\epsilon}+n)$ be such that $\|B\| =1$.
	By Kaplansky's density theorem the unit ball of $\calA_{\Lambda^c_{\alpha+\epsilon}+n}$ is dense (in the strong operator topology) in the unit ball of $\mathcal{R}(\Lambda^c_{\alpha+\epsilon}+n)$.
	Hence there is a net $B_\lambda \in \calA_{\Lambda^c_{\alpha+\varepsilon}+n}$ such that $\|B_\lambda\| \leq 1$ and $ \slim B_\lambda = B$.
	By strong operator continuity of multiplication on bounded sets, we have that $AB - BA = \slim_\lambda A B_\lambda - B_\lambda A$.
	Let $\xi \in \calH_0$.
	We have that
	\[
		\| [A,B] \xi \| = \lim_\lambda \| A B_\lambda - B_\lambda A\ \xi \| \leq \| \xi \| \lim_\lambda \| A B_\lambda - B_\lambda A \| \leq \| \xi \| f_\epsilon(n) \lim_\lambda \|B_\lambda\|.
	\]
	It follows that $\| AB - BA \| \leq \|B\| f_\epsilon(n)$, from which the result follows.
\end{proof}

As a consequence, we get the following locality property for intertwiners. 
\begin{cor}\label{cor:intertwineraloc}
	Let $\rho$ and $\sigma$ be almost localized in a cone $\Lambda_\alpha$
	with decay functions $f$ and $g$, respectively.
	If $R \in (\rho, \sigma)$ then $R$ is almost localized in $\Lambda_{\alpha}$ with decay function $\|R\|(f + g)$.
\end{cor}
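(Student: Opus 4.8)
The plan is to exploit the intertwining relation $R\,\pi_0(\rho(A)) = \pi_0(\sigma(A))\,R$ directly, together with the almost-localization of both $\rho$ and $\sigma$, and then apply Lemma~\ref{lem:alocoperator} to conclude. First I would fix $0 < \epsilon < \pi - \alpha$ and $n \in \NN$, take an arbitrary $B \in \calA_{\Lambda^c_{\alpha+\epsilon}+n}$ with $B$ self-adjoint or just of norm one, and estimate $\|[R,B]\|$. The key identity is that for any $A \in \calA_\Gamma$ one has $RB\,\rho(A) - BR\,\rho(A)$, but to get a clean bound one instead writes, using the intertwining relation with $A$ replaced by $B$ itself (recall $B \in \calA_\Gamma$, so $\rho(B)$ and $\sigma(B)$ make sense),
\[
 RB - BR = R(B - \rho(B)) + (R\rho(B) - \sigma(B)R) + (\sigma(B) - B)R = R(B - \rho(B)) + (\sigma(B) - B)R,
\]
since $R\rho(B) = \sigma(B)R$. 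Taking norms and using $\|R\| = \|R\|$ on each term gives
\[
 \|[R,B]\| \leq \|R\|\,\|\rho(B) - B\| + \|R\|\,\|\sigma(B) - B\| \leq \|R\|\,\big(f_\epsilon(n) + g_\epsilon(n)\big)\,\|B\|,
\]
where the last step is exactly Definition~\ref{defn:alend} applied to $\rho$ and to $\sigma$ with $B \in \calA_{\Lambda^c_{\alpha+\epsilon}+n}$.

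Dividing by $\|B\|$ and taking the supremum over $B \in \calA_{\Lambda^c_{\alpha+\epsilon}+n}$ yields
\[
 \sup_{B \in \calA_{\Lambda^c_{\alpha+\epsilon}+n}} \frac{\|RB - BR\|}{\|B\|} \leq \|R\|\,(f_\epsilon(n) + g_\epsilon(n)).
\]
Since $\mathcal{F}_\infty$ is closed under addition and under multiplication by a positive constant, the function $\|R\|(f+g)$ lies in $\mathcal{F}_\infty$, so this is a bound of the required form for the supremum over the $C^*$-subalgebra $\calA_{\Lambda^c_{\alpha+\epsilon}+n}$. Invoking Lemma~\ref{lem:alocoperator}, which says almost-localization of an operator in $\calB(\calH_0)$ can equivalently be tested against the local $C^*$-algebra rather than the von Neumann algebra $\mathcal{R}(\Lambda^c_{\alpha+\epsilon}+n)$, we conclude that $R$ is almost localized in $\Lambda_\alpha$ with decay function $\|R\|(f+g)$, as claimed.

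The only subtle point — and the one step I would flag as needing care rather than being purely routine — is the very first algebraic manipulation: it requires that $\rho(B)$ and $\sigma(B)$ are legitimate elements of $\calA_\Gamma$ (which they are, since $\rho,\sigma$ are $*$-endomorphisms of $\calA_\Gamma$ and $B \in \calA_{loc} \subset \calA_\Gamma$), and that the intertwining relation $R\pi_0(\rho(B)) = \pi_0(\sigma(B))R$ holds for this particular $B$ — which is immediate from the definition of $(\rho,\sigma)_{\pi_0}$ since it holds for all elements of $\calA_\Gamma$. Under the identification $\pi_0(A) = A$ the displayed computation is then valid verbatim. Everything else is the triangle inequality and the closure properties of $\mathcal{F}_\infty$ already noted in the text, so the proof is genuinely short; the work was all done in setting up Lemma~\ref{lem:alocoperator}.
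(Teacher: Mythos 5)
Your proof is correct and follows essentially the same route as the paper: decompose $RB-BR$ via the intertwining relation $R\rho(B)=\sigma(B)R$, bound the two remaining terms by the almost-localization of $\rho$ and $\sigma$, and invoke Lemma~\ref{lem:alocoperator} to pass from $\calA_{\Lambda^c_{\alpha+\epsilon}+n}$ to the von Neumann algebra. The additional remarks about $\rho(B),\sigma(B)\in\calA_\Gamma$ and the identification $\pi_0(A)=A$ are fine but not points of concern.
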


\begin{proof}
	For all $B \in \calA_\Gamma$ we have that
	\begin{align*}
	\| R B - B R \| 
	& \leq  \| R B- R\rho(B)  \| + \| R\rho(B) - \sigma(B)R\| + \|  \sigma(B) R - B R\| \\
	&\leq  \| R\| ( \| \rho(B) - B \| + \|\sigma(B) -B \| ).
	\end{align*}
	Therefore,
	\begin{equation*} 
	\sup_{B\in \calA_{\Lambda^c_{\alpha+\epsilon}+n}} \frac{ \| RB-BR\| }{\| B \| }  \leq \|R\|( f_\epsilon(n) + g_\epsilon(n)),
	\end{equation*}
	from which the claim follows using the lemma.
\end{proof}
 
The almost local property for operators $A \in \calB(\calH_0)$ suggest that 
if $\omega_0$ satisfies certain locality conditions then $A$ may be well approximated in some cone algebra $\mathcal{R}(\Lambda)$.
Following~\cite{BuchholzF,Naaijkens11}, we introduce a convex cone $K_\kappa$ with an arbitrary but small opening angle  $\kappa<\pi$ 
determining a  forbidden direction.
For convenience, we abuse notation to simply write
$K \equiv K_\kappa$ and $K_{\epsilon} = K_{\kappa+\epsilon}$.
Define the \emph{auxiliary algebra} as 
\begin{equation}\label{eqn:auxalg}
\calB_{K}\equiv  \overline{ \bigcup_{x\in \ZZ^\nu} \mathcal{R}((K+x)^c)}^{\| \cdot\|} =  \overline{ \bigcup_{n \in \NN } \mathcal{R}((K+n)^c)}^{\| \cdot\|},
\end{equation}
where the second equality follows from the fact that for every $ x\in \ZZ^\nu$ there is an $n \in \NN$ such that $ K+n \subset K+x$
and $ (K+x)^c \subset (K+n)^c$.

Up to this point we have not used any of the assumptions on the reference state $\omega_0$.
Now we will assume $\omega_0$ satisfies the strong approximate split property
and use it to establish operator norm estimates for the set of intertwiners.
These estimates are key in extending the almost localized endomorphisms to intertwiners.
\begin{lemma}\label{lem:alocauxalg}
	Suppose $\omega_0$ satisfies the strong approximate split property.
	Let $\Lambda_\alpha$ be a cone such that for some $\epsilon>0$ and $x\in \ZZ^\nu$ it holds that  $\Lambda_{\alpha+\epsilon} \ll (K+x)^c$.
	If $A\in \calB(\calH_0)$ is almost localized in $\Lambda_\alpha$  then $A \in \calB_{K}$.
\end{lemma}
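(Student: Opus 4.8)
The plan is to show that $A$ can be approximated in norm, as well as we like, by elements of $\mathcal{R}((K+m)^c)$ for sufficiently large $m$; since $\mathcal{R}((K+m)^c) \subset \calB_K$ and $\calB_K$ is norm-closed, this gives $A \in \calB_K$. First I would fix a scale parameter: by hypothesis $\Lambda_{\alpha+\epsilon} \ll (K+x)^c$, so for each $n$ we have $\Lambda_{\alpha+\epsilon} + n \ll (K+x)^c$ (translations along the axis only shrink $\Lambda_{\alpha+\epsilon}$), and by choosing suitable shifts one arranges a pair of cones, say $\widehat\Lambda \equiv (K + x')^c$ and $\Lambda' \equiv \Lambda_{\alpha+\epsilon} + n$, with $\Lambda' \ll \widehat\Lambda$, so that the strong approximate split property of Definition~\ref{def:strongsplit} supplies a type I factor $\mathcal{N} = \mathcal{N}_n$ with
\[
\mathcal{R}(\Lambda_{\alpha+\epsilon} + n + N) \subset \mathcal{N}_n \subset \mathcal{R}((K+x')^c)
\quad\text{and}\quad
\mathcal{R}(K+x') \subset \mathcal{N}_n' \subset \mathcal{R}((\Lambda_{\alpha+\epsilon}+n+N)^c).
\]
The key point is that $\mathcal{N}_n$, being a type I factor, is spatially isomorphic to $\calB(\mathcal{H}_1)\,\overline{\otimes}\,\CC\cdot I$ under a factorization $\calH_0 \cong \mathcal{H}_1 \otimes \mathcal{H}_2$, with $\mathcal{N}_n' \cong \CC\cdot I \,\overline{\otimes}\, \calB(\mathcal{H}_2)$.

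Next I would use the standard conditional-expectation / slicing construction associated to such a split: there is a normal completely positive unital map $\mathbb{E}_n : \calB(\calH_0) \to \mathcal{N}_n$ obtained by taking a normal state $\varphi$ on $\mathcal{N}_n' \cong \calB(\mathcal{H}_2)$ and setting $\mathbb{E}_n(A) = (\mathrm{id}\otimes\varphi)(A)$ in the tensor factorization. The crucial estimate, exactly as in~\cite{NachtergaeleSW} (cited in the paper right before Definition~\ref{def:alocoperator}), is that the error $\|A - \mathbb{E}_n(A)\|$ is controlled by the norm of commutators of $A$ with operators in $\mathcal{N}_n' $; and since $\mathcal{N}_n' \subset \mathcal{R}((\Lambda_{\alpha+\epsilon}+n+N)^c) = \mathcal{R}(\Lambda^c_{\alpha+\epsilon} + (n+N))$ (using the complement-translation identity $(\Lambda_\alpha - m)^c = \Lambda_\alpha^c + m$ noted in the excerpt), the almost-localization hypothesis on $A$ in the form of Definition~\ref{def:alocoperator} bounds these commutators by $f_\epsilon(n+N)$. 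Hence $\|A - \mathbb{E}_n(A)\| \le C f_\epsilon(n+N)$ for a universal constant $C$ (coming from the slicing lemma), which tends to $0$ as $n \to \infty$ because $f_\epsilon \in O(n^{-\infty})$. Since $\mathbb{E}_n(A) \in \mathcal{N}_n \subset \mathcal{R}((K+x')^c) \subset \calB_K$, the approximants lie in $\calB_K$, and we conclude $A \in \overline{\calB_K}^{\|\cdot\|} = \calB_K$.

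I expect the main obstacle to be making the slicing estimate clean and quantitative: one must be careful that the split inclusion produced by the strong approximate split property is compatible with the choice of normal state $\varphi$ on $\mathcal{N}_n'$ used to define $\mathbb{E}_n$, and that the resulting approximation bound genuinely has the form ``constant times (commutator norm with the commutant)'' rather than something weaker. This is precisely where the \emph{strong} part of the split property matters — both inclusions \eqref{eqn:approxsplit} and \eqref{eqn:strongsplit} are used, one to guarantee $\mathbb{E}_n(A)$ lands in a cone algebra of the forbidden-direction complement, the other to guarantee that the error term only sees commutators with operators in a shifted complement cone where $A$'s decay function applies. A secondary bookkeeping point is checking the geometry: that $\Lambda_{\alpha+\epsilon} \ll (K+x)^c$ really does propagate, after axis-translation by $n$ and the offset $N$ from the split property, to the pair of cones needed so that $\mathcal{N}_n' $ is contained in a complement cone shifted out by roughly $n$; this is elementary but needs the convexity of both $\Lambda_\alpha$ and $K$ and the monotonicity $\Lambda_{\alpha+\epsilon}+n \subset \Lambda_{\alpha+\epsilon}$.
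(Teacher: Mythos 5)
Your high-level plan is the same as the paper's (strong approximate split property, a commutator estimate from almost localization, and the slicing lemma of~\cite{NachSW} to produce an approximant of $A$ inside some $\mathcal{R}((K+y)^c)\subset\calB_K$), but the pair of cones to which you apply the split property does not deliver the commutator control you need, and the step where you invoke Definition~\ref{def:alocoperator} contains a direction error. With the paper's convention ($\Lambda+n\subset\Lambda$ for $n>0$, hence $(\Lambda-n)^c=\Lambda^c+n$), the complement of the \emph{inward}-shifted cone is $(\Lambda_{\alpha+\epsilon}+n+N)^c=\Lambda^c_{\alpha+\epsilon}-(n+N)$, \emph{not} $\Lambda^c_{\alpha+\epsilon}+(n+N)$. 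That region contains all of $\Lambda_{\alpha+\epsilon}$ except its part at depth greater than $n+N$, in particular the apex region where $A$ is actually localized. So the containment $\mathcal{N}_n'\subset\mathcal{R}((\Lambda_{\alpha+\epsilon}+n+N)^c)$ coming from~\eqref{eqn:strongsplit} puts no restriction that keeps $\mathcal{N}_n'$ away from $A$: it may perfectly well contain operators localized near the apex of $\Lambda_\alpha$, and Definition~\ref{def:alocoperator} gives no bound on $\|[A,B]\|$ for such $B$. Consequently the key estimate $\|A-\mathbb{E}_n(A)\|\leq C f_\epsilon(n+N)$ is unjustified, and the argument breaks exactly at the point you flagged as the ``secondary bookkeeping point''. (A further issue in your ordering: the offset $N$ produced by the split property depends on the pair of cones, hence on $n$, and your error bound is supposed to be expressed in terms of it.)

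The repair is to apply the split property on the other side, around the forbidden cone, which is what the paper does. First fix $n$ and note the purely geometric inclusion $K+x+n\subset\Lambda^c_{\alpha+\epsilon}+n$ (this is where the hypothesis $\Lambda_{\alpha+\epsilon}\ll(K+x)^c$ enters), so that almost localization already gives $\|[A,B]\|\leq f_\epsilon(n)\|B\|$ for all $B\in\mathcal{R}(K+x+n)$. Then apply Definition~\ref{def:strongsplit} to the nested complements $(K+x+n)^c\ll(K+x+n')^c$: this yields a type I factor with $\mathcal{R}((K+x+n)^c)\subset\mathcal{N}\subset\mathcal{R}((K+x+n')^c)$ and, by the strong half of the property, $\mathcal{N}'\subset\mathcal{R}(K+x+n)$. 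Now the commutant genuinely sits in a region where $A$ almost commutes, the slicing lemma of~\cite{NachSW} gives $A'\in\mathcal{N}\subset\mathcal{R}((K+x+n')^c)\subset\calB_K$ with $\|A-A'\|\leq f_\epsilon(n)\|A\|$ (up to a constant), and letting $n\to\infty$ gives $A\in\calB_K$. In this order the commutator bound is fixed before the split property is invoked, so the $n$-dependence of the split data is harmless; it only affects how deep the cone $(K+x+n')^c$ is, and all such algebras lie in $\calB_K$.
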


\begin{proof}
	Notice that for all $n \in \NN$ we have that $ K+x+n \subset \Lambda_{ \alpha  +\epsilon}^c +n $.
	
	Let $\delta >0$ be given and  $A\in \calB(\calH_0)$ be almost localized in $\Lambda_{ \alpha  }$. Then, there exists an $N>0$ such that if $n>N$ then 
	\begin{equation}\label{eqn:alo}
	\|A B - B A\| \leq f_\epsilon(n) \| B\|  < \delta \|A\|\|B\| \quad \text{for all} \quad B \in \mathcal{R}(K+x+n).
	\end{equation}
	Let $n' \in \NN$ be such that $ \Lambda_{\alpha+\epsilon} \ll (K+x+n)^c \ll (K+x+n')^c$.
	By the strong approximate split property for $\omega_0$, if $n'$ is chosen large enough then there is a type I factor $\mathcal{N}$ such that 
	\begin{align}
	\mathcal{R}((K+x+n)^c) \ &\subset \  \mathcal{N} \  \subset \  \mathcal{R}((K+x+n')^c)\\
	\mathcal{R}(K+x+n')\  &\subset \ \mathcal{N}'  \subset  \ \mathcal{R}(K+x+n). \label{eqn:split2}
	\end{align}
	From \eqref{eqn:alo} and \eqref{eqn:split2} it follows that  
	\begin{equation*}
	\| A B - B A \| \leq \delta \|A\| \|B\| \quad \text{for all} \quad B \in \mathcal{N}'.
	\end{equation*}
	Applying Lemma 2.1 of~\cite{NachSW} with $\calB(\calH_1) \cong \mathcal{N}$ and $\calB(\calH_2)\cong \mathcal{N}'$,
	there exists an operator $A' \in \mathcal{N}$ such that 
	\begin{equation*}
	\| A - A' \| \leq \delta \| A\|.
	\end{equation*}
	In other words, $A$ is arbitrarily well approximated  in norm by an operator $A'$ in the cone algebra $\mathcal{R}((K+y)^c)$ for some $y \in \ZZ^\nu$.
	Therefore, $A \in \calB_K$.
\end{proof}

In the setting above, we do not restrict the angle of the cone $\Lambda_{ \alpha  }$ but simply require that it be epsilon bounded away from the cone $(K+x)^c$.
In what remains, we will consider endomorphisms almost localized on convex cones.  
It is clear that if $\Lambda_{ \alpha  }$ is a  convex cone then we could weaken the assumptions in Lemma~\ref{lem:alocauxalg} to 
simply the assumption that $\Lambda_{ \alpha  } \ll (K+x)^c$ for some $x\in\ZZ^\nu$.

Consider now two endomorphisms $\rho$ and $\sigma$. 
If $\rho$ and $\sigma$ are almost localized in the same convex cone $\Lambda_{ \alpha  }$ with $ \Lambda_\alpha \subset (K+x)^c$ for some $x$,
then Corollary~\ref{cor:intertwineraloc} and Lemma~\ref{lem:alocauxalg} immediately yield that $(\rho,\sigma) \subset \calB_K$.
However, if $\rho$ and $\sigma$ are almost localized on potentially different convex cones $\Lambda_{ \alpha  }^\rho$ and $\Lambda_{ \beta}^\sigma$, respectively,
the argument as above must be slightly modified.

\begin{cor}\label{cor:interwinerauxalg}
	Suppose $\omega_0$ satisfies the strong approximate split property.
	Let $\rho$ and $\sigma$ be almost localized endomorphisms in convex cones $\Lambda_\alpha^\rho$ and $\Lambda_{ \beta }^\sigma$, respectively.
	If there is a convex cone $K$ such that  $\Lambda_{\alpha}^\rho \ll (K+n^\rho)^c$ and $ \Lambda_{ \beta}^\sigma \ll (K+n^\sigma)^c$
	for some $ n^\rho, n^\sigma \in \ZZ$ then $(\rho,\sigma) \subset \calB_K$.
\end{cor}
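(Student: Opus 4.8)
The plan is to reduce the claim to a single cone in which \emph{both} $\rho$ and $\sigma$ are almost localized, and then invoke Corollary~\ref{cor:intertwineraloc} followed by Lemma~\ref{lem:alocauxalg}. Concretely, I would first produce a cone $\Lambda\in\mathcal{C}$, a number $\epsilon>0$ and an $x\in\ZZ^\nu$ such that: (a) both $\rho$ and $\sigma$ are almost localized in $\Lambda$ with decay functions in $\mathcal{F}_\infty$; and (b) $\Lambda$, after its opening half-angle is enlarged by $\epsilon$, still satisfies $\ll (K+x)^c$, i.e.\ the hypothesis of Lemma~\ref{lem:alocauxalg} holds for $\Lambda$ relative to $K$. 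Granting such a $\Lambda$, Corollary~\ref{cor:intertwineraloc} shows that every $R\in(\rho,\sigma)$ is almost localized in $\Lambda$ with decay function $\|R\|(f_\Lambda+g_\Lambda)\in\mathcal{F}_\infty$, where $f_\Lambda,g_\Lambda$ are the decay functions of $\rho,\sigma$ in $\Lambda$; then Lemma~\ref{lem:alocauxalg}, applied with this $\Lambda$ and the $\epsilon,x$ of (b), yields $R\in\calB_K$. Since $R\in(\rho,\sigma)$ was arbitrary, this gives $(\rho,\sigma)\subseteq\calB_K$.

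The construction of $\Lambda$ is the geometric heart of the argument, and is where the convexity of $K$ --- that is, its angular narrowness --- enters. Let $a_K$ and $\kappa<\pi/2$ be the axis and half-opening angle of $K$. From $\Lambda_\alpha^\rho\ll (K+n^\rho)^c$ together with the fact that $\ll$ implies spatial separation (so $\Lambda_\alpha^\rho$ is disjoint, with positive distance, from a translate of $K$), the cone of directions of $\Lambda_\alpha^\rho$ must lie at positive angular distance from $a_K$ on the unit sphere, and likewise for $\Lambda_\beta^\sigma$; more quantitatively, the angle between $a_K$ and the axis of $\Lambda_\alpha^\rho$ exceeds $\alpha+\kappa$ and the angle between $a_K$ and the axis of $\Lambda_\beta^\sigma$ exceeds $\beta+\kappa$. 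Consequently the direction cones of $\Lambda_\alpha^\rho$ and $\Lambda_\beta^\sigma$ are jointly contained in an open spherical cap about $-a_K$ of some angular radius $<\pi-\kappa$. I would then take $\Lambda$ to be the round cone with axis $-a_K$, opening half-angle $\gamma$ chosen strictly between that radius and $\pi-\kappa$, and apex placed sufficiently far back along $-a_K$ so that $\Lambda_\alpha^\rho\subseteq\Lambda$ and $\Lambda_\beta^\sigma\subseteq\Lambda$, with the direction cone of each strictly inside that of $\Lambda$. Property (b) then holds automatically: since $\gamma+\epsilon+\kappa<\pi$ for small $\epsilon>0$, the direction cones of the $\epsilon$-enlargement of $\Lambda$ and of $K$ are disjoint, so that enlargement is disjoint from $K+x$ for all large $x$, hence $\ll (K+x)^c$.

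For property (a) one checks, straight from Definition~\ref{defn:alend}, that almost localization transfers from a cone to a larger cone whose direction cone strictly contains the original one: strict containment of direction cones produces a constant $c\ge0$, depending only on the two cones, with $(\Lambda_{\alpha+\epsilon}^\rho)-c\subseteq$ (the $\epsilon$-enlargement of $\Lambda$) for all sufficiently small $\epsilon>0$; hence the set of observables one must test for $\rho$ in $\Lambda$ is contained in the set already controlled for $\rho$ in $\Lambda_\alpha^\rho$ after a shift by $c$, and $\epsilon\mapsto\bigl(n\mapsto f_\epsilon(\max\{n-c,0\})\bigr)$ serves as a decay function for $\rho$ in $\Lambda$. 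This function is again in $\mathcal{F}_\infty$, since shifting the second argument by a fixed amount preserves monotonicity and the $O(n^{-\infty})$ decay. The same argument applies to $\sigma$. Here one also uses the elementary observations from the start of Section~\ref{sec:superselect}, namely that enlarging the opening angle and translating the localization cone along its axis alter the decay function only in such benign ways.

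The step I expect to be the main obstacle is precisely this geometric merging: Lemma~\ref{lem:alocauxalg} deals with a single given cone, whereas here one must fuse the two, possibly very differently oriented, localization cones into one cone that still keeps clear of the forbidden direction $K$. When $\Lambda_\alpha^\rho$ and $\Lambda_\beta^\sigma$ point in nearly opposite directions, $\Lambda$ is forced to have opening angle close to $\pi$, and it is the narrowness of $K$ --- in the application via Theorem~\ref{thm:asymptopia}, $K$ may be taken with arbitrarily small opening angle --- that makes this compatible with $\Lambda$ remaining $\ll (K+x)^c$ after the small enlargement needed to apply Lemma~\ref{lem:alocauxalg}. Once $\Lambda$ is in hand, what remains is the routine combination of Corollary~\ref{cor:intertwineraloc}, Lemma~\ref{lem:alocauxalg}, and the stability of $\mathcal{F}_\infty$ under the operations above.
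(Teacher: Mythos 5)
Your proposal is correct and follows essentially the same route as the paper: the paper also merges the two localization regions into a single wide cone avoiding the forbidden direction --- it simply takes this cone to be $(K_{\epsilon}+n)^c$ with $n=\max\{n^\rho,n^\sigma\}$, which is exactly the round cone about $-a_K$ you construct --- and then applies Corollary~\ref{cor:intertwineraloc} followed by Lemma~\ref{lem:alocauxalg} via $(K_\epsilon+n)^c\subset (K_{\epsilon/2}+n)^c\ll (K+n')^c$. Your additional bookkeeping on how the decay functions transfer to the enlarged cone is detail the paper leaves implicit, not a difference in method.
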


\begin{proof}
	By convexity, there exist $\epsilon>0$ such that 
	$\Lambda_{\alpha+\epsilon}^\rho \ll (K+n)^c$ and $ \Lambda_{\beta+\epsilon}^\sigma \ll  (K+n)^c$
	where $n = \max\{n^\rho,n^\sigma\}  $.
	In other words, $\Lambda_{ \alpha  }^\rho \cup \Lambda_{ \beta}^\sigma \subset ( K_{\epsilon}+n )^c \subset (K+n+1)^c$.
	It follows that $\rho$ and $\sigma$ are almost localized endomorphisms on the cone $(  K_{\epsilon}+n)^c$.
	By Corollary~\ref{cor:intertwineraloc}, $R\in(\rho,\sigma)$ is almost localized in $(  K_{\epsilon}+n)^c$
	and $( K_{\epsilon}+n)^c \subset ( K_{\epsilon/2}+n)^c \ll (K+n')^c$ for some $n'\in\NN$.
	Thus, the result is obtained by applying Lemma~\ref{lem:alocauxalg}.
\end{proof}

It is possible to give a more precise characterization of the localization properties of $R \in (\rho,\sigma)$, but we will not need this.

\subsection{Superselection structure}
We assume throughout this section that $\omega_0$ satisfies the strong approximate split property.

Let $\rho, \rho', \sigma, \sigma' \in \Delta$ and $R\in(\rho, \rho')$, $R'\in(\sigma,\sigma')$.
Sequences $\{U_n\}$ and $\{V_n\}$ of unitary operators implementing the asymptotically inner property for $\rho$ and $\rho'$, respectively, as in Lemma~\ref{lem:asympinner} are not unique.
Motivated by the following formal calculation, that should hold independent of the choice of sequences, 
\begin{align*}
\| R \rho(R')  -  \sigma(R') R  \| & = \lim_{m,n \ra \infty} \| R U_m^* R' U_m - V_n^* R' V_n R \| = \lim_{m,n \ra \infty}  \|V_n R U_m^* R' -  R' V_n RU_m^* \|,
\end{align*}
the notions of \emph{asymptopia} and \emph{bi-asymptopia} were introduced in \cite{BuchholzAA}.
We will recall their definitions later.
The calculation above is formal in the sense that $\rho(R')$ and $\sigma(R')$ may not be well defined. 

We consider a mapping $\mathcal{U}: \Delta\ni\rho \mapsto \mathcal{U}_\rho$ where each $\mathcal{U}_\rho$ is a family of unitary sequences
implementing the asymptotically inner property for $\rho$.

\begin{defn}[\cite{BuchholzAA}]\label{defn:asymptopia}
	An \emph{asymptopia} for $\Delta$ is a mapping $\mathcal{K} : \rho \mapsto \mathcal{K}_\rho$, with $\rho \in \Delta$, 
	where each $\mathcal{K}_\rho$ is a \emph{stable} family of sequences of unitary operators in $\calB(\calH_0)$, that is, 
	closed under taking subsequences,
	such that for each $\{U_m\} \in \mathcal{K}_\rho$ we have
	\begin{equation*}
	\rho(A) = \lim_{m \ra \infty} U_m^* A U_m \quad \mbox{ for all } \quad A \in \calA_\Gamma,
	\end{equation*}
	and for any $\rho, \rho', \sigma,\sigma' \in \Delta$ given $R \in (\rho, \rho')$ and $R' \in (\sigma,\sigma')$, and $ \{U_m\} \in \mathcal{K}_\rho$ and $ \{V_n\} \in \mathcal{K}_{\rho'}$,
	\begin{equation*}
	\lim_{m,n\ra \infty} \| [ V_n R U_m^*, R'] \| = 0.
	\end{equation*}
\end{defn}
We will later need to consider different asymptopia simultaneously, and will generally use the notation $\mathcal{K}, \mathcal{U}$ or $\mathcal{V}$.
The last condition will allow us to make sense of expressions like $\rho(T)$, for $T$ an intertwiner.
This is essential in defining the tensor product (and hence, fusion rules) on $\Delta$.
More precisely, consider $S \in (\rho,\rho')$ and $T \in (\sigma,\sigma')$.
We will define extensions $\rho_\mathcal{K}$ (and similarly for the others) to an algebra that contains $S$ and $T$.
Then, we can define
\begin{equation}
	\label{eq:tensorprod}
	(\rho \otimes \sigma)(A) \equiv \rho_{\mathcal{K}} \circ \sigma(A), \quad \quad S \otimes T \equiv S \rho_{\mathcal{K}}(T).
\end{equation}
A formal calculation shows that indeed $S \otimes T \in (\rho \otimes \sigma, \rho' \otimes \sigma')$.

In our case, $\Delta$ will be a family of the almost localized and transportable endomorphisms.
The first goal is to show that these naturally lead to a choice of asymptotia.
Moreover, they allow us to extend the localized endomorphisms of $\calA_\Gamma$ to an algebra containing also the intertwiners.
To this end, let $\mathcal{A}_\Delta$ be the $C^*$-subalgebra of $\calB(\calH_0)$ generated by $\calA_\Gamma$ and $\mathcal{T}_\Delta$.
We then have the following result.
\begin{thm}\label{thm:Asymptopia}
	Let $\Delta$ be a semi-group of endomorphisms of $\calA_\Gamma$ satisfying the almost localized and transportable superselection criterion of Definition~\ref{def:superselection},
	and suppose that each $\rho \in \Delta$ is almost localized in a convex cone $\Lambda_{ \alpha}^\rho$.
	Suppose	there exists a convex cone $K \in \mathcal{C}$ such that 
	for all $\rho \in \Delta$ we have 
	$\Lambda_{ \alpha }^\rho \ll (K+n^\rho)^c$ for some $n^\rho \in \NN$. 
	Then, there exists an asymptopia $\mathcal{K}: \rho \mapsto \mathcal{K}_\rho$ for $\Delta$.
	Furthermore, each $\rho \in \Delta$ has a unique extension to a $*$-endomorphism $\rho_{\mathcal{K}}$ of $\calA_{ \Delta}$ such that
	for each $\{ U_m\} \in \mathcal{K}_\rho$ we have 
	\begin{equation*}
	\rho_{\mathcal{K}}(A) = \lim_{m \ra \infty} U_m^* A U_m \quad \mbox{ for all } \quad A \in \calA_\Delta,
	\end{equation*}
	and $(\rho, \sigma) = (\rho_{\mathcal{K}}, \sigma_{\mathcal{K}} )$ for all $\rho, \sigma \in \Delta$.
	Moreover, $\Delta$ is a tensor $C^*$-category, where the tensor operation is defined as in equation~\eqref{eq:tensorprod}.
\end{thm}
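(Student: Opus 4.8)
The plan is to construct the asymptopia explicitly from the transportability structure, then use the locality estimates of Section~\ref{sec:intertwiners} (Corollary~\ref{cor:interwinerauxalg} in particular) to verify the asymptotic abelianness condition, and finally bootstrap from the asymptopia to the extension of the endomorphisms and the tensor structure. For each $\rho\in\Delta$ almost localized in the convex cone $\Lambda_\alpha^\rho$, transportability (Definition~\ref{defn:transportable}) together with Lemma~\ref{lem:asympinner} furnishes, for any sequence of cones $\Lambda_\alpha^\rho - k$ marching off to infinity \emph{along the axis of $K$} (or rather in a direction allowed by the forbidden-direction geometry, so that $\Lambda_\alpha^\rho - k \ll (K+n^\rho-k)^c$ keeps holding), a family of unitary intertwiners $U_k\in(\rho,\rho_k)$ with $\rho_k$ almost localized in $\Lambda_\alpha^\rho - k$ with a common decay function. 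I would let $\mathcal{K}_\rho$ be the family of all subsequences of all such sequences $\{U_k\}$; this is stable by construction, and Lemma~\ref{lem:asympinner} shows each such sequence implements $\rho$ on $\calA_\Gamma$.

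The heart of the argument is verifying $\lim_{m,n\to\infty}\|[V_nRU_m^*, R']\|=0$ for $R\in(\rho,\rho')$, $R'\in(\sigma,\sigma')$, $\{U_m\}\in\mathcal{K}_\rho$, $\{V_n\}\in\mathcal{K}_{\rho'}$. By Corollary~\ref{cor:interwinerauxalg}, $R'$ lies in $\calB_K$, so it can be approximated in norm by an element $R''\in\mathcal{R}((K+\ell)^c)$ for large $\ell$; it suffices to show $\lim\|[V_nRU_m^*, R'']\|=0$ for each fixed such $R''$. Now $V_nRU_m^*\in(\rho_m,\rho'_n)$ where $\rho_m$ and $\rho'_n$ are almost localized in cones $\Lambda_\alpha^\rho-m$ and $\Lambda_\alpha^{\rho'}-n$; by Corollary~\ref{cor:intertwineraloc} the operator $V_nRU_m^*$ is almost localized in the cone $\Lambda_\alpha^\rho - \min(m,n)$ (after enlarging the opening angle slightly to contain both), with decay function bounded uniformly in $m,n$ thanks to the common-decay-function clause of transportability. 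As $m,n\to\infty$ this cone is eventually disjoint from — indeed $\ll$-separated from — the region $(K+\ell)^c$ where $R''$ lives (here is exactly where the hypothesis $\Lambda_\alpha^\rho\ll(K+n^\rho)^c$ and the choice to transport \emph{toward} the forbidden direction are used), so the commutator bound $\|[V_nRU_m^*,R'']\|\le f_\epsilon(k(m,n))\|R''\|$ with $k(m,n)\to\infty$ gives the claim. \textbf{This is the step I expect to be the main obstacle:} getting the geometry right so that the decay function applies uniformly — one must be careful that transporting to a smaller or rotated cone might worsen the decay function (the last clause of Definition~\ref{defn:transportable}), and that the angles $\epsilon$ can be chosen uniformly over the relevant family, which is why convexity of all the cones and of $K$ is assumed.

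Once the asymptopia exists, the extension $\rho_{\mathcal K}$ is defined on $\calA_\Delta$ by $\rho_{\mathcal K}(A)=\lim_m U_m^*AU_m$ for $\{U_m\}\in\mathcal{K}_\rho$; I would show this limit exists for $A\in\mathcal{T}_\Delta$ (hence, by the $*$-algebra and norm-limit structure, for all $A\in\calA_\Delta$) using the locality estimate just proved — the sequence $U_m^*TU_m$ is norm-Cauchy because $T$ is almost localized in some cone in $\calB_K$ and the $U_m$ push support out toward infinity, so $\|U_m^*TU_m - U_{m'}^*TU_{m'}\|$ is controlled by a decay function evaluated at $\min(m,m')$. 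Independence of the limit on the choice of sequence in $\mathcal{K}_\rho$, and the identity $\rho_{\mathcal K}(T)\in(\rho\otimes\sigma,\rho'\otimes\sigma')$-type intertwining relations, follow from the asymptopia condition applied to pairs of sequences (one compares $\{U_m\}$ with an interleaved sequence). Uniqueness of the $*$-endomorphism extension follows since $\calA_\Gamma$ is $\sigma$-weakly dense, or more elementarily since any $*$-endomorphism of $\calA_\Delta$ restricting to $\rho$ and continuous on the relevant bounded sets must agree with this limit on the generators $\mathcal{T}_\Delta$. Finally, $(\rho,\sigma)=(\rho_{\mathcal K},\sigma_{\mathcal K})$ is a direct check: an intertwiner for $\rho,\sigma$ on $\calA_\Gamma$ intertwines the extensions on $\calA_\Delta$ because both sides are norm-limits of $U_m^*(\cdot)U_m$-type expressions, and conversely restriction is immediate. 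The tensor category axioms — associativity of $\otimes$ on objects and arrows, the interchange law, unit object $\iota=\mathrm{id}$ — then reduce to formal manipulations with these extensions exactly as in \cite{BuchholzAA}, the only nontrivial input being the well-definedness and intertwining property of $S\otimes T=S\rho_{\mathcal K}(T)$ already established; I would cite \cite{BuchholzAA} for the categorical bookkeeping rather than reproduce it.
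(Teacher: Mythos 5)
There is a genuine gap, and it sits exactly at the step you flagged as the main obstacle: the construction of the asymptopia. You build $\mathcal{K}_\rho$ by transporting $\rho$ to endomorphisms localized in translates of its \emph{own} cone, $\Lambda^\rho_\alpha - k$ (note also that in the paper's convention $\Lambda^\rho_\alpha - k \supset \Lambda^\rho_\alpha$ is a \emph{growing} cone, not one marching off to infinity). With that choice the key claim --- that the localization cone of $V_nRU_m^*$ becomes $\ll$-separated from $(K+\ell)^c$ as $m,n\to\infty$ --- is false. The almost-localization estimate of Corollary~\ref{cor:intertwineraloc} only controls commutators with operators in $\mathcal{R}\bigl((C_{\alpha+\epsilon}-j)^c\bigr)$, where $C$ is the localization cone of the intertwiner; to apply it to $R''\in\mathcal{R}((K+\ell)^c)$ with a bound $g_\epsilon(j)$, $j\to\infty$, you need the widened, shifted cone $C_{\alpha+\epsilon}-j$ to be \emph{contained in} $K+\ell$. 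No translate of $\Lambda^\rho_{\alpha+\epsilon}$ can ever be contained in $K+\ell$: by hypothesis $\Lambda^\rho_\alpha\ll (K+n^\rho)^c$, so this cone points away from $K$ and every translate of it extends to infinity inside $(K+\ell)^c$, staying at bounded distance from that region. Co-translating $K$ (your parenthetical ``$\Lambda^\rho_\alpha-k\ll (K+n^\rho-k)^c$'') does not help, because the auxiliary algebra $\calB_K$ and the approximants $R''$ are tied to the fixed cone $K$.

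The paper's proof fixes the geometry differently: the sequences in $\mathcal{K}_\rho$ consist of unitaries $U_m\in(\rho,\rho_m)$ where the $\rho_m$ are transported \emph{into the forbidden cone itself}, being almost localized in the narrowed translates $K_{-\epsilon}+k_m$ with $k_m\to\infty$ and a common decay function $f^\rho$ (here the clause of Definition~\ref{defn:transportable} about smaller opening angles is what guarantees a uniform decay function, since all these cones have the same angle). Then $V_{m'}RU_m^*\in(\rho_m,\rho'_{m'})$ is almost localized in $K_{-\epsilon}+\min\{k_m,k'_{m'}\}$, whose widening is $K+\min\{k_m,k'_{m'}\}$, so for $R''\in\mathcal{R}((K+n)^c)$ one gets the bound $g_\epsilon(\min\{k_m,k'_{m'}\}-n)\to 0$; combined with the $\calB_K$-approximation of $R'$ (Corollary~\ref{cor:interwinerauxalg}), this yields asymptotic abelianness, after which Theorem 4 of~\cite{BuchholzAA} supplies the unique extensions, the equality of intertwiner spaces, and the tensor structure, much as in your final paragraph. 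So your overall architecture (asymptopia from transportability, $\calB_K$-approximation of the second intertwiner, citation of~\cite{BuchholzAA} for the extension and categorical bookkeeping) matches the paper, but the asymptopia must move the charges to infinity \emph{inside} $K$, not along translates of $\Lambda^\rho_\alpha$; as written, your commutator estimate does not go through.
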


\begin{proof}
	We construct an asymptopia $\mathcal{K}$ for $\Delta$ based on the forbidden region $K$.
	Let $\rho \in \Delta$ be given.
	Define $\mathcal{K}_\rho$ as the set of sequences of unitaries $\{ U_m\}$ with the following properties:
	\begin{enumerate}
		\item $\rho(A) = \lim_{m \ra \infty } U_m^* A U_m$ for all $A \in \calA$;
		\item there is an $\epsilon(\{U_m\})>0$ (i.e., only depending on the sequence), an increasing sequence $k_m \in \NN$ and a sequence of $*$-endomorphisms $\rho_m$ 
		such that $\rho_m$ is almost localized in $K_{-\epsilon(\{U_m\})} + k_m$ with decay function $f^\rho$ independent of $m$ (but which may depend on the sequence);
		\item $U_m$ intertwines $\rho$ and $\rho_{m}$, that is, $U_m \in (\rho, \rho_m)$.
	\end{enumerate} 
We first show that $\mathcal{K}_\rho$ is non-empty, using an argument similar to the proof of Lemma~\ref{lem:asympinner}.
	Choose some $0 < \epsilon < \kappa$.
	Then by the transportability assumption, given an increasing sequence $k_m$, there is a sequence of endomorphisms $\rho_m$ almost localized in $K_{-\epsilon} + k_m$, with some decay function $f^\rho$ which can be chosen independently of $m$, since the opening angle of the cones is the same.
	For each $m$ we can then choose a unitary $U_m \in (\rho, \rho_m)$, and we set $\epsilon(\{U_m\}) \equiv \epsilon$.
	An argument similar to that in the proof of Lemma~\ref{lem:asympinner} shows that the sequence $\rho_m$ will converge pointwise to the identity,
	\begin{equation}\label{eqn:pointwiseidentity}
	\lim_{ m\ra \infty} \| \rho_m(A) - A \| = \lim_{m \ra \infty } \| \rho(A) -U_m^* A U_m \| = 0,
	\end{equation}
	and hence $\mathcal{K}_\rho$ is non-empty.
	By construction, the family $\mathcal{K}_\rho$ is stable, that is, any subsequence of $\{ U_m\}$ will also belong to $\mathcal{K}_\rho$.
	
	Let  $\rho, \rho', \sigma, \sigma' \in \Delta$,
    $R \in (\rho, \rho')$ and $S \in (\sigma, \sigma')$.
    Consider $\{ U_m\} \in \mathcal{K}_\rho$ and $ \{V_{m'}\} \in \mathcal{K}_{\rho'}$.
	It follows that  
	\begin{equation*}
	V_{m'}R U_m^* \rho_m(A) = V_{m'} R \rho(A) U_m^*= V_{m'} \rho'(A) R U_m^* = \rho'_{m'}(A) V_{m'}RU_m^*.
	\end{equation*}
		so that $V_{m'}R U_m^* \in ( \rho_m , \rho'_{m'})$ is an intertwiner.
		By Corollary~\ref{cor:intertwineraloc},  $V_{m'}R U_m^* $ is an almost localized operator in $K_{-\epsilon} + \min\{ k_m,k'_{m'}\}$ with decay function $g \equiv \|R\|(f^\rho + f^{\rho'})$ for $\epsilon = \min\{\epsilon(\{U_n\}),\epsilon(\{V_{m'}\})\}$.
		In particular, the following inequality holds whenever $\min\{k_m,k'_{m'}\}> n > 0$,		 
	\[ \sup_{A \in \mathcal{R}((K+n)^c)}  \frac{\|[V_{m'}R U_m^*,A]\|}{\|A\|}  \leq g_\epsilon( \min\{k_m,k'_{m'}\} - n). \]

	Let $\epsilon'>0$ be given.
	By assumption, if $n > \max\{n^\sigma, n^{\sigma'}\}$ then $ \Lambda_{ \alpha  }^\sigma, \Lambda_{ \beta}^{\sigma'}  \subset (K+n)^c$.
	Thus, Corollary~\ref{cor:interwinerauxalg} implies that  $S  \in \mathcal{B}_K$. 
	Therefore,  there exists $N>0$ such that for $n\geq N$  there is an operator $S_n\in \mathcal{R}((K + n)^c)$ with $\|S_n\| \leq \|S\|$ and the approximation
	$\| S - S_n \| < \epsilon' \| S\|$.
	It follows that 
	\begin{align*}
		\| [ V_{m'} R U_m^*, S] \| & \leq   \| [ V_{m'} R U_m^*, S_n] \| + 2 \| R\| \| S- S_n\|\\
			&\leq  \| S\|  g_{\epsilon}( \min\{k_m,k_{m'}'\} - n) + 2 \epsilon' \| R \| \|S\|.
	\end{align*}
	Taking the limit as $m,m' \ra \infty$ and since $\epsilon' >0$ was arbitrary we have that 
	\begin{equation*}
	\lim_{m,m' \ra \infty}  \| [ V_{m'} R U_m^*, S] \| = 0.
	\end{equation*}
	Therefore, $\mathcal{K}$ is an asymptopia for $\Delta.$
	Applying Theorem 4 of \cite{BuchholzAA} we recover that $\rho$ extends uniquely to a $*$-endomorphism $\rho_{\mathcal{K}}$ on  $\calA_{\Delta}$ 
	and satisfies the properties as listed in the theorem.
	
	The tensor product on $\Delta$ was defined in equation~\eqref{eq:tensorprod}.
	A calculation shows that for all $ A \in \calA_\Gamma$,
	\begin{align*}
	(R \otimes S )\rho \circ \sigma (A) & = R \rho_{\mathcal{K}}(S) \rho ( \sigma(A))  
	= R \rho_{\mathcal{K}}(S \sigma(A) ) 
	= \rho'(\sigma' (A)) R \rho_{\mathcal{K}}(S) \\
	&= \rho' \circ \sigma' (A) (R \otimes S),
	\end{align*}
	and thus $R \otimes S \in ( \rho\circ \sigma, \rho' \circ \sigma')$. It follows that $\Delta$ is a tensor $C^*$-category.
\end{proof}

If every object $\rho\in \Delta$ can be almost localized in a fixed convex cone $\Lambda_\alpha$ 
then any convex cone $K$ satisfying $ K \subset( \Lambda_{ \alpha  } + 1)^c $
will satisfy the assumptions of Theorem~\ref{thm:Asymptopia}.

For a fixed reference state $\omega_0$ the asymptopia $\mathcal{K}$, defined in Theorem~\ref{thm:Asymptopia},
 are determined by a forbidden region $K$.
In principle, a choice of different cones could lead to different asymptopia, and in particular, a different tensor structure.
However, for the tensor product to be related to the physical procedure of fusing charges
it is important that the tensor structure on $\Delta$ is independent of the forbidden region.
This is the case.
To see this, consider a \emph{path} of cones.
More precisely, choose cones $K^i \in \mathcal{C}$, with $i = 1, \dots, n$, such that for each $i = 1, \dots n-1$ we have either $K^i \subset K^{i+1}$ or $K^{i+1} \subset K^i$.
This allows for an interpolation argument (compare for example with~\cite[Sect. 4]{DHR1}).

\begin{lemma}\label{lem:indK}
Let $\Delta$ be a $C^*$-category satisfying the almost localized and transportable superselection criterion~\ref{def:superselection}
and suppose that each $\rho \in \Delta$ is almost localized in a convex cone $\Lambda_{ \alpha}^\rho$.
Suppose that there exists a path of cones $K^1, \dots,  K^n \in \mathcal{C}$ such that for all $\rho \in \Delta$ 
we have $\Lambda_{ \alpha}^\rho \ll (K^i+n_i^\rho)^c$ for some $n_i^\rho\in\NN$ and $ i = 1, \dots, n$.
Then, the asymptopias $\mathcal{K}_1$ and $ \mathcal{K}_n$ (as defined in Theorem~\ref{thm:Asymptopia}) 
determine the same tensor structure on $\Delta$.
\end{lemma}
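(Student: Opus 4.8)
The plan is to prove the stronger statement that $\rho_{\mathcal{K}_1} = \rho_{\mathcal{K}_n}$ as $*$-endomorphisms of $\calA_\Delta$, for every $\rho \in \Delta$. Equality of the two tensor structures then follows at once: on $\calA_\Gamma$ any extension $\rho_{\mathcal{K}}$ restricts to $\rho$ itself, so on objects $\rho \otimes \sigma = \rho\circ\sigma$ regardless of the asymptopia, while on arrows $S\otimes T = S\,\rho_{\mathcal{K}}(T)$ and the associativity, unit, and (for $\nu \ge 2$) braiding data are all assembled from the maps $\rho \mapsto \rho_{\mathcal{K}}$, the intertwiners, and the forbidden direction --- the latter being consistent along the nested path. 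Since $\calA_\Delta$ is generated by $\calA_\Gamma$ and $\mathcal{T}_\Delta$ and the extensions agree on $\calA_\Gamma$, it is enough to check $\rho_{\mathcal{K}_1}(T) = \rho_{\mathcal{K}_n}(T)$ for each intertwiner $T \in (\sigma,\sigma')$. Consecutive cones in the path are nested, so it suffices to treat one pair; say $K^i \subseteq K^{i+1}$ (the other case is symmetric), and write $K := K^i$, $K' := K^{i+1}$, both convex, as needed for $\mathcal{K}_i, \mathcal{K}_{i+1}$ to be defined via Theorem~\ref{thm:Asymptopia}.

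Fix $\rho \in \Delta$ and $T \in (\sigma,\sigma')$, and choose sequences $\{U_m\} \in (\mathcal{K}_i)_\rho$ and $\{V_m\} \in (\mathcal{K}_{i+1})_\rho$, which are nonempty by the construction in the proof of Theorem~\ref{thm:Asymptopia}; by that construction there are endomorphisms $\rho_m$ almost localized in $K_{-\epsilon}+k_m$ and $\tilde\rho_m$ almost localized in $K'_{-\epsilon'}+k'_m$, each with an $m$-independent decay function, and unitaries $U_m \in (\rho,\rho_m)$, $V_m \in (\rho,\tilde\rho_m)$. Both limits $\rho_{\mathcal{K}_i}(T) = \lim_m U_m^* T U_m$ and $\rho_{\mathcal{K}_{i+1}}(T) = \lim_m V_m^* T V_m$ exist by Theorem~\ref{thm:Asymptopia}, and from $U_m^* T U_m - V_m^* T V_m = U_m^*\,[\,T,\,U_m V_m^*\,]\,V_m$ one gets
\[
\bigl\| \rho_{\mathcal{K}_i}(T) - \rho_{\mathcal{K}_{i+1}}(T) \bigr\| \;\le\; \limsup_{m\to\infty}\, \bigl\| [\,T,\,U_m V_m^*\,] \bigr\| ,
\]
where $U_m V_m^* \in (\tilde\rho_m,\rho_m)$. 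It remains to see that the right-hand side vanishes.

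The geometric ingredient is that translating a cone along its own axis drives it proportionally deep into any containing cone: for $K \subseteq K'$ there is a constant $c = c(K,K') > 0$ with $K + k \subseteq K' + \lfloor ck\rfloor$ for all $k \ge 0$. Indeed, every $x \in K + k a$ (with $a$ the axis of $K$) satisfies $x\cdot a' - \|x\|\cos\beta' \ge k\bigl(\cos\angle(a,a') - \cos\beta'\bigr) > 0$, where $a'$ is the axis and $\beta'$ the half opening angle of $K'$, and this forces $x \in K' + \lfloor ck\rfloor$. Hence both $K_{-\epsilon}+k_m$ and $K'_{-\epsilon'}+k'_m$ lie in $K' + \ell_m$ with $\ell_m \to \infty$, so by the commutator estimate in the proof of Corollary~\ref{cor:intertwineraloc}, for each fixed $N$ and every $B \in \calA_{(K'+N)^c}$ we have
\[
\bigl\| [\,U_m V_m^*,\,B\,] \bigr\| \;\le\; \|\rho_m(B) - B\| + \|\tilde\rho_m(B) - B\| \;\le\; h_m(N)\,\|B\| ,
\]
with $h_m(N)\to 0$ as $m\to\infty$; by Kaplansky density (cf.\ Lemma~\ref{lem:alocoperator}) the same bound holds for $B \in \mathcal{R}((K'+N)^c)$. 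On the other hand the hypothesis $\Lambda_\alpha^\sigma \ll (K' + n^\sigma)^c$ (and likewise for $\sigma'$) together with Corollary~\ref{cor:interwinerauxalg} gives $T \in (\sigma,\sigma') \subseteq \calB_{K'}$, so for any $\eta > 0$ there are $N$ and $T_N \in \mathcal{R}((K'+N)^c)$ with $\|T_N\| \le \|T\|$ and $\|T - T_N\| < \eta$; then
\[
\bigl\| [\,T,\,U_m V_m^*\,] \bigr\| \;\le\; \bigl\| [\,T_N,\,U_m V_m^*\,] \bigr\| + 2\|T - T_N\| \;\le\; h_m(N)\,\|T\| + 2\eta .
\]
Letting $m\to\infty$ and then $\eta\downarrow 0$ gives $\limsup_m\|[T,U_m V_m^*]\| = 0$, hence $\rho_{\mathcal{K}_i}(T) = \rho_{\mathcal{K}_{i+1}}(T)$; chaining this along the path yields $\rho_{\mathcal{K}_1} = \rho_{\mathcal{K}_n}$ on $\calA_\Delta$, and therefore the same tensor (and, for $\nu\ge 2$, braided) structure.

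The step I expect to be the main obstacle is precisely this geometric bookkeeping. The two asymptopias send their approximating endomorphisms $\rho_m$ and $\tilde\rho_m$ off to infinity along \emph{different} axes, so there is no common sequence in $(\mathcal{K}_i)_\rho \cap (\mathcal{K}_{i+1})_\rho$ to exploit and one is forced to control $[T, U_m V_m^*]$ directly; making the inclusion $K + k \subseteq K' + \lfloor ck\rfloor$ precise and tracking how it, together with the various cone translations and angle enlargements, keeps the decay functions of $\rho_m$ and $\tilde\rho_m$ uniform in $m$ when measured against the single receding cone $K' + \ell_m$ is routine but delicate. Everything else reuses the locality results of Section~\ref{sec:intertwiners} and the asymptopia argument of Theorem~\ref{thm:Asymptopia}.
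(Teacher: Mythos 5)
Your proof is correct, but it takes a genuinely different route from the paper's. The paper reduces, as you do, to a single nested pair of forbidden cones, but then simply observes that every unitary sequence admissible for the smaller cone is, by the construction in Theorem~\ref{thm:Asymptopia}, also admissible for the larger one, so the two asymptopias are contained in a common asymptopia; the equality of tensor structures then follows at once from the uniqueness statement of \cite{BuchholzAA} (a common asymptopia forces the same extensions $\rho_{\mathcal{K}}$, hence the same monoidal product). You instead prove the needed consequence by hand: writing $U_m^*TU_m - V_m^*TV_m = U_m^*[T,U_mV_m^*]V_m$ with $U_mV_m^*\in(\tilde\rho_m,\rho_m)$, and killing the commutator by combining the receding localization of $\rho_m,\tilde\rho_m$ inside the larger cone with the approximation $T\in\calB_{K'}$ supplied by Corollary~\ref{cor:interwinerauxalg}. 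This amounts to a self-contained reproof of the special case of the \cite{BuchholzAA} uniqueness argument that the paper invokes, built from the same locality inputs (Corollary~\ref{cor:intertwineraloc}, Lemma~\ref{lem:alocauxalg}) as the proof of Theorem~\ref{thm:Asymptopia}. What your version buys is that it makes explicit the geometric fact hidden behind the paper's ``it is clear'': translating the narrower cone along its own axis drives it proportionally deep into the wider cone, so the ($m$-independent) decay functions of $\rho_m$, $\tilde\rho_m$ can be measured against translates of $K'$; note that the paper's containment of asymptopias needs exactly the same fact (almost localization in $K_{-\epsilon}+k_m$ must be upgraded to almost localization in a translate of a narrowed $K'$ with uniform decay), so neither route avoids it, and your sketch of that containment is essentially right (the margin $x\cdot a'-\|x\|\cos\beta'\geq k(\cos\angle(a,a')-\cos\beta')$ bounds the distance to the boundary of $K'$ up to a Lipschitz constant, which is what lets you absorb the translation $\lfloor ck\rfloor$ along the axis of $K'$). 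One small caution: your closing remark about the braiding being ``consistent along the nested path'' is neither needed for nor claimed by the lemma, which concerns only the tensor structure; the braiding is treated separately via the bi-asymptopia of Theorem~\ref{thm:biasymptopia}.
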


\begin{proof}
	It is enough to show that if $\widetilde{K}_1 \subset \widetilde{K}_2$ (with $\Lambda_\alpha \ll (\widetilde{K}_2 + n^\rho)^c$ for some $n^\rho$), then the corresponding asymptopias $\widetilde{\mathcal{K}}_i$ from Theorem~\ref{thm:Asymptopia} yield the same tensor structure.
	For this it is enough to show that the $\widetilde{\mathcal{K}}_i$ are contained in a common asymptopia~\cite{BuchholzAA}.
	But since $\widetilde{K}_1 \subset \widetilde{K}_2$, from the construction in the theorem it is clear that every sequence in $\widetilde{\mathcal{K}}_1$ is also in $\widetilde{\mathcal{K}}_2$.
\end{proof}

Given a tensor $C^*$-structure on $\Delta$,
a braiding is a unitary intertwiner $\epsilon_{\rho,\sigma} \in (\rho\otimes \sigma, \sigma\otimes \rho)$ satisfying some natural additional conditions.
Physically, a braiding describes an interchange of the charges labeled by $\rho$ and $\sigma$.
Intuitively, if $\rho$ and $\sigma$ are almost localized in $\Lambda_{ \alpha  }$ 
a braiding could be performed by first transporting 
$\rho$ to a far away cone $\Lambda_{ \alpha  }^\mathcal{U}$ 
and $\sigma$ in the opposite direction to a cone $\Lambda_{ \beta}^\mathcal{V}$, see Figure~\ref{fig:cones}.
By the almost localized property, after this transportation the charges would approximately commute, 
in the sense that for the transported endomorphisms we have $\rho' \circ \sigma'$ is close to $\sigma' \circ \rho'$.
Transporting the charges back would complete the braid.
In the following we make this argument precise.

\begin{defn}[\cite{BuchholzAA}]\label{defn:biasymptopia}
	Suppose  $\Delta$ is a tensor $C^*$-category of endomorphisms on $\calA_\Gamma$.
	A \emph{bi-asymptopia} for $\Delta$ is a pair of mappings  
	$\mathcal{U} : \rho \mapsto \mathcal{U}_\rho$ and $ \mathcal{V}: \rho \mapsto \mathcal{V}_\rho$ 
	where $\mathcal{U}_\rho$ and $\mathcal{V}_\rho$ are stable families of unitary sequences
	such that for the following hold:
	\begin{enumerate}
		\item for each $ \{ U_m\} \in \mathcal{U}_\rho$ and $\{ V_n\} \in \mathcal{V}_\rho$ we have 
		\begin{equation*}
		\rho(A) = \lim_{m \ra \infty} U_m^* A U_m = \lim_{n \ra\infty} V_n^* A V_n \quad \mbox{ for all } \quad A \in \calA_\Gamma,
		\end{equation*}
		\item for $ \rho, \rho' \in \Delta$ there exist $\{U_m\} \in \mathcal{U}_{\rho}$, $ \{U'_m\} \in \mathcal{U}_{\rho'}$, $\{V_n\} \in \mathcal{V}_{\rho}$, and $ \{V'_n\} \in \mathcal{V}_{\rho'}$ such that $\{U_m \otimes U'_m\} \in \mathcal{U}_{\rho \circ \rho'}$ and $ \{V_n \otimes V'_n\} \in \mathcal{V}_{\rho\circ \rho'}$,
		\item  and for  $R \in (\rho,\rho')$ and $ S \in ( \sigma, \sigma')$ we have that
		\begin{equation*}
		\lim_{m,m',n,n' \ra \infty} \| U'_{m'} R U_{m}^* \otimes V'_{n'} S V_{n}^* - V'_{n'} S V_{n}^* \otimes U'_{m'} R U_{m}^* \| =  0
		\end{equation*}
		for all $U_m \in \mathcal{U}_\rho$, $U_{m'}' \in \mathcal{U}_{\rho'}$, $ V_m \in \mathcal{V}_\sigma$, and $ V_{m'}' \in \mathcal{V}_{\sigma'}$.
		\end{enumerate}
		The tensor product is that of Theorem~\ref{thm:Asymptopia}.
	\end{defn}

\begin{figure}
	\begin{center}
	\includegraphics[scale=.5]{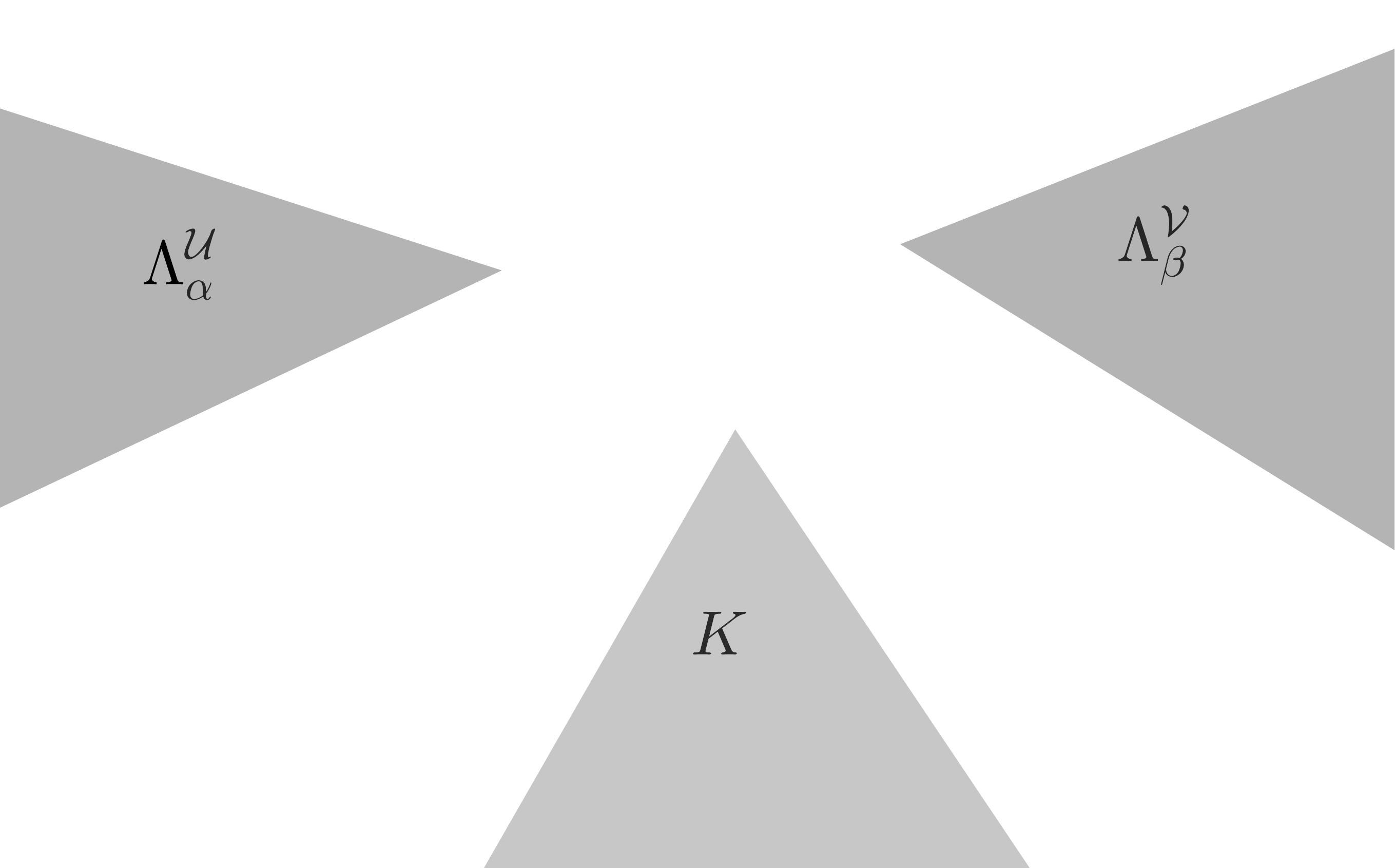}
	\end{center}
	\caption{A sample configuration of cones $\Lambda_{ \alpha}^{\mathcal{U}}$, $\Lambda_\beta^{\mathcal{V}}$ and $K$ satisfying the conditions set for the bi-asymptopia $\{\mathcal{U}, \mathcal{V}\}$ of $\Delta$. Here $\Lambda_{\alpha}^{\mathcal{U}}$ is to the left of $\Lambda_\beta^{\mathcal{V}}$ with respect to $K$.}
	\label{fig:cones}
\end{figure}

The bi-asymptopia can be used to define a braiding, as we will do below.
Before we continue it should be noted that bi-asymptopia generally are not unique, and consequently different choices may lead to different braidings.
One can order bi-asymptopia by inclusion, and then it can be seen that two bi-asymptopia yield the same braiding if they lie in the same path component.
It should be noted that this phenomenon already appears in the case of strict cone localization in $\nu = 2$.
There one has to make a choice if for $\epsilon_{\rho,\sigma}$ one has to move $\rho$ to the ``left'' and back, or to the ``right'' and back.
The notion of left and right can be defined in terms of an auxiliary cone $K$, which can be regarded as a ``forbidden'' direction.
To do this, consider the auxiliary cone $K$ and two cones $\Lambda_1$ and $\Lambda_2$, such that they are mutually disjoint.
Now consider a circle large enough such that it contains the apexes of all three cones.
Then the intersection of the cones with the circle gives us three disjoint intervals, which we denote $I_1, I_2$ and $I_K$.
We say that $\Lambda_1$ is to the left of $\Lambda_2$, if we can rotate the interval $I_1$ counter clockwise along the circle in such a way that it will first intersect with $I_K$ before it intersects with $I_2$.
The choices of moving left or moving right generally lead to \emph{distinct} (but nevertheless related) braidings (the argument is similar to the compact case described in for example~\cite{Halvorson}).

\begin{rem}
If the spatial dimension is $\nu > 2$, it is no longer possible to define ``left'' and ``right'' in this way.
This however is no issue: in this case one can continuously deform one cone into another without either crossing the auxiliary cone $K$ or a fixed localization cone $\Lambda$.
As a consequence the definition of the braiding does not depend on the relative position of the cones, and the braiding is in fact a symmetry, in the sense that interchanging two charges twice is the trivial operation.
In other words, we only have bosons and fermions.
\end{rem}

We will not attempt to classify all possible braidings.
Rather, the goal is to recover the braiding defined in Ref.~\cite{Naaijkens11} for strictly localized sectors.
To this end, consider $\rho \in \Delta$.
In the construction of the bi-asymptopia, we have to choose cones $\Lambda_\alpha^{\mathcal{U}}$ and $\Lambda_\beta^{\mathcal{V}}$.
We choose $\Lambda_\alpha^{\mathcal{U}}$ to be to the left of $\Lambda_\beta^{\mathcal{V}}$.
Here ``left'' is defined as in~\cite{Naaijkens11} and explained above.
This is the situation depicted in Figure~\ref{fig:cones}.

\begin{thm}\label{thm:biasymptopia}
	Suppose $\nu\geq 2$ and $\Delta$  satisfies the assumptions of Theorem~\ref{thm:Asymptopia}.
	Then, there exists a bi-asymptopia $\{ \mathcal{U}, \mathcal{V}\}$ for $\Delta$
	and given $\rho, \sigma \in \Delta$ the following limit exists:
	\begin{equation}\label{eqn:braid}
	\epsilon_{\rho,\sigma} \equiv \lim_{m, n \ra \infty } ( V_{n}^* \otimes U_{m}^*) (  U_{m} \otimes  V_n ).
	\end{equation}
	The limit is independent of the choice of $\{ U_m \} \in \mathcal{U}_\rho$ and $ \{ V_n \} \in \mathcal{V}_\sigma$, and we have $ \epsilon_{\rho,\sigma} \in ( \rho \otimes \sigma, \sigma \otimes \rho)$.
	
	Furthermore, if $ R \in (\rho, \rho')$ and $ S \in (\sigma, \sigma')$ then 
	\begin{equation}\label{eqn:braid1}
		\epsilon_{\rho',\sigma'} (R\otimes S) = (S \otimes R) \  \epsilon_{\rho, \sigma}
	\end{equation}
	and if $\tau \in \Delta$  then 
	\begin{align}
	\epsilon_{ \rho \circ \sigma, \tau}  &= ( \epsilon_{ \rho, \tau} \otimes 1_\sigma )( 1_\rho \otimes \epsilon_{\sigma, \tau}) \label{eqn:braid2}\\
	\epsilon_{ \rho, \sigma \circ \tau} &= ( 1_\sigma \otimes \epsilon_{ \rho, \tau}) ( \epsilon_{ \rho, \sigma} \otimes 1_\tau) \label{eqn:braid3}.
	\end{align}
	Here $1_\sigma \equiv I$, seen as the trivial intertwiner from $\sigma$ to itself.
	That is, the braiding is natural in both variables, and satisfies the braid equations.
	Hence $\Delta$ is a braided tensor $C^*$-category.
\end{thm}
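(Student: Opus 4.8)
The plan is to follow the blueprint of~\cite{BuchholzAA}: construct an explicit bi-asymptopia $\{\mathcal{U},\mathcal{V}\}$ built from the forbidden direction $K$ as in Theorem~\ref{thm:Asymptopia}, then deduce existence of the braiding, its naturality and the braid relations by quoting the general categorical results of that paper, and finally check consistency with~\cite{Naaijkens11}. First I would fix, for each $\rho\in\Delta$, two auxiliary cones $\Lambda_\alpha^{\mathcal{U}}$ and $\Lambda_\beta^{\mathcal{V}}$ with $\Lambda_\alpha^{\mathcal{U}}$ to the \emph{left} of $\Lambda_\beta^{\mathcal{V}}$ (in the sense defined above relative to $K$), each $\ll (K+n)^c$ for suitable $n$, and with pairwise disjoint closures. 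Using transportability I obtain, for a chosen increasing sequence $k_m$, endomorphisms $\rho_m^{\mathcal{U}}$ almost localized in $\Lambda_\alpha^{\mathcal{U}}+k_m$ (resp.\ $\rho_n^{\mathcal{V}}$ in $\Lambda_\beta^{\mathcal{V}}+k_n$) with a common decay function, together with unitary intertwiners $U_m\in(\rho,\rho_m^{\mathcal{U}})$ and $V_n\in(\rho,\rho_n^{\mathcal{V}})$; the families of all such sequences (closed under subsequences) are $\mathcal{U}_\rho$ and $\mathcal{V}_\rho$. Stability and condition~(1) of Definition~\ref{defn:biasymptopia} follow exactly as in Lemma~\ref{lem:asympinner} and Theorem~\ref{thm:Asymptopia}; condition~(2) is arranged by choosing the transported endomorphisms for $\rho\circ\rho'$ to be the composites $\rho_m^{\mathcal{U}}\circ(\rho')_m^{\mathcal{U}}$ (which is almost localized in the common cone by Proposition~\ref{prop:aloccomp}) and setting $U_m\otimes U'_m$ as the intertwiner, similarly for $\mathcal{V}$.

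The analytic heart is condition~(3): for $R\in(\rho,\rho')$, $S\in(\sigma,\sigma')$ one must show
$\|U'_{m'}RU_m^*\otimes V'_{n'}SV_n^* - V'_{n'}SV_n^*\otimes U'_{m'}RU_m^*\|\to 0$.
As in the proof of Theorem~\ref{thm:Asymptopia}, $U'_{m'}RU_m^*$ intertwines $\rho_m^{\mathcal{U}}$ and $(\rho')_{m'}^{\mathcal{U}}$, hence by Corollary~\ref{cor:intertwineraloc} is almost localized in $\Lambda_\alpha^{\mathcal{U}}+\min\{k_m,k'_{m'}\}$ with a decay function independent of $m,m'$, and likewise $V'_{n'}SV_n^*$ is almost localized in $\Lambda_\beta^{\mathcal{V}}+\min\{k_n,k'_{n'}\}$. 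Since $\Lambda_\alpha^{\mathcal{U}}$ and $\Lambda_\beta^{\mathcal{V}}$ are disjoint cones, for large translation parameters each of these operators is arbitrarily well approximated (by Lemma~\ref{lem:alocauxalg}, applied with the roles of the cones exchanged) by an element of $\mathcal{R}(\Lambda_\beta^{\mathcal{V}}+\ell)^c$-type algebras that commute; the commutator of the exact tensor-flip of two elements of commuting cone algebras vanishes, so combining the approximation bound with the super-polynomial decay of the almost-localization functions yields the limit. (One must be slightly careful that "almost localized in $\Lambda$" gives good approximation in $\mathcal{R}(\Lambda')$ only when $\Lambda'\ll\Lambda^c$; this is where disjointness of $\Lambda_\alpha^{\mathcal{U}}$ and $\Lambda_\beta^{\mathcal{V}}$, plus a small opening-angle margin as in Corollary~\ref{cor:interwinerauxalg}, is used.) This step — getting the commuting-algebra approximation uniform in all four indices — is the main obstacle; everything else is bookkeeping with $\mathcal{F}_\infty$.

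With the bi-asymptopia in hand, existence of the limit~\eqref{eqn:braid}, its independence of the chosen sequences, the intertwining property $\epsilon_{\rho,\sigma}\in(\rho\otimes\sigma,\sigma\otimes\rho)$, naturality~\eqref{eqn:braid1}, and the hexagon/braid identities~\eqref{eqn:braid2}--\eqref{eqn:braid3} are then exactly the content of the general theorem of~\cite{BuchholzAA} (Theorem~5 there), which I would invoke rather than re-derive; I would only spell out the short formal computations showing $(V_n^*\otimes U_m^*)(U_m\otimes V_n)$ is approximately an intertwiner and that the two $\otimes$-conventions in~\eqref{eqn:braid} match~\eqref{eq:tensorprod}. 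For $\nu>2$ the symmetry statement in the Remark follows because $\Lambda_\alpha^{\mathcal{U}}$ can be deformed to $\Lambda_\beta^{\mathcal{V}}$ through cones avoiding $K$, forcing $\epsilon_{\sigma,\rho}\epsilon_{\rho,\sigma}=1$. Finally, to match~\cite{Naaijkens11} in the strictly localized case one checks that when the $\rho_m^{\mathcal{U}},\rho_n^{\mathcal{V}}$ can be taken \emph{exactly} localized the unitaries $U_m,V_n$ eventually lie in the relevant cone algebras, the limit~\eqref{eqn:braid} stabilizes, and the resulting $\epsilon_{\rho,\sigma}$ is the DHR statistics operator built from charge transporters, with the left/right choice corresponding to the one fixed there.
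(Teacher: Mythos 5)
Your proposal is correct and follows essentially the same route as the paper: you build the bi-asymptopia exactly as in Theorem~\ref{thm:Asymptopia} but with the two separated cones $\Lambda_\alpha^{\mathcal{U}}$ (left) and $\Lambda_\beta^{\mathcal{V}}$ (right) in place of $K$, verify condition (3) of Definition~\ref{defn:biasymptopia} by combining Corollary~\ref{cor:intertwineraloc}, the split-property approximation of Lemma~\ref{lem:alocauxalg} and the $\mathcal{F}_\infty$ decay, and then obtain the braiding, its naturality and the braid relations from the general results of~\cite{BuchholzAA}, which is precisely the paper's argument. The only difference is one of detail: the paper writes out the error bookkeeping for condition (3) and the computation that $\epsilon_{\rho,\sigma}$ intertwines $\rho\otimes\sigma$ with $\sigma\otimes\rho$ explicitly, whereas you delegate these to~\cite{BuchholzAA}; note that your phrase about the ``tensor-flip of elements of commuting cone algebras'' is slightly loose, since the flip also requires the terms $\|\rho_m(S_N)-S_N\|$ and $\|\sigma_n(R_M)-R_M\|$ to be controlled by the decay functions, which is exactly what the paper's estimate supplies.
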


\begin{proof}
	Let $\Lambda_\alpha^{\mathcal{U}}$ and $\Lambda_\beta^{\mathcal{V}}$ be convex cones in $\mathcal{C}$ 
	such that for some $\epsilon >0$ and $x \in \ZZ^\nu$ we have
	$ \Lambda_{\alpha+\epsilon}^{ \mathcal{U}}\ll  (K+x)^c$, 
	$ \Lambda_{\beta+\epsilon}^{ \mathcal{V}}\ll  (K+x)^c$
	and $\Lambda_{ \alpha + \epsilon}^{\mathcal{U}} \ll (\Lambda_{\beta}^{ \mathcal{V}})^c$.
	Moreover we choose $\Lambda_\alpha^{\mathcal{U}}$ to the left of $\Lambda_\beta^{\mathcal{V}}$, as explained above.
	For instance, see Figure~\ref{fig:cones} for a sample configuration on $\ZZ^2$.
	A similar configuration exists for all $\nu \geq 2$, where for $\nu > 2$ the choice for ``left'' or ``right'' is not important any more.
	We will freely use Lemma~\ref{lem:indK} on the independence of the cone in the definition of the tensor product.
	
	Construct families of unitary sequences $\mathcal{U}_\rho$  and $\mathcal{V}_\rho$ 
	analogously to $\mathcal{K}_\rho$ as defined in Theorem~\ref{thm:Asymptopia}, 
	except in the construction the role of the cone $K$ is played by the cones $  \Lambda_{\alpha}^{ \mathcal{U}}$ and 
	$ \Lambda_{\beta}^{ \mathcal{V}}$, respectively.
	If $ \{ U_m\} \in \mathcal{U}_\rho $ then  
	$\rho(A) = \lim_{m \ra \infty} U_m^* A U_m$ for all $ A \in \calA_\Gamma$
	and  there exists an increasing sequence $k_m \in \NN$ and a sequence of $*$-endomorphisms $\rho_m$  such that 
	$\rho_m$ is almost localized in $\Lambda^{\mathcal{U}}_{\alpha- \epsilon(\{U_m\})}+k_m$  with decay function $f$, chosen independent of $m$, 
	and $U_m \in (\rho, \rho_m)$.
	$\mathcal{U}_\rho$ is stable, non-empty, and the sequence $ \rho_m$  converges pointwise to the identity, see~\eqref{eqn:pointwiseidentity}.
	We define $ \mathcal{V}_\rho$ similarly where  $ \Lambda_\beta^\mathcal{V}$ will play the analogous role for $\Lambda_\alpha^\mathcal{U}.$
	
	Let $\rho,\rho' \in \Delta$ and $\{ U_m\} \in \mathcal{U}_\rho$ and $ \{U_m'\} \in \mathcal{U}_{\rho'}$.
	For all $A\in \calA_\Gamma$ we have
	\begin{align*}
	\lim_{m \ra \infty} (U_m \rho( U_m') )^* A U_m \rho( U_m') & = 	\lim_{m \ra \infty} U_m^* \rho_m( U_m'^*) A  \rho_m( U_m') U_m = \rho \circ \rho' (A)
	\end{align*}
	where $\rho(U_m')$ exists by Theorem \ref{thm:Asymptopia}  and  for the second equality we used \eqref{eqn:pointwiseidentity}.
	Moreover, $U_m \rho(U_m') \in (\rho \circ \rho', \rho_m \circ \rho_m')$.
	Using Proposition~\ref{prop:aloccomp} one can find $\epsilon(\{ U_m \rho(U_m') \})>0$ and appropriate integers $k_m$ as in the proof of Theorem~\ref{thm:Asymptopia}.
	It follows that $ \{ U_m \otimes U_m'\} \in \mathcal{U}_{\rho \circ \rho'}$.
	
	Let $\epsilon'>0$ be given.
	For $\rho, \rho', \sigma, \sigma' \in \Delta$ and $ \{U_m\}\in \mathcal{U}_\rho$, $\{U_{m'}'\} \in \mathcal{U}_{\rho'}$, $  \{ V_n\} \in \mathcal{V}_\sigma$ and $\{ V_{n'}'\} \in \mathcal{V}_{\sigma'}$ 
	we have that 
	$ U'_{m'} R U_{m}^* \in ( \rho_m, \rho'_{m'} )$ and $ V'_{n'} S V_{n}^* \in (\sigma_n, \sigma_{n'}')$.
	By Corollary~\ref{cor:intertwineraloc} and the construction of $\mathcal{U}$ and $ \mathcal{V}$,
	the intertwiner $U'_{m'} R U_{m}^*$ is almost localized in $\Lambda_{ \alpha-\epsilon}^{\mathcal{U}} + \min\{k_m,k_{m'}\}$ with decay function $2 \| R\| f$
	and $ V'_{n'} S V_{n}^* $ is almost localized in $\Lambda^{\mathcal{V}}_{\beta-\epsilon} + \min\{k_n, k_{n'} \}$ with decay function $2 \| S\| g$,
	where $\epsilon = \min\{ \epsilon(\{U_m\}),\epsilon(\{U_{m'}'\}), \epsilon(\{V_n\}),\epsilon(\{V_{n'}'\})  \}$.
	Thus, there exists $M, N>0$ and  $ R_M \in \mathcal{R}( (\Lambda_{ \alpha }^{\mathcal{U}} + M)^c)$ and $S_N \in \mathcal{R}( (\Lambda_{ \beta }^{\mathcal{V}} + N)^c)$ such that for all $m,m',n,n'$ sufficiently large we have that 
	$\| U'_{m'} R U_{m}^* - R_M\| < \epsilon' \|R\|$ and $\| V'_{n'} S V_{n}^* - S_N\| < \epsilon' \|S\|$.
	It follows that
	\begin{align*}
	& \| U'_{m'} R U_{m}^* \otimes V'_{n'} S V_{n}^* - V'_{n'} S V_{n}^* \otimes U'_{m'} R U_{m}^* \| \\
	& \qquad = \| U'_{m'} R U_{m}^* \rho_m( V'_{n'} S V_{n}^* )- V'_{n'} S V_{n}^* \sigma_n(  U'_{m'} R U_{m}^*) \|  \\
	&  \qquad\leq  3 \| R\| \| V'_{n'} S V_{n}^*  -  S_N\| 
	+ 3\| S \| \|  U'_{m'} R U_{m}^* - R_M \| \\
	& \qquad \qquad +\| R\| \| \rho_m( S_N ) - S_N\| 
	+ \| S \| \| \sigma_n( R_M) -  R_M\|  + \|R_M S_N - S_NR_M \|\\
	& \qquad < 6 \epsilon' \| R\| \|S\| + 2 \|R\| \|S\|(  f_\epsilon(N + m) + g_\epsilon(M +n)) . 
	\end{align*}
	Taking the limit as $m,m',n,n' \ra \infty$ and since $\epsilon'>0$ was arbitrary, 
	gives that $\mathcal{U}$ and $ \mathcal{V}$ form a bi-asymptopia for $\Delta.$
	The result follows from Theorem 8 of \cite{BuchholzAA}.
	
	Here we include a calculation showing that $\epsilon_{ \rho,\sigma}$ intertwines $ \rho \otimes \sigma$ with $\sigma \otimes \rho$.
	A crucial point is that  $\rho_m $ and $\sigma_n$, being almost localized in far removed disjoint cones, commute asymptotically:
	\begin{align*}
		\epsilon_{ \rho, \sigma} (\rho \otimes \sigma) (A) & = \lim_{m,n \ra \infty} ( V_{n}^* \otimes U_{m}^*) (  U_{m} \otimes  V_n ) \rho (\sigma(A))
	& = \lim_{m,n \ra \infty} V_n^* \sigma_n(U_m^*) U_m \rho( V_n) \rho(\sigma(A))\\
	& = \lim_{m,n \ra \infty} V_n^* \sigma_n(U_m^*) U_m \rho( \sigma_n(A)V_n)
	& = \lim_{m,n \ra \infty} V_n^* \sigma_n(U_m^*)  \rho_m( \sigma_n(A)V_n) U_m\\
	& = \lim_{m,n \ra \infty} V_n^* \sigma_n(U_m^*)  \sigma_n( \rho_m(A))\rho_m(V_n) U_m
	& = \lim_{m,n \ra \infty} V_n^* \sigma_n(U_m^* \rho_m(A))\rho_m(V_n) U_m\\
	& = \lim_{m,n \ra \infty} V_n^* \sigma_n( \rho(A) U_m^*)\rho_m(V_n) U_m 
	& = \lim_{m,n \ra \infty}  \sigma(\rho(A))V_n^* \sigma_n(U_m^*) U_m \rho( V_n)\\
	& = \lim_{m,n \ra \infty} \sigma (\rho(A)) ( V_{n}^* \otimes U_{m}^*) (  U_{m} \otimes  V_n )\\
	& = (\sigma\otimes \rho)(A) \epsilon_{ \rho,\sigma}
	\end{align*}
	Calculations along the same lines as above hold to show properties \eqref{eqn:braid1}, \eqref{eqn:braid2}, and \eqref{eqn:braid3}.
\end{proof}

In case one has Haag duality for cones and strictly cone-localized charges one can do the usual sector theory, as we discussed earlier.
For sector theory of almost localized sectors to be a proper generalization of this case, at least the tensor product defined above should coincide with the original definition.
Fortunately, this turns out to be the case.
\begin{prop}
Suppose that Haag duality for cones hold. Then on strictly localized and transportable endomorphisms, the tensor product coincides.
\end{prop}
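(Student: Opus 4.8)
Here is how I would approach the proof.

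\medskip

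\noindent\emph{Overall strategy.}
The plan is to show that, under Haag duality for cones, the extension $\rho_{\mathcal K}$ produced by Theorem~\ref{thm:Asymptopia} agrees on all of $\calA_\Delta$ with the usual Doplicher--Haag--Roberts extension $\widehat\rho$ used in the strictly localized theory (\cite{DHR1,FiedlerN,Naaijkens11}). Once this identification is in hand, the two tensor products coincide automatically: on objects both are composition $\rho\otimes\sigma=\rho\circ\sigma$, and on arrows both are $S\otimes T=S\,\widehat\rho(T)=S\,\rho_{\mathcal K}(T)$ by the common formula~\eqref{eq:tensorprod}, while the associativity (and unit) constraints are trivial in both cases. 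Independence of the auxiliary cone $K$ on the new side is Lemma~\ref{lem:indK}, matching the corresponding independence on the DHR side.

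\medskip

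\noindent\emph{The DHR side.}
First I would set up the classical picture. Let $\Delta$ consist of strictly localized transportable endomorphisms, transportability understood in the strict sense (each object has a strictly localized unitarily equivalent copy in every cone), which is the setting of~\cite{Naaijkens11}. If $\rho,\sigma\in\Delta$ are strictly localized in a cone $\Lambda$ and $T\in(\rho,\sigma)$, then $TB=BT$ for every $B\in\calA_{\Lambda^c}$, so $T\in\calA_{\Lambda^c}{}'=\mathcal R(\Lambda^c)'$, and Haag duality for cones gives $T\in\mathcal R(\Lambda)$. Hence the intertwiners in $\mathcal T_\Delta$ are strictly cone-localized operators and $\calA_\Delta$ is the reduced field algebra of~\cite{Naaijkens11,FiedlerN}. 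The DHR extension is $\widehat\rho(T)\equiv V^*TV$, where $V\in(\rho,\rho')$ is a charge transporter and $\rho'$ is strictly localized in a cone disjoint from $\operatorname{supp}(T)$; that this is independent of the choices, hence a well-defined multiplicative $*$-preserving extension of $\rho$, is the standard interpolation argument using that an intertwiner of two endomorphisms strictly localized in cones disjoint from a cone $\Lambda_2$ commutes with $\mathcal R(\Lambda_2)$. The DHR tensor product is then $\rho\circ\sigma$ on objects and $S\,\widehat\rho(T)$ on arrows.

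\medskip

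\noindent\emph{The key step: $\rho_{\mathcal K}=\widehat\rho$.}
By Theorem~\ref{thm:Asymptopia}, for \emph{any} $\{U_m\}\in\mathcal K_\rho$ one has $\rho_{\mathcal K}(A)=\lim_m U_m^*AU_m$ for all $A\in\calA_\Delta$, so it suffices to exhibit one sequence in $\mathcal K_\rho$ along which the limit equals $\widehat\rho$. Using strict transportability, choose a strictly increasing $k_m\in\NN$, strictly localized endomorphisms $\rho_m$ localized in $K_{-\epsilon}+k_m$ for a fixed small $\epsilon>0$, and unitaries $U_m\in(\rho,\rho_m)$. Since strict localization is the special case of Definition~\ref{defn:alend} with decay function $0\in\mathcal F_\infty$, one checks that $\{U_m\}$ has all the defining properties of $\mathcal K_\rho$ from the proof of Theorem~\ref{thm:Asymptopia}: property $(2)$ holds with $f^\rho=0$, $(3)$ is the choice of $U_m$, and $(1)$ follows because for fixed local $A$ the cone $K_{-\epsilon}+k_m$ is eventually disjoint from $\operatorname{supp}(A)$, so $\rho_m(A)=A$ and $U_m^*AU_m=\rho(A)$ for large $m$. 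Now take $T\in(\sigma,\sigma')$ with $\sigma,\sigma'$ strictly localized in $\Lambda_1$, so $T\in\mathcal R(\Lambda_1)$. By the hypothesis of Theorem~\ref{thm:Asymptopia} there is $p\in\NN$ with $\Lambda_1\ll(K+p)^c$, hence for every $m$ with $k_m\ge p$ the localization cone $K_{-\epsilon}+k_m\subseteq K+p$ is disjoint from $\Lambda_1$. For such $m$ the operator $U_m^*TU_m$ is an admissible value of $\widehat\rho(T)$ in the DHR construction (charge transporter $U_m$, spectator $\rho_m$), so by well-definedness $U_m^*TU_m=\widehat\rho(T)$; the sequence is eventually constant. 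Passing from generators to norm limits (using multiplicativity of $\widehat\rho$ on products of generators and contractivity of conjugation) gives $\lim_m U_m^*AU_m=\widehat\rho(A)$ for all $A\in\calA_\Delta$, and comparing with Theorem~\ref{thm:Asymptopia} yields $\rho_{\mathcal K}=\widehat\rho$ on $\calA_\Delta$. Substituting into~\eqref{eq:tensorprod} shows the two tensor products agree; the same argument with $K$ replaced by the cones $\Lambda^{\mathcal U}_\alpha,\Lambda^{\mathcal V}_\beta$ of Theorem~\ref{thm:biasymptopia} identifies the braiding~\eqref{eqn:braid} with the one of~\cite{Naaijkens11}.

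\medskip

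\noindent\emph{Main obstacle.}
The step I expect to require the most care is the geometric bookkeeping in constructing $\{U_m\}$: verifying that it genuinely lies in $\mathcal K_\rho$ and, above all, that the localization cones $K_{-\epsilon}+k_m$ eventually separate from an arbitrary localization cone $\Lambda_1$. This is exactly what the forbidden cone $K$ is designed to guarantee, but it has to be extracted from the relation $\ll$ and the hypotheses of Theorem~\ref{thm:Asymptopia}. A secondary technical point is making sure the well-definedness/interpolation argument for $\widehat\rho$ is phrased for cone regions in the present lattice geometry rather than for double cones as in the original AQFT references.
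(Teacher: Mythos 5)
Your proposal is correct and follows essentially the same route as the paper: you identify the asymptopia extension $\rho_{\mathcal K}$ with the strict extension by choosing the charge transporters $U_m$ to be strict transporters into translates of the forbidden cone $K$, and you use Haag duality to place intertwiners inside cone algebras $\mathcal R(\Lambda)$ where the two extensions agree, so the common formula~\eqref{eq:tensorprod} gives the same tensor product. The only cosmetic difference is that the paper obtains well-definedness of the strict extension by weak continuity on $\mathcal R((K+x)^c)$ rather than by your interpolation argument, and it treats the braiding identification separately (in Theorem~\ref{thm:stabqd}) rather than as part of this proposition.
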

\begin{proof}
Recall that in the analysis of strictly cone-localized sectors one first chooses an auxiliary cone $K$ and define an auxiliary algebra $\calA^K \equiv \overline{\bigcup_{x \in \mathbb{Z}^2 } \mathcal{R}((K+x)^c)}^{\| \cdot \|}$, see~\cite{Naaijkens11}.
Choose $K$ to be the cone of Theorem~\ref{thm:Asymptopia}.
Let $\rho$ be strictly localized in some cone $\Lambda$ that is disjoint from $K+x$ for some $x$.
Then one can show that $\rho$ has a unique weakly continuous extension $\rho^K$ to $\calA^K$.
This is done by choosing a unitary $V$ such that $V \rho(A) V^*$ is localized in the auxiliary cone $K+x$ for some $x$.
Then, by localization, it follows that $\rho(A) = V^* A V$ for all $A \in \calA_{(K+x)^c}$.
Hence $\rho^K(A) \equiv V^* A V$, now with $A \in \mathcal{R}((K+x)^c)$, is a weakly continuous extension of $\rho$.

Note that the choice of $V$ does not matter: for another choice $W$, $VAV^* = WAW^* = \rho(A)$ on $\calA_{(K+x)^c}$, and hence by weak continuity their extensions coincide on $\mathcal{R}((K+x)^c)$.
In particular, one can choose the unitaries $U_m$ from the asymptopia $\mathcal{K}_\rho$, as defined in the proof of Theorem~\ref{thm:Asymptopia}.
It is then clear that on $\calA_\Delta \cap \mathcal{R}( (K+x)^c)$ the extensions $\rho^K$ and $\rho_{\mathcal{K}}$ coincide.
Moreover, by Haag duality one can show that $T \in (\rho,\sigma)$ is in $\mathcal{R}(\Lambda)$ for any cone $\Lambda$ containing the localization regions of both $\rho$ and $\sigma$, hence $\rho_\mathcal{K}(T) = \rho^K(T)$, and both tensor products coincide.
\end{proof}

Since the morphisms in the category are the intertwiners in both cases, the following follows immediately.
\begin{cor}\label{cor:monoidalinclusion}
There is a full and faithful monoidal functor from the category of strictly localized and transportable endomorphisms to $\Delta$. 
\end{cor}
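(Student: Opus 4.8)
The plan is to realize the functor as, essentially, the inclusion of a subcategory and then to verify the three required properties --- faithfulness, fullness and monoidality --- by unwinding definitions, the only non-formal input being the preceding Proposition. Concretely, I would define $F$ on objects by sending a strictly localized and transportable endomorphism $\rho$ to $\rho$ itself, now regarded as an object of $\Delta$: if $\rho(A)=A$ for all $A\in\calA_{\Lambda^c}$ then $\rho$ is almost localized in $\Lambda$ in the sense of Definition~\ref{defn:alend} with the zero decay function $f\equiv 0\in\mathcal{F}_\infty$, it is transportable by hypothesis, and --- after arranging, as one always does in the strict theory, that $\Lambda$ is convex and contained in $(K+n^\rho)^c$ for the forbidden cone $K$ of Theorem~\ref{thm:Asymptopia} --- it satisfies the hypotheses of that theorem, so $\rho\in\Delta$. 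On morphisms, $F$ sends an intertwiner $T\in(\rho,\sigma)$ of the strict theory to $T\in(\rho,\sigma)_{\pi_0}$; this is legitimate because, as recalled in the proof of the Proposition, Haag duality forces such a $T$ to lie in $\mathcal{R}(\Lambda)$ for any cone $\Lambda$ containing the localization regions of $\rho$ and $\sigma$, so the two notions of intertwiner literally coincide. Preservation of identities and of composition is then immediate, since in both categories composition of arrows is operator multiplication in $\calB(\calH_0)$.

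Next I would check that $F$ is full and faithful. Faithfulness is trivial: on each hom-set $F$ is the identity map, hence injective. Fullness is the assertion that every $T\in(\rho,\sigma)_{\pi_0}$ with $\rho,\sigma$ strictly localized is already a morphism of the strict category, which is again the Haag-duality argument above: choosing a cone $\Lambda$ containing both localization regions, $T$ commutes with $\mathcal{R}(\Lambda^c)$, whence $T\in\mathcal{R}(\Lambda^c)'=\mathcal{R}(\Lambda)$ and $T$ intertwines $\rho$ and $\sigma$ in the strict sense. Thus $F$ induces a bijection on each hom-set.

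Finally, for monoidality I would invoke the preceding Proposition directly: for strictly localized transportable $\rho,\sigma$ and intertwiners $R,S$ it shows that the extension $\rho_{\mathcal{K}}$ coming from the asymptopia of Theorem~\ref{thm:Asymptopia} agrees with the strict extension $\rho^K$ on the relevant intertwiners, so the two tensor products coincide on the nose; hence $F(\rho\otimes\sigma)=F(\rho)\otimes F(\sigma)$, $F(R\otimes S)=F(R)\otimes F(S)$, and the monoidal unit (the identity endomorphism) is preserved, so $F$ is a strict monoidal functor. Combined with fullness and faithfulness this proves the Corollary; and repeating the same bookkeeping with the bi-asymptopia of Theorem~\ref{thm:biasymptopia} one sees that $\epsilon_{\rho,\sigma}$ reduces to the braiding of~\cite{Naaijkens11}, so $F$ is even a braided monoidal embedding (though I would not claim it is essentially surjective, and the statement does not require this). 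I do not expect a genuine obstacle: all of the analytic content already sits in the Proposition, so the only care needed is the bookkeeping --- keeping track of which hom-spaces and which auxiliary cone $K$ are in play, and confirming that the localization cones of the strict endomorphisms can be chosen convex and compatible with a single $K$, so that these endomorphisms really do form a subcategory of $\Delta$ on which Theorems~\ref{thm:Asymptopia} and~\ref{thm:biasymptopia} apply.
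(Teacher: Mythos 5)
Your proposal is correct and follows the paper's argument: the functor is the evident inclusion, the hom-sets coincide because in both categories morphisms are intertwiners in $\calB(\calH_0)$ relative to $\pi_0$ (making full and faithfulness immediate), and monoidality is exactly the content of the preceding Proposition on the coincidence of the two tensor products. The extra Haag-duality bookkeeping you add for fullness, and the remark about choosing a decay function for strictly localized objects, are harmless refinements of the same proof.
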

This still leaves open the possibility that there are sectors in $\Delta$ that are not equivalent to a strictly localized sector.
In other words, the inclusion might not be an equivalence of braided tensor categories.
We will show later that -- under the additional assumption that a natural energy criterion is satisfied -- the categories are indeed equivalent.

\section{Stability of the superselection structure}\label{sec:stability}
The results in the previous section show that we can define a braided tensor $C^*$-category $\Delta$ of almost localized and transportable endomorphisms.
It is important to remember that the notion of transportability, and hence $\Delta$ itself, depends on the choice of reference representation $\pi_0$.
In this section we study what happens if we change the reference representation $\pi_0$.

In general, there does not need to be a relation between the categories $\Delta$ for two reference representations $\pi_0$ and $\pi_0'$.
To give an example, consider two finite abelian groups $|G_1| = |G_2|$ which are not isomorphic.
Then one can consider the corresponding abelian quantum double models~\cite{KitaevQD}.
In the thermodynamic limit, both models have the same underlying quasi-local algebra $\calA_\Gamma$, and both have a unique, translation invariant pure ground state $\omega_0^{G_1}$ and $\omega_0^{G_2}$, with corresponding ground state representations.
Since automorphisms act transitively on the pure states of $\calA_\Gamma$, see for example~\cite[Thm. 12.3.4]{KR2}, it follows that there is an automorphism $\alpha \in \operatorname{Aut}(\calA_\Gamma)$ such that $\pi_0^{G_1} \circ \alpha = \pi_0^{G_2}$.
Nevertheless, the corresponding categories of localized and transportable endomorphisms are not equivalent~\cite{FiedlerN}.
As a corollary, our results show that $\alpha$ must be highly non-local.
A similar result has been obtained by Haah using different methods~\cite[Thm. 4.2]{Haah}.

Recall that two ground states of gapped local Hamiltonians $H(0)$ and $H(1)$ respectively are said to be in the same phase if there is a continuous path of gapped local Hamiltonians $t \mapsto H(t)$ with $H(0)$ and $H(1)$ as given.
One way to get examples of such paths is to perturb a given local Hamiltonian.
Under suitable conditions, one can show that the perturbed ground state is related to the unperturbed one via an automorphism $\alpha_s$~\cite{BachmannMNS},
see Appendix~\ref{app:spectral} for the main points.
This automorphism is local, in the sense that it satisfies a Lieb-Robinson type of bound.
This is the proper generalization in the thermodynamic limit of \emph{local unitary circuits}~\cite{ChenGW}.
This is the type of automorphisms that we will be interested in.
In the remainder of this section we will work this out in detail.

As before we restrict to the case $\Gamma = \ZZ^\nu$, with distance function $d(x,y) = \abs{x-y}$.
The automorphisms that we are interested in are typically obtained as the dynamics generated by local dynamics,
where the strength of the local interactions should decay fast enough for the dynamics to be defined at all.
To make this precise, we follow~\cite{NachOS} and first introduce $\mathcal{F}$-functions.
A function $F: \RR^{\geq 0} \ra \RR^{\geq 0}$ is called an $\mathcal{F}$-function for $\ZZ^\nu$ if it is monotone decreasing and satisfies
\begin{equation}\label{eqn:unifint}
	\|F \|_0 = \sup_{x \in \Gamma} \sum_{y \in \Gamma} F(d(x,y)) < \infty  \quad \quad \text{ (uniform integrability),}
\end{equation}
\begin{equation}
	C_F  = \sup_{x,y\in \Gamma} \sum_{z\in \Gamma} \frac{ F(d(x,z)) F(d(z,y))}{F(d(x,y))} < \infty \qquad \text{ (convolution identity)}.
\end{equation}
It can be checked that the function $F(r) = \frac{1}{(1+r)^{\nu+\epsilon}}$ is an $\mathcal{F}-$function for all $\epsilon>0$.
Let $ b>0$ and $g:\RR^{\geq 0} \ra \RR^{\geq 0}$ be uniformly continuous, non-decreasing and sub-additive, that is, $g(x+y) \leq g(x) + g(y)$.
If $F$ is an $\mathcal{F}$-function, then
\begin{equation}
F_{bg}(r)\equiv e^{-bg(r)} F(r)
\end{equation}
also is an $\mathcal{F}$-function.
These properties are satisfied by the following functions:
\begin{align}
g(r) &= r^\alpha  \qquad \quad \mbox{ for } 0<\alpha\leq 1, \label{eqn:expg}\\
g(r) &= \left\{ \begin{array}{ll}
\frac{r}{\ln^p(r)} & \mbox{ if }  x> e^p\\
\left( \frac{e}{p}\right)^p & \mbox{ if } x\leq e^p
\end{array} \right. .\label{eqn:subexpg}
\end{align}
When the context is clear, we will abuse notation and denote $ F_{b}(r) \equiv e^{-b r }F(r) $.

One measure of locality in quantum spin system is by commutator bounds.  
Let $X,Y \in \mathcal{P}_0(\Gamma)$.
If $d(X,Y) >0$ then $[A,B] = 0$ for all $A\in\calA_X$ and $ B \in\calA_Y$.  
To measure the diffusion or spreading of a dynamics $\tau_t^\Lambda$, 
Lieb and Robinson \cite{LiebR} considered bounding the commutator $[ \tau_t^\Lambda(A), B]$.
Since the Heisenberg dynamics are non-relativistic, 
it is generally expected that the commutator norm is non-zero at any finite time $t>0$.
In the relativistic case,  the commutator is non-zero only if $\tau_t(A)$ and $B$ are not spacelike seperated.
The Lieb-Robinson bounds motivate the following definition.

\begin{defn}
	A one-parameter family of automorphism $\tau_t \in \operatorname{Aut}(\calA_\Gamma)$ is called a \emph{quasi-local dynamics} if 
there exist an $\mathcal{F}-$function $F$ and constants $v,C_F >0$ such that 
for any $A \in \calA_X$ and $B \in \calA_Y$  with $d(X,Y) >0$, 
\begin{equation}\label{eqn:qlbound}
\| [ \tau_t (A) ,B ] \| \leq \frac{2 \|A \| \| B\|}{C_F} (e^{ v \abs{t}} - 1) \sum_{x\in X}\sum_{y\in Y}F(d(x,y))
\end{equation}
for any $t \in \RR$.
\end{defn}

Quasi-local dynamics are typically generated by a short-ranged interaction map.
Here we consider interactions $\Phi$ satisfying a finite $F$-norm, that is,
	\begin{equation}
	\| \Phi\|_F \equiv \sup_{x,y \in \Gamma} \frac{1}{F(d(x,y))} \sum_{\substack{ X \subset \Gamma: \\x,y \in X}} \| \Phi(X) \| < \infty.
	\end{equation}
	Let $\Phi$ be an interaction with a finite $F$-norm.
	Then, for any increasing and exhausting sequence $\Lambda_n$, that is, if $\Lambda_n \subset \Lambda_{n+1}$ and $\Gamma = \bigcup_{n} \Lambda_n$,
	 the norm limit 
	$\tau_t(A) \equiv \lim_{n \ra \infty} \tau_t^{\Lambda_n}(A) $
	exists for all $t\in \RR$ and $A \in \calA_\Gamma$~\cite{NachOS}.
	Here $\tau_t^{\Lambda_n}$ is the dynamics generated by the local Hamiltonian $H_{\Lambda_n} \equiv \sum_{X \subset \Lambda_n} \Phi(X)$.
	The limiting dynamics $\tau_t$ defines a strongly continuous, one-parameter group of automorphisms on $\calA_\Gamma$.
	The convergence is uniform for $t$ in compact sets and 
	is independent of the sequence $\Lambda_n$.
Moreover, if $\Phi$ satisfies a finite $F$-norm, then $\tau_t$ satisfies a Lieb-Robinson bound~\cite{NachOS} and is a quasi-local dynamics with the same $\mathcal{F}$-function. 
The estimate of the form (\ref{eqn:qlbound}) given in \cite[Theorem 2.1]{NachOS} is
stated for observables with finite supports $X$ and $Y$ and for the dynamics generated by the interactions in a finite set.
However, since the LHS of (\ref{eqn:qlbound}) is continuous as a function of $A$ 
and $B$ on $\calA_{loc}$, and the size of the region for which the dynamics are defined do not appear in the bound,
the bound extends directly to $A\in\calA_X$ and $B\in\calA_Y$ for infinite sets $X$ and $Y$, and gives a non-trivial estimate as long as the RHS is finite.
Indeed, showing that it is finite in interesting cases of quasi-local observables is a key technical tool in our analysis.

\subsection{Lieb-Robinson bound for cones}
In the definition of almost localized endomorphisms, we consider operators in two cone regions.
However, if $X, Y \subset \Gamma$ are infinite regions 
then the bound in \eqref{eqn:qlbound} may not be better than the trivial bound $2\|A\| \|B\|$.
For $X, Y\in \mathcal{C}$  and  $Y\subset X^c$ two far separated cone regions,
we show that the bound \eqref{eqn:qlbound} recovers a good approximation.
Results of this type were first discussed and proved in the unpublished thesis of Schmitz~\cite{Schmitz}. 
For completeness, and because access to \cite{Schmitz} is not readily available, we present the results here.
The bounds here are crucial in proving stability of the superselection structure.

We first state our assumptions.
\begin{assumption}\label{asp1}
	Let $g:\RR^{\geq 0} \ra \RR^{\geq 0}$ be uniformly continuous, non-decreasing and sub-additive.
	We assume that for all $b>0$ and $k \in \NN$ there exists a $t_0 > 0$ such that for all $t > t_0$,
	\begin{equation}
	\int_{t}^\infty r^k e^{-b g(r)} dr \leq K_{b,k}t^{l(k)} e^{-b g(t)}.
	\end{equation}
	for some positive function $ K_{b,k}>0$ and affine function $l(k)$.
\end{assumption}

By uniform continuity, if $g$ satisfies Assumption~\ref{asp1} then $\lim_{r\ra \infty} r^k e^{ - b g(r)}  = 0$ for all $k\in \NN$. 
From the following 	inequalities
\begin{align*}
\int_{t}^\infty r^k e^{-b r} dr &\leq \frac{k+1}{b} t^{k} e^{- b t} \quad \mbox{ for } t>k,\\
\int_{t}^\infty r^k e^{-b \frac{r}{\ln^2 r}} dr &\leq \frac{2k+3}{b} t^{2k+2} e^{- b \frac{r}{\ln^2 r}} \quad \mbox{ for } t>e^4,
\end{align*}
(see Lemma 2.5 of~\cite{BachmannMNS}) we see that for example the functions
\begin{align*}
g(r) = r \qquad\mbox{and}\qquad g(r) = \frac{r}{\ln^2r}
\end{align*}
satisfy Assumption \ref{asp1}.

The main result in this section is the following Lieb-Robinson bound for cones.
\begin{thm}\label{thm:LRcone}
	Suppose $g$ satisfies Assumption~\ref{asp1} and 
	$\Phi$ has a finite $F_{bg}$-norm for some $b>0$.  
	Let $X \in \mathcal{C}_\alpha$ (i.e., $X$ is some cone $\Lambda_\alpha$) and define $Y_{\epsilon,n} \equiv \left( \Lambda_{\alpha+ \epsilon} - n\right)^c $.
	Then there exists an affine function $\tilde{l}$ and $v_{bg}>0$ such that
	for all $0\leq \alpha < \pi$, $ 0< \epsilon< \pi - \alpha$,
	and observables $A\in \calA_X$ and $B\in \calA_{Y_{\epsilon,n}}$, 
	\begin{equation}
		\| [ \tau_t(A),B] \|\leq 2 \|A \| \| B\| C_\epsilon d(X,Y_{\epsilon,n})^{\tilde{l}(\nu)} e^{v_{bg} \abs{t} - b g( d(X,Y_{n,\epsilon})\sin\epsilon)},
	\end{equation}
	where 
	\begin{equation}
	n \sin(\alpha + \epsilon) \leq d(X,Y_{\epsilon,n}) \leq n \sin(\alpha + \epsilon)+2
	\end{equation}
	and $C_\epsilon$ is non-increasing in $\epsilon$ and only depends on $\nu$, $b$ and $\alpha$.
\end{thm}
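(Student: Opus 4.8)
The plan is to reduce the cone estimate to the standard Lieb--Robinson bound for observables of finite support, and then to carry out the geometric bookkeeping that controls the error term when the two regions are infinite cones. First I would write $A\in\calA_X$ as a norm limit of local observables $A_m\in\calA_{X\cap\Lambda_m}$ (and similarly $B$ as a limit of $B_{m'}\in\calA_{Y_{\epsilon,n}\cap\Lambda_{m'}}$), so that it suffices, by continuity of the commutator norm on $\calA_{loc}$ and the fact that the size of the regions on which the dynamics is defined never enters the bound, to prove the estimate for finitely supported $A,B$ with the constants uniform in the supports. Applying the $F_{bg}$-version of the Lieb--Robinson bound \eqref{eqn:qlbound}, with $\mathcal{F}$-function $F_{bg}(r)=e^{-bg(r)}F(r)$ and some velocity $v_{bg}$ depending on $\|\Phi\|_{F_{bg}}$ and $C_{F_{bg}}$, gives
\[
\| [\tau_t(A),B] \| \leq \frac{2\|A\|\|B\|}{C_{F_{bg}}}\, (e^{v_{bg}|t|}-1) \sum_{x\in X}\sum_{y\in Y_{\epsilon,n}} e^{-bg(d(x,y))} F(d(x,y)).
\]
So the entire problem becomes: bound the double sum $\Sigma \equiv \sum_{x\in X}\sum_{y\in Y_{\epsilon,n}} e^{-bg(d(x,y))}F(d(x,y))$ by $C_\epsilon\, d(X,Y_{\epsilon,n})^{\tilde l(\nu)} e^{-bg(d(X,Y_{\epsilon,n})\sin\epsilon)}$.

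The key geometric input is that $X=\Lambda_\alpha$ and $Y_{\epsilon,n}=(\Lambda_{\alpha+\epsilon}-n)^c$ are separated by a positive distance, and more importantly that for a point $x\in\Lambda_\alpha$ at (Euclidean) distance $\rho$ from the apex of $\Lambda_{\alpha+\epsilon}-n$ along the axis, the distance from $x$ to the complement $Y_{\epsilon,n}$ grows \emph{linearly} in $\rho$ with slope $\sin\epsilon$ — this is the elementary wedge estimate $d(x, \Lambda_{\alpha+\epsilon}^c)\geq (\text{axial distance})\cdot\sin\epsilon$ once $x$ is deep enough inside $\Lambda_\alpha$, which is exactly where the factor $\sin\epsilon$ in the exponent comes from. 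The chain $n\sin(\alpha+\epsilon)\leq d(X,Y_{\epsilon,n})\leq n\sin(\alpha+\epsilon)+2$ is a direct computation of the distance from the shifted cone to the boundary of the wider cone and I would record it first. Then I would organize the double sum by slicing $X$ into "shells" according to $d(x,Y_{\epsilon,n}^c)$ or, more conveniently, fix $x\in X$, sum over $y\in Y_{\epsilon,n}$ first using the convolution/uniform-integrability properties of $F$ and the submultiplicativity $g(d(x,y))\geq g(d(x,Y_{\epsilon,n}))$ that follows from monotonicity of $g$ (giving $\sum_{y} e^{-bg(d(x,y))}F(d(x,y))\leq e^{-bg(d(x,Y_{\epsilon,n}))}\|F\|_0$ up to harmless constants), and then sum the resulting one-variable series $\sum_{x\in X} e^{-bg(d(x,Y_{\epsilon,n}))}$ over the cone $X$. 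For this last sum I would parametrize $x$ by its axial coordinate, note that the number of lattice points in $X$ at axial coordinate $\approx\rho$ grows polynomially (like $\rho^{\nu-1}$ since $\Gamma=\ZZ^\nu$ is regular), use $d(x,Y_{\epsilon,n})\geq d(X,Y_{\epsilon,n})+\rho'\sin\epsilon$ for an appropriate shifted axial variable $\rho'\geq 0$, and bound the sum by an integral $\int_0^\infty \rho^{\nu-1} e^{-bg(d(X,Y_{\epsilon,n})+\rho\sin\epsilon)}\,d\rho$. A change of variables $u=d(X,Y_{\epsilon,n})+\rho\sin\epsilon$ turns this into $(\sin\epsilon)^{-\nu}\int_{d}^\infty (u-d)^{\nu-1} e^{-bg(u)}\,du$ with $d=d(X,Y_{\epsilon,n})$, and Assumption~\ref{asp1} is precisely what bounds $\int_d^\infty u^{k} e^{-bg(u)}\,du$ by $K_{b,k}\,d^{l(k)} e^{-bg(d)}$; combining with the subadditivity $g(d)\leq g(d\sin\epsilon)+g(d(1-\sin\epsilon))$... actually here one wants the reverse inequality $g(d)\geq g(d\sin\epsilon)$ from monotonicity, which is what produces the stated $e^{-bg(d\sin\epsilon)}$, absorbing the weaker-than-$g(d)$ decay. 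Collecting the polynomial prefactors from the lattice-counting, from Assumption~\ref{asp1}'s affine function $l$, and from the $(\sin\epsilon)^{-\nu}$ into a single constant $C_\epsilon$ (non-increasing in $\epsilon$ because larger $\epsilon$ only improves every estimate) and a single affine $\tilde l(\nu)$, and using $(e^{v_{bg}|t|}-1)\leq e^{v_{bg}|t|}$, yields the claimed bound.

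The main obstacle I anticipate is not any single inequality but the careful tracking of where the opening-angle $\epsilon$ enters and making sure all constants degrade monotonically (and only polynomially) as $\epsilon\downarrow 0$ and as $\nu$ grows: one must be honest that the geometric slope $\sin\epsilon$ appears both favorably (in the exponent $g(d\sin\epsilon)$) and unfavorably (as $(\sin\epsilon)^{-\nu}$ in the prefactor), and that the apex-region of the cone $X$ — the finitely many points not "deep enough" for the clean wedge estimate — contributes only a bounded correction that can be folded into $C_\epsilon$. A secondary technical point is justifying the interchange of the limit $A_m\to A$, $B_{m'}\to B$ with the (now convergent, since the RHS is finite) double sum; this is routine given that the bound for the finite-support case is uniform, but it should be stated. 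I would also remark that the precise shape of the cone is irrelevant: all that is used is the linear growth of $d(x,Y_{\epsilon,n})$ in the axial direction and the polynomial growth of lattice-point counts, so the argument goes through for any sufficiently regular $\Gamma$ and any "cone-like" region, matching the remark made earlier in the paper.
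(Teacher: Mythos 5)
Your overall strategy is the same as the paper's: reduce to the finite-support Lieb--Robinson bound (the extension to quasi-local $A,B$ is handled in the text exactly by the continuity/uniformity argument you sketch), and then the whole content is the estimate on $\sum_{x\in X}\sum_{y\in Y_{\epsilon,n}}F_{bg}(d(x,y))$; indeed the paper's proof of the theorem is a one-liner substituting Lemma~\ref{lem:doubImpInt} (Schmitz's bound) into~\eqref{eqn:qlbound}. Where you genuinely differ is in how that double sum is bounded. You sum over $y$ first, keeping only monotonicity of $g$ and uniform integrability of $F$, so that $\sum_{y}e^{-bg(d(x,y))}F(d(x,y))\le \|F\|_0\,e^{-bg(d(x,Y_{\epsilon,n}))}$, and then control $\sum_{x\in X}e^{-bg(d(x,Y_{\epsilon,n}))}$ by the wedge estimate plus a single application of Assumption~\ref{asp1}. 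The paper instead splits $Y_{\epsilon,n}$ along the polar cone $\Lambda_\alpha^*$, sums over $x$ first (replacing $X$ by the complement of a ball of radius $d(y,X)$ or $\abs{y}$), and uses Assumption~\ref{asp1} twice together with a spherical-coordinate $\phi$-integral whose boundedness supplies $C_\epsilon$. Your route is somewhat leaner, and in the convex regime the exact wedge identity $d(x,Y_{\epsilon,n})= n\sin(\alpha+\epsilon)+\abs{x}\sin(\alpha+\epsilon-\theta)\ge d(X,Y_{\epsilon,n})-2+\abs{x}\sin\epsilon$ even yields the stronger exponent $e^{-bg(d(X,Y_{\epsilon,n}))}$, with $\sin\epsilon$ entering only the prefactor $(\sin\epsilon)^{-\nu}$. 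One caution you implicitly respect but should make explicit: only \emph{additive} errors inside the argument of $g$ are harmless (via subadditivity, $g(d-2)\ge g(d)-g(2)$); a multiplicative loss there would degrade the constant $b$ in the exponent, which the statement fixes.

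The one real gap is the range of $\alpha$: the theorem allows $0\le\alpha<\pi$, but your slicing of $X$ "by axial coordinate with $O(\rho^{\nu-1})$ lattice points per slice", and the inequality $d(x,Y_{\epsilon,n})\ge d(X,Y_{\epsilon,n})+\rho'\sin\epsilon$ with $\rho'$ an axial variable, are only valid for convex cones; for $\alpha\ge\pi/2$ the slices at fixed axial coordinate are infinite, so the argument as written fails there. Two standard repairs: (i) replace axial slices by radial shells about the apex $-na$ of $\Lambda_{\alpha+\epsilon}-n$, using $d(x,Y_{\epsilon,n})\ge \abs{x+na}\sin\epsilon$ together with $\abs{x+na}\ge d(x,Y_{\epsilon,n})\ge d(X,Y_{\epsilon,n})$, which after the same change of variables gives the stated $e^{-bg(d(X,Y_{\epsilon,n})\sin\epsilon)}$; or (ii) do what the paper does: prove the convex case and then exchange the roles of $X$ and $Y_{\epsilon,n}$, noting the double sum is symmetric and $Y_{\epsilon,n}$ is essentially a cone of opening angle $\pi-(\alpha+\epsilon)$. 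Either patch is routine, so this is an incompleteness of the write-up rather than a wrong approach; the claimed monotonicity of $C_\epsilon$ in $\epsilon$ deserves the same extra care once $\epsilon$ is allowed beyond $\pi/2$.
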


\begin{proof}
	Substituting the bound found in Lemma \ref{lem:doubImpInt} into the Lieb-Robinson bound \eqref{eqn:qlbound} gives the result.
\end{proof}

The quasi-locality of the dynamics $\tau_t$ can be interpreted by measuring the growth of the support of a time-evolved observable.
Let $\Lambda \in \mathcal{P}_0(\Gamma)$ be a finite  subset and take $X\subset \Lambda$.
For any observable $A \in \calA_\Lambda$ consider the conditional expectation
\begin{equation}\label{eqn:partialtrace}
\langle A \rangle_{X^c}  \equiv \int_{\mathcal{U}(X^c\cap \Lambda)} U^* A U \mu(dU),
\end{equation}
where $\mu$ is the normalized Haar measure on the family of unitary operators $\mathcal{U}(X^c\cap\Lambda) \subset \calA_\Lambda$, that is,
$\int_{\mathcal{U}(X^c\cap\Lambda)}  \mu(dU)= 1$.
If $ A \in \calA_X$ then $A$ commutes with $\mathcal{U}(X^c)$ so that $\langle A\rangle_{X^c} = A$.
Moreover, invariance of the Haar measure implies that $\langle A \rangle_{X^c}$ commutes with any unitary localized in $\calA_{\Lambda \cap X^c}$, and hence it is an element of $\mathcal{A}_X$.
In particular, $\langle \langle A \rangle_{X^c} \rangle_{X^c}=\langle A \rangle_{X^c} $ for all $A \in \calA_\Lambda$.

We first recall how Lieb-Robinson bounds can be used to obtain a local approximation of the time evolution of a local operator.
Let $ n>0$ and denote
$B_t(X,n) = \{ y \in \Gamma :  d(y,X) \leq v_{bg} \abs{t} + n \}.$
Suppose $ A \in \calA_X$.
Applying the Lieb-Robinson bound \eqref{eqn:qlbound}, it follows that 
\begin{align}
\left\| \tau_t(A) - \langle \tau_t(A) \rangle_{B^c_t(X,n)} \right\| 
&= \left\| \int_{\mathcal{U}(B_t^c(X,n))} [ \tau_t(A), U] \  \mu(dU) \right\| \\
&\leq  \sup_{U \in\mathcal{U}(B_t^c(X,n)) } \| [ \tau_t(A), U] \| \\
& \leq K_a \| A \| \abs{X} (1 - e^{ - v_{gb} \abs{t}}) e^{- b g(n) }, \label{eqn:fvQLdyn}
\end{align}
where $K_a$ is a constant (which can be calculated explicitly by doing the double summation in the bound~\eqref{eqn:qlbound}).

Now let $X\subset \ZZ^\nu$ be a potentially infinite subset, e.g., $X = \Lambda_\alpha$ an infinite cone.
Denote the following subsets of $\ZZ^\nu$ by 
\[ \Lambda_L \equiv [-L,L]^\nu \cap \ZZ^\nu, \quad  X_L  \equiv X \cap \Lambda_L, \quad \text{ and } \quad X_L^c \equiv \Lambda_L \setminus X_L. \]
\begin{defn}
	We define $ \langle \ \cdot \  \rangle_{X^c} : \calA_\Gamma \ra \calA_{X} \cong \calA_X \otimes I \subset \calA_\Gamma$ by
	\begin{equation}\label{eqn:conePT}
	\langle A \rangle_{X^c} \equiv \lim_{L \ra \infty} \langle A \rangle_{X_L^c}
	= \lim_{L\ra \infty} \int_{\mathcal{U}(X_L^c)} U^* A U \mu(dU) \quad \text{ for all } \quad A \in \calA_\Gamma,
	\end{equation}
	where the limit is in the norm sense.
\end{defn}

\begin{lemma}\label{lem:coneProj}
	The operator $\langle A \rangle_{X^c}$ is well defined: the limit~\eqref{eqn:conePT} exists and is unique, and $\langle A \rangle_{X^c} \in \calA_{X}$.
\end{lemma}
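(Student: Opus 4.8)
The plan is to show that the sequence $\langle A \rangle_{X_L^c}$ is Cauchy in norm, so that the limit exists; uniqueness then is automatic, and membership in $\calA_X$ follows from invariance of the Haar measure. It suffices to prove this for $A \in \calA_{loc}$, since each $\langle \, \cdot \, \rangle_{X_L^c}$ is norm-one (it is an average of unitary conjugations, hence a contraction), and a uniform Cauchy estimate on the dense subalgebra $\calA_{loc}$ extends to all of $\calA_\Gamma$ by a standard $3\epsilon$-argument; likewise the limit map is a contraction. So fix $A \in \calA_\Lambda$ with $\Lambda$ finite.

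First I would compare $\langle A \rangle_{X_L^c}$ with $\langle A \rangle_{X_{L'}^c}$ for $L < L'$. The key point is that $X_{L'}^c = X_L^c \sqcup (X_{L'}^c \setminus X_L^c)$, and the latter annulus $X_{L'}^c \setminus X_L^c$ lies entirely outside $\Lambda_L$, hence at distance at least $L - \operatorname{diam}(\Lambda) - \max_{x \in \Lambda} d(x,0)$ from the support $\Lambda$ of $A$ (for $L$ large enough that $\Lambda \subset \Lambda_L$). Because $A$ is strictly local, it commutes exactly with every unitary supported in $X_{L'}^c \setminus X_L^c$ once $L$ exceeds the size of $\Lambda$; using the invariance of the Haar measure and Fubini to factor the average over $\mathcal{U}(X_{L'}^c)$ as a (commuting) product of averages over $\mathcal{U}(X_L^c)$ and $\mathcal{U}(X_{L'}^c \setminus X_L^c)$, the second average acts trivially on $A$. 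Hence for all sufficiently large $L < L'$ one has, in fact, $\langle A \rangle_{X_{L'}^c} = \langle A \rangle_{X_L^c}$ exactly, so the sequence is eventually constant for strictly local $A$ and the limit trivially exists. (This is the elementary analogue, at zero time, of the Lieb-Robinson estimate \eqref{eqn:fvQLdyn}; one could alternatively keep the commutator and bound it, but for $A \in \calA_{loc}$ it simply vanishes.)

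Next, for general $A \in \calA_\Gamma$, approximate $A$ in norm by $A_k \in \calA_{loc}$; by the contraction property $\|\langle A \rangle_{X_L^c} - \langle A_k \rangle_{X_L^c}\| \le \|A - A_k\|$ uniformly in $L$, and since $\langle A_k \rangle_{X_L^c}$ converges (indeed is eventually constant) this shows $\langle A \rangle_{X_L^c}$ is Cauchy, so the limit \eqref{eqn:conePT} exists; uniqueness is immediate. Finally, to see $\langle A \rangle_{X^c} \in \calA_X$: for any finite $\Lambda_0 \subset X^c$ and any unitary $V \in \mathcal{U}(\Lambda_0)$, invariance of the Haar measure gives $V^* \langle A \rangle_{X_L^c} V = \langle A \rangle_{X_L^c}$ whenever $\Lambda_0 \subset X_L^c$, i.e. for all large $L$; passing to the limit, $\langle A \rangle_{X^c}$ commutes with all unitaries localized in $X^c$, hence commutes with $\calA_{X^c}$, and therefore lies in $\calA_{X^c}' \cap \calA_\Gamma = \calA_X$ (using that $\calA_X$ and $\calA_{X^c}$ generate $\calA_\Gamma$ and are mutually commuting, with $\calA_X$ equal to the relative commutant). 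I expect no serious obstacle here; the only point requiring care is the Fubini/invariance manipulation showing the outer average factors and acts trivially, and the bookkeeping of which $L$ is ``large enough'' relative to the finite support of $A$ and of the test unitary $V$.
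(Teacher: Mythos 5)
Your argument is correct, and its core --- approximate $A$ by a local observable, use that each $\langle\,\cdot\,\rangle_{X_L^c}$ is a contraction, and exploit that for a strictly local observable the averages stabilize exactly once the box contains its support --- is the same Cauchy argument as the paper's, which compresses your two steps into one three-term triangle inequality with a local approximant $A_L$ and the identity $\langle A_L\rangle_{X_m^c}=\langle A_L\rangle_{X_n^c}$ for $m,n>L$. Two points where you deviate deserve comment. First, your justification of the exact stabilization by ``factoring'' the Haar average over $\mathcal{U}(X_{L'}^c)$ into a product of averages over $\mathcal{U}(X_L^c)$ and $\mathcal{U}(X_{L'}^c\setminus X_L^c)$ via Fubini is not literally correct: the unitary group of the matrix algebra $\calA_{X_{L'}^c}$ is not the product of those two subgroups, and its Haar measure does not factor accordingly. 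The identity you need is nonetheless true, because the Haar average over the unitary group of a full matrix algebra equals the trace-preserving conditional expectation (identity tensored with the normalized partial trace), and normalized partial traces over nested finite regions are compatible on operators acting trivially on the extra sites; phrased this way the step is immediate, so this is a fixable imprecision rather than a gap. Second, for the membership $\langle A\rangle_{X^c}\in\calA_X$ you show commutation with $\calA_{X^c}$ and invoke the relative-commutant identity $\calA_{X^c}'\cap\calA_\Gamma=\calA_X$; this is a true and standard fact for quasi-local spin algebras, but you assert it without proof and it is heavier than necessary. A more economical route, implicit in the paper and already available from your own stabilization step, is that for local $A_L$ the stabilized average $\langle A_L\rangle_{X_L^c}$ lies in $\calA_{X\cap\Lambda_L}\subset\calA_X$ by the finite-dimensional commutant argument, and since such elements approximate $\langle A\rangle_{X^c}$ in norm and $\calA_X$ is norm-closed, the limit lies in $\calA_X$.
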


\begin{proof}
	Let $\epsilon>0$ be given.  
	By density of $\calA_{loc}$ in $\calA_\Gamma$ there exists $L>0$ and an operator $A_L \in \calA_{\Lambda_L}$ such that 
	\[ \| A - A_L \| < \epsilon. \]
	For any $n,m > L$ we have that 
	\begin{align*}
	\| \langle A \rangle_{X_m^c} - \langle A \rangle_{X_n^c} \| 
	& \leq \| \langle A \rangle_{X_m^c} - \langle A_L \rangle_{X_m^c} \| 
	+  \| \langle A \rangle_{X_n^c} - \langle A_L \rangle_{X_n^c} \| 
	+ \| \langle A_L \rangle_{X_m^c} - \langle A_L \rangle_{X_n^c}\|\\
	& = \left\| \int_{\mathcal{U}(X_m^c)} U^* (A - A_L) U \mu (dU) \right\|
	+ \left\| \int_{\mathcal{U}(X_n^c)} U^* (A - A_L) U \mu (dU) \right\|\\
	& \leq 2 \epsilon,
	\end{align*}
	where we use that $\langle A_L \rangle_{X_m^c} = \langle A_L \rangle_{X_n^c} =  A_L$ for the last term. 
	Thus, the sequence $ \langle A \rangle_{X_L^c}$ is Cauchy in $\calA_X$ and its limit is defined to be the operator $\langle A \rangle_{X^c} \in \calA_X$.
\end{proof}

This result can be combined with the Lieb-Robinson bound for cones, to approximate time-evolved observables localized outside cones (compare with~\cite[Satz II.8]{Schmitz}).
\begin{cor}\label{cor:quasiloc}
	Suppose $g$ satisfies Assumption \ref{asp1} and $\Phi$ has a finite $F_{bg}$-norm for $b>0$.  
	Let  $X$ and $ Y_{\epsilon,n}$ as from Lemma \ref{lem:doubImpInt}.
	Denote $ Y_{\epsilon,n}(t) \equiv (\Lambda_{\alpha+ \epsilon} - (\lceil \frac{v_{bg} \abs{t}}{a} +  n \rceil))^c$.
	Then, there exists an affine function $\tilde{l}$ such that 
	for all $0\leq \alpha < \pi$, $ 0< \epsilon< \pi - \alpha$,
	and $ A \in \calA_{X}$,
	\begin{equation}
	\| \langle \tau_t(A) \rangle_{Y_{\epsilon,n }(t)} - \tau_t(A) \| \leq C_\epsilon \| A\|  d(X,Y_{\epsilon,n}(t))^{\tilde{l}(\nu)} e^{- b g( d(X,Y_{\epsilon,n}(t)) \sin\epsilon )},
	\end{equation}
	where 
	\[ \Big\lceil v_{bg} \abs{t} +  n \Big\rceil \sin(\alpha + \epsilon)
	\leq d(X,Y_{\epsilon,n}(t)) 
	\leq \Big\lceil v_{bg} \abs{t}+  n \Big\rceil  \sin(\alpha + \epsilon) + 2, \]
	and $C_\epsilon$ is non-increasing in $\epsilon$ and only depends on $\nu$, $b$ and $\alpha$.
	In particular, 
	\begin{equation}\label{eqn:coneql}
	\lim_{n\ra \infty}  n^{k} \| \langle \tau_t(A) \rangle_{Y_{\epsilon,n}(t)} - \tau_t(A) \| = 0 \quad \text{ for all } \quad k\in \NN.
	\end{equation}
\end{cor}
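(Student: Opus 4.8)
The plan is to establish this as the infinite-cone analogue of the finite-volume quasi-locality estimate~\eqref{eqn:fvQLdyn}, by reducing the left-hand side to a supremum of commutators of $\tau_t(A)$ with unitaries supported outside a suitably enlarged cone and then invoking the Lieb-Robinson bound for cones, Theorem~\ref{thm:LRcone}. Writing $W \equiv Y_{\epsilon,n}(t)$ and $W_L \equiv W\cap\Lambda_L$, the definition~\eqref{eqn:conePT} together with Lemma~\ref{lem:coneProj} gives $\langle \tau_t(A)\rangle_W = \lim_{L\to\infty}\int_{\mathcal{U}(W_L)} U^*\tau_t(A)U\,\mu(dU)$ in norm. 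Since $\int\mu(dU)=1$ and $U^*\tau_t(A)U - \tau_t(A) = U^*[\tau_t(A),U]$ with $\|U\|=1$, for each $L$ one has $\|\langle\tau_t(A)\rangle_{W_L} - \tau_t(A)\| \le \sup_{U\in\mathcal{U}(W_L)}\|[\tau_t(A),U]\|$, and because $\mathcal{U}(W_L)\subset\mathcal{U}(W)$ this is bounded, uniformly in $L$, by $\sup_{U\in\mathcal{U}(W)}\|[\tau_t(A),U]\|$; passing to the limit gives $\|\langle\tau_t(A)\rangle_{W} - \tau_t(A)\| \le \sup_{U\in\mathcal{U}(W)}\|[\tau_t(A),U]\|$.

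Next I would observe that $W=Y_{\epsilon,n}(t)$ is, by construction, exactly a set of the form $(\Lambda_{\alpha+\epsilon}-n')^c$ with $n'$ the integer shift appearing in its definition, so every $U\in\mathcal{U}(W)$ lies in $\calA_{Y_{\epsilon,n'}}$ in the notation of Theorem~\ref{thm:LRcone}. Applying that theorem with the cone $X=\Lambda_\alpha\in\mathcal{C}_\alpha$, observable $B=U$, and shift $n'$ yields
\[
\|[\tau_t(A),U]\| \le 2\|A\|\,C_\epsilon\, d(X,Y_{\epsilon,n}(t))^{\tilde l(\nu)}\, e^{v_{bg}\abs{t} - b g(d(X,Y_{\epsilon,n}(t))\sin\epsilon)},
\]
together with the sandwich $n'\sin(\alpha+\epsilon) \le d(X,Y_{\epsilon,n}(t)) \le n'\sin(\alpha+\epsilon)+2$, which is Lemma~\ref{lem:doubImpInt}. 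The point of choosing $n'$ proportional to $v_{bg}\abs{t}$ plus the margin $n$ is that $d(X,Y_{\epsilon,n}(t))$ then already exceeds a fixed multiple of $v_{bg}\abs{t}$; using the monotonicity and sub-additivity of $g$ one absorbs the propagation factor $e^{v_{bg}\abs{t}}$ into $e^{-bg(d(X,Y_{\epsilon,n}(t))\sin\epsilon)}$, at worst at the cost of enlarging the constant $C_\epsilon$ while keeping its dependence only on $\nu$, $b$ and $\alpha$, which produces the stated estimate; that $C_\epsilon$ is non-increasing in $\epsilon$ is inherited directly from Theorem~\ref{thm:LRcone}.

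Finally, for~\eqref{eqn:coneql}, for fixed $t$ the sandwich gives $d(X,Y_{\epsilon,n}(t)) \ge n'\sin(\alpha+\epsilon) \ge n\sin(\alpha+\epsilon) \to\infty$ as $n\to\infty$; since $g$ satisfies Assumption~\ref{asp1} we have $r^{\tilde l(\nu)}e^{-bg(r\sin\epsilon)}\to 0$ faster than any inverse power of $r$, and as $d(X,Y_{\epsilon,n}(t))$ grows at least linearly in $n$ the extra factor $n^k$ is harmless. The step I expect to be the main obstacle is exactly the quantitative absorption of the Lieb-Robinson velocity term $v_{bg}\abs{t}$ into the super-polynomial gain coming from translating the cone, i.e.\ checking that the shift prescribed in the definition of $Y_{\epsilon,n}(t)$ is large enough to dominate $e^{v_{bg}\abs{t}}$ with constants depending only on $\nu$, $b$ and $\alpha$; the remaining steps are a routine combination of Lemma~\ref{lem:coneProj} and Theorem~\ref{thm:LRcone}.
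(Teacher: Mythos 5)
Your argument is correct and is essentially the paper's own proof, which is given in one line: repeat the finite-volume computation \eqref{eqn:fvQLdyn} (conditional expectation, rewrite the difference as an average of commutators with unitaries, apply the cone Lieb--Robinson bound of Theorem~\ref{thm:LRcone}), pass to the limit as in Lemma~\ref{lem:coneProj}, and obtain \eqref{eqn:coneql} by the argument of Corollary~\ref{cor:fsatpolydecay} --- exactly the steps you spell out. One remark on the step you flag as the main obstacle: the absorption of $e^{v_{bg}\abs{t}}$ does not come from sub-additivity of $g$ (which gives $g(r+r')\le g(r)+g(r')$, an inequality in the unhelpful direction), but from the extra shift $\lceil v_{bg}\abs{t}/a\rceil$ built into the definition of $Y_{\epsilon,n}(t)$, chosen so that the gain in $b\,g(\cdot)$ along this shift dominates the propagation factor, just as in the finite-volume estimate \eqref{eqn:fvQLdyn}; the paper glosses this bookkeeping in the same way, so your proposal matches its level of detail.
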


\begin{proof}
	The proof is similar to the argument given for the finite volume case~\eqref{eqn:fvQLdyn} where we take the limit as in Lemma~\ref{lem:coneProj}. 
	Equation~\eqref{eqn:coneql} comes from a  similar argument to the proof of Corollary~\ref{cor:fsatpolydecay}.
\end{proof}

The bounds in Theorem~\ref{thm:LRcone} and Corollary~\ref{cor:quasiloc} have exact generalizations for quasi-local dynamics.

\subsection{Stability}
Let $\tau_t$ be a quasi-local dynamics for some $F_{bg}$ with $b>0$.
We will consider endomorphisms of the form $\tau_t^{-1}\circ \rho \circ \tau_t$.
In a later section, we will argue why if $\rho$ describes a charge in an unperturbed system, the new endomorphism corresponds to a charge in a perturbed system.
The main result of this section is that the evolution by a quasi-local dynamics will preserve the defining property of almost-localized endomorphisms. 
The main tool in the proof will be the Lieb-Robinson bound for cones that was established in the previous section.

Suppose $g$ satisfies Assumption~\ref{asp1}.
Then, there exist an $n_0 > 0 $ such that for constants $r, b, a >0$ and $C_\epsilon$ non-increasing in $\epsilon$, the function
\begin{equation}
h_\epsilon(n) \equiv 
\Bigg\{ \begin{array}{ll}
C_\epsilon n_0^r e^{ -b g( a n_0 ) \sin\epsilon} &\mbox{ if } n \leq n_0\\
C_\epsilon n^r e^{ -b g( a n ) \sin\epsilon}  & \mbox{ if } n > n_0
\end{array}
\end{equation}
is in the class $\mathcal{F}_\infty$.
By Corollary~\ref{cor:quasiloc}, there are constants $r,b,a>0$  such that 
\begin{equation}\label{eqn:quasiloc}
\| \langle \tau_t(A) \rangle_{Y_{\epsilon,n }(t)} - \tau_t(A) \| \leq h_\epsilon(n) \|A\|
\end{equation}
for $A$ localized in a cone $X$ and $Y_{\epsilon,n}(t)$ as defined above.

\begin{lemma}\label{lem:stabaloc}
Suppose that $g$ satisfies Assumption~\ref{asp1} and $\tau_t$ is a quasi-local dynamics for some  $F_{bg}$-function and $b>0$.
If $ \rho$  is an almost-localized endomorphism in $\Lambda_\alpha$
with decay function $f$ 
then for all $ t \in \RR$, $ \tau_t^{-1} \circ \rho \circ \tau_t$ is an
almost-localized endomorphism in $\Lambda_{\alpha}$
with decay function $f_{\epsilon/2}(n/2) + 2 h_{\epsilon/2}(v_{bg} \abs{t}+ n/2)$.
\end{lemma}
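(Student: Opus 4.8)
Since $\tau_t$ is an automorphism of $\calA_\Gamma$ and therefore isometric, proving the stated estimate is equivalent to showing that, for every $A\in\calA_{\Lambda^c_{\alpha+\epsilon}+n}$ with $\|A\|=1$, every $0<\epsilon<\pi-\alpha$ and every $n\in\NN$,
\[
\bigl\|\,\tau_t^{-1}\!\bigl(\rho(\tau_t(A))\bigr)-A\,\bigr\|
=\bigl\|\,\rho(\tau_t(A))-\tau_t(A)\,\bigr\|
\le f_{\epsilon/2}(n/2)+2\,h_{\epsilon/2}\!\bigl(v_{bg}\abs{t}+n/2\bigr).
\]
The plan is to interpolate between $\rho(\tau_t(A))$ and $\tau_t(A)$ through a strictly (cone-complement-)localized approximant of $\tau_t(A)$, and then apply, separately, the quasi-locality of $\tau_t$ and the almost localization of $\rho$. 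Concretely, I would fix a cone $Z$ obtained from $\Lambda_{\alpha+\epsilon}-n$ by shrinking the opening angle by $\epsilon/2$ and pushing the apex inward by an amount of order $v_{bg}\abs{t}+n/2$ (taking integer ceilings where necessary, which only improves the bounds below by monotonicity of $f$ and $h$), chosen so that on the one hand the cone complement $Z^c$ is contained in $\Lambda^c_{\alpha+\epsilon/2}+n/2$, and on the other hand $d\bigl(Z,\Lambda^c_{\alpha+\epsilon}+n\bigr)$ exceeds the Lieb--Robinson collar $v_{bg}\abs{t}$ by a margin that grows linearly in $n$. Put $A':=\langle\tau_t(A)\rangle_{Z}\in\calA_{Z^c}\subseteq\calA_{\Lambda^c_{\alpha+\epsilon/2}+n/2}$, the conditional expectation of Lemma~\ref{lem:coneProj} that averages out the degrees of freedom localized in $Z$; note $\|A'\|\le\|\tau_t(A)\|=1$.

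By the triangle inequality, using that $\rho$ is a $*$-endomorphism and hence $\|\rho\|=1$,
\[
\bigl\|\rho(\tau_t(A))-\tau_t(A)\bigr\|
\le\bigl\|\rho(\tau_t(A))-\rho(A')\bigr\|+\bigl\|\rho(A')-A'\bigr\|+\bigl\|A'-\tau_t(A)\bigr\|
\le 2\,\bigl\|A'-\tau_t(A)\bigr\|+\bigl\|\rho(A')-A'\bigr\|.
\]
For the last term, since $A'\in\calA_{\Lambda^c_{\alpha+\epsilon/2}+n/2}$ and $\rho$ is almost localized in $\Lambda_\alpha$ with decay function $f$, Definition~\ref{defn:alend} gives $\|\rho(A')-A'\|\le f_{\epsilon/2}(n/2)\,\|A'\|\le f_{\epsilon/2}(n/2)$. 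For the first term I would bound $\|A'-\tau_t(A)\|=\|\langle\tau_t(A)\rangle_Z-\tau_t(A)\|$ by $\sup_{U\in\mathcal{U}(Z)}\|[\tau_t(A),U]\|$ exactly as in the proof of Lemma~\ref{lem:coneProj}; since $A$ is localized in a cone \emph{complement} rather than a cone, I would then use the identity $\|[\tau_t(A),U]\|=\|[\tau_{-t}(U),A]\|$ to put this commutator into the configuration covered by the cone Lieb--Robinson bound, Theorem~\ref{thm:LRcone}, with $U\in\calA_Z$ localized in a cone and $A$ localized in a suitably widened cone complement. This is precisely the input that Corollary~\ref{cor:quasiloc}/estimate~\eqref{eqn:quasiloc} packages; applying it with the cone $Z$ chosen above yields $\|A'-\tau_t(A)\|\le h_{\epsilon/2}(v_{bg}\abs{t}+n/2)$, the argument $v_{bg}\abs{t}+n/2$ being exactly the distance margin built into $Z$. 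Adding the two estimates gives the claimed bound.

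It then remains to observe that $n\mapsto f_{\epsilon/2}(n/2)+2\,h_{\epsilon/2}(v_{bg}\abs{t}+n/2)$ lies in $\mathcal{F}_\infty$: it is non-increasing in $n$ and in $\epsilon$ because $f,h\in\mathcal{F}_\infty$ and the reparametrizations $n\mapsto n/2$, $n\mapsto v_{bg}\abs{t}+n/2$, $\epsilon\mapsto\epsilon/2$ preserve monotonicity; it decays faster than any power of $n$ because affine shifts of the argument do not destroy super-polynomial decay; and $\mathcal{F}_\infty$ is closed under addition. The step I expect to be the crux is the geometric one hidden in the choice of $Z$: balancing the $\epsilon/2$ loss in opening angle against the inward shift of the apex so that the distance estimate of Theorem~\ref{thm:LRcone} (in which $d\asymp(\text{shift})\cdot\sin(\text{angle})$) produces simultaneously the argument $v_{bg}\abs{t}+n/2$ of $h$ and the argument $n/2$ of $f$; once this bookkeeping is fixed, the remaining argument is the short triangle-inequality estimate above.
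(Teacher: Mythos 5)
Your argument is essentially the paper's own proof: after the isometry reduction you insert the conditional expectation of Lemma~\ref{lem:coneProj} over a cone with angular margin $\epsilon/2$ and shift of order $n/2$, bound the averaging error via the cone Lieb--Robinson estimate of Corollary~\ref{cor:quasiloc}/\eqref{eqn:quasiloc} and the remaining term via the almost localization of $\rho$, which is exactly the paper's two-line triangle-inequality computation with $\langle\,\cdot\,\rangle_{\Lambda_{\alpha+\epsilon/2}+n/2}$. The bookkeeping you flag as the crux is indeed the delicate point (a single cone $Z$ cannot simultaneously satisfy $Z^c\subseteq\Lambda^c_{\alpha+\epsilon/2}+n/2$ and $d(Z,\Lambda^c_{\alpha+\epsilon}+n)\gtrsim v_{bg}\abs{t}+n/2$ once $v_{bg}\abs{t}$ exceeds order $n$), but the paper's proof is equally brisk at precisely this spot, and the conclusion that actually matters --- that for each fixed $t$ the resulting bound is a function of $n$ in $\mathcal{F}_\infty$ --- is unaffected.
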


\begin{proof}
Let $\epsilon >0$ be given.
By Corollary \ref{cor:quasiloc} and \eqref{eqn:quasiloc} we have that there exists constants $C, r >0$ such that  if $n$ is even then
\begin{align}
\sup_{A \in \calA_{\Lambda^c_{\alpha+\epsilon}-n}} \frac{  \| \tau_t^{-1} \circ \rho \circ \tau_t (A) - A \| }{\|A\|}
&\leq  \sup_{A \in \calA_{\Lambda^c_{\alpha+\epsilon}-n}}  \frac{ \| \rho( \langle \tau_t(A) \rangle) -  \langle \tau_t(A) \rangle \|}{\|A\|}
+ 2 \frac{\|  \langle \tau_t(A) \rangle - \tau_t(A)\| }{\|A\|}\\
& \leq f_{\epsilon/2}(n/2) +2 h_{\epsilon/2}(v_{bg} \abs{t}+ n/2),
\end{align}
where $\langle \ \cdot \  \rangle = \langle \ \cdot \ \rangle_{ \Lambda_{ \alpha + \epsilon/2} + n/2}$.
\end{proof}

Hence the above result says that almost localized endomorphisms are mapped to almost localized endomorphisms. This is in fact true for the complete tensor-$C^*$ structure on $\Delta$, so that quasi-local dynamics can be used to change the reference representation.

\begin{thm}\label{thm:stabsectorstructure}
Let  $\Delta$ is a semi-group of almost localized and transportable endomorphisms satisfying the superselection criterion, Definition~\ref{def:superselection}, 
for a reference state $\omega_0$.
If $\tau_t$ is a quasi-local dynamics then for all $ t\in[0,1]$ the semi-group $\tau_t^{-1} \circ \Delta\circ \tau_t $ satisfies the superselection criterion
for the reference state $\omega_0 \circ \tau_t$.
Furthermore, for all $\rho, \sigma \in \Delta$ we have that
 \begin{equation} \label{eqn:stabintertwiner}
( \rho, \sigma)_{\pi_0} = (\tau_t^{-1}\circ \rho \circ \tau_t, \tau_t^{-1} \circ \sigma \circ \tau_t)_{\pi_0\circ \tau_t}.
\end{equation}
If in addition $\Delta$ satisfies the assumptions of Theorem~\ref{thm:biasymptopia}, 
then  $\tau_t^{-1} \circ \Delta \circ \tau_t$ with reference state $\omega_0 \circ \tau_t$ is a braided tensor $C^*$-category
and is braided equivalent to $\Delta$.
\end{thm}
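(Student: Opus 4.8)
The three assertions are handled in turn, and the organising principle is that the GNS triple of $\omega_0\circ\tau_t$ may be taken to be $(\pi_0\circ\tau_t,\Omega_0,\calH_0)$, so that both $\Delta$ and $\Delta':=\tau_t^{-1}\circ\Delta\circ\tau_t$ act (after composing with the reference representation) on the \emph{same} Hilbert space $\calH_0$; the two embeddings $\calA_\Gamma\hookrightarrow\calB(\calH_0)$, via $\pi_0$ and via $\pi_0\circ\tau_t$, have the same image but differ by the relabelling $A\mapsto\tau_t^{-1}(A)$, since $(\pi_0\circ\tau_t)(\tau_t^{-1}\rho\tau_t(A))=\pi_0(\rho(\tau_t A))$. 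Keeping track of this relabelling is what makes everything match up.

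\textbf{Step 1: the superselection criterion transfers.} If $\rho\in\Delta$ is almost localized in $\Lambda_\alpha^\rho$ with decay function $f$, then Lemma~\ref{lem:stabaloc} (an application of the cone Lieb--Robinson bound, Corollary~\ref{cor:quasiloc}, so the only genuinely analytic input) shows $\tau_t^{-1}\circ\rho\circ\tau_t$ is almost localized in the \emph{same} cone with decay function $f_{\epsilon/2}(n/2)+2h_{\epsilon/2}(v_{bg}\abs{t}+n/2)\in\mathcal{F}_\infty$ ($t\in[0,1]$ fixed; $\mathcal{F}_\infty$ is closed under addition and the relevant affine rescalings). The lemma is stated for apex-at-origin cones but, the Lieb--Robinson estimates being spatially uniform, extends verbatim to translated cones, which we need below. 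For transportability with respect to $\omega_0\circ\tau_t$: given a target cone $\Lambda'_\beta$, transportability of $\rho$ w.r.t.\ $\omega_0$ yields $\rho_\beta$ almost localized in $\Lambda'_\beta$ and a unitary $V$ with $V\pi_0(\rho(A))V^*=\pi_0(\rho_\beta(A))$ for all $A$; since $\tau_t$ is a bijection of $\calA_\Gamma$, the \emph{same} $V$ intertwines $(\pi_0\tau_t)\circ(\tau_t^{-1}\rho\tau_t)=\pi_0\circ\rho\circ\tau_t$ with $(\pi_0\tau_t)\circ(\tau_t^{-1}\rho_\beta\tau_t)$, so $\tau_t^{-1}\rho_\beta\tau_t$ (almost localized in $\Lambda'_\beta$) witnesses transportability, and the uniformity of the decay functions demanded by Definition~\ref{defn:transportable} is preserved because Lemma~\ref{lem:stabaloc} outputs a decay function depending only on $f$ (resp.\ the common $g$ for $\beta<\alpha$), $h$ and $t$. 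Finally, equation~\eqref{eqn:stabintertwiner} is immediate: $T\in(\tau_t^{-1}\rho\tau_t,\tau_t^{-1}\sigma\tau_t)_{\pi_0\circ\tau_t}$ means $T\pi_0(\rho(\tau_t A))=\pi_0(\sigma(\tau_t A))T$ for all $A$, which by surjectivity of $\tau_t$ is $T\in(\rho,\sigma)_{\pi_0}$; the two intertwiner spaces are literally the same set of operators.

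\textbf{Step 2: the (bi-)asymptopia of $\Delta$ also serves for $\Delta'$.} Recall $\{U_m\}\in\mathcal{K}_\rho$ (from Theorem~\ref{thm:Asymptopia}, built from the forbidden cone $K$) iff $U_m^*\pi_0(A)U_m\to\pi_0(\rho(A))$, there exist $\epsilon>0$, $k_m\uparrow\infty$ and endomorphisms $\rho_m$ almost localized in $K_{-\epsilon}+k_m$ with an $m$-independent decay function, and $U_m\in(\rho,\rho_m)_{\pi_0}$. Set $\mathcal{K}'_{\tau_t^{-1}\rho\tau_t}:=\mathcal{K}_\rho$. Using Step~1 one verifies the three requirements of Definition~\ref{defn:asymptopia} for $\Delta'$ w.r.t.\ $\pi_0\circ\tau_t$: stability is inherited; the pointwise-limit condition becomes, after the $\tau_t$-relabelling, the identical condition to the one for $\mathcal{K}_\rho$; given $\{U_m\}\in\mathcal{K}_\rho$ with data $(\epsilon,k_m,\rho_m)$, the triple $(\epsilon,k_m,\tau_t^{-1}\rho_m\tau_t)$ is admissible data for $\mathcal{K}'$ by Lemma~\ref{lem:stabaloc} (same cone $K_{-\epsilon}+k_m$, uniform decay function) and~\eqref{eqn:stabintertwiner} ($U_m\in(\rho,\rho_m)_{\pi_0}=(\tau_t^{-1}\rho\tau_t,\tau_t^{-1}\rho_m\tau_t)_{\pi_0\tau_t}$); and the asymptotic-commutativity property $\lim\|[V_nRU_m^*,R']\|=0$ is a statement about fixed operators in $\calB(\calH_0)$ which, since~\eqref{eqn:stabintertwiner} identifies the relevant intertwiner spaces, is the very one already proved for $\mathcal{K}$. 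The identical argument applies to the families $\mathcal{U},\mathcal{V}$ of Theorem~\ref{thm:biasymptopia} (built from $\Lambda^{\mathcal{U}}_\alpha,\Lambda^{\mathcal{V}}_\beta$ and $K$): they form a bi-asymptopia for $\Delta'$, condition (ii) of Definition~\ref{defn:biasymptopia} referring only to the tensor product (which coincides, see Step~3) and condition (iii) again being labelling-independent. In particular no re-derivation of the strong approximate split property for $\omega_0\circ\tau_t$ is required -- one transports the existing asymptopia instead of re-running the construction with the new reference state.

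\textbf{Step 3: the braided equivalence.} Since $\pi_0(\calA_\Gamma)=(\pi_0\circ\tau_t)(\calA_\Gamma)$ and the intertwiner spaces coincide, $\calA_{\Delta'}$ and $\calA_\Delta$ are the same $C^*$-subalgebra of $\calB(\calH_0)$; by Step~2 and~\cite[Thm.~4 and Thm.~8]{BuchholzAA}, $\Delta'$ is a braided tensor $C^*$-category, and since its extensions $(\tau_t^{-1}\rho\tau_t)_{\mathcal{K}'}$ are defined by the same limit $\lim_m U_m^*(\cdot)U_m$ over $\{U_m\}\in\mathcal{K}'=\mathcal{K}_\rho$ as the extensions $\rho_{\mathcal{K}}$ of $\Delta$, we get $(\tau_t^{-1}\rho\tau_t)_{\mathcal{K}'}=\rho_{\mathcal{K}}$ on $\calA_\Delta=\calA_{\Delta'}$; hence the tensor product on objects (composition), on arrows ($S\otimes T=S\rho_{\mathcal{K}}(T)$) and the braiding~\eqref{eqn:braid} are literally the same for $\Delta$ and $\Delta'$. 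Define $F:\Delta\to\Delta'$ by $F(\rho)=\tau_t^{-1}\rho\tau_t$ on objects and $F(T)=T$ on arrows (legitimate by~\eqref{eqn:stabintertwiner}). Then $F$ is a $*$-functor, bijective on objects with inverse $\rho'\mapsto\tau_t\rho'\tau_t^{-1}$, and full and faithful; it is strictly monoidal, since $F(\rho)\otimes F(\sigma)=\tau_t^{-1}\rho\tau_t\circ\tau_t^{-1}\sigma\tau_t=\tau_t^{-1}(\rho\circ\sigma)\tau_t=F(\rho\otimes\sigma)$ on objects and $F(S)\otimes F(T)=S\rho_{\mathcal{K}}(T)=F(S\otimes T)$ on arrows; and it carries $\epsilon_{\rho,\sigma}$ to $\epsilon_{F(\rho),F(\sigma)}$. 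So $F$ is a braided tensor $*$-isomorphism, in particular a braided tensor equivalence. The main obstacle is the bookkeeping in Step~2 -- checking that conjugation by $\tau_t$ sends admissible data $(\epsilon,k_m,\rho_m)$ of a sequence in $\mathcal{K}_\rho$ to admissible data for $\mathcal{K}'$ (unchanged localization cones; a decay function still in $\mathcal{F}_\infty$ and still $m$-independent) -- together with the careful handling of the two embeddings of $\calA_\Gamma$ in $\calB(\calH_0)$, which is exactly what makes the charged representations, the intertwiner spaces and the asymptopia of the two theories coincide.
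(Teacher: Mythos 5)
Your proposal is correct and follows essentially the same route as the paper: Lemma~\ref{lem:stabaloc} for preservation of almost localization, the same intertwiner computation yielding~\eqref{eqn:stabintertwiner} and hence transportability with respect to $\omega_0\circ\tau_t$, and the functor $F(\rho)=\tau_t^{-1}\circ\rho\circ\tau_t$, $F(T)=T$ as the braided tensor equivalence. Your Step~2 merely spells out, via transporting the (bi-)asymptopia and noting that the concretely represented endomorphisms and intertwiner spaces coincide on $\calB(\calH_0)$, what the paper asserts more tersely when it declares the braiding on $\tau_t^{-1}\circ\Delta\circ\tau_t$ to be $\epsilon_{\rho,\sigma}$.
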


\begin{proof}
Let $\rho\in \Delta$ be almost localized in a cone $\Lambda_{ \alpha }$. 
Then, by Lemma~\ref{lem:stabaloc}, $\tau_t^{-1} \circ \rho \circ \tau_t$ is almost localized in $\Lambda_{ \alpha }$.
Thus, the first part of the superselection criterion is satisfied.

Let $T \in (\rho,\sigma)_{\pi_0}$.
It follows that $T\in (\rho \circ \tau_t, \sigma\circ \tau_t)_{\pi_0}$ and 
\begin{align*}
	T (\pi_0\circ \tau_t) \circ  \tau_t^{-1}  \rho \circ \tau_t (A)  = T  \pi_0 \circ \rho \tau_t (A) 
=  \pi_0 \circ \sigma \circ \tau_t(A) T
= (\pi_0\circ \tau_t) \circ  \tau_t^{-1}  \sigma \circ \tau_t (A) T.
\end{align*}
Thus, $T \in (\tau_t^{-1}\circ \rho \circ \tau_t, \tau_t^{-1} \circ \sigma\circ \tau_t)_{\pi_0 \circ \tau_t}$.
A similar argument shows the reverse inclusion, and hence that $(\rho,\sigma)_{\pi_0} = (\tau_t^{-1}\circ \rho \circ \tau_t, \tau_t^{-1}\circ\sigma\circ \tau_t)_{\pi_0 \circ \tau_t}$.

We now show that $\rho_t \equiv \tau_t^{-1} \circ \rho \circ \tau_t \in \tau_t^{-1} \circ \Delta \circ \tau_t$, where $\rho \in \Delta$, is transportable with respect to $\pi_0 \circ \tau_t$.
Let $\Lambda_{\beta}' \in \mathcal{C}$ be a cone. 
Since $\rho$ is transportable there is a $ \rho'$ almost localized in $\Lambda_{ \beta}'$ and a unitary $U$ such that $U  \pi_0\circ \rho(A) = \pi_0\circ\rho'(A) U$ for all $A\in\calA_\Gamma$.
Let $\rho_t' \equiv \tau_t^{-1} \circ \rho' \circ \tau_t$. 
By Lemma~\ref{lem:stabaloc} $\rho_t'$, is almost localized in $\Lambda_{ \beta}'$
and by~\eqref{eqn:stabintertwiner} we have that $( \rho, \rho')_{\pi_0} = (\rho_t, \rho_t')_{\pi_0\circ \tau_t}$.
Therefore, $\rho_t$ is transportable for $\omega_0 \circ \tau_t$.	

If in addition, $\Delta$ satisfies the assumptions of Theorem~\ref{thm:biasymptopia} then $\Delta$ is a braided tensor $C^*$-category.
Thus, $\tau_t^{-1} \circ \Delta \circ \tau_t$ is a braided tensor $C^*$-category with braiding defined by $\epsilon_{\tau_t^{-1} \circ \rho \circ \tau_t , \tau_t^{-1}\circ \sigma \circ \tau_t} \equiv \epsilon_{ \rho, \sigma}$.
Define a functor $F$ by $F(\rho) = \tau^{-1}_t \circ \rho \circ  \tau_t$ and $F(T) = T$ on the intertwiners.  It follows that 
\begin{align*}
F (\rho \otimes \sigma ) &= \tau_t^{-1} \circ \rho \circ \sigma \circ \tau_t 
= \tau_t^{-1} \circ \rho \circ \tau_t \circ \tau_t^{-1} \circ \sigma \circ \tau_t\\
&= F(\rho) \otimes F(\sigma).
\end{align*}
Hence $F$ is a $\otimes$-functor.
From the first part of the proof it easily follows that $F$ is a braided equivalence of tensor $C^*$-categories.
\end{proof}

\section{Application: stability of abelian quantum double models}\label{sec:applications}
We are now in a position to apply the above analysis to a prototypical example of a topologically ordered quantum spin system: Kitaev's quantum double model.
More precisely, we consider the model for an \emph{abelian} group $G$, and consider the quantum spin system and dynamics introduced by Kitaev~\cite{KitaevQD}.
It is well known that its anyonic excitations are described by the representation theory of the quantum double $\mathcal{D}(G)$ of the group algebra of $G$~\cite{BombinMD,KitaevQD}.
The category of such representations is a modular tensor category (see e.g.~\cite[Ch. 2]{BakalovKirillov}).
In the thermodynamic limit on the plane, the case that we are interested in here, this category has been obtained using the sector analysis outlined above~\cite{FiedlerN,Naaijkens11}.

The model has a gap above the ground state, which is stable under local perturbations \cite{BravyiHM}.
Here we apply the superselection criterion and stability results of the previous section
combined with the techniques of spectral flow~\cite{BachmannMNS} to in addition prove the stability of anyons in the abelian quantum double models.
The key idea is to consider the set $\mathcal{S}(s)$ of low-energy states.
Here $s$ parametrizes a path of perturbations, with $s=0$ being the unperturbed model.
The set $\mathcal{S}(s)$ is essentially the set of weak$^*$ limits of states of the finite model with energy below some fixed (volume-independent) threshold.
This set can be understood using techniques developed in previous work by the authors on classifying the infinite volume ground states of the unperturbed model~\cite{ChaNN}.
We then use the spectral flow~\cite{BachmannMNS} (or ``quasi-adiabatic continuation''~\cite{HastingsLSM,HastingsW}) to relate $\mathcal{S}(s)$ to $\mathcal{S}(0)$.
More precisely, this will yield a family of automorphisms $\alpha_s$ that satisfy a Lieb-Robinson type of bound, see Appendix~\ref{app:spectral}.
Applying Theorem~\ref{thm:stabsectorstructure} will then lead to the desired conclusion.

We begin by recalling the main properties of the quantum double model.
In the final subsection this will be combined with the spectral flow to prove our main result of this section.

\subsection{Kitaev's abelian quantum double model}
We recall the family of planar quantum double models for finite groups $G$ as defined by Kitaev~\cite{KitaevQD}.
Although the definition is the same for non-abelian groups, we will restrict to abelian $G$.
We do expect that the stability result is equally valid for non-abelian $G$, but a proof would require an extension of the sector analysis to non-abelian groups.

Let $\mathcal{B}$ be the bond set of the planar square lattice $\ZZ^2$.
To each $e \in \calB$ we assign a $\abs{G}$-dimensional Hilbert space, $\CC^{\abs{G}}$, with an orthonormal basis denoted $\ket{g}$, with $g \in G$.
To define the model we specify the local Hamiltonians and the Heisenberg dynamics on the quasi-local algebra $\calA_\Gamma$.
The interaction terms of the local Hamiltonian are non-trivial only on certain subsets of $\calB$, called stars and plaquettes.
A \emph{star} $v$ is the set of four edges sharing a vertex and similarly, a \emph{plaquette} $f$ is the set of four edges forming a unit square in the lattice.
Interaction terms are defined for each star and plaquette by
\begin{equation}
A_v \equiv \frac{1}{\abs{G}} \sum_{g\in G} A_v^g, \qquad \text{ and } \qquad  B_f \equiv B_f^e,
\end{equation}
where the terms $A_v^g$ and $B_f^h$ are defined by
\begin{figure}[h!]
	\centering
	\includegraphics[width=0.45\textwidth]{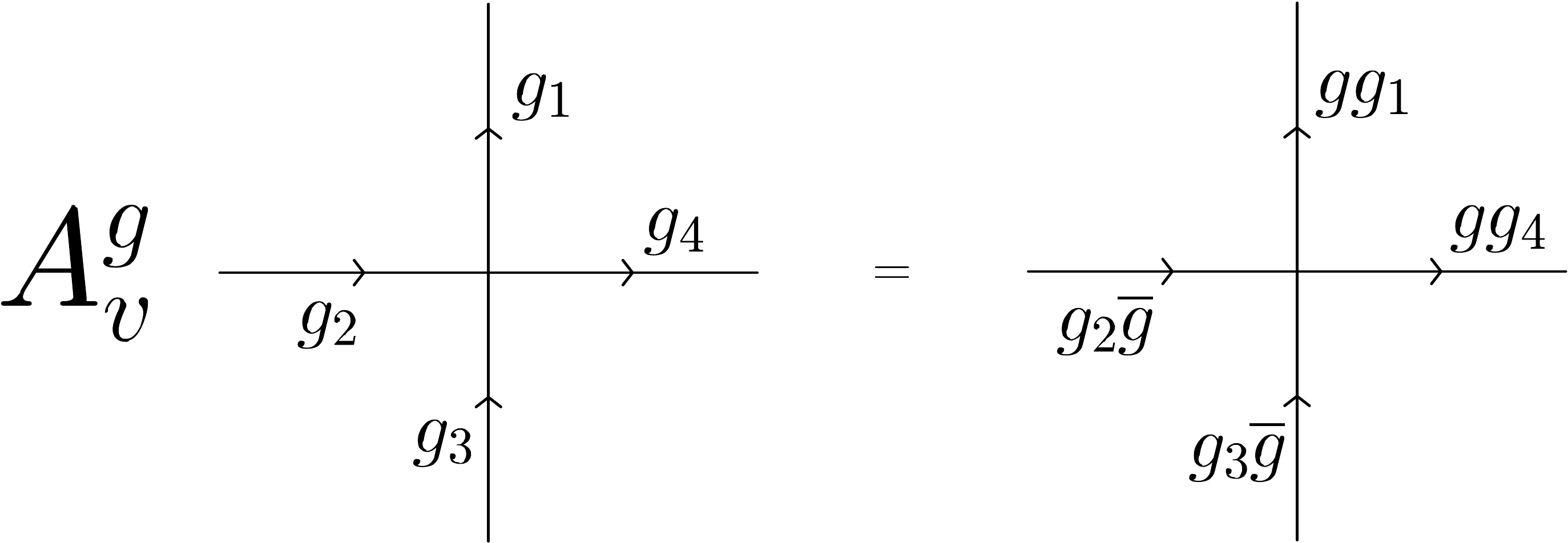}
	\hfill
	\includegraphics[width=0.45\textwidth]{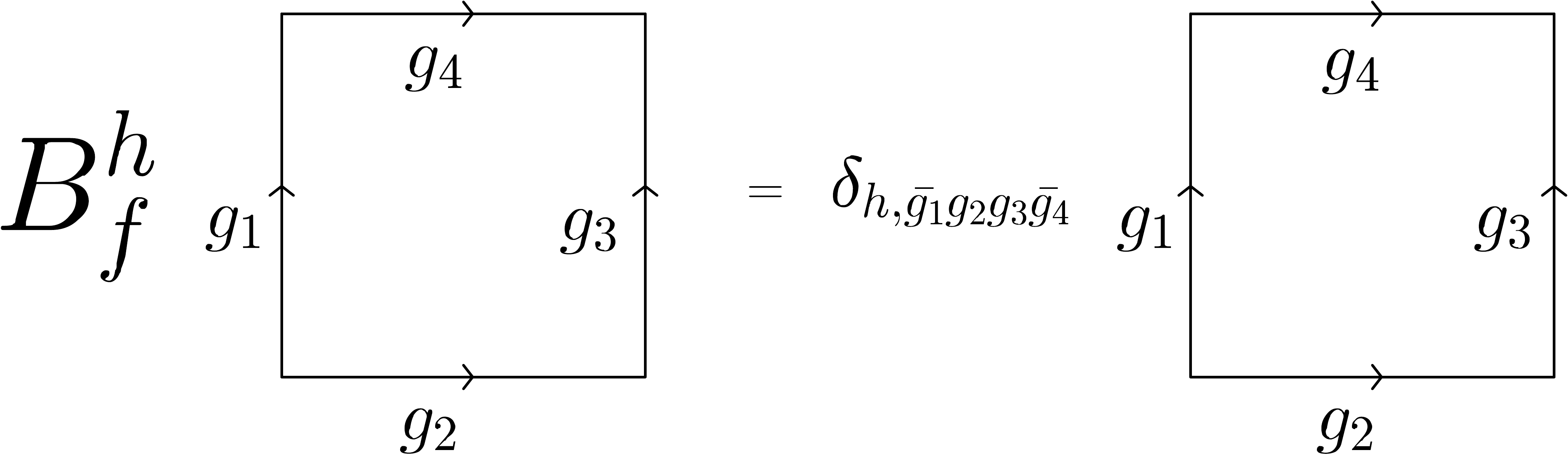}
\end{figure}

\noindent Here $\overline{g}$ is the inverse of $g$.
Note that $A_v$ and $B_f$ commute and are both projectors.
The operator $A_v^g$ can be seen as gauge transformations, with $A_v$ projecting onto the invariant subspace, while $B_e$ projects onto the states with trivial ``flux'' through a plaquette.

For $\Lambda \subset \calB$, denote the subset of stars and plaquettes contained in $\Lambda$ as
$\mathcal{V}_\Lambda = \{ v \subset \Lambda: v \text{ is a star} \}$, and $\mathcal{F}_\Lambda = \{ f \subset \Lambda: f \text{ is a plaquette} \}.$
If $\Lambda \in \mathcal{P}_0(\calB)$, the local Hamiltonians for the quantum double models defined by Kitaev \cite{KitaevQD} are given by
\begin{equation}\label{eqn:qdHam}
\sum_{v\in \mathcal{V}_\Lambda} (I - A_v) + \sum_{f\in \mathcal{F}_\Lambda} (I - B_f) \equiv H_\Lambda \in \calA_\Lambda.
\end{equation}
Since the interaction terms are uniformly bounded and of finite range, 
the existence of global dynamics $t \mapsto \tau_t \in \operatorname{Aut}(\calA)$ is readily established.

For our analysis, following~\cite{ChaNN}, it will be enough to consider square regions $\Lambda_L\subset \calB$ consisting of all edges in $[-L,L]^2$.  
We will denote $H_L = H_{\Lambda_L}$, $\calH_L = \calH_{\Lambda_L}$,
and $H_L^{per}$ the Hamiltonian with periodic boundary conditions
and likewise, $\mathcal{F}_L = \mathcal{F}_{\Lambda_L}$, $\mathcal{V}_{L} = \mathcal{V}_{\Lambda_L}$, and 
$\mathcal{F}_L^{per}$ and $\mathcal{V}_L^{per}$ the set of stars and plaquettes in the case of periodic boundary conditions. 
The generator of the dynamics is the closure of the operator
\begin{equation}
\delta(A) = \lim_{L \ra \infty} [H_L, A]  = \lim_{L \ra \infty } [H_L^{per}, A]\quad \mbox{ for all } \quad A \in \calA_{loc}.
\end{equation}
It follows that the dynamics are given by $\tau_t(A) = e^{i t \delta}(A)$ for all $A \in \calA$. 

The ground state space $\mathcal{G}_L$ of the finite volume Hamiltonians $H_L$ 
simultaneously minimizes the energy of each interaction term, that is, $\mathcal{G}_L = \ker H_L$.
Equivalently, $\Omega_L \in \calH_L$  is a ground state if and only if $( I -A_v) \Omega_L =(I- B_f) \Omega_L =0$ for all $v \in \mathcal{V}_L$ and
$f \in \mathcal{F}_L$.
Consider a family of states $\{\omega_L\}_{L=2}^\infty$ as $L\ra \infty$, 
where $ \omega_L$ is an arbitrary extension
of the state $\langle \Omega_L , \cdot \ \Omega_L  \rangle$ for $\Omega_L \in \mathcal{G}_L$ to the quasi-local algebra $\calA$.
In particular, we could choose a product state.
It can be shown that the sequence $\omega_L(A)$ is eventually constant for any local observable $A$~\cite{FiedlerN}.
This is essentially due to the local topological order condition, which amounts to local observables being unable to distinguish ground states.
Thus, the limit $\omega_0 \equiv \lim_{L \ra \infty } \omega_L$ exists, and by the local topological order condition turns out to unique.
For any $v$ and $f$, choose $L$ large enough such that $v \in \mathcal{V}_{L}$ and $f \in \mathcal{F}_{L}$.
Since $\omega_L$ is a ground state for the finite model it follows that $\omega_0(I - A_v) = \omega_L(I - A_v) = 0$ and $\omega_0(I - B_f) = \omega_L(I - B_f) = 0$.
Thus, we call $\omega_0$ the frustration-free ground state. 
Note that it is translation invariant.

This ground state, which will play the role of the reference state, is pure, gapped and has a simple ground state eigenvector in the GNS representation.
Because of the spectral gap we also have exponential decay of correlations~\cite{NachSLR}.
Finally, $\omega_0$ satisfies the strong approximate split property, as in Definition~\ref{def:strongsplit}, for cones~\cite{FiedlerN}.
To summarize, we have the following properties:
\begin{prop}\label{prop:qdffgs}\label{prop:qdsplit}\cite{AlickiFH,FiedlerN}
	Let $\omega_0$ be the frustration-free ground state of the quantum double model for an abelian group $G$, obtained as above.
	Then the following is true:
	\begin{enumerate}
		\item $\omega_0$ is a pure state;
		\item if $\omega$ is a frustration-free ground state then $\omega = \omega_0$;
		\item let $(\pi_0, \Omega_0, \calH_0)$ be a GNS-representation for $\omega_0$ and $H_0$ be the GNS Hamiltonian.
		Then, $spec(H_0) = 2 \ZZ^{\geq 0}$ with a simple ground state eigenvector $\Omega_0$;
		\item $\omega_0$ satisfies the strong approximate split property for cones.
	\end{enumerate}
\end{prop}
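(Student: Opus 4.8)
The statement collects standard structural facts about the frustration-free ground state of an abelian quantum double model, so the plan is to reduce each part to results already in the literature --- chiefly~\cite{AlickiFH,FiedlerN,ChaNN} --- together with the elementary implication ``Haag duality plus a split inclusion gives the dual split inclusion'' that was verified in Section~\ref{sec:assumptions}. For parts (1) and (2) I would start from the local topological order condition of the quantum double interaction, which (as recalled above, following~\cite{FiedlerN}) forces $L \mapsto \omega_L(A)$ to be eventually constant for every $A \in \calA_{loc}$; this makes $\omega_0 = \lim_L \omega_L$ well defined and independent of the chosen finite-volume ground-state vectors $\Omega_L \in \mathcal{G}_L$. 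Uniqueness then follows because any frustration-free ground state $\omega$ must annihilate every local term --- $\omega(I - A_v) = \omega(I - B_f) = 0$, each being a nonnegative operator whose energy $\omega$ minimises --- and these constraints together with the same eventual-constancy argument force $\omega = \omega_0$ on $\calA_{loc}$, hence on all of $\calA_\Gamma$. Purity of $\omega_0$ is part of the explicit construction of its GNS triple $(\pi_0, \Omega_0, \calH_0)$ in~\cite{FiedlerN,AlickiFH}, where $\pi_0$ is shown to be irreducible.

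For part (3), in any finite volume the commuting terms $I - A_v$ and $I - B_f$ are simultaneously diagonalisable orthogonal projections, so $\mathrm{spec}(H_L) \subset \ZZ^{\geq 0}$ and a vector of energy $k$ violates exactly $k$ of them. The analysis of which excitation patterns are reachable from the vacuum, carried out for this model in~\cite{AlickiFH} (see also~\cite{FiedlerN,ChaNN}), shows that excitations are created as particle--antiparticle pairs by ribbon operators, so every reachable pattern carries even total energy and, conversely, every even value is attained by separating such a pair; passing to the GNS representation, where $H_0$ implements $\tau_t$, $H_0 \geq 0$, and $H_0 \Omega_0 = 0$, gives $\mathrm{spec}(H_0) = 2\ZZ^{\geq 0}$. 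Simplicity of the eigenvalue $0$ is then immediate from (1) and (2): if $\psi \in \calH_0$ satisfies $H_0\psi = 0$ then $\psi$ is frustration-free, so the vector state $A \mapsto \langle \psi, A\psi\rangle$ (identifying $A$ with $\pi_0(A)$) is a frustration-free ground state and hence equals $\omega_0$; since $\pi_0$ is irreducible a nonzero such $\psi$ must be a phase multiple of $\Omega_0$.

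Part (4) is the only genuinely analytic input, and it is exactly what is proved in~\cite{FiedlerN}: there one establishes Haag duality for cones, $\mathcal{R}(\Lambda)' = \mathcal{R}(\Lambda^c)$, together with the split inclusion~\eqref{eqn:approxsplit} for nested cones $\Lambda \ll \widehat\Lambda$ (for a suitable model-dependent $n_0$, in fact with $N = 0$). Taking commutants, as in the discussion following Definition~\ref{def:strongsplit}, turns the type I factor $\mathcal{N}$ of~\eqref{eqn:approxsplit} into a type I factor $\mathcal{N}'$ realising~\eqref{eqn:strongsplit}, so $\omega_0$ satisfies the strong approximate split property for $\mathcal{C}$. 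The main obstacle, were one to seek a self-contained proof, is precisely this last item: Haag duality and the cone-split property rest on the detailed ribbon-operator and cone-von-Neumann-algebra analysis of~\cite{Naaijkens11,FiedlerN}, whereas (1)--(3) amount to bookkeeping with the commuting local terms.
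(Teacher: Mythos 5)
The paper itself gives no argument for this proposition: it is stated as a collection of known facts and simply cited to~\cite{AlickiFH,FiedlerN} (with the split property for cones and Haag duality coming from~\cite{FiedlerN,Naaijkens11}, and the dual inclusion~\eqref{eqn:strongsplit} obtained by the commutant argument already spelled out after Definition~\ref{def:strongsplit}). Your overall strategy -- defer (1), (2) and (4) to those references, use frustration-freeness and local topological order for uniqueness, and take commutants of~\eqref{eqn:approxsplit} under Haag duality to get~\eqref{eqn:strongsplit} -- is therefore the same route the paper takes, and your simplicity argument for the ground-state eigenvector via (1), (2) and irreducibility is fine.

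The one place where your in-line reasoning does not hold up is the spectral claim in part (3). You argue that $\operatorname{spec}(H_0)\subset 2\ZZ^{\geq 0}$ because ``excitations are created as particle--antiparticle pairs by ribbon operators, so every reachable pattern carries even total energy.'' Vectors in $\calH_0$ are limits of arbitrary local operators applied to $\Omega_0$, not only of single ribbon operators, and for a general abelian $G$ the total-charge constraint does not force an even number of violated projectors: for $G=\ZZ_3$, three pure charges $\chi,\chi,\chi$ with $\chi^3=\iota$ (created, e.g., by three ribbon operators meeting at a common site where the conjugate charges fuse to the identity) violate exactly three star projectors and no plaquette projectors, giving an energy-$3$ eigenvector in the vacuum sector. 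So the parity bookkeeping you invoke is valid for $G=\ZZ_2$ (the toric-code setting of~\cite{AlickiFH}) but does not prove evenness of the spectrum for general abelian $G$; if you want part (3) beyond citation, you must rely on the actual spectral computation in the cited literature (cf.~\cite{ChaNN}) and check the hypotheses under which the stated form of $\operatorname{spec}(H_0)$ is derived, rather than on the pair-creation heuristic.
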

Hence $\omega_0$ fulfills the assumptions as stated in Theorems~\ref{thm:biasymptopia} and~\ref{thm:stabsectorstructure}.

\subsection{Stability of anyons for Kitaev's abelian quantum double models}
Recall that excitations of the quantum double model are formed by violating one or more of the frustration-free conditions.
Excitations live at \emph{sites}.
A site $x = (v,f)$ is a pair of a vertex $v$ and a face $f$ of $\Lambda$ such that $v$ is at the tip of one of the edges surrounding $f$.%
\footnote{In the abelian case one can treat the vertex and face case separately, but it is more convenient to consider them together.}
Write $\mathcal{F}_L$ for all the faces of $\mathcal{B}$ that are contained in the square $[-L,L]^2$ with $L > 0$.
Let $S_L$ denote the set of all sites $x=(v,f)$ such that $v \in \ZZ^2 \cap [-L,L]^2$ and the corresponding face $f \in \mathcal{F}_L$.
We say that a site $x=(v,f)$ is on the boundary of $\Lambda_L$ if $v \in \ZZ^2 \cap [-L,L]^2$ and 
the corresponding face $f \in \mathcal{F}_{L+1} \setminus \mathcal{F}_{L}$.

Excitations live at the end of ribbons.
A ribbon $\rho$ is an ordered sequence of adjacent sites (see~\cite{BombinMD} or Section 3 of~\cite{ChaNN} for a more precise definition and more details).
Ribbons carry an orientation, and we write $\partial_0 \rho$ and $\partial_1\rho$ for the starting (respectively ending) site of $\rho$.
Write $\widehat{G}$ for the group of characters or $G$.
For each $ (\chi,c) \in \widehat{G}\times G$ and ribbon $\rho$ there exists a ribbon operator $F_\rho^{\chi,c}$
such that if $\rho$ is an open ribbon, i.e. $\partial_0 \rho \neq \partial_1 \rho$, then 
\begin{equation}\label{eqn:ribenergy}
H_L F_\rho^{\chi,c} \Omega = C_\rho ( 2- \delta_{\chi, \iota} - \delta_{c, e}  ) F_{\rho}^{\chi,c}  \Omega.
\end{equation}
	where 
\[ C_\rho = 
\left\{ \begin{array}{ll}
2 & \mbox{ if } \partial_i \rho \in S_L \text{ for }   i = 0,1 \\
1 & \mbox{ if } \partial_i \rho \in S_L, \  \partial_{i+1}\rho \notin S_L \\
0 & \mbox{ if } \partial_i \rho \notin S_L \text{ for }   i=0,1.
\end{array}\right. \]
Hence we can create energy eigenstates using the ribbon operators.
A precise inspection shows that the energy increase is precisely due to the frustration-freeness condition being violated at the sites $\partial_i \rho$.

In the infinite volume, we consider a half-infinite ribbon $\rho$ such that $\partial_0 \rho = x \in S_L$ for some $L$ and write $\partial_1 \rho =\infty$ to indicate that the ribbon goes off to infinity.
We also denote $\rho_L = \rho \cap \Lambda_L$ for the part that is contained in a finite box of width $2L$.
It was shown in \cite{Naaijkens11} that the following limit exists (in norm) for all $A \in \calA_\Gamma$:
\begin{equation}\label{eqn:singleexcitation}
\rho^{\chi,c} (A) \equiv \lim_{L\ra \infty}  F_{\rho_L}^{\chi,c *} A   F_{\rho_L}^{\chi,c }.
\end{equation}
It defines an automorphism of $\calA_\Gamma$ which can be interpreted as describing a single excitation of type $(\chi,c)$.
This is vindicated by the following properties, where we use the notation $\omega^{\chi,c}_x \equiv \omega_0 \circ \rho^{\chi,c}$.
\begin{prop}[\cite{FiedlerN,Naaijkens11}]\label{prop:singleexc}
	Let $ (\pi_x^{\chi,c}, \Omega_x^{\chi,c}, \calH_x^{\chi,c})$ be the GNS triple for $\omega_x^{\chi,c}$.
	Then:
	\begin{enumerate}
		\item $\pi_x^{\chi,c}$ are irreducible representations satisfying the selection criterion
		\begin{equation}\label{eqn:conecrit}
		\pi_0 \upharpoonright \calA_{\Lambda^c} \cong \pi \upharpoonright \calA_{\Lambda^c} \quad \mbox{ for all } \quad \Lambda \in\mathcal{C},
		\end{equation}
		\item for all sites $x,x'$ we have $\pi_{x}^{\chi,c} \cong \pi_{x'}^{\chi,c} $,
		\item if $(\chi,c) \neq (\chi',c')$ then $\pi_x^{\chi,c} $ and $\pi_x^{\chi',c'}$  are inequivalent representations,
		\item if $\pi$ is irreducible and satisfies \eqref{eqn:conecrit} then there exists a pair $(\chi, c)$ such that $\pi \cong \pi_x^{\chi,c}$.
	\end{enumerate}
\end{prop}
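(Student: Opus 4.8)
The plan is to recover items (1)--(4) essentially as in~\cite{FiedlerN,Naaijkens11}, the single technical input being the algebra of the quantum double ribbon operators $F_\rho^{\chi,c}$ used in~\eqref{eqn:singleexcitation}. Items (1) and (2) are the easy ones and are closely linked: since $\rho^{\chi,c}$ is an automorphism of $\calA_\Gamma$ and $\pi_0$ is irreducible, $\pi_x^{\chi,c}=\pi_0\circ\rho^{\chi,c}$ is automatically irreducible, so the only content of (1) is the cone criterion~\eqref{eqn:conecrit}. For this I would use the commutation relations of ribbon operators: $F_{\rho_L}^{\chi,c\,*} A F_{\rho_L}^{\chi,c}=A$ whenever $A$ is localized away from the edges of $\rho_L$ and from its endpoints, so that if the half-infinite ribbon $\rho$ is chosen to lie inside a given cone $\Lambda$ then $\rho^{\chi,c}$ acts trivially on $\calA_{\Lambda^c}$. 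Hence~\eqref{eqn:conecrit} for an arbitrary cone follows once one can transport $\rho^{\chi,c}$ into that cone, which is precisely item (2).

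For item (2), given two sites $x,x'$ I would pick a finite ribbon $\xi$ from $x$ to $x'$ and half-infinite ribbons $\rho,\rho'$ beginning at $x,x'$ respectively, and show that the (bounded, strictly local) operator $W\equiv F_\xi^{\chi,c}\in\calA_\Gamma\subset\calB(\calH_0)$ implements the equivalence: $\pi_0(\rho'^{\chi,c}(A))=W\,\pi_0(\rho^{\chi,c}(A))\,W^*$ for all $A$. The key point is the gluing/deformation identity for ribbon operators, $F_{\xi\cdot\rho_L}^{\chi,c}=F_\xi^{\chi,c}F_{\rho_L}^{\chi,c}$ up to a coproduct that factorizes in the abelian case, together with path-independence of ribbon operators in the GNS representation of $\omega_0$ (two ribbons with the same endpoints differ by a closed-ribbon operator, which acts as a scalar on the frustration-free ground state and trivially in $\pi_0$). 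Taking $L\to\infty$ then yields the intertwining relation. This deformation identity, carried out in~\cite{BombinMD,ChaNN,Naaijkens11}, is the technical heart of (2).

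For item (3) I would exhibit, for each site $x$, a ``charge measurement'' observable localized on a small loop $\gamma$ of sites encircling $x$ --- concretely the element of the local copy of $\mathcal{D}(G)$ on $\gamma$ that projects onto the irreducible $\mathcal{D}(G)$-representation labelled by $(\chi,c)$, built from star and plaquette projectors along $\gamma$. Using the ribbon algebra one checks that $\omega_x^{\chi,c}$ restricted to observables on $\gamma$ is the state ``the excitation at $x$ carries flux $c$ and charge $\chi$'', and these $|\widehat G\times G|$ states are pairwise distinct; hence the associated GNS representations are pairwise inequivalent (indeed not even quasi-equivalent), giving (3).

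Item (4) is the one I expect to be the main obstacle, since it asserts that the explicit list $\{\rho^{\chi,c}\}$ is \emph{complete}. The strategy I would follow is: first establish Haag duality for cones $\mathcal{R}(\Lambda)'=\mathcal{R}(\Lambda^c)$ in the frustration-free ground state of the abelian quantum double (a structural property of this reference state, proved in~\cite{FiedlerN} analogously to the toric code case), so that any irreducible $\pi$ satisfying~\eqref{eqn:conecrit} is unitarily equivalent to $\pi_0\circ\rho$ for a genuine cone-localized transportable endomorphism $\rho$ of $\calA_\Gamma$. Then I would analyze the category of such endomorphisms with its fusion structure and identify it with $\operatorname{Rep}(\mathcal{D}(G))$, which has exactly $|\widehat G\times G|$ isomorphism classes of simple objects; since the $\rho^{\chi,c}$ already realize all of them and are pairwise inequivalent by (3), nothing is missed. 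The delicate points, both handled in~\cite{FiedlerN,Naaijkens11}, are (a) Haag duality for cones for this reference state, and (b) the structural classification of cone-localized endomorphisms, which ultimately rests on the local topological order of $\omega_0$ (local observables cannot detect which sector one is in, forcing the charge to be entirely captured by the $\mathcal{D}(G)$-data on a loop at infinity).
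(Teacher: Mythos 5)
First, note that the paper does not prove this proposition at all: it is quoted with the citations \cite{FiedlerN,Naaijkens11}, so there is no in-paper proof to compare against. Your outline for items (1) and (2) does follow the strategy of those references (irreducibility from $\pi_0$ irreducible and $\rho^{\chi,c}$ an automorphism; the cone criterion and site-independence from unitarity of the abelian ribbon operators, the gluing rule, and path-independence coming from closed ribbon operators fixing $\Omega_0$), and this part is essentially sound.

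Item (3), however, contains a genuine logical gap: from the fact that $\omega_x^{\chi,c}$ and $\omega_x^{\chi',c'}$ restrict to \emph{different} states on a fixed small loop $\gamma$ around $x$ you cannot conclude that the GNS representations are inequivalent, or even not quasi-equivalent. Local distinguishability of states says nothing about the superselection sector: e.g.\ $\omega_0$ and $\omega_0\circ\mathrm{Ad}(U)$ with $U$ local are distinct yet give unitarily equivalent representations, and more to the point a state with a \emph{pair} of conjugate charges created by a finite ribbon, one member sitting at $x$, has exactly the same restriction to a small loop around $x$ as $\omega_x^{\chi,c}$ while lying in the vacuum sector. The correct argument (as in \cite{Naaijkens11,FiedlerN}) measures the charge with closed ribbon/Wilson-loop operators of \emph{arbitrarily large} radius $R$ enclosing $x$ and shows that the limiting expectation value as $R\to\infty$ is constant on the folium of each $\pi_x^{\chi,c}$ (once $R$ exceeds the support of a local perturbation the value is unchanged), hence is an invariant of the equivalence class, and these invariants separate the $|\widehat{G}\times G|$ labels. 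Relatedly, your item (4) as sketched is close to circular: "identify the category of cone-localized transportable endomorphisms with $\operatorname{Rep}(\mathcal{D}(G))$" \emph{is} the completeness assertion to be proved, and local topological order alone does not deliver it. In \cite{FiedlerN} completeness rests on Haag duality for cones together with a Kosaki--Longo index computation for the inclusion $\mathcal{R}(\Lambda)\subset\mathcal{R}(\Lambda^c)'$, whose value $|G|^2$ bounds the total quantum dimension and therefore shows the explicitly constructed $\rho^{\chi,c}$ exhaust the sectors (and in the present paper the analogous completeness input is the energy-criterion classification of \cite{ChaNN}, Theorem~\ref{thm:aloctoloc}). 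Naming that quantitative ingredient, rather than appealing to a structural "classification of cone-localized endomorphisms", is what is missing from your sketch.
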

Thus, we say that the state $\omega_x^{\chi,c}$ is a single excitation state of type $(\chi,c)$.
Since these states satisfy the superselection criterion, and the abelian quantum double models satisfy Haag duality for cones, it is possible to analyze the sector theory, and one finds that it is given by the category $\operatorname{Rep}(\mathcal{D}(G))$.

The goal is to use the spectral flow dynamics to relate charged states of the perturbed model to the unperturbed one.
Hence as a first step we have to identify a suitable set $\mathcal{S}^{qd}$ of weak$^*$ limits of states.
To this end, note that elementary excitation have energy bounded by four as shown in~\eqref{eqn:ribenergy}. 
We consider the set of all states in the infinite systems that have the same energy threshold as follows.
Let $\mathcal{S}_L^{qd}$ be the set of mixtures of eigenstates of $H_L^{per}$ with energy in $[0, 4]$
and $\mathcal{S}^{qd}$ be the set of all weak$^*$ limit points of the sets $S^{qd}_L$.

The following classification theorem follows directly from Lemma 4.2 and Theorem 4.7 of~\cite{ChaNN}.
\begin{thm}\label{thm:aloctoloc}
The state $\omega_x^{\chi,c} \in \mathcal{S}^{qd}$ for all $(\chi,c) \in \widehat{G}\times G$ and sites $x$.
Furthermore, if $\omega \in \mathcal{S}^{qd}$ is a pure state then $\omega \cong \omega_x^{\chi,c}$ for some $(\chi,c) \in \widehat{G} \times G$.
Here $\widehat{G}$ is the group of characters of $G$.
\end{thm}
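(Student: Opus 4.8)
The plan is to treat the two halves of the statement separately: the inclusion $\omega_x^{\chi,c}\in\mathcal{S}^{qd}$ by an explicit finite-volume construction, and the converse by extracting a global energy bound from membership in $\mathcal{S}^{qd}$ and then invoking the cone selection criterion of Proposition~\ref{prop:singleexc}.

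For $\omega_x^{\chi,c}\in\mathcal{S}^{qd}$, I would build explicit approximants. Fix a half-infinite ribbon $\rho$ with $\partial_0\rho=x$, set $\rho_L=\rho\cap\Lambda_L$, and let $\Omega_L$ be a ground state vector of $H_L^{per}$. Then $F_{\rho_L}^{\chi,c}\Omega_L$ is an eigenvector of $H_L^{per}$: by the ribbon-energy identity~\eqref{eqn:ribenergy}, applied to the finite open ribbon $\rho_L$ with one endpoint at the interior site $x$ and the other on the boundary, its energy is the sum of two contributions each of the form $2-\delta_{\chi,\iota}-\delta_{c,e}$, hence lies in $[0,4]$. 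So the (normalized) vector states $\psi_L\equiv\langle F_{\rho_L}^{\chi,c}\Omega_L,\,\cdot\;F_{\rho_L}^{\chi,c}\Omega_L\rangle$ belong to $\mathcal{S}_L^{qd}$. It then remains to check $\psi_L\to\omega_x^{\chi,c}$ in the weak$^*$ topology: for a fixed local $A$ one has $\psi_L(A)=\omega_L(F_{\rho_L}^{\chi,c*}AF_{\rho_L}^{\chi,c})$, and combining the norm convergence in~\eqref{eqn:singleexcitation} with the fact (already used in defining $\omega_0$) that $\omega_L$ is eventually constant on local observables and converges to $\omega_0$, we get $\psi_L(A)\to\omega_0(\rho^{\chi,c}(A))=\omega_x^{\chi,c}(A)$; a density argument extends this to all of $\calA_\Gamma$.

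For the converse, let $\omega\in\mathcal{S}^{qd}$ be pure, so that $\pi_\omega$ is irreducible, and pick $\mu_L\in\mathcal{S}_L^{qd}$ with $\mu_L\to\omega$ along some net. The first step is to turn the energy threshold into a global constraint: for any finite set $W$ of stars and plaquettes and $L$ large enough, $\sum_{v\in W}\mu_L(I-A_v)+\sum_{f\in W}\mu_L(I-B_f)\le\mu_L(H_L^{per})\le 4$, and passing to the limit gives $\sum_v\omega(I-A_v)+\sum_f\omega(I-B_f)\le 4$. Thus $\omega$ agrees with the frustration-free ground state outside a bounded number of \emph{defect} sites. The second, and central, step is to promote this to the cone selection criterion~\eqref{eqn:conecrit}, i.e.\ $\pi_0\upharpoonright\calA_{\Lambda^c}\cong\pi_\omega\upharpoonright\calA_{\Lambda^c}$ for every cone $\Lambda\in\mathcal{C}$; this is precisely the content of Lemma 4.2 and Theorem 4.7 of~\cite{ChaNN}. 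The idea is to use finite ribbon operators to transport the finitely many defects along ribbons into a chosen cone and to annihilate pairs of canceling charge, using the commutation relations among ribbon operators, the energy identity~\eqref{eqn:ribenergy}, and the local topological order together with the exponential decay of correlations of $\omega_0$ (Proposition~\ref{prop:qdffgs}) to control the errors, leaving a single residual charge $(\chi,c)\in\widehat{G}\times G$; the unitaries implementing these moves witness the required equivalence away from the cone. Once~\eqref{eqn:conecrit} holds, Proposition~\ref{prop:singleexc}(4) applies directly: an irreducible representation satisfying the cone criterion is unitarily equivalent to some $\pi_x^{\chi,c}$, i.e.\ $\omega\cong\omega_x^{\chi,c}$.

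The main obstacle is the second step of the converse: showing that bounded energy together with purity forces the cone criterion. The local consequences of the energy bound are elementary, but globally one must rule out pure states whose excitation content is genuinely spread over unbounded regions in a way not equivalent to a single localized charge, and one must control the error accumulated when moving and fusing defects across large distances. This is exactly where the detailed structure of the abelian quantum double model (ribbon operators and their algebra, the energy identity, and the transportability of excitations) and the quantitative locality of $\omega_0$ enter, and it is the technical heart of~\cite{ChaNN} that we would invoke rather than reprove.
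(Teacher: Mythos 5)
Your proposal is correct and follows essentially the same route as the paper: the paper proves this theorem simply by citing Lemma 4.2 and Theorem 4.7 of \cite{ChaNN}, and your argument likewise delegates the technical heart (passing from the energy bound to the cone criterion, then applying Proposition~\ref{prop:singleexc}) to those same results. Your explicit finite-volume ribbon-state construction for the inclusion $\omega_x^{\chi,c}\in\mathcal{S}^{qd}$ is a reasonable sketch of what the cited Lemma 4.2 supplies, so nothing essentially different is introduced.
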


The energy threshold for $\mathcal{S}^{qd}$ was chosen such that each single excitation state is represented as a state in the  $\mathcal{S}^{qd}$.
In principle, this energy threshold could be increased without changing the sector theory:
the additional states one obtains can be obtained by local operations from states in $\mathcal{S}^{qd}$, and hence will be equivalent to them.
There is also a natural physical interpretation of the criterion.
Note that the most interesting states are the single excitation states, which are weak$^*$ limits of states with uniformly bounded energy.
These states could be interpreted as having a \emph{pair} of excitations, with one of them being moved off to infinity.
If limits of such states are in $\mathcal{S}^{qd}$, that means that the energy cannot exceed a fixed upper bound as we move one of them away.
In other words, the energy criterion means that we restrict to charges which are not confined.

To treat the perturbed and unperturbed model on equal footing, we have to modify the selection criterion~\eqref{eq:superselect} slightly.
Based on the results and discussion so far, we make the following definition.
\begin{defn}\label{defn:qdoubcat}
Define the $C^*$-category of endomorphisms $\Delta^{qd}$ with reference ground state $\omega_0$ as follows:
its objects are $*$-endomorphism $\rho \in \Delta^{qd}$  satisfying
\begin{enumerate}
	\item $\omega_0 \circ \rho \cong \omega$ for some $ \omega \in \mathcal{S}^{qd}$,
	\item $\rho$ is almost localized and transportable (see Definition~\ref{def:superselection}).
\end{enumerate}
We further assume that $\Delta^{qd}$ satisfies the assumptions of Theorem~\ref{thm:Asymptopia}.
The arrows are the intertwiner spaces $(\rho,\sigma)_{\pi_0}$ for each $ \rho,\sigma \in \Delta^{qd}$.
\end{defn}
It should be stressed that the first criterion is about equivalence of representations.
Applying local operators to a state does not change the equivalence class of the corresponding GNS representation.
Hence we can create as many anyon pairs from the ground state as we want without violating the ``energy criterion''.
Ultimately this is because we are only interested in the \emph{total} charge of the representation, which is not changed by local operations, since they always create pairs of conjugate charges.

Note that the irreducible representations satisfying the selection criterion~\eqref{eq:superselect} satisfy both criteria of Definition~\ref{defn:qdoubcat}, by Proposition~\ref{prop:singleexc} and Theorem~\ref{thm:aloctoloc}.
Hence the question is if relaxing strict localization to almost localization (together with the energy condition) will give rise to new sectors.
This turns out not to be the case, and we recover the original category, $\operatorname{Rep}(\mathcal{D}(G))$.
We refer to Appendix~\ref{app:braided} for more on equivalences of braided tensor categories.

The idea behind the proof is to construct a braided tensor functor from the category of the unperturbed model with strict localization to the category of almost localized sectors.
This category is known to be equivalent to $\operatorname{Rep}(\mathcal{D}(G))$.
For the two categories to be the same, this functor should be an equivalence of categories, and moreover be braided.

\begin{thm}\label{thm:stabqd}
The category $\Delta^{qd}$ is a braided tensor $C^*$-category
and is braided tensor equivalent to the category of finite dimensional representations of the quantum double of $G$, $\operatorname{Rep}(\mathcal{D}(G))$.
\end{thm}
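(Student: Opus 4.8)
The plan is to combine the abstract structure theorem of the previous section with the model-specific facts recalled in Propositions~\ref{prop:qdffgs} and~\ref{prop:singleexc} and Theorem~\ref{thm:aloctoloc}, and to compare $\Delta^{qd}$ with the category $\Delta^{\mathrm{loc}}$ of strictly cone-localized, transportable endomorphisms satisfying the selection criterion~\eqref{eq:superselect}, which by~\cite{FiedlerN,Naaijkens11} is braided tensor equivalent to $\operatorname{Rep}(\mathcal{D}(G))$. First I would dispose of the structural part: by Proposition~\ref{prop:qdsplit} the reference state $\omega_0$ satisfies the strong approximate split property for cones, and Definition~\ref{defn:qdoubcat} includes the standing assumption that the hypotheses of Theorem~\ref{thm:Asymptopia} hold for $\Delta^{qd}$; since $\nu=2$, Theorem~\ref{thm:biasymptopia} directly equips $\Delta^{qd}$ with the structure of a braided tensor $C^*$-category. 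To call it a genuine tensor $C^*$-category one also checks closure under subobjects and finite direct sums inside $\Delta$: the standard DHR constructions apply (isometric intertwiners are almost localized by Corollary~\ref{cor:intertwineraloc}, and Haag duality for cones~\cite{FiedlerN} realizes subobjects), and criterion~1 of Definition~\ref{defn:qdoubcat} survives because restricting or amplifying $\pi_0\circ\rho$ changes the sector content only within the finitely many classes labeled by $\widehat{G}\times G$; alternatively one passes to the Cauchy completion, which is harmless since $\operatorname{Rep}(\mathcal{D}(G))$ is already Cauchy complete.

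Next I would build the comparison functor. Since the abelian quantum double model satisfies Haag duality for cones~\cite{FiedlerN}, the Proposition preceding Corollary~\ref{cor:monoidalinclusion} together with Corollary~\ref{cor:monoidalinclusion} gives a full, faithful, monoidal functor $\iota\colon\Delta^{\mathrm{loc}}\to\Delta$, and by Proposition~\ref{prop:singleexc} and Theorem~\ref{thm:aloctoloc} its image lies in $\Delta^{qd}$ (the strictly localized sectors $\rho^{\chi,c}$ satisfy criterion~1). The one point needing attention here is that $\iota$ is braided. The braiding of Theorem~\ref{thm:biasymptopia} was constructed with auxiliary cones $\Lambda_\alpha^{\mathcal{U}}$ chosen to the left of $\Lambda_\beta^{\mathcal{V}}$ in exactly the sense of~\cite{Naaijkens11}; for strictly cone-localized $\rho,\sigma$ and under Haag duality one may take the bi-asymptopia sequences from the DHR transport construction, and then, by the same stabilization argument used in the proof of the Proposition preceding Corollary~\ref{cor:monoidalinclusion}, the limit in~\eqref{eqn:braid} is eventually constant and equals the braiding of~\cite{Naaijkens11}. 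Hence $\iota$ is a braided tensor functor.

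The heart of the argument is essential surjectivity. Given $\rho\in\Delta^{qd}$, after decomposing into subobjects it suffices to treat an irreducible $\rho$. Then $\pi_0\circ\rho$ is irreducible, so the cyclic subrepresentation it generates on $\Omega_0$ is all of $\calH_0$, whence $\omega_0\circ\rho$ is pure with GNS representation $\pi_0\circ\rho$. By criterion~1 of Definition~\ref{defn:qdoubcat}, $\omega_0\circ\rho\cong\omega$ for some pure $\omega\in\mathcal{S}^{qd}$, and Theorem~\ref{thm:aloctoloc} then yields $\omega\cong\omega_x^{\chi,c}=\omega_0\circ\rho^{\chi,c}$ for some $(\chi,c)\in\widehat{G}\times G$. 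Therefore $\pi_0\circ\rho\cong\pi_0\circ\rho^{\chi,c}$, i.e.\ there is a unitary $U\in(\rho,\rho^{\chi,c})_{\pi_0}=\operatorname{Hom}_{\Delta^{qd}}(\rho,\rho^{\chi,c})$, so $\rho\cong\iota(\rho^{\chi,c})$. A general object of $\Delta^{qd}$ is a finite direct sum of such irreducibles; choosing the corresponding direct sum $Y$ in $\Delta^{\mathrm{loc}}$ and using that a fully faithful functor carries the biproduct relations of $Y$ to biproduct relations in $\Delta^{qd}$, one gets $\rho\cong\iota(Y)$. Thus $\iota$ is essentially surjective, hence an equivalence of braided tensor $C^*$-categories, and composing with the equivalence $\Delta^{\mathrm{loc}}\simeq\operatorname{Rep}(\mathcal{D}(G))$ of~\cite{FiedlerN,Naaijkens11} gives the theorem.

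I expect the main obstacle to be precisely this essential surjectivity step: showing that relaxing strict localization to almost localization does not manufacture new sectors. This is exactly the place where the ``finite energy'' criterion of Definition~\ref{defn:qdoubcat} is indispensable, entering through the classification Theorem~\ref{thm:aloctoloc} of pure states of $\mathcal{S}^{qd}$ from~\cite{ChaNN}; without that input one cannot rule out extra irreducible objects. The remaining points — closure of $\Delta^{qd}$ under subobjects and direct sums, and the braided-ness of $\iota$ — are more routine, following the DHR template adapted to cone-localized charges and, for the braiding, the stabilization argument already used to compare the tensor products.
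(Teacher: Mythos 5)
Your proposal follows essentially the same route as the paper's proof: the braided tensor $C^*$-structure via Proposition~\ref{prop:qdffgs} and Theorem~\ref{thm:biasymptopia}, the full and faithful monoidal functor from the strictly localized category via Corollary~\ref{cor:monoidalinclusion}, essential surjectivity from the energy criterion combined with Theorem~\ref{thm:aloctoloc} and Proposition~\ref{prop:singleexc}, and braidedness of the functor by the Haag-duality stabilization argument identifying the limit in~\eqref{eqn:braid} with the braiding of~\cite{Naaijkens11}. The additional remarks on subobjects, direct sums, and Cauchy completion go slightly beyond what the paper records but do not change the argument.
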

\begin{proof}
From Proposition~\ref{prop:qdffgs}, the frustration-free ground state $\omega_0$ is a pure state and has the strong approximate split property.
By construction, we may apply Theorem~\ref{thm:biasymptopia} directly to $\Delta^{qd}$ to show that it can be given the structure of a braided tensor $C^*$-category.

We now show that $\Delta^{qd}$ is equivalent to the category constructed in Refs.~\cite{FiedlerN,Naaijkens11} (and hence equivalent to $\operatorname{Rep}(\mathcal{D}(G))$).
Write $\Delta$ for this category.
From Corollary~\ref{cor:monoidalinclusion} it follows that there is a full and faithful monoidal functor $F: \Delta \to \Delta^{qd}$.
It is also essentially surjective.
To prove this, let $\Lambda \in \mathcal{C}$ be a fixed convex cone.
Consider the superselection criterion selecting $*$-representations satisfying equation~\eqref{eqn:conecrit}.
Let $ \rho \in \Delta^{qd}$ be such that $\pi_0 \circ \rho$ is irreducible. 
Then, by Theorem~\ref{thm:aloctoloc}, we have that $\pi_0 \circ \rho \cong \pi_x^{\chi,c}$ for some site $x$ and $(\chi,c) \in \widehat{G}\times G$.
By Proposition~\ref{prop:singleexc}, $\pi_x^{\chi,c}$ satisfies the criterion~\eqref{eqn:conecrit} from which it follows that 
$\pi_0 \upharpoonright \calA_{\Lambda^c} \cong \pi_0 \circ \rho \upharpoonright \calA_{ \Lambda^c}$.
Recall that for each site $x$ of $\Lambda$ and $(\chi,c) \in \widehat{G} \times G$, the representation $\pi_x^{\chi,c}$ is given by $\pi_0 \circ \rho^{\chi,c}_x$
where the automorphisms $ \rho^{\chi,c}_x$ can be constructed from an infinite ribbon supported entirely in $\Lambda$.
It follows that $ \rho^{\chi,c}_x \in \Delta^{qd}$.
Therefore, the irreducible objects of $\Delta^{qd}$ and the irreducible representations satisfying~\eqref{eqn:conecrit} are exactly equal (up to equivalence).
It follows that $F$ is essentially surjective, and hence an equivalence of categories.

It remains to be shown that $F$ is braided. 
Let $\Lambda^{\mathcal{U}}_\alpha$ and $\Lambda^{\mathcal{V}}_\beta$ be as in Theorem~\ref{thm:biasymptopia}.
Suppose that $\rho$ and $\sigma$ are localized in some cone $\Lambda$ that is to right of $\Lambda_\alpha^{\mathcal{U}}$ and to the left of $\Lambda_\beta^{\mathcal{V}}$.
The general case can be obtained by transporting the sectors and using naturality of the braiding.
Write $\epsilon^\Delta_{\rho,\sigma}$ for the braiding of $\Delta$.
Let $U_m$, $V_n$, and the transported charges $\rho_m, \sigma_n$  be as in Theorem~\ref{thm:biasymptopia}.
From localization and Haag duality, it follows that $\sigma_n(U_m^*) = U_m^*$, and hence
\begin{equation}
	\label{eq:braidstrict}
	(V_n \otimes U_m)^*(U_m \otimes V_n) = V_n^* \rho(V_n) = \epsilon^\Delta_{\rho,\sigma}.
\end{equation}
Note that by Lemma 4.2 of~\cite{Naaijkens11} the braiding $\epsilon^\Delta_{\rho,\sigma}$ does not depend on the specific choice of charge transporters, but only on the relative position of $\rho_n$ and $\sigma$.
Hence equation~\eqref{eq:braidstrict} does not depend on $n$.
Taking the image under $F$ of equation~\eqref{eq:braidstrict}, it follows with equation~\eqref{eqn:braid} that $F(\epsilon^\Delta_{\rho,\sigma}) = \epsilon_{F(\rho),F(\sigma)}$.
The result then follows from the discussion in Appendix~\ref{app:braided} (see in particular Lemma~\ref{lem:eqbraidcat}, together with Theorem 6.3 of~\cite{Naaijkens11} and its generalization to finite abelian groups in \cite{FiedlerN}), and indeed we have a braided tensor equivalence of categories $\Delta^{qd} \ra \operatorname{Rep}(\mathcal{D}(G))$.
\end{proof}

The theorem shows that we might just as well use almost localized endomorphisms, even for the unperturbed model.
As argued earlier, almost localization is the appropriate notion for the perturbed model.
Hence, we now consider perturbations of the quantum double models with periodic boundary conditions of the form discussed in Section~\ref{ass:diff}.
Let $I_0 = [0,4]$ and $ I_1=[5,\infty)$.
Then, by stability of the spectral gap~\cite{BravyiHM}, for some $\epsilon>0$  and  all $ 0\leq s <\epsilon$ 
there are intervals $I_0(s)$ and $ I_1(s)$ with endpoint depending continuously on $s$ 
such that $I_0(0) = I_0$ and $ I_1(0) = I_1$, 
there is a $\gamma>0$ such that $ d( I_0(s), I_1(s) ) >\gamma$ and 
the spectrum of $H_L(s)$ splits into two disjoint sets: $\operatorname{spec}(H_L(s)) = \Sigma_L^0(s) \cup \Sigma_L^1(s)$ with 
$ \Sigma_L^0(s) \subset I_0(s)$ and $\Sigma_L^1(s) \subset I_L^1(s)$ for all $L$.
By equation~\eqref{eq:spectrallr}, for each $s$ the dynamics defined by $H_L(s)$ will satisfy a Lieb-Robinson bound.

Consider the set of elementary excitations for the perturbed quantum double model.
Naturally, these correspond to states with energy supported in the interval $I_0(s)$.
Let $\mathcal{S}_L^{qd}(s)$ be the set of mixtures of eigenstates with energy in $I_0(s)$
and $\mathcal{S}^{qd}(s)$ be the set of all weak$^*$ limit points of the sets $\mathcal{S}^{qd}_L(s)$.
Recall that by Theorem~\ref{thm:autoeq} the spectral flow dynamics $\alpha_s$ satisfies $\mathcal{S}^{qd}(s) = \mathcal{S}^{qd} \circ \alpha_s$.

Define the $C^*$-category $\Delta^{qd}(s)$ of $*$-endomorphisms with respect to the reference ground state $\omega_0 \circ \alpha_s$ as before.
That is, $\Delta^{qd}(s)$ has as objects all $\rho$ satisfying:
\begin{enumerate}
\item $\omega_0 \circ \alpha_s\circ   \rho  \cong \omega$ for some $ \omega \in \mathcal{S}(s)$, 
\item $\rho$ is almost localized and transportable with respect to $\omega_0 \circ \alpha_s$ and further satisfies the assumptions of Theorem~\ref{thm:biasymptopia}.
\end{enumerate}
The intertwiners are defined as before.

With this definition, we can now prove stability of the abelian quantum double models.
\begin{thm}\label{thm:stabqds}
The category $\Delta^{qd}(s)$ is a braided tensor $C^*$-category
and is braided tensor equivalent to the category of finite dimensional representations of the quantum double of $G$, $\operatorname{Rep}(\mathcal{D}(G))$.
\end{thm}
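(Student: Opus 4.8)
The plan is to realize $\Delta^{qd}(s)$ as the conjugate $\alpha_s^{-1}\circ\Delta^{qd}\circ\alpha_s$ of the unperturbed category by the spectral flow automorphism, and then combine the abstract stability statement Theorem~\ref{thm:stabsectorstructure} with the identification of the unperturbed category in Theorem~\ref{thm:stabqd}. First I would verify that the hypotheses of Theorem~\ref{thm:stabsectorstructure} hold for $\Delta=\Delta^{qd}$ with $\tau_t$ replaced by $\alpha_s$. By Definition~\ref{defn:qdoubcat}, $\Delta^{qd}$ is a semigroup of almost localized and transportable endomorphisms satisfying the superselection criterion of Definition~\ref{def:superselection} for the frustration-free ground state $\omega_0$, and by the standing assumption in that definition it also satisfies the hypotheses of Theorem~\ref{thm:biasymptopia}; that $\omega_0$ is pure and has the strong approximate split property for cones is Proposition~\ref{prop:qdffgs}. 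Finally, by the spectral flow construction recalled in Appendix~\ref{app:spectral}, and in particular the Lieb-Robinson bound~\eqref{eq:spectrallr}, the family $\alpha_s$ is a quasi-local dynamics in the sense of Section~\ref{sec:stability} for some $F_{bg}$-function with $g$ of the sub-exponential type (e.g. $g(r)=r/\ln^2 r$) covered by Assumption~\ref{asp1}, and this holds uniformly for $s$ in the interval where the perturbation keeps the gap open. Hence Theorem~\ref{thm:stabsectorstructure} applies and gives that $\alpha_s^{-1}\circ\Delta^{qd}\circ\alpha_s$, taken with reference state $\omega_0\circ\alpha_s$, is a braided tensor $C^*$-category that is braided tensor equivalent to $\Delta^{qd}$, which by Theorem~\ref{thm:stabqd} is braided tensor equivalent to $\operatorname{Rep}(\mathcal{D}(G))$.

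It then remains to identify $\Delta^{qd}(s)$ with $\alpha_s^{-1}\circ\Delta^{qd}\circ\alpha_s$. I would do this by checking that the functor $F_s$ given by $F_s(\rho_0)=\alpha_s^{-1}\circ\rho_0\circ\alpha_s$ on objects and $F_s(T)=T$ on intertwiners is an isomorphism of braided tensor $C^*$-categories $\Delta^{qd}\to\Delta^{qd}(s)$. Since $\alpha_s$ is an automorphism, $\rho_0\mapsto\alpha_s^{-1}\circ\rho_0\circ\alpha_s$ is a bijection of the endomorphism semigroup of $\calA_\Gamma$; by Lemma~\ref{lem:stabaloc} and the transportability argument in the proof of Theorem~\ref{thm:stabsectorstructure}, $\rho_0$ is almost localized and transportable with respect to $\omega_0$ if and only if its image is almost localized and transportable with respect to $\omega_0\circ\alpha_s$. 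For the energy criterion, note that $\omega_0\circ\alpha_s\circ(\alpha_s^{-1}\circ\rho_0\circ\alpha_s)=(\omega_0\circ\rho_0)\circ\alpha_s$; since post-composition with the automorphism $\alpha_s$ preserves the relation $\cong$ of unitary equivalence of GNS representations (in both directions), and $\mathcal{S}^{qd}(s)=\mathcal{S}^{qd}\circ\alpha_s$ by Theorem~\ref{thm:autoeq}, we get that $\omega_0\circ\alpha_s\circ F_s(\rho_0)\cong\omega$ for some $\omega\in\mathcal{S}^{qd}(s)$ precisely when $\omega_0\circ\rho_0\cong\omega'$ for some $\omega'\in\mathcal{S}^{qd}$. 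Thus $F_s$ is a bijection between the objects of $\Delta^{qd}$ and those of $\Delta^{qd}(s)$. On arrows the identification $(\rho_0,\sigma_0)_{\pi_0}=(F_s(\rho_0),F_s(\sigma_0))_{\pi_0\circ\alpha_s}$ is exactly~\eqref{eqn:stabintertwiner}, and that $F_s$ is a monoidal and braided functor is the content of the last part of the proof of Theorem~\ref{thm:stabsectorstructure}. Chaining the equivalences, $\Delta^{qd}(s)\cong\alpha_s^{-1}\circ\Delta^{qd}\circ\alpha_s$ is braided tensor equivalent to $\Delta^{qd}$, hence to $\operatorname{Rep}(\mathcal{D}(G))$, as claimed.

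The step I expect to be the main obstacle is the verification that the \emph{finite energy} criterion transports correctly along the spectral flow, i.e. the use of $\mathcal{S}^{qd}(s)=\mathcal{S}^{qd}\circ\alpha_s$ from Theorem~\ref{thm:autoeq} together with the stability of $\cong$ under post-composition by an automorphism; everything else is a direct citation of Theorems~\ref{thm:stabsectorstructure} and~\ref{thm:stabqd}. A secondary technical point that needs care is confirming that the spectral flow bounds place $\alpha_s$ in the class of quasi-local dynamics required by Section~\ref{sec:stability}, uniformly in $s$ over the relevant interval, which relies on the sub-exponential $F$-function estimates and Assumption~\ref{asp1} set up earlier in the paper.
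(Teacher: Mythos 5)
Your proposal is correct and follows the same overall strategy as the paper: conjugate $\Delta^{qd}$ by the spectral flow $\alpha_s$, invoke Theorem~\ref{thm:stabsectorstructure} (with the Lieb-Robinson bound~\eqref{eq:spectrallr} from Appendix~\ref{app:spectral} certifying that $\alpha_s$ is a quasi-local dynamics) to get a braided tensor equivalence with $\Delta^{qd}$, and then cite Theorem~\ref{thm:stabqd}. Where you differ is in how the identification $\Delta^{qd}(s)=\alpha_s^{-1}\circ\Delta^{qd}\circ\alpha_s$ is established. The paper proves the inclusion $\alpha_s^{-1}\circ\Delta^{qd}\circ\alpha_s\subset\Delta^{qd}(s)$ and then, for the converse, takes an \emph{irreducible} $\rho\in\Delta^{qd}(s)$, uses Theorem~\ref{thm:autoeq} to write the associated pure state as $\omega\circ\alpha_s$ with $\omega\in\mathcal{S}^{qd}$ pure, and invokes the classification of pure low-energy states (Theorem~\ref{thm:aloctoloc}, Proposition~\ref{prop:singleexc}) to conclude $\rho\cong\alpha_s^{-1}\circ\rho_x^{\chi,c}\circ\alpha_s$; this gives essential surjectivity. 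You instead argue a two-sided correspondence of objects by transporting each defining condition of Definition~\ref{defn:qdoubcat} under conjugation, using only $\mathcal{S}^{qd}(s)=\mathcal{S}^{qd}\circ\alpha_s$ and the invariance of $\cong$ under composition with an automorphism; this avoids re-invoking the state classification at this step (it is of course still used inside Theorem~\ref{thm:stabqd}) and does not require restricting to irreducibles. The one point you should make explicit is that the ``only if'' direction of your bijection claim applies Lemma~\ref{lem:stabaloc} to $\alpha_s^{-1}$ rather than $\alpha_s$: you need the \emph{reverse} spectral flow to be a quasi-local dynamics. This is true — the flow is a cocycle generated by the time-dependent interaction $\Psi$ with uniformly bounded $F_\Psi$-norm, so the reverse propagator obeys the same Lieb-Robinson bound — but it is silently assumed in your ``if and only if,'' whereas the paper's route through the classification theorem never needs it. With that observation added, your argument is a valid (and in the converse direction slightly more complete) alternative to the paper's proof.
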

\begin{proof}
By construction, $\Delta^{qd}(0) = \Delta^{qd}$ and thus is braided equivalent to $\operatorname{Rep}(\mathcal{D}(G))$ by Theorem~\ref{thm:stabqd}.

Now consider the $C^*$-category $\alpha_{s}^{-1} \circ \Delta^{qd} \circ  \alpha_s$ with reference state $\omega_0 \circ \alpha_s$.
Theorems~\ref{thm:stabsectorstructure} and~\ref{thm:autoeq} give $\alpha_{s}^{-1} \circ \Delta^{qd} \circ  \alpha_s \subset \Delta^{qd}(s)$.
We claim that $\Delta^{qd}(s) = \alpha_{s}^{-1} \circ \Delta^{qd}\circ  \alpha_s$.
For each irreducible $ \rho \in \Delta^{qd}(s)$ there is a pure state $\omega(s) \in \mathcal{S}(s)$ 
such that $ \omega_0 \circ \alpha_s \circ \rho \cong \omega(s)$.
By Theorem~\ref{thm:autoeq}, $\omega(s) = \omega \circ \alpha_s$ for $ \omega \in \mathcal{S}$.
The state $\omega$ must be a pure state since $\alpha_s$ is an automorphism.
By purity and Theorem~\ref{prop:singleexc}, $\omega \cong \omega^{\chi,c}_x$ for some single excitation state.
It follows that 
\begin{align}
\omega_0 \circ \alpha_s \circ \rho & \cong \omega(s) \\
& = \omega \circ \alpha_s\\
& \cong \omega^{\chi,c}_x \circ \alpha_s \\
& = \omega_0 \circ \alpha_s \circ \alpha_s^{-1} \circ \rho^{\chi,c}_x \circ \alpha_s,
\end{align}
leading to the equivalence $\rho \cong \alpha_s^{-1} \circ \rho^{\chi,c}_x \circ \alpha_s$
from which the claim follows.

By Theorem~\ref{thm:stabsectorstructure}, $\alpha_{s}^{-1} \circ \Delta^{qd}\circ  \alpha_s $ is braided equivalent to $\Delta^{qd}(0)$, which concludes the proof.
\end{proof}

\section{Concluding remarks}
We have presented a general framework to describe charges in infinite quantum spin systems that can be approximately localized in cones.
This is necessary if one wants to perturb, for example, frustration-free models of topologically ordered systems, such as the toric code.
Hence it is relevant for the classification of topological phases.
Using this framework we have shown that the full superselection structure of abelian quantum double models is invariant under perturbations of the dynamics that do not close the gap.
We conclude with a brief discussion of some open problems remaining that we leave to future work.

Theorem~\ref{thm:stabsectorstructure}, establishes stability under quasi-local deformations by, in a sense,
pushing forward the superselection structure from the base system to the deformed system.
An interesting question is whether one can compute a superselection structure for the deformed system independent of the base system.
If, for instance, we could establish the strong approximate split property independently at each deformed level then 
we could simply apply the construction as in Theorem~\ref{thm:asymptopia} and compare the results at each level.
The problem is closely related to the stability of the strong approximate split property in gapped phases.
More generally, in the unperturbed case von Neumann algebraic aspects play an important role, as discussed in the introduction.
It would be very useful to get better control of what happens with the von Neumann algebras after perturbing the dynamics, and hence changing the reference representation.

The almost localized and transportable superselection criterion can be applied quite generally to study the superselection structure of charges in any dimension.
However, the case of two dimensional systems is most interesting, because here the braiding can be a proper braiding, and not a symmetry.
That is, the charges can be genuine anyons.
It is an interesting open problem to find criteria in which this indeed happens.
More generally, there are many intriguing candidates for such models, for example the Levin-Wen string-net models~\cite{LevinW}.
But it is far from clear what the necessary conditions are to obtain a non-trivial superselection theory.
For example, given an arbitrary ground state, say, of a gapped local Hamiltonian, it is not at all clear if there even are any non-trivial almost localized and transportable endomorphisms.

Two dimensional topologically ordered systems are an interesting class for which we expect an interesting superselection theory.
It is unclear under what circumstances we can expect the resulting theory to be described by a modular tensor category.
On the other hand, an anyon theory is often defined by giving the data of a modular tensor category, see for example~\cite{Wang}.
In particular this would mean that there are only finitely many sectors, and it is expected that in topologically ordered models this indeed is the case~\cite{FlammiaHKK}. 
The analysis we presented here gives a way to obtain the braiding, once our assumptions hold true.
For the resulting category to be modular, this braiding should then be completely non-degenerate.
It is an interesting open problem to find sufficient modularity conditions for $\Delta$ determined entirely by the Hamiltonian.
Related results in this direction have been found in the case of localized and transportable endomorphisms~\cite{NaaijkensKL}, but these rely on the specific structure of strictly localized models, and appear to be difficult to verify in practice.

\appendix
\section{Lieb-Robinson bounds for cones}\label{sec:LRcones}
In this appendix we prove the Lieb-Robinson bound for observables localized in cones, Theorem~\ref{thm:LRcone}. This result is a variation of a result obtained by Schmitz~\cite{Schmitz}. Since this reference is not easily accessible, and this is an important technical tool for our results, we provide a proof here for the convenience of the reader. Throughout this appendix we assume that we are given a function $g: \mathbb{R}^{\geq 0} \to \mathbb{R}^{\geq 0}$ satisfying Assumption~\ref{asp1}.

\begin{lemma}\cite{Schmitz}\label{lem:doubImpInt}
	Suppose $g$ satisfies Assumption \ref{asp1} and let $F$ be an $\mathcal{F}$-function.
Define the following sets as $X = \Lambda_\alpha \in \mathcal{C}$ and $ Y_{\epsilon,n} = \left( \Lambda_{\alpha+ \epsilon} - n\right)^c$.
Then, there exists an affine function $\tilde{l}$ such that 
for all $0\leq \alpha < \pi $, $ 0< \epsilon< \pi - \alpha$ and $b>0$ 
\begin{equation}\label{eqn:LRdoubsum}
\sum_{x\in X}\sum_{ y\in Y_{\epsilon,n}} F_{bg} (d(x,y)) \leq C_\epsilon d(X,Y_{\epsilon,n})^{\tilde{l}(\nu)} e^{ - b g(d(X,Y_{\epsilon,n})\sin\epsilon )}
\end{equation}
where 
\begin{equation*}
n \sin(\alpha + \epsilon) \leq d(X,Y_{\epsilon,n}) \leq n \sin(\alpha + \epsilon)+2
\end{equation*}
and  $C_\epsilon$ is non-increasing in $\epsilon$ and only depends on $\nu$, $b$ and $\alpha$.
\end{lemma}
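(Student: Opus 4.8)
The plan is to reduce the double sum over the two infinite cone regions to a one-dimensional integral estimate that can be controlled using Assumption~\ref{asp1}. First I would fix the geometry: the key geometric fact is that if $x \in X = \Lambda_\alpha$ and $y \in Y_{\epsilon,n} = (\Lambda_{\alpha+\epsilon}-n)^c$, then the Euclidean distance $d(x,y)$ is bounded below by something proportional to the distance from $x$ to the boundary hyperplane of the shifted cone, and crucially the ``angular gap'' $\epsilon$ between the two cones forces $d(x,y) \geq c\,(\text{dist of }x\text{ to apex region})\sin\epsilon$ once $x$ is deep inside $X$. I would make this precise by a trigonometric computation: for a point $x$ at lattice-distance $t$ from the relevant boundary, the nearest point of $Y_{\epsilon,n}$ is at distance at least $t \sin\epsilon$ (this is where the $\sin\epsilon$ in the exponent comes from), and simultaneously the separation $d(X,Y_{\epsilon,n})$ itself satisfies the stated sandwich $n\sin(\alpha+\epsilon) \le d(X,Y_{\epsilon,n}) \le n\sin(\alpha+\epsilon)+2$ just by computing the distance from the apex-shifted cone boundary to the other cone.

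Next I would organize the double sum by ``shells.'' Write $d_0 = d(X,Y_{\epsilon,n})$. For each integer $k \geq 0$, consider the points $x \in X$ whose distance to $Y_{\epsilon,n}$ lies in $[d_0 + k, d_0 + k+1)$; because $\Gamma = \ZZ^\nu$ is a regular lattice and the cones are convex with fixed opening angle, the number of such $x$ in any bounded region grows only polynomially, and summing $F(d(x,y))$ over $y \in Y_{\epsilon,n}$ for fixed $x$ is bounded by $\|F\|_0$. The exponential factor $e^{-bg(d(x,y))}$ is, on this shell, bounded by $e^{-bg((d_0+k)\sin\epsilon)}$ after using the geometric lower bound $d(x,y) \ge (\text{something} \ge d_0+k)\sin\epsilon$ together with monotonicity of $g$. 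Combining, the contribution of shell $k$ is bounded by a polynomial in $(d_0+k)$ times $\|F\|_0 \, e^{-bg((d_0+k)\sin\epsilon)}$, and the polynomial degree is an affine function of $\nu$ (this is where $\tilde l(\nu)$ enters — one power from each of the $\nu$ lattice directions transverse to the separation, roughly).

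Finally I would sum over $k$: the sum $\sum_{k\ge 0}(d_0+k)^{m} e^{-bg((d_0+k)\sin\epsilon)}$ is comparable to the integral $\int_{d_0}^\infty r^m e^{-bg(r\sin\epsilon)}\,dr$, and a change of variables $u = r\sin\epsilon$ plus Assumption~\ref{asp1} (applied with the constant $b$ and exponent $m$) bounds this by $K_{b,m}\, d_0^{l(m)} e^{-bg(d_0\sin\epsilon)}$ up to a factor that is a negative power of $\sin\epsilon$; absorbing that into a constant $C_\epsilon$ that is non-increasing in $\epsilon$ (since smaller $\epsilon$ makes $\sin\epsilon$ smaller, hence the prefactor larger) and depends only on $\nu$, $b$, $\alpha$ gives exactly the claimed bound, with $\tilde l(\nu)$ the affine function obtained by composing $l$ with the shell-counting degree. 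The main obstacle I expect is the careful bookkeeping in the shell decomposition — getting the lattice point count on each shell to be genuinely polynomial with the right degree, uniformly in $n$ and in $\epsilon$ (for $\epsilon$ bounded away from $0$; for very small $\epsilon$ one just accepts a worse $C_\epsilon$), and making sure the $\sin\epsilon$ factors are tracked consistently between the geometric lower bound on $d(x,y)$ and the final integral substitution. The rest is routine once the geometry lemma is stated cleanly.
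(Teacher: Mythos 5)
Your proposal is correct in outline, but it takes a genuinely different route from the paper's proof. The paper fixes $y$ and sums over $x$ first: it splits $Y_{\epsilon,n}$ into the part lying in the polar cone $\Lambda_\alpha^*$ of $X$ (where $d(y,0)\leq d(x,y)$ lets one compare with sums over complements of balls, applying Assumption~\ref{asp1} twice) and the remaining ``sideways'' part, which is handled by passing to spherical coordinates for $y$, using the explicit formula $d(y,X)=r\sin(\phi-\alpha)$ and sub-additivity of $g$ to factor the exponential; the $\epsilon$-dependence of $C_\epsilon$ and the $\sin\epsilon$ loss in the exponent come from controlling the resulting $\phi$-integral, which in fact diverges as $\epsilon\to 0$. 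You instead slice $X$ into shells according to $d(x,Y_{\epsilon,n})$, kill the $y$-sum by uniform integrability ($\|F\|_0$) after pulling out the exponential at the minimal distance, and apply Assumption~\ref{asp1} once to the resulting one-dimensional sum. This is simpler (no polar-cone split, no spherical integral) and, if you use $d(x,y)\geq d_0+k$ directly rather than inserting $\sin\epsilon$ by hand, it even yields the slightly stronger exponent $e^{-bg(d_0)}$, with all the $\epsilon$-dependence pushed into the prefactor; the price is that your constant involves $\|F\|_0$ and, more importantly, that the whole argument rests on the shell-cardinality estimate you only sketch. That estimate is true and is best obtained not from ``convexity'' (which fails for $\alpha\geq\pi/2$; the paper handles that range by a separate symmetry argument) but by viewing everything from the shifted apex of $\Lambda_{\alpha+\epsilon}-n$: every $x\in X$ lies within angle $\alpha$ of the axis there while every $y\in Y_{\epsilon,n}$ lies at angle at least $\alpha+\epsilon$, so $d(x,Y_{\epsilon,n})\geq |x-p|\sin\epsilon$ with $p$ the shifted apex, whence shell $k$ is contained in a ball of radius $(d_0+k+1)/\sin\epsilon$ about $p$ and has at most $C(1+(d_0+k)/\sin\epsilon)^{\nu}$ points, uniformly in $n$; with that nailed down, your integral-comparison step via Assumption~\ref{asp1} (using monotonicity and sub-additivity of $g$ to pass between the sum and the integral) closes the proof.
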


\begin{proof}
Without loss of generality, suppose $\Lambda_{ \alpha  }$ is a cone based at the origin.
Suppose that  $0\leq \alpha < \pi/2$, that is, $\Lambda_{ \alpha  }$ is a convex cone.
Let $\Lambda^*_\alpha = - \Lambda_{\pi/2	 - \alpha}$ be the polar cone for $\Lambda_\alpha$.
Let $Y^I_{\epsilon,n} = Y_{\epsilon,n} \cap \Lambda^*_\alpha$ and $Y^{II}_{\epsilon,n} = Y_{\epsilon,n} \setminus Y_{\epsilon,n}^I$.
It follows that
\begin{equation*}
\sum_{x\in X}\sum_{ y\in Y_{\epsilon, n}} F_{bg}(d(x,y)) =   \sum_{y \in Y_{\epsilon,n}^I} \sum_{x\in X}F_{bg} (d(x,y)) + \sum_{y \in Y_{\epsilon,n}^{II}}  \sum_{x\in X}F_{bg} (d(x,y)).
\end{equation*}
We proceed by bounding the first sum.
By geometry, if $y \in C^*_\alpha$ and $ x \in C_\alpha$ then $ d(y,0) \leq d(x,y)$.
It follows that
\begin{align*}
\sum_{y \in Y^I_{\epsilon,n}} \sum_{x\in X}  F_{bg} (d(x,y)) 
& \leq  \sum_{ y \in Y^I_{\epsilon,n}} \sum_{x\in B_{y}^c(\abs{y} ) }  F_{bg}( d(x,y)))\\
& = \sum_{ y \in Y^I_{\epsilon,n}} \sum_{x\in B_{0}^c(\abs{y}) ) }  F_{bg}( \abs{x}) .
\end{align*}
Bounding the sums by integration and using the explicit form $F_{bg}(r) = e^{-bg(r)} F(r)$ along with Assumption~\ref{asp1} gives
\begin{align*}
\sum_{ y \in Y^I_{\epsilon,n}} \sum_{x\in B_{0}^c(\abs{y}) ) }  F_{bg}( \abs{x}) 
& \leq V  \sum_{ y \in Y^I_{\epsilon,n}} F(\abs{y}) \int_{\abs{y}}^\infty  d r \  r^{\nu-1} e^{- b g(r)} \\
& \leq  V K_{\nu - 1} \sum_{ y \in B_0^c( d(X,Y_{\epsilon,n}))} F(\abs{y}) \abs{y}^{l(\nu -1)} e^{ - b g(\abs{y}) }  \\
& \leq V^2 K_{\nu-1} F(d(X,Y_{\epsilon,n})) \int_{d(X,Y_{\epsilon,n})}^\infty dr  r^{l(\nu - 1) +\nu - 1} e^{-b g(r)}  \\
&\leq C_1  F(d(X,Y_{\epsilon,n}))   d(X,Y_{\epsilon,n})^{\tilde{l}(\nu)} e^{-b g(d(X,Y_{\epsilon,n}))},
\end{align*}
where the constant $V$ is proportional to the volume of the unit sphere in $\RR^\nu$,
$C_1 = V^2 K_{\nu-1}K_{l(\nu-1)+nu-1}$, and $\tilde{l}(\nu) = l(l(\nu - 1) + \nu-1)$.
One can compute that $ n \sin( \alpha + \epsilon) \leq  d(X,Y_{\epsilon,n}) \leq  n \sin( \alpha + \epsilon)+2 $.  
Thus,
\begin{equation}\label{eqn:boundI}
\sum_{y \in Y^I_{\epsilon,n}} \sum_{x\in X}  F_{bg} (d(x,y))  \leq C_1  F(d(X,Y_{\epsilon,n}))   d(X,Y_{\epsilon,n})^{\tilde{l}(\nu)} e^{-b g(d(X,Y_{\epsilon,n}))}.
\end{equation}

Next, we bound the second term in the sum.
For each $y \in Y_{\epsilon,n}^{II}$, the inclusion $X \subset B_y^c(d(y,X))$ holds.
It follows that
\begin{align*}
\sum_{y \in Y_{\epsilon,n}^{II}} \sum_{x \in X} F_{bg}( d(x,y)) & \leq \sum_{y \in Y_{\epsilon,n}^{II}} \sum_{x \in B^c_y(d(y,X) )} F_{bg}( d(x,y))\\
& =  \sum_{y \in Y_{\epsilon,n}^{II}} \sum_{x \in B^c_0(d(y,X) )} F_{bg}( \abs{x}).
\end{align*}

In spherical coordinates, label $y = (r, \phi_1, \phi_2 \ldots, \phi_{\nu -1}) \in \RR^\nu$
where $\phi_1$ labels the angle from the positive $x_1$-axis, $\phi_i \in [0,2 \pi)$ for $ 1 \leq i \leq \nu -2$ and $ \phi_{\nu-1} \in [0,\pi).$
For convenience we will denote $\phi \equiv \phi_1$.
If $y = (r, \phi, \ldots, \phi_{\nu -1}) \in Y_{\epsilon,n}^{II}$ then $\alpha+\epsilon < \phi \leq \pi/2 + \alpha$ 
and for a fixed $\phi$ the values of the radius $r$ range from $ r_{\phi} < r < \infty$, 
where 
\[  r_{\phi} \equiv \frac{ n \sin(\alpha + \epsilon)}{\sin(\phi - \alpha - \epsilon)}.\] 
For $y \in Y_{II}$ a simple calculation gives that $d(y, X) = r \sin(\phi - \alpha) $.
Bounding the sum by integration gives,
\begin{align*}
& \sum_{y \in Y_{\epsilon,n}^{II}} \sum_{x \in B^c_0(d(y,X) )} F_{bg}( d(x,0)) \\
& \qquad \qquad\leq V \sum_{y \in Y_{\epsilon,n}^{II}} F(d(y,X)) \int_{d(y,X)}^\infty d r  \  r^{\nu -1} e^{-b g(r)} \\
& \qquad \qquad\leq  V K_{\nu-1}\sum_{y \in Y_{\epsilon,n}^{II}} F(d(y,X)) r^{l(\nu-1)} e^{- b g(d(y,X))} \\
& \qquad \qquad\leq  V^2 K_{\nu-1} F(d(X,Y_{\epsilon,n})) \int_{\alpha+ \epsilon}^{\pi/2 + \alpha} d\phi \sin \phi
\int_{r_{\phi}}^\infty  dr \ r^{l(\nu -1) +\nu -1} e^{-b g( r \sin(\phi - \alpha))} \\
&\qquad \qquad\leq C_1 F(d(X,Y_{\epsilon,n})) \int_{\alpha+ \epsilon}^{\pi/2 + \alpha} d\phi \sin \phi \sin(\phi - \alpha)^{l(\nu-1) - \nu}
r_\phi^{l(l(\nu -1) +\nu -1)} e^{-b g( r_\phi \sin(\phi - \alpha))}.
\end{align*}
For all $ \phi \in (\alpha+ \epsilon, \pi/2 + \alpha)$, the bound $ d(X,Y_{\epsilon,n})\geq  r_{\phi} \geq (n-1)\sin(\alpha + \epsilon) + \frac{ \sin(\alpha+\epsilon)}{\sin(\phi  - \alpha - \epsilon )}$ holds. 
It follows from the properties of $g$ that
\begin{equation*}
e^{-b g(r_{\phi}\sin(\phi - \alpha))} \leq e^{ -b g( (n-1) \sin(\alpha + \epsilon) \sin(\phi - \alpha))}e^{- b g\left(\frac{ \sin(\alpha+\epsilon) \sin(\phi-\alpha)}{\sin(\phi  - \alpha - \epsilon )}\right) }.
\end{equation*}
Substituting back into the bound for $Y_{\epsilon,n}^{II}$ and simplifying gives
\begin{align}\label{eqn:boundII}
& \sum_{y \in Y_{\epsilon,n}^{II}} \sum_{x \in X} F_{bg}( d(x,y)) \\
& \qquad \qquad \leq 	C_1 F(d(X,Y_{\epsilon,n})) d(X,Y_{\epsilon,n})^{\tilde{l}(\nu)} e^{ -b g(d(X,Y_{\epsilon,n}) \sin\epsilon)}
\int_{0}^{\pi/2 - \epsilon} d\phi  \sin(\phi)^{-\nu} e^{- a \frac{ \sin(\alpha+\epsilon) \sin\epsilon}{\sin\phi} }
\end{align}
The $\phi$-integral is bounded for fixed $ 0< \epsilon < \pi/2$, is non-increasing in $\epsilon$, and only depends on $\nu$, $a$ and  $\alpha$.
Combining the bounds \eqref{eqn:boundI} and \eqref{eqn:boundII} gives the result.

The double sum in \eqref{eqn:LRdoubsum} is  symmetric in the interchange of $x$ and $y$
and the opening angle for $Y_{\epsilon,n}$ is $ \beta = \pi - (\alpha+ \epsilon)$.
Since $ 0\leq  \alpha \leq \pi/2 $ and $ 0<\epsilon< \pi/2 -\alpha $ this implies that $  \pi/2 < \beta < \pi$.
Thus, exchanging the roles of $ X$ and $ Y_{\epsilon, n}$  gives the result.
\end{proof}


\begin{rem}
	The double sum $\sum_{x\in X} \sum_{y\in Y_{\epsilon,n}} F_{bg} (d(x,y))$ is indeed divergent as $\epsilon \ra 0$ for all $n$ and $b>0$. 
	By comparing the sets $Y_{\epsilon, n}$ and $ Y_{0, n+2}$ one achieves the lower bound
	\begin{equation}
	\left\lfloor\frac{3 \tan \alpha \tan (\alpha+ \epsilon)}{\tan(\alpha+\epsilon) - \tan\alpha} \right\rfloor \sum_{j= 0 }^\infty F_{bg}( j+ n+3)
	\leq \sum_{x\in X} \sum_{y\in Y_{\epsilon,n}} F_{bg} (d(x,y)),
	\end{equation}
	where the term $\left\lfloor\frac{3 \tan \alpha \tan (\alpha+ \epsilon)}{\tan(\alpha+\epsilon) - \tan\alpha} \right\rfloor   = O\left(\frac{1}{\epsilon}\right)$ is a lower bound for the number of lattice points in $ Y_{\epsilon,n}\setminus Y_{0,n+2}$.	
	This agrees with the upper bound found in Theorem \ref{thm:LRcone} as the $\phi$-integral in \eqref{eqn:boundII} diverges as $\epsilon \ra 0$.
\end{rem}

\begin{cor}\label{cor:fsatpolydecay}
	Suppose $g$ satisfies Assumption \ref{asp1}.  
	Let $X$ and $ Y_{\epsilon,n}$ be defined as in Lemma \ref{lem:doubImpInt}.
	Then, for all $0\leq \alpha < \pi/2$, $ 0< \epsilon< \pi/2 - \alpha$, $b>0$ and $k \in \NN$ we have that
	\begin{equation}
	\lim_{n \ra \infty } \bigg[ n^{k} \sum_{x\in X} \sum_{y\in Y_{\epsilon,n}} F_{bg} (d(x,y)) \bigg] \ra 0.
	\end{equation}
\end{cor}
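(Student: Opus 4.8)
The plan is to obtain this as an immediate consequence of the double-sum bound of Lemma~\ref{lem:doubImpInt} together with the super-polynomial decay of $r \mapsto r^j e^{-bg(r)}$ recorded right after Assumption~\ref{asp1}.

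First I would abbreviate $d_n \equiv d(X, Y_{\epsilon,n})$ and recall from Lemma~\ref{lem:doubImpInt} that
\[
\sum_{x\in X}\sum_{y\in Y_{\epsilon,n}} F_{bg}(d(x,y)) \le C_\epsilon\, d_n^{\tilde l(\nu)}\, e^{-b g(d_n \sin\epsilon)}, \qquad n\sin(\alpha+\epsilon) \le d_n,
\]
where, crucially, $C_\epsilon$ and $\tilde l(\nu)$ do not depend on $n$. Since $\sin(\alpha+\epsilon) > 0$ for the admissible range of angles ($0 \le \alpha < \pi/2$, $0 < \epsilon < \pi/2 - \alpha$), the lower bound on $d_n$ gives $n \le d_n / \sin(\alpha+\epsilon)$, and hence
\[
n^k \sum_{x\in X}\sum_{y\in Y_{\epsilon,n}} F_{bg}(d(x,y)) \le \frac{C_\epsilon}{\sin(\alpha+\epsilon)^k}\, d_n^{\,k+\tilde l(\nu)}\, e^{-b g(d_n \sin\epsilon)}.
\]

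Next I would substitute $r_n \equiv d_n \sin\epsilon$; as $n \to \infty$ we have $d_n \to \infty$, hence $r_n \to \infty$, and writing $d_n = r_n/\sin\epsilon$ turns the right-hand side into a constant (depending on $\epsilon$, $k$, $\nu$ but not $n$) times $r_n^{\,k+\tilde l(\nu)}\, e^{-b g(r_n)}$. Fixing any integer $m \ge \max\{0,\,k+\tilde l(\nu)\}$, for all $n$ large enough that $r_n \ge 1$ we have $r_n^{\,k+\tilde l(\nu)} \le r_n^m$, and by the consequence of Assumption~\ref{asp1} noted in the excerpt, $\lim_{r\to\infty} r^m e^{-bg(r)} = 0$. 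Therefore the displayed upper bound tends to $0$, which proves the claim.

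I do not anticipate a real obstacle here, since this is a routine corollary: the only mild care needed is that $\tilde l(\nu)$ need not be a nonnegative integer, which is dealt with by dominating $r_n^{\,k+\tilde l(\nu)}$ by the integer power $r_n^m$ once $r_n \ge 1$, and that the constant $C_\epsilon$, though $\epsilon$-dependent, is independent of $n$ and so may be pulled outside the limit.
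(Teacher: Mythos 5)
Your proof is correct and follows essentially the same route as the paper: the paper's (one-line) proof likewise combines the bound of Lemma~\ref{lem:doubImpInt} with the super-polynomial decay $\lim_{r\ra\infty} r^{m} e^{-bg(r)}=0$ implied by Assumption~\ref{asp1}. Your write-up merely makes explicit the bookkeeping ($n \le d_n/\sin(\alpha+\epsilon)$, the substitution $r_n = d_n\sin\epsilon$, and dominating the possibly non-integer exponent), which the paper leaves to the reader.
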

\begin{proof}
	
	Comparing the bound in Lemma \ref{lem:doubImpInt} with the limit $\lim_{n \ra \infty} n^{k} e^{-b g(n)} = 0$
	for all $b>0$ and $k \in \NN$ gives the result.
\end{proof}

With these results Theorem~\ref{thm:LRcone} follows by substituting the bound found in Lemma~\ref{lem:doubImpInt} into the Lieb-Robinson bound~\eqref{eqn:qlbound}.

\section{Spectral flow}\label{app:spectral}
To study the perturbed model we apply the spectral flow theory~\cite{BachmannMNS}.
We recall the necessary assumptions, which limit the type of perturbations we can study, and the main result of this theory.
Let $\Gamma = \mathcal{B}$ be the bond set of the $\ZZ^\nu$ lattice.
We will be interested in the case $\nu = 2$, but the results in this section hold for any natural number $\nu$.
Note that there is a natural action of $\ZZ^\nu$ (and correspondingly on the bond set) by translations, inducing a group of automorphisms $x \mapsto T_x \in \operatorname{Aut}(\calA_\Gamma)$.

Consider a family of interactions $\Phi_s : \mathcal{P}_0(\Gamma) \ra \calA$ for $0\leq s \leq 1$.
These will typically be given by, say, a frustration-free Hamiltonian plus an $s$-dependent perturbation.
We will make the following assumptions.
\begin{assumption}[Differentiability]\label{ass:diff}
	The family is differentiable in the parameter $s$ and short-range such that
	for some $a, M>0$
	\begin{equation}
		\sup_{x,y \in \Gamma} e^{ a d(x,y)} \sum_{\substack{ X\subset \mathcal{P}_0(\Gamma):\\ x,y\in X} }\sup_s \| \Phi_s (X) \| + \abs{X} \| \partial_s \Phi_s (X) \| \leq M.
	\end{equation}
\end{assumption}

The following assumption allows us to split the spectrum in a ``low-energy'' part and the rest of the spectrum, separated by a gap.
\begin{assumption}[Spectral gap]\label{ass:gap}
	There exists an increasing and exhaustive sequence $\Lambda_n \in \mathcal{P}_0(\Gamma)$ such that the
	local Hamiltonians  $H_{\Lambda_n}(s) = \sum_{X \subset \Lambda_n} \Phi_s(X)$ have a uniform spectral gap. 
	Let $\Sigma_\Lambda(s) \equiv \operatorname{spec}(H_\Lambda(s))$ and assume that the spectrum decomposes into two
	non-empty disjoint  sets, $\Sigma_\Lambda(s) =\Sigma_\Lambda^{0}(s)  \cup \Sigma_\Lambda^{1}(s) $ and there are disjoint intervals $I_i(s) = [a_i(s), b_i(s)]$ for $i = 0,1$  
	such that $a_i(s), b_i(s)$ are continuous functions, $a_0(s) \leq  b_0(s) < a_1(s) \leq b_1(s)$ and
	$\Sigma_\Lambda^i(s) \subset I_i(s)$.
	The spectrum is gapped in the sense that 
	there exists $\gamma>0$ with $ d( \Sigma_{\Lambda_n}^1(s),\Sigma_{\Lambda_n}^0(s) )\geq \gamma >0$
	where $\gamma$ is uniform  in $s\in [0,1]$ and $n$.
\end{assumption}

\begin{assumption}[Lieb-Robinson bound]\label{ass:lrb}
	Consider the finite volume dynamics for $\Lambda \in \mathcal{P}_0(\Gamma)$, given by $\tau_t^{H_\Lambda(s)} (A) = e^{i t H_\Lambda(s)} Ae^{-i t H_\Lambda(s)}$.  We  assume there exist $a>0$ and $ v_a, K_a >0$ such that
	the following exponential Lieb-Robinson bound holds:
	\[  \left\|\left[\tau_t^{H_\Lambda(s)}(A), B\right]\right\| \leq K_a \|A\| \|B\| e^{v_a t} \sum_{x \in X}\sum_{y\in Y} e^{-a d(x,y)}. \]
\end{assumption}

\begin{assumption}[Translation invariance]\label{ass:transinv}
	We assume the perturbation is translation invariant, that is, $\Phi_s(T_{x}(X)) = T_x( \Phi_s(X))$ for all $X\subset \Gamma$ and $x \in \ZZ^\nu$.
\end{assumption}

The idea is to view $s \mapsto \Phi_s$ as an $s$-dependent ``time''-evolution, which can be smeared against a test function $w_\gamma$.
More precisely, let $w_\gamma \in L^1(\RR)$ be a function satisfying:
\begin{enumerate}
	\item $w_\gamma$ is real-valued and $ \int dt w_\gamma(t) = 1$,
	\item the Fourier transform $\widehat{w}_\gamma$ is supported in the interval $[-\gamma, \gamma]$.
\end{enumerate}
For the existence of such a function see Lemma 2.6 of~\cite{BachmannMNS}.

Consider, for a finite subset $\Lambda \subset \Gamma$, the self-adjoint operator
\begin{equation}
D_\Lambda(s) = \int_{-\infty}^{\infty} dt w_\gamma(t) \int_0^t du e^{i  u H_\Lambda(s)} H'_\Lambda(s) e^{-i u H_\Lambda(s)}.
\end{equation}
This operator will be the generator of the spectral flow, which allows one to relate the spectral subspaces $\Sigma^0_\Lambda(s)$ with each other.
More precisely:
\begin{prop}[Spectral flow~\cite{BachmannMNS}]\label{prop:spectralflow}
	There is a norm-continuous family of unitaries $U_\Lambda(s)$ such that the spectral projections $P_\Lambda(s)$ onto the subset $\Sigma_\Lambda^0(s)$ are given by
	\begin{equation}
	P_\Lambda(s) = U_\Lambda(s) P_\Lambda(0) U_\Lambda(s)^*. 
	\end{equation} 
	The unitary family is given by the unique solution to
	$-i \frac{d}{ds} U_\Lambda(s) = D_\Lambda(s) U_\Lambda(s)$ and $U_\Lambda(0) = I$.
\end{prop}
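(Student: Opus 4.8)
The plan is to first \emph{construct} the family $U_\Lambda(s)$ by solving the stated ODE, and then \emph{verify} the intertwining relation $P_\Lambda(s) = U_\Lambda(s)P_\Lambda(0)U_\Lambda(s)^*$ by differentiation. Since $\Lambda$ is finite the whole argument takes place inside the finite-dimensional algebra $\calA_\Lambda$, so there are no functional-analytic subtleties. First I would observe that $D_\Lambda(s)$ is a well-defined bounded self-adjoint operator and that $s \mapsto D_\Lambda(s)$ is norm-continuous: the integrand $e^{iuH_\Lambda(s)}H'_\Lambda(s)e^{-iuH_\Lambda(s)}$ is self-adjoint of norm $\|H'_\Lambda(s)\|$, the double integral converges because $w_\gamma$ can be chosen (Lemma~2.6 of~\cite{BachmannMNS}) with $\int_{\RR}|t|\,|w_\gamma(t)|\,dt < \infty$, and continuity in $s$ follows from Assumption~\ref{ass:diff} (differentiability of $s\mapsto\Phi_s$, hence of $H_\Lambda(s)$ and $H'_\Lambda(s)$) together with dominated convergence. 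The linear ODE $\frac{d}{ds}U_\Lambda(s) = iD_\Lambda(s)U_\Lambda(s)$, $U_\Lambda(0) = I$, then has a unique norm-continuous ($C^1$) solution by the Banach-space Picard--Lindel\"of theorem, and since $iD_\Lambda(s)$ is skew-adjoint one checks $\frac{d}{ds}(U_\Lambda(s)^*U_\Lambda(s)) = 0$, so $U_\Lambda(s)$ is unitary for all $s$.

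The heart of the matter is the identity
\begin{equation}
	\frac{d}{ds}P_\Lambda(s) = i\,[D_\Lambda(s),\,P_\Lambda(s)].
\end{equation}
I would prove this by diagonalising $H_\Lambda(s)$: writing $H_\Lambda(s)\ket{a} = E_a\ket{a}$ one has $\bra{a}D_\Lambda(s)\ket{b} = \bra{a}H'_\Lambda(s)\ket{b}\,g(E_a-E_b)$, where $g(\omega) = \int_{\RR}w_\gamma(t)\int_0^t e^{iu\omega}\,du\,dt$. For $\omega\neq 0$ this is $g(\omega) = (\widehat{w}_\gamma(-\omega)-1)/(i\omega)$; since $\int w_\gamma = 1$ forces $\widehat{w}_\gamma(0)=1$ and $\widehat{w}_\gamma$ is supported in $[-\gamma,\gamma]$, for any pair of eigenvectors lying in different spectral patches (so $|E_a-E_b|\geq\gamma$ by Assumption~\ref{ass:gap}) one gets $g(E_a-E_b) = i/(E_a-E_b)$. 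A short computation using the Riesz formula $P_\Lambda(s) = \frac{1}{2\pi i}\oint(z-H_\Lambda(s))^{-1}\,dz$ (or first-order perturbation theory) then shows that $\frac{d}{ds}P_\Lambda(s)$ and $i[D_\Lambda(s),P_\Lambda(s)]$ have the \emph{same} off-diagonal matrix elements between the two spectral subspaces, and that both operators are purely off-diagonal --- for $\frac{d}{ds}P_\Lambda(s)$ because $P^2=P$ forces $P(\tfrac{dP}{ds})P = 0$, for the commutator because $P[D_\Lambda(s),P_\Lambda(s)]P = 0$ --- hence they coincide.

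Granting this, the proof finishes in one line. Set $V(s) \equiv U_\Lambda(s)^*P_\Lambda(s)U_\Lambda(s)$. Using $\frac{d}{ds}U_\Lambda(s) = iD_\Lambda(s)U_\Lambda(s)$ and $\frac{d}{ds}U_\Lambda(s)^* = -iU_\Lambda(s)^*D_\Lambda(s)$,
\begin{equation}
	\frac{d}{ds}V(s) = U_\Lambda(s)^*\Big(\tfrac{d}{ds}P_\Lambda(s) - i[D_\Lambda(s),P_\Lambda(s)]\Big)U_\Lambda(s) = 0,
\end{equation}
so $V(s) = V(0) = P_\Lambda(0)$, i.e.\ $P_\Lambda(s) = U_\Lambda(s)P_\Lambda(0)U_\Lambda(s)^*$.

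I expect the only genuinely non-trivial step to be the identity $\frac{d}{ds}P_\Lambda(s) = i[D_\Lambda(s),P_\Lambda(s)]$: this is precisely where the specific choice of smearing kernel $w_\gamma$ (unit mass, Fourier support in $[-\gamma,\gamma]$, matched to the uniform spectral gap $\gamma$) is used, making $D_\Lambda(s)$ exactly the generator of the flow of the spectral projection rather than an arbitrary self-adjoint deformation. Everything else --- existence and uniqueness for a finite-dimensional linear ODE, unitarity, and the final differentiation --- is routine. The quasi-local, system-size-independent Lieb--Robinson estimates on the $U_\Lambda(s)$, which are what let the construction survive the thermodynamic limit, are a separate issue and are not needed for this proposition.
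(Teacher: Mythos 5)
Your argument is correct; the paper itself states this proposition without proof, quoting it from~\cite{BachmannMNS}, and your construction (solve the ODE for $U_\Lambda(s)$, then establish $\tfrac{d}{ds}P_\Lambda(s)=i[D_\Lambda(s),P_\Lambda(s)]$ via the spectral/Fourier computation exploiting $\widehat{w}_\gamma(0)=1$ and $\operatorname{supp}\widehat{w}_\gamma\subset[-\gamma,\gamma]$ against the uniform gap, and conclude by differentiating $U_\Lambda(s)^*P_\Lambda(s)U_\Lambda(s)$) is exactly the standard proof given in that reference. No gaps worth noting beyond routine technicalities (e.g.\ $\int |t|\,|w_\gamma(t)|\,dt<\infty$ and local choice of the Riesz contour), which you already address.
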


The infinite volume limit is obtained analogously to the transition from local to infinite volume dynamics.
Define the \emph{spectral flow} automorphism family as
\begin{equation}
\alpha_s^\Lambda(A) = U^*_\Lambda(s) A U_\Lambda(s).
\end{equation}
Since the $U_\Lambda(s)$ are solutions to a time-dependent Schr{\"o}dinger's equation, 
the spectral flow will have a cocycle property.
More precisely, solving the equation will yield a unitary propagator $U_\Lambda(s_1, s_2)$, with $U_\Lambda(s) \equiv U_\Lambda(s,0)$. 
The corresponding automorphisms satisfy
\begin{equation}
\alpha^\Lambda_{s_1,s_2}(A) = \alpha_{s,s_2}^\Lambda\circ \alpha_{s_1,s}^\Lambda (A).
\end{equation}
Again, we use the convention that $\alpha^\Lambda_s \equiv \alpha_{s,0}^\Lambda$.

To specify an interaction norm on the perturbations,
let $g(x)$ denote the function given by equation~\eqref{eqn:subexpg} for $p=2$. That is,
\begin{equation}
g(r) = \left\{ \begin{array}{ll}
\frac{r}{\ln^p(r)} & \mbox{ if }  x> e^p\\
\left( \frac{e}{p}\right)^p & \mbox{ if } x\leq e^p
\end{array}. \right. 
\end{equation}
Furthermore, let $F$ be an $\mathcal{F}$-function such that there exists $ 0< \delta < 2/7$ with
\begin{equation}
\sup_{r\geq 1} \frac{e^{-\delta g(r)}}{F(r)} < \infty.
\end{equation}
Define the $\mathcal{F}$-function 
\begin{equation}
F_\Psi(r) \equiv e^{ - \mu g\left(\frac{\gamma}{8 v_a} r\right)} F\left( \frac{\gamma}{8 v_a} r \right).
\end{equation}
The main property of the spectral flow is that it can be seen as a quasi-local dynamics connecting the spectral subspaces.

\begin{thm}\cite{BachmannMNS}
	Let Assumptions~\ref{ass:diff},~\ref{ass:gap}, and~\ref{ass:lrb} hold.
	Then there exists a time-dependent volume-dependent interaction $\Psi_\Lambda(s)$ 
	such that 
	\begin{equation}\label{eqn:specflowFnorm}
	\| \Psi_\Lambda\|_{F_\Psi} \equiv 
	\sup_{x,y \in \Gamma} \frac{1}{F_\Psi(d(x,y))}\sum_{\substack{Z \subset \Lambda \\x,y \in Z}} \sup_{0\leq s \leq 1} \| \Psi_\Lambda(Z,s)\| <\infty
	\end{equation} 
	and 
	\begin{equation}
	D_\Lambda(s) = \sum_{Z \subset \Lambda } \Psi_\Lambda(Z,s).
	\end{equation}
	Further, if $\Lambda_n \in \mathcal{P}_0(\Gamma)$ is an increasing and exhausting sequence as given in Assumption~\ref{ass:gap} and there exist positive constants $ b_1, b_2$ and $ p$ such that 
	\begin{equation}
	d(\Lambda_m, \Lambda_n^c) \geq b_1 (n-m), \quad \text{and} \quad \abs{\Lambda_n}\leq b_2 n^p,
	\end{equation}
	then \eqref{eqn:specflowFnorm} holds uniformly in $\Lambda_n$.
\end{thm}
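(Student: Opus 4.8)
The plan is to exhibit $\Psi_\Lambda$ explicitly as a telescoping sum of conditional expectations of $D_\Lambda(s)$ onto nested balls, and then to estimate its $F_\Psi$-norm using the Lieb-Robinson bound of Assumption~\ref{ass:lrb} together with the decay of the weight function $w_\gamma$. Since $H'_\Lambda(s) = \partial_s H_\Lambda(s) = \sum_{X \subset \Lambda} \partial_s \Phi_s(X)$, linearity of the generator gives $D_\Lambda(s) = \sum_{X \subset \Lambda} \Delta_X^\Lambda(s)$ with $\Delta_X^\Lambda(s) \equiv \int_{-\infty}^\infty dt\, w_\gamma(t) \int_0^t du\, \tau_u^{H_\Lambda(s)}(\partial_s \Phi_s(X))$, where $\tau_u^{H_\Lambda(s)}$ is the finite-volume Heisenberg dynamics. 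First I would localize each $\Delta_X^\Lambda(s)$: writing $B_X(r)$ for the set of lattice points within distance $r$ of $X$ and $\langle\,\cdot\,\rangle_{B_X(r)^c\cap\Lambda}$ for the normalized-Haar conditional expectation onto $\calA_{B_X(r)\cap\Lambda}$ (as in~\eqref{eqn:partialtrace}), I would set $\Psi_\Lambda(Z,s)$ to be the sum over $X$ and radii $r$ with $B_X(r)\cap\Lambda = Z$ of the annular differences $\langle\Delta_X^\Lambda(s)\rangle_{B_X(r)^c\cap\Lambda} - \langle\Delta_X^\Lambda(s)\rangle_{B_X(r-1)^c\cap\Lambda}$, with the convention that the largest-radius term is the full conditional expectation. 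By construction $\Psi_\Lambda(Z,s) \in \calA_Z$, is self-adjoint, and $\sum_{Z\subset\Lambda}\Psi_\Lambda(Z,s) = D_\Lambda(s)$, the last equality holding because the conditional expectations converge in norm to the identity on $\calA_\Lambda$.

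The main estimate is an $\Lambda$-uniform bound on $\big\| \Delta_X^\Lambda(s) - \langle \Delta_X^\Lambda(s)\rangle_{B_X(r)^c\cap\Lambda}\big\|$. I would split the $t$-integral at $|t| = t_r \equiv \tfrac{\gamma}{8 v_a} r$ (the value matching the exponent appearing in $F_\Psi$). For $|t| \le t_r$, apply the standard Lieb-Robinson localization lemma with Assumption~\ref{ass:lrb}: $\|\tau_u^{H_\Lambda(s)}(\partial_s\Phi_s(X)) - \langle\tau_u^{H_\Lambda(s)}(\partial_s\Phi_s(X))\rangle_{B_X(r)^c}\| \le C\, \abs{X}\, \|\partial_s\Phi_s(X)\|\, e^{-a(r - v_a|u|)}$, which after integration against $|w_\gamma(t)|$ over $|t|\le t_r$ contributes at most a polynomial in $r$ times $e^{-ar/2}$. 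For $|t| > t_r$, use the crude bound $2\|\partial_s\Phi_s(X)\| \int_{|t|>t_r}|t|\,|w_\gamma(t)|\,dt$ for the part of $\Delta_X^\Lambda(s)$ coming from that region; the key input here is that $w_\gamma$, having compactly supported Fourier transform, can be chosen to decay subexponentially at the rate governed by $g$, i.e. $|w_\gamma(t)| \le C e^{-\mu g(t)}$ (the Ingham-type construction, Lemma 2.6 of~\cite{BachmannMNS}), so this tail is bounded by a polynomial times $e^{-\mu g(t_r)} = e^{-\mu g(\gamma r/(8 v_a))}$. Combining the two regimes, and keeping the $F$-decay factor $F(\gamma r/(8v_a))$ inherited from the Lieb-Robinson $\mathcal{F}$-function, yields $\|\Delta_X^\Lambda(s) - \langle\Delta_X^\Lambda(s)\rangle_{B_X(r)^c\cap\Lambda}\| \le C\, \abs{X}\, \|\partial_s\Phi_s(X)\|\, F_\Psi(r)$, hence $\|\Psi_\Lambda(B_X(r)\cap\Lambda,s)\| \le C\, \abs{X}\, \|\partial_s\Phi_s(X)\|\, F_\Psi(r)$ after summing the finitely many $X$ giving the same $Z$.

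With this in hand, the $F_\Psi$-norm estimate is a geometric sum. Fix $x,y\in\Gamma$. A set $Z = B_X(r)\cap\Lambda$ can contain both $x$ and $y$ only if $x,y\in X$, or $X$ meets a ball of radius $r$ around $x$ (resp. $y$) with $r \gtrsim d(x,y)$; summing $\|\Psi_\Lambda(Z,s)\|$ over all such $(X,r)$, using the summability of $F_\Psi$ as an $\mathcal{F}$-function (so $\sum_r F_\Psi(r)<\infty$ and $F_\Psi$-type convolution sums are finite with constant $C_{F_\Psi}$), the bound $\sum_{X\ni x,y}\abs{X}\,\|\partial_s\Phi_s(X)\| \le M$ from Assumption~\ref{ass:diff}, and the convolution constant to handle the case ``$X$ near $x$, $Z\ni y$'', gives $\sum_{Z\ni x,y}\sup_s\|\Psi_\Lambda(Z,s)\| \le C' M\, F_\Psi(d(x,y))$, so $\|\Psi_\Lambda\|_{F_\Psi}<\infty$. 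For the uniformity statement, I would note that neither the Lieb-Robinson bound (Assumption~\ref{ass:lrb}) nor the constant $M$ (Assumption~\ref{ass:diff}) depends on $\Lambda$; the only $\Lambda$-dependence in the argument comes from the truncation $B_X(r)\cap\Lambda$ when $X$ lies within distance $r$ of $\Lambda_n^c$, and from accumulation of such boundary contributions. The hypotheses $d(\Lambda_m,\Lambda_n^c)\ge b_1(n-m)$ and $\abs{\Lambda_n}\le b_2 n^p$ control exactly this: combined with the subexponential decay of $F_\Psi$, they ensure that the protruding annuli carry weight $e^{-\mu g(\cdot)}$ that defeats the polynomial number of boundary sites, so all the sums above converge with a bound independent of $n$.

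I expect the delicate point to be the main estimate of the second paragraph — matching the Lieb-Robinson spreading velocity $v_a$ against the decay of $w_\gamma$ so that the resulting rate is \emph{exactly} the one encoded in $F_\Psi(r) = e^{-\mu g(\gamma r/(8v_a))}F(\gamma r/(8v_a))$. This hinges on the Ingham-type construction of a normalized $w_\gamma$ with Fourier support in $[-\gamma,\gamma]$ and essentially optimal subexponential tail controlled by $g(r)=r/\ln^2 r$, and on a careful choice of the split point $t_r$; the bookkeeping of the constants $\mu$, $\delta$ and the numerical factor in $\gamma/(8v_a)$ is what makes this more than a routine application of Lieb-Robinson bounds.
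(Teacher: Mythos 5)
The paper itself does not prove this theorem---it is quoted from \cite{BachmannMNS} as an external input---so there is no internal proof to compare against; your reconstruction follows essentially the argument of that reference: expand $D_\Lambda(s)$ over the interaction terms of $H'_\Lambda(s)$, localize each piece by a telescoping sum of conditional expectations over growing balls around the support, and bound each annular difference by splitting the $t$-integral so that the Lieb-Robinson bound of Assumption~\ref{ass:lrb} controls short times while the Ingham-type subexponential decay of $w_\gamma$ controls the tail, after which the $F_\Psi$-norm bound and its uniformity follow from Assumption~\ref{ass:diff} and the $\mathcal{F}$-function convolution machinery. The only caveat is the one you flag yourself: the split point and the scaling of $w_\gamma$ (whose decay is naturally in the variable $\gamma t$, not $t$) must be arranged so that the short-time exponent stays strictly negative for every $\gamma$ and the tail reproduces exactly $e^{-\mu g(\gamma r/(8 v_a))}$; this is constant bookkeeping rather than a structural gap.
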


We can use similar techniques as for local dynamics to argue the following.
There exists a strongly continuous cocycle of automorphisms $\alpha_s$ on $\calA_\Gamma$ as the strong limit of $ \alpha_s^{\Lambda_n}$ 
for an increasing and exhausting sequence $\Lambda_n$.
In addition, for all $A\in \calA_X$ and $ B\in \calA_Y$ such that $ d(X,Y) >0$ and $ 0\leq s \leq 1$ we have that 
\begin{equation}\label{eq:spectrallr}
\| [\alpha_s(A), B] \| \leq 2 \frac{\|A\| \|B\|}{C_{F_\Psi}} \left(e^{2 \|\Psi\|_{F_\Psi} \abs{s}} -1\right)\sum_{x \in X}\sum_{y\in Y} F_\Psi(d(x,y)) .
\end{equation}
That is, the spectral flow obeys a Lieb-Robinson bound.

Finally, translation invariance is preserved if the local interactions are translation invariant: 
\begin{lemma}\label{lem:transinv}
	If $\Phi_s$ is a translation invariant interaction for all $s$ then $ \alpha_s \circ T_x = T_x \circ \alpha_s$ for all $s$ and $x\in\Gamma$.
\end{lemma}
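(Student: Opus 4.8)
The plan is to establish the covariance relation first at the level of the finite-volume spectral flow and then transport it to the infinite-volume limit. First I would note that translation invariance of the interaction, $\Phi_s(T_x(Z)) = T_x(\Phi_s(Z))$, gives immediately $T_x(H_\Lambda(s)) = \sum_{Z\subset\Lambda} \Phi_s(T_x(Z)) = H_{\Lambda+x}(s)$ for every finite $\Lambda$ and every $x\in\ZZ^\nu$; differentiating in $s$, and using that $T_x$ is an $s$-independent bounded map, also $T_x(H'_\Lambda(s)) = H'_{\Lambda+x}(s)$. Since $T_x$ is an isometric $*$-automorphism it conjugates $e^{iuH_\Lambda(s)}$ to $e^{iuH_{\Lambda+x}(s)}$ and commutes with the (Bochner) integrals appearing in the definition of the spectral-flow generator, whence $T_x(D_\Lambda(s)) = D_{\Lambda+x}(s)$.

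Next I would invoke uniqueness of the solution to the defining equation $-i\frac{d}{ds}U_\Lambda(s) = D_\Lambda(s)U_\Lambda(s)$, $U_\Lambda(0)=I$, from Proposition~\ref{prop:spectralflow}: applying the continuous automorphism $T_x$ to this equation and using the previous step shows that $T_x(U_\Lambda(s))$ solves the same initial value problem with generator $D_{\Lambda+x}(s)$, hence $T_x(U_\Lambda(s)) = U_{\Lambda+x}(s)$. Combined with the explicit formula $\alpha_s^\Lambda(A) = U_\Lambda(s)^* A U_\Lambda(s)$ this yields the finite-volume covariance $T_x\circ\alpha_s^\Lambda = \alpha_s^{\Lambda+x}\circ T_x$.

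Finally I would pass to the limit. Fix an increasing, exhausting sequence $\Lambda_n\in\mathcal{P}_0(\Gamma)$ as in Assumption~\ref{ass:gap}; then $\Lambda_n+x$ is again increasing and exhausting, and cofinal with $\Lambda_n$. Since the infinite-volume spectral flow is built, exactly as for local dynamics, as a norm limit on $\calA_{loc}$ controlled by the Lieb-Robinson estimate~\eqref{eq:spectrallr} and is therefore independent of the chosen exhausting sequence, one has $\alpha_s = \lim_n \alpha_s^{\Lambda_n} = \lim_n \alpha_s^{\Lambda_n+x}$. Because $T_x$ is isometric, hence norm-continuous, it can be pulled through the limit, and the finite-volume identity of the previous step gives $\alpha_s(T_x(A)) = \lim_n T_x(\alpha_s^{\Lambda_n}(A)) = T_x(\alpha_s(A))$ for all $A\in\calA_\Gamma$, i.e.\ $\alpha_s\circ T_x = T_x\circ\alpha_s$.

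The argument is essentially bookkeeping, and I expect the only delicate point to be this last step — specifically, confirming that the infinite-volume limit is genuinely independent of the exhausting sequence, so that $\Lambda_n$ and $\Lambda_n+x$ produce the same automorphism. This is the same fact already used (implicitly) in the construction of $\alpha_s$ from the $\alpha_s^{\Lambda_n}$, and follows from the uniform Lieb-Robinson bound~\eqref{eq:spectrallr}; no additional input beyond the results recalled in this appendix is needed.
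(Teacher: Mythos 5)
Your proposal is correct and follows essentially the same route as the paper: the key step in both is the generator covariance $T_x(D_\Lambda(s)) = D_{\Lambda+x}(s)$, which the paper states and then concludes with ``since $D_\Lambda(s)$ is the generator of $\alpha_s$, the result follows.'' You merely spell out the implicit bookkeeping (uniqueness of the propagator, finite-volume conjugation, and independence of the exhausting sequence in the thermodynamic limit) that the paper leaves to the reader.
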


\begin{proof}
	This follows from
	\begin{align*}
	T_x( D_\Lambda(s)) & = T_x\left(\int_{-\infty}^{\infty} dt w_\gamma(t) \int_0^t du e^{i  u H_\Lambda(s)} H'_\Lambda(s) e^{-i u H_\Lambda(s)} \right)\\ 
	& = \int_{-\infty}^{\infty} dt w_\gamma(t) \int_0^t du e^{i  u H_{\Lambda+x}(s)} H'_{\Lambda+x}(s) e^{-i u H_{\Lambda+x}(s)}\\
	& = D_{\Lambda+x}(s).
	\end{align*}
	Since $D_\Lambda(s)$ is the generator of $\alpha_s$, the result follows.
\end{proof}

The most important property of the $\alpha_s$ is that they can be used to relate the low energy states along the path $\Phi_s$, similar to Proposition~\ref{prop:spectralflow}.
Let $\mathcal{S}_\Lambda(s)$ denote the set of states on $\calA_{\Lambda}$ that are mixtures of eigenstates with energy in the interval $I_0(s)$.
Define $\mathcal{S}(s)$ as the set of all weak$^*$ limit points of states that, when restricted to $\calA_\Lambda$, are elements of the sets $\mathcal{S}_\Lambda(s)$.
Then they are related by an automorphism of $\calA_\Gamma$:
\begin{thm}[\cite{BachmannMNS}]\label{thm:autoeq}
	The set $\mathcal{S}(s)$ is automorphically equivalent to $\mathcal{S}(0)$ for all $s \in [0,1]$.
	In particular, 
	\begin{equation}
	\mathcal{S}(s) = \mathcal{S}(0 ) \circ \alpha_s
	\end{equation}
	where $\alpha_s$ is the spectral flow on $\calA$.
\end{thm}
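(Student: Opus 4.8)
The plan is to prove the identity $\mathcal{S}(s)=\mathcal{S}(0)\circ\alpha_s$ first at finite volume, where it is a direct consequence of the defining property of the spectral flow in Proposition~\ref{prop:spectralflow}, and then transfer it to the thermodynamic limit by a weak$^*$‑limiting argument. Fix $\Lambda=\Lambda_n$ from Assumption~\ref{ass:gap}. Since $U_\Lambda(s)\in\calA_\Lambda=\calB(\calH_\Lambda)$, the map $\alpha^\Lambda_s=\mathrm{Ad}(U_\Lambda(s)^*)$ restricts to a $*$‑automorphism of $\calA_\Lambda$, and a state $\rho$ with density matrix $D_\rho$ has $\rho\circ\alpha^\Lambda_s$ represented by $U_\Lambda(s)D_\rho U_\Lambda(s)^*$. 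Reading $\mathcal{S}_\Lambda(s)$ as the states whose density matrix is supported on the low‑energy spectral subspace $\mathrm{ran}\,P_\Lambda(s)$ (i.e. lies in $P_\Lambda(s)\calB(\calH_\Lambda)P_\Lambda(s)$), Proposition~\ref{prop:spectralflow} gives $P_\Lambda(s)=U_\Lambda(s)P_\Lambda(0)U_\Lambda(s)^*$, so conjugation by $U_\Lambda(s)$ carries density matrices supported on $\mathrm{ran}\,P_\Lambda(0)$ bijectively onto those supported on $\mathrm{ran}\,P_\Lambda(s)$. Hence $\mathcal{S}_\Lambda(s)=\mathcal{S}_\Lambda(0)\circ\alpha^\Lambda_s$, and interchanging the roles of $\alpha^\Lambda_s$ and $(\alpha^\Lambda_s)^{-1}$ also $\mathcal{S}_\Lambda(0)=\mathcal{S}_\Lambda(s)\circ(\alpha^\Lambda_s)^{-1}$.

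Next I would pass to the limit. Let $\omega\in\mathcal{S}(s)$ be witnessed by states $\omega_n$ on $\calA_\Gamma$ with $\omega_n\to\omega$ weak$^*$ and $\omega_n\upharpoonright\calA_{\Lambda_n}\in\mathcal{S}_{\Lambda_n}(s)$, and set $\psi_n\equiv\omega_n\circ(\alpha^{\Lambda_n}_s)^{-1}$. Since $\alpha^{\Lambda_n}_s$ preserves $\calA_{\Lambda_n}$, the restriction of $\psi_n$ is $(\omega_n\upharpoonright\calA_{\Lambda_n})\circ(\alpha^{\Lambda_n}_s)^{-1}\in\mathcal{S}_{\Lambda_n}(0)$ by the finite‑volume step. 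It then remains to show $\psi_n\to\omega\circ\alpha_s^{-1}$ weak$^*$: for $A\in\calA_\Gamma$,
\[
\bigl|\psi_n(A)-\omega(\alpha_s^{-1}(A))\bigr|\le\bigl\|(\alpha^{\Lambda_n}_s)^{-1}(A)-\alpha_s^{-1}(A)\bigr\|+\bigl|(\omega_n-\omega)(\alpha_s^{-1}(A))\bigr|.
\]
The second term vanishes because $\omega_n\to\omega$ weak$^*$ and $\alpha_s^{-1}(A)$ is a fixed element of $\calA_\Gamma$; the first vanishes because $(\alpha^{\Lambda_n}_s)^{-1}\to\alpha_s^{-1}$ strongly, which in turn follows from $\alpha^{\Lambda_n}_s\to\alpha_s$ strongly: writing $B=\alpha_s^{-1}(A)$ and using that automorphisms are isometric, $\|(\alpha^{\Lambda_n}_s)^{-1}(\alpha_s(B))-B\|=\|(\alpha^{\Lambda_n}_s)^{-1}(\alpha_s(B)-\alpha^{\Lambda_n}_s(B))\|=\|\alpha_s(B)-\alpha^{\Lambda_n}_s(B)\|\to0$. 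Thus $\omega\circ\alpha_s^{-1}\in\mathcal{S}(0)$, i.e. $\omega\in\mathcal{S}(0)\circ\alpha_s$. The reverse inclusion is identical with the parameters $s$ and $0$ interchanged: from $\psi\in\mathcal{S}(0)$ with witness $\psi_n$, the sequence $\omega_n\equiv\psi_n\circ\alpha^{\Lambda_n}_s$ witnesses $\psi\circ\alpha_s\in\mathcal{S}(s)$. This yields $\mathcal{S}(s)=\mathcal{S}(0)\circ\alpha_s$, and since $\alpha_s\in\operatorname{Aut}(\calA_\Gamma)$ this is exactly the asserted automorphic equivalence.

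The finite‑volume identity is essentially a reformulation of Proposition~\ref{prop:spectralflow}; the real content, and the main obstacle, is the limiting step. The subtlety is that $\cdot\circ\alpha_s$ does \emph{not} send the condition ``$\omega\upharpoonright\calA_{\Lambda_n}\in\mathcal{S}_{\Lambda_n}(0)$'' to ``$\in\mathcal{S}_{\Lambda_n}(s)$'', because on $\calA_{\Lambda_n}$ the automorphism $\alpha_s$ does not coincide with the cutoff automorphism $\alpha^{\Lambda_n}_s$; one must instead conjugate by $\alpha^{\Lambda_n}_s$ at level $n$ and then absorb the discrepancy, which is controlled precisely by strong convergence $\alpha^{\Lambda_n}_s\to\alpha_s$ (and, if one wants quantitative rates, by the Lieb–Robinson bound~\eqref{eq:spectrallr}). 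A secondary, purely bookkeeping issue is to fix the convention for $\mathcal{S}(s)$ — weak$^*$ limit points of sequences indexed along the fixed exhausting sequence $\Lambda_n$, rather than closures or nets — and to confirm that the sequences $\psi_n$ and $\omega_n$ constructed above are admissible witnesses; with the natural reading this is routine. One should also state at the outset which precise meaning of ``mixtures of eigenstates with energy in $I_0(s)$'' is used, since the argument relies on $\mathcal{S}_\Lambda(s)$ being conjugation‑covariant under $U_\Lambda(s)$, i.e. on the spectral‑subspace description above.
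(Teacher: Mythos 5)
Your argument is correct and is essentially the proof of \cite{BachmannMNS}: note that the paper does not prove Theorem~\ref{thm:autoeq} itself but imports it, and the intended argument is exactly your two steps --- finite-volume covariance of the low-energy subspaces under the spectral flow unitaries via Proposition~\ref{prop:spectralflow}, followed by transfer to the thermodynamic limit using the strong convergence $\alpha_s^{\Lambda_n}\to\alpha_s$ asserted just before the theorem. The point you flag at the end is essential rather than cosmetic: under the literal reading of $\mathcal{S}_\Lambda(s)$ as convex combinations of eigenstates, the finite-volume identity $\mathcal{S}_\Lambda(s)=\mathcal{S}_\Lambda(0)\circ\alpha_s^\Lambda$ would fail, since conjugation by $U_\Lambda(s)$ maps the spectral subspace of $H_\Lambda(0)$ onto that of $H_\Lambda(s)$ but does not map eigenstates to eigenstates; the definition that makes the statement true (and the one used in \cite{BachmannMNS}) is that the density matrix is supported in $\operatorname{ran}P_\Lambda(s)$, which is precisely the reading you adopt. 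The only remaining bookkeeping is that $\mathcal{S}(s)$ consists of weak$^*$ limit \emph{points}, so one works with subnets of the witnessing sequences; your estimates go through verbatim along any subnet with $n\to\infty$, and only strong convergence of the finite-volume flows is needed, not the quantitative Lieb--Robinson rate.
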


This result not only allows use to relate the different ground states, but is also relevant for the superselection sectors.
Indeed, a single anyon state can be obtained by creating a pair of excitations in the ground state, and sending one of them off to infinity.
In the unperturbed model, such two-anyon states have energy 2.
Hence, the single anyon state is in $\mathcal{S}(0)$ if we choose the interval $I_0(s)$ to be large enough.
This is a key observation in the stability proof in Section~\ref{sec:stability}.
A closer investigation of such weak$^*$ limits can be found in~\cite{ChaNN}.

\section{Equivalence of braided tensor categories}\label{app:braided}
Recall that a \emph{monoidal} (or \emph{tensor}) category is a category $\mathcal{C}$ together with a bifunctor $\otimes$ and an object $\iota$ that acts as the tensor unit.
Furthermore, one has to specify natural isomorphisms, encoding for example $(\rho \otimes \sigma) \otimes \tau \cong \rho \otimes (\sigma \otimes \tau)$, satisfying some coherence conditions.
All categories that we will be considering are \emph{strict}, which means that these isomorphisms are in fact identities.
To get a \emph{braided} tensor category one in addition has to specify natural isomorphisms $c_{\rho,\sigma} : \rho \otimes \sigma \to \sigma \otimes \rho$ satisfying the braid relations.
The main definitions and results on such categories can be found in~\cite{EtingofGNO,Mueger}.

In addition to the braided tensor structure, the categories that we consider here have a lot more structure.
In particular, they are modular.
We do not need much of this structure, but we do need that the Hom-sets are linear spaces over $\mathbb{C}$, with all relevant operations (e.g.\ composition of morphisms) being linear as well.
Furthermore we need that there is a contravariant anti-linear functor $*$ that is also an involution, and we require that the rest of the structure is compatible with the $*$-operation in the natural way.
In our case the linear structure and $*$-operation are induced by the underlying $C^*$-algebra containing the intertwiners (and in fact this also induces a norm).
Using this it is easy to see, for example, that an isomorphism in the category is the same thing as a unitary in the $C^*$-algebra.

The main result of this paper is that the sector theory of the perturbed theory is the same as that of the unperturbed theory when viewed as braided categories.
We say that two braided monoidal categories $\mathcal{C}$ and $\mathcal{D}$ are \emph{equivalent as braided monoidal categories} if there exists a braided functor $F : \mathcal{C} \to \mathcal{D}$ that is full, faithful and essentially surjective (and hence an equivalence of categories), see for example~\cite[Sect. 8.1]{EtingofGNO}.
One could also consider a more symmetric definition.
In particular, one can assume the existence of $F: \mathcal{C} \to \mathcal{D}$ braided, and $G: \mathcal{D} \to \mathcal{C}$ braided, such that there are monoidal natural isomorphisms $\alpha : \operatorname{id}_\mathcal{C} \Rightarrow G \circ F$ and $\beta: F \circ G \Rightarrow \operatorname{id}_{\mathcal{D}}$.
As is well-known the latter condition implies that $F$ is essentially surjective and full and faithful. 
Hence the second definition implies the first.

The converse is less obvious, and indeed it is not immediately clear that the first definition really defines an equivalence relation.
That is, given $F$ a braided monoidal functor that is also an equivalence of categories, is it possible to find $G$ as above?
The answer is yes.
As a remark, we note that it is essential that $F$ is braided.
Indeed, there are categories that are equivalent as monoidal categories, but \emph{not} as braided monoidal categories~\cite{EtinghofG}.

To find $G$, note first that $F$ full, faithful, and essentially surjective.
Hence $F$ is an equivalence of categories, and there exists $G : \mathcal{D} \to \mathcal{C}$ and natural transformations $\alpha : \operatorname{id}_\mathcal{C} \Rightarrow G \circ F$ and $\beta: F \circ G \Rightarrow \operatorname{id}_{\mathcal{D}}$.
Hence we have to show that $G$ is braided monoidal, and the natural isomorphisms are in fact natural \emph{monoidal} isomorphisms.
The latter follows from Proposition I.4.4.2 of~\cite{SaavedraRivano}, which also tells us that $G$ is a monoidal functor.
Since $G$ is full and faithful, it follows that there is a unique braiding $c'_{X,Y}$ on $\mathcal{D}$ such that $G(c'_{X,Y}) = c^{\,\mathcal{C}}_{G(X),G(Y)}$, where $c^{\,\mathcal{C}}_{X,Y}$ is the given braiding on $\mathcal{C}$.
Hence it remains to be shown that $c'_{X,Y} = c^{\,\mathcal{D}}_{X,Y}$.
But this follows from Proposition I.4.4.3.2 of~\cite{SaavedraRivano}.\footnote{A braiding is called a \emph{commutativity constraint} in~\cite{SaavedraRivano}. The author only considers the case where the braiding is a \emph{symmetry}, but for the result we need this extra condition is not necessary.}

The discussion can be summarized as the following lemma:
\begin{lemma}
	\label{lem:eqbraidcat}
	Let $F : \mathcal{C} \to \mathcal{D}$ be a braided monoidal functor between braided monoidal categories. If $F$ is an equivalence categories, then there is a functor $G: \mathcal{D} \to \mathcal{C}$ braided and monoidal, together with natural monoidal transformations $\alpha : \operatorname{id}_\mathcal{C} \Rightarrow G \circ F$ and $\beta : F \circ G \Rightarrow \operatorname{id}_\mathcal{D}$.
\end{lemma}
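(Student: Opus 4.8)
The plan is to proceed exactly along the lines sketched in the discussion preceding the lemma, filling in the categorical bookkeeping in three steps. First I would forget the monoidal and braided structure entirely and invoke the standard fact that a functor which is full, faithful and essentially surjective is part of an equivalence: choosing for each object $Y \in \mathcal{D}$ an object $G(Y) \in \mathcal{C}$ together with an isomorphism $\beta_Y : F(G(Y)) \to Y$ (possible by essential surjectivity), the assignment $Y \mapsto G(Y)$ extends uniquely to a functor $G : \mathcal{D} \to \mathcal{C}$ for which $\beta : F \circ G \Rightarrow \operatorname{id}_{\mathcal{D}}$ is a natural isomorphism, and fullness and faithfulness of $F$ then produce a natural isomorphism $\alpha : \operatorname{id}_{\mathcal{C}} \Rightarrow G \circ F$ satisfying the triangle identities. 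This uses nothing beyond ordinary category theory.

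Second, I would upgrade this equivalence to a monoidal one. Since $F$ is a strong monoidal functor which is an equivalence, its quasi-inverse $G$ acquires a canonical monoidal structure: the constraints $G(X) \otimes G(Y) \to G(X \otimes Y)$ are obtained by transporting those of $F$ across $\alpha$ and $\beta$, these data satisfy the coherence axioms, and $\alpha$ and $\beta$ become \emph{monoidal} natural isomorphisms. This is precisely the content of Proposition I.4.4.2 of~\cite{SaavedraRivano}. Because all categories here are strict the associativity and unit constraints are identities, which shortens the verification slightly, but strictness is not really needed at this step.

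Third, and this is where the braiding enters, I would check that $G$ is braided. Since $G$ is full and faithful there is, for each pair $X,Y \in \mathcal{D}$, a unique morphism $c'_{X,Y} : X \otimes Y \to Y \otimes X$ with $G(c'_{X,Y}) = c^{\,\mathcal{C}}_{G(X),G(Y)}$, where $c^{\,\mathcal{C}}$ is the braiding of $\mathcal{C}$; naturality of $c^{\,\mathcal{C}}$ and of the monoidal constraints transfers through $G$ to show that $c'$ is a braiding on $\mathcal{D}$ compatible with the transported monoidal structure. It then remains to see that $c'$ equals the given braiding $c^{\,\mathcal{D}}$ of $\mathcal{D}$: using that $F$ is braided together with the monoidality of $\alpha$ and $\beta$, one translates the braided-functor identity for $F$ into $G(c^{\,\mathcal{D}}_{X,Y}) = c^{\,\mathcal{C}}_{G(X),G(Y)}$, whence $c'_{X,Y} = c^{\,\mathcal{D}}_{X,Y}$ by uniqueness. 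This comparison is Proposition I.4.4.3.2 of~\cite{SaavedraRivano}, and the one point I would be careful about — the main (minor) obstacle — is that the cited proposition is stated for symmetric monoidal categories, so I would confirm that its proof uses only the hexagon axioms and naturality and never the symmetry relation $c_{Y,X}\circ c_{X,Y} = \operatorname{id}$; as noted in the footnote this is indeed the case. Combining the three steps yields $G$ braided monoidal with the monoidal natural isomorphisms $\alpha$ and $\beta$, which is the assertion.
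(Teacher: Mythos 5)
Your proposal is correct and follows essentially the same route as the paper: obtain a quasi-inverse $G$ with natural isomorphisms $\alpha,\beta$, upgrade it to a monoidal equivalence via Proposition I.4.4.2 of~\cite{SaavedraRivano}, and then compare the transported braiding with the given one via Proposition I.4.4.3.2, noting (as the paper does in its footnote) that the symmetry condition there is not actually used. Your extra remark about checking that only the hexagon axioms and naturality enter is exactly the point the paper relegates to that footnote, so there is nothing to add.
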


\bibliographystyle{abbrv}
\bibliography{alendbib}

\begin{thebibliography}{10}

\bibitem{AlickiFH}
R.~Alicki, M.~Fannes, and M.~Horodecki.
\newblock A statistical mechanics view on {K}itaev's proposal for quantum
  memories.
\newblock {\em J. Phys. A}, 40(24):6451--6467, 2007.

\bibitem{ArakiH}
H.~Araki and R.~Haag.
\newblock Collision cross sections in terms of local observables.
\newblock {\em Commun. Math. Phys.}, 4(2):77--91, 1967.

\bibitem{ArovasSW}
D.~Arovas, J.~R. Schrieffer, and F.~Wilczek.
\newblock Fractional statistics and the quantum hall effect.
\newblock {\em Phys. Rev. Lett.}, 53:722--723, 1984.

\bibitem{BachmannDN}
S.~Bachmann, W.~Dybalski, and P.~Naaijkens.
\newblock Lieb-{R}obinson bounds, {A}rveson spectrum and {H}aag-{R}uelle
  scattering theory for gapped quantum spin systems.
\newblock {\em Ann. Henri Poincar\'e}, 17(7):1737--1791, 2016.

\bibitem{BachmannMNS}
S.~Bachmann, S.~Michalakis, B.~Nachtergaele, and R.~Sims.
\newblock Automorphic equivalence within gapped phases of quantum lattice
  systems.
\newblock {\em Commun. Math. Phys.}, 309(3):835--871, 2012.

\bibitem{BachmannO}
S.~Bachmann and Y.~Ogata.
\newblock {$C^1$}-classification of gapped parent {H}amiltonians of quantum
  spin chains.
\newblock {\em Commun. Math. Phys.}, 338(3):1011--1042, 2015.

\bibitem{BakalovKirillov}
B.~Bakalov and A.~Kirillov, Jr.
\newblock {\em Lectures on tensor categories and modular functors}, volume~21
  of {\em University Lecture Series}.
\newblock American Mathematical Society, Providence, RI, 2001.

\bibitem{BombinMD}
H.~Bombin and M.~A. Martin-Delgado.
\newblock {A Family of Non-Abelian Kitaev Models on a Lattice: Topological
  Confinement and Condensation}.
\newblock {\em Phys. Rev.}, B78:115421, 2008.

\bibitem{BratteliR1}
O.~Bratteli and D.~W. Robinson.
\newblock {\em Operator algebras and quantum statistical mechanics. 1}.
\newblock Texts and Monographs in Physics. Springer-Verlag, New York, second
  edition, 1987.

\bibitem{BratteliR2}
O.~Bratteli and D.~W. Robinson.
\newblock {\em Operator algebras and quantum statistical mechanics. 2}.
\newblock Texts and Monographs in Physics. Springer-Verlag, Berlin, second
  edition, 1997.

\bibitem{BravyiHastingS}
S.~{Bravyi} and M.~B. {Hastings}.
\newblock A short proof of stability of topological order under local
  perturbations.
\newblock {\em Commun. Math. Phys.}, 307(3):609--627, 2011.

\bibitem{BravyiHM}
S.~Bravyi, M.~B. Hastings, and S.~Michalakis.
\newblock Topological quantum order: stability under local perturbations.
\newblock {\em J. Math. Phys.}, 51(9):093512, 33, 2010.

\bibitem{BuchholzEM}
D.~Buchholz, S.~Doplicher, G.~Morchio, J.~E. Roberts, and F.~Strocchi.
\newblock A model for charges of electromagnetic type.
\newblock In {\em Operator algebras and quantum field theory ({R}ome, 1996)},
  pages 647--660. Int. Press, Cambridge, MA, 1997.

\bibitem{BuchholzAA}
D.~Buchholz, S.~Doplicher, G.~Morchio, J.~E. Roberts, and F.~Strocchi.
\newblock Asymptotic abelianness and braided tensor {$C^*$}-categories.
\newblock In {\em Rigorous quantum field theory}, volume 251 of {\em Progr.
  Math.}, pages 49--64. Birkh\"auser, Basel, 2007.

\bibitem{BuchholzF}
D.~Buchholz and K.~Fredenhagen.
\newblock Locality and the structure of particle states.
\newblock {\em Commun. Math. Phys.}, 84(1):1--54, 1982.

\bibitem{Cha}
M.~Cha.
\newblock {\em Topologically ordered states in infinite quantum spin systems}.
\newblock PhD thesis, University of California, Davis, 2017.
\newblock arXiv:1708.05035.

\bibitem{ChaNN}
M.~Cha, P.~Naaijkens, and B.~Nachtergaele.
\newblock The complete set of infinite volume ground states for {K}itaev's
  abelian quantum double models.
\newblock {\em Commun. Math. Phys.}, 357:125--157, 2018.

\bibitem{ChenGW}
X.~Chen, Z.~C. Gu, and X.~G. Wen.
\newblock {Local unitary transformation, long-range quantum entanglement, wave
  function renormalization, and topological order}.
\newblock {\em Phys. Rev.}, B82:155138, 2010.

\bibitem{ChenGW2}
X.~Chen, Z.-C. Gu, and X.-G. Wen.
\newblock Complete classification of one-dimensional gapped quantum phases in
  interacting spin systems.
\newblock {\em Phys. Rev. B}, 84:235128, 2011.

\bibitem{DHR1}
S.~Doplicher, R.~Haag, and J.~E. Roberts.
\newblock {Local observables and particle statistics. 1}.
\newblock {\em Commun. Math. Phys.}, 23:199--230, 1971.

\bibitem{DHR2}
S.~Doplicher, R.~Haag, and J.~E. Roberts.
\newblock {Local observables and particle statistics. 2}.
\newblock {\em Commun. Math. Phys.}, 35:49--85, 1974.

\bibitem{DoplicherL}
S.~Doplicher and R.~Longo.
\newblock Standard and split inclusions of von {N}eumann algebras.
\newblock {\em Invent. Math.}, 75(3):493--536, 1984.

\bibitem{EtinghofG}
P.~Etingof and S.~Gelaki.
\newblock Isocategorical groups.
\newblock {\em Internat. Math. Res. Notices}, 2001(2):59--76, 2001.

\bibitem{EtingofGNO}
P.~Etingof, S.~Gelaki, D.~Nikshych, and V.~Ostrik.
\newblock {\em Tensor categories}, volume 205 of {\em Mathematical Surveys and
  Monographs}.
\newblock American Mathematical Society, Providence, RI, 2015.

\bibitem{ENO}
P.~Etingof, D.~Nikshych, and V.~Ostrik.
\newblock {On fusion categories}.
\newblock {\em Ann. of Math. (2)}, 162(2):581--642, 2005.

\bibitem{FiedlerN}
L.~Fiedler and P.~Naaijkens.
\newblock Haag duality for {K}itaev's quantum double model for abelian groups.
\newblock {\em Rev. Math. Phys.}, 27(9):1550021, 43, 2015.

\bibitem{FlammiaHKK}
S.~T. Flammia, J.~Haah, M.~J. Kastoryano, and I.~H. Kim.
\newblock Limits on the storage of quantum information in a volume of space.
\newblock {\em {Quantum}}, 1:4, 2017.

\bibitem{FRSI}
K.~Fredenhagen, K.-H. Rehren, and B.~Schroer.
\newblock Superselection sectors with braid group statistics and exchange
  algebras. {I}.\ {G}eneral theory.
\newblock {\em Commun. Math. Phys.}, 125(2):201--226, 1989.

\bibitem{FRSII}
K.~Fredenhagen, K.-H. Rehren, and B.~Schroer.
\newblock Superselection sectors with braid group statistics and exchange
  algebras. {II}. {G}eometric aspects and conformal covariance.
\newblock {\em Rev. Math. Phys.}, 04(Special Issue):113--157, 1992.
\newblock Special issue dedicated to R. Haag on the occasion of his 70th
  birthday.

\bibitem{FroehlichGabbiani}
J.~Fr\"ohlich and F.~Gabbiani.
\newblock Braid statistics in local quantum theory.
\newblock {\em Rev. Math. Phys.}, 2(3):251--353, 1990.

\bibitem{GabbianiF}
F.~Gabbiani and J.~Fr\"ohlich.
\newblock Operator algebras and conformal field theory.
\newblock {\em Commun. Math. Phys.}, 155(3):569--640, 1993.

\bibitem{Haah}
J.~Haah.
\newblock An invariant of topologically ordered states under local unitary
  transformations.
\newblock {\em Commun. Math. Phys.}, 342(3):771--801, 2016.

\bibitem{Halperin}
B.~I. Halperin.
\newblock Statistics of quasiparticles and the hierarchy of fractional
  quantized hall states.
\newblock {\em Phys. Rev. Lett.}, 52:1583--1586, 1984.

\bibitem{Halvorson}
H.~Halvorson.
\newblock Algebraic quantum field theory.
\newblock In J.~Butterfield and J.~Earman, editors, {\em Philosophy of
  Physics}, pages 731--922. Elsevier, 2006.

\bibitem{HastingsLSM}
M.~B. Hastings.
\newblock {Lieb-Schultz-Mattis in higher dimensions}.
\newblock {\em Phys. Rev.}, B69:104431, 2004.

\bibitem{HastingsW}
M.~B. Hastings and X.-G. Wen.
\newblock {Quasi-adiabatic continuation of quantum states: The Stability of
  topological ground state degeneracy and emergent gauge invariance}.
\newblock {\em Phys. Rev.}, B72:045141, 2005.

\bibitem{KR2}
R.~V. Kadison and J.~R. Ringrose.
\newblock {\em Fundamentals of the theory of operator algebras. {V}ol. {II}:
  Advanced theory}, volume~16 of {\em Graduate Studies in Mathematics}.
\newblock American Mathematical Society, Providence, RI, 1997.

\bibitem{KatoN}
K.~Kato and P.~Naaijkens.
\newblock An entropic invariant for {2D} gapped quantum phases.
\newblock Preprint, arXiv:1810.02376, 2018.

\bibitem{Kishimoto}
A.~Kishimoto.
\newblock The representations and endomorphisms of a separable nuclear
  {$C^*$}-algebra.
\newblock {\em Internat. J. Math.}, 14(3):313--326, 2003.

\bibitem{KitaevHC}
A.~Kitaev.
\newblock {Anyons in an exactly solved model and beyond}.
\newblock {\em Annals Phys.}, 321(1):2--111, 2006.

\bibitem{KitaevP}
A.~Kitaev and J.~Preskill.
\newblock {Topological entanglement entropy}.
\newblock {\em Phys. Rev. Lett.}, 96:110404, 2006.

\bibitem{KitaevQD}
A.~{\relax Yu}. Kitaev.
\newblock {Fault tolerant quantum computation by anyons}.
\newblock {\em Ann. Phys.}, 303:2--30, 2003.

\bibitem{LevinW2}
M.~Levin and X.-G. Wen.
\newblock Detecting topological order in a ground state wave function.
\newblock {\em Phys. Rev. Lett.}, 96:110405, 2006.

\bibitem{LevinW}
M.~A. Levin and X.-G. Wen.
\newblock {String net condensation: A Physical mechanism for topological
  phases}.
\newblock {\em Phys. Rev.}, B71:045110, 2005.

\bibitem{LiebR}
E.~H. Lieb and D.~W. Robinson.
\newblock The finite group velocity of quantum spin systems.
\newblock {\em Commun. Math. Phys.}, 28:251--257, 1972.

\bibitem{Matsui01}
T.~Matsui.
\newblock The split property and the symmetry breaking of the quantum spin
  chain.
\newblock {\em Commun. Math. Phys.}, 218(2):393--416, 2001.

\bibitem{Matsui13}
T.~Matsui.
\newblock Boundedness of entanglement entropy and split property of quantum
  spin chains.
\newblock {\em Rev. Math. Phys.}, 25(9):1350017, 31, 2013.

\bibitem{MichalakisZ}
S.~Michalakis and J.~P. Zwolak.
\newblock Stability of frustration-free {H}amiltonians.
\newblock {\em Commun. Math. Phys.}, 322(2):277--302, 2013.

\bibitem{Mueger}
M.~M{\"u}ger.
\newblock Tensor categories: a selective guided tour.
\newblock {\em Rev. Un. Mat. Argentina}, 51(1):95--163, 2010.

\bibitem{Naaijkens11}
P.~Naaijkens.
\newblock Localized endomorphisms in {K}itaev's toric code on the plane.
\newblock {\em Rev. Math. Phys.}, 23(4):347--373, 2011.

\bibitem{NaaijkensKL}
P.~Naaijkens.
\newblock Kosaki-{L}ongo index and classification of charges in 2{D} quantum
  spin models.
\newblock {\em J. Math. Phys.}, 54(8):081901, 17, 2013.

\bibitem{NachOS}
B.~Nachtergaele, Y.~Ogata, and R.~Sims.
\newblock Propagation of correlations in quantum lattice systems.
\newblock {\em J. Stat. Phys.}, 124(1):1--13, 2006.

\bibitem{NachtergaeleSW}
B.~Nachtergaele, V.~B. Scholz, and R.~F. Werner.
\newblock Local approximation of observables and commutator bounds.
\newblock In {\em Operator methods in mathematical physics}, volume 227 of {\em
  Oper. Theory Adv. Appl.}, pages 143--149. Birkh\"auser/Springer Basel AG,
  Basel, 2013.

\bibitem{NachSW}
B.~Nachtergaele, V.~B. Scholz, and R.~F. Werner.
\newblock Local approximation of observables and commutator bounds.
\newblock In {\em Operator methods in mathematical physics}, volume 227 of {\em
  Oper. Theory Adv. Appl.}, pages 143--149. Birkh\"auser/Springer Basel AG,
  Basel, 2013.

\bibitem{NachtergaeleSY}
B.~Nachtergaele, B.~Sims, and A.~Young.
\newblock Quasi-locality bounds for quantum lattice systems. {P}art {I}.
  {L}ieb-{R}obinson bounds, quasi-local maps, and spectral flow automorphisms.
\newblock Preprint, arXiv:1810.02428, 2018.

\bibitem{NachSLR}
B.~Nachtergaele and R.~Sims.
\newblock Lieb-{R}obinson bounds and the exponential clustering theorem.
\newblock {\em Commun. Math. Phys.}, 265(1):119--130, 2006.

\bibitem{Ogata1}
Y.~Ogata.
\newblock A class of asymmetric gapped {H}amiltonians on quantum spin chains
  and its characterization {I}.
\newblock {\em Commun. Math. Phys.}, 348(3):847--895, 2016.

\bibitem{Ogata2}
Y.~Ogata.
\newblock A class of asymmetric gapped {H}amiltonians on quantum spin chains
  and its characterization {II}.
\newblock {\em Commun. Math. Phys.}, 348(3):897--957, 2016.

\bibitem{Ogata3}
Y.~Ogata.
\newblock A class of asymmetric gapped {H}amiltonians on quantum spin chains
  and its characterization {III}.
\newblock {\em Commun. Math. Phys.}, 352(3):1205--1263, 2017.

\bibitem{SaavedraRivano}
N.~Saavedra~Rivano.
\newblock {\em Cat\'egories {T}annakiennes}.
\newblock Lecture Notes in Mathematics, Vol. 265. Springer-Verlag, Berlin-New
  York, 1972.

\bibitem{Schmitz}
M.~Schmitz.
\newblock Lokalit\"atseigenschaften von {E}inteilchenzust\"anden in
  {Q}uanten-{S}pinsystemen.
\newblock Diplomarbeit, Albert-Ludwigs-Universit\"at, 1983.

\bibitem{Wang}
Z.~Wang.
\newblock {\em Topological Quantum Computation}, volume 112 of {\em CBMS
  Regional Conference Series in Mathematics}.
\newblock Published for the Conference Board of the Mathematical Sciences,
  Washington, DC, 2010.

\bibitem{Wen}
X.~G. Wen.
\newblock Vacuum degeneracy of chiral spin states in compactified space.
\newblock {\em Phys. Rev. B}, 40:7387--7390, 1989.

\bibitem{Werner}
R.~F. Werner.
\newblock Local preparability of states and the split property in quantum field
  theory.
\newblock {\em Lett. Math. Phys.}, 13(4):325--329, 1987.

\end{thebibliography}
\end{document}